\documentclass[a4paper,11pt,reqno]{amsart}
          \usepackage{amssymb}
	  \usepackage{amsmath}
	  \usepackage{amsthm}
          \usepackage{amsfonts}
          \usepackage[english]{babel}
          \usepackage[utf8]{inputenc}
          \usepackage{enumitem}

\usepackage[margin=2.5cm]{geometry}

\usepackage{tikz}
\usetikzlibrary{arrows,shapes}

\pagestyle{plain}

 \usepackage[unicode,colorlinks,plainpages=false,hyperindex=true,bookmarksnumbered=true,bookmarksopen=false,pdfpagelabels]{hyperref}
 \hypersetup{urlcolor=cyan,linkcolor=blue,citecolor=red,colorlinks=true}
\hfuzz1pc

\newtheorem{thm}{Theorem}[section]
\newtheorem{cor}[thm]{Corollary}
\newtheorem{lem}[thm]{Lemma}
\newtheorem{prop}[thm]{Proposition}

\newtheorem{defn}[thm]{Definition}
\newtheorem{assume}[thm]{Assumption}
\newtheorem{scen}[thm]{Scenario}
\theoremstyle{definition}
\newtheorem{rem}[thm]{Remark}
\newtheorem*{conv*}{Conventions}
\newtheorem*{conte*}{Content}

\numberwithin{equation}{section}

\newcommand{\be}{\begin{equation}}
\newcommand{\ee}{\end{equation}}
\newcommand{\bea}{\begin{eqnarray}}
\newcommand{\eea}{\end{eqnarray}}
\newcommand{\ba}{\begin{aligned}}
\newcommand{\ea}{\end{aligned}}


 \usepackage{color}
\definecolor{lila}{rgb}{1,0.2,0.9}
\definecolor{purple}{rgb}{0.4,0,1}
\definecolor{darkgreen}{rgb}{0,0.5,0}

\def\bI{\mathbb{I}}                         %
\def\span{\mathrm{span}}                    %
\def\1{{\boldsymbol 1}}                     %
\def\cD{{\mathcal D}}                       %
\def\cH{{\mathcal H}}                       %
\def\tr{\mathrm{tr}}                        %
\def\ri{{\rm i}}                            %
\def\bC{\mathbb{C}}                         %
\def\bR{\mathbb{R}}                         %
\def\bZ{\mathbb{Z}}                         %
\def\cF{{\mathcal F}}                       %
\def\reg{\mathrm{reg}}                      %
\def\id{{\mathrm{id}}}                      %
\def\dt {\left.\frac{d}{dt}\right|_{t=0}}   %
\def\cG{{\mathcal G}}                       %
\def\Dress{{\mathrm{Dress}}}                %
\def\red{{\mathrm{red}}}                    %
\def\fP{\mathfrak{P}}                       %
\def\ad{\mathrm{ad}}                        %
\def\Ad{\mathrm{Ad}}                        %
\def\cA{\mathcal{A}}                        %
\def\cM{\mathcal{M}}                          %
\def\fR{\mathfrak{R}}                       %
\def\cC{\mathcal{C}}                        %
\def\cO{\mathcal{O}}                        %
\def\cW{\mathcal{W}}                        %
\def\cU{\mathcal{U}}                        %
\def\sl2z{\mathrm{SL}(2,\bZ)}               %
\def\cS{{\mathcal S}}                       %
\def\bT{\mathbb{T}}                         %
\def\bR{\mathbb{R}}                         %
\def\fH{\mathfrak{H}}                         %
\def\fF{\mathfrak{F}}                         %
\def\bG{\mathbb{G}}
\def\fk{\mathfrak{k}}
\def\ft{\mathfrak{t}}                         %
\def\fb{\mathfrak{b}}                         %
\def\princ{\mathrm{princ}}
\def\taU{\underline{\tau}}
\def\cT{\mathcal{T}}
\def\cZ{\mathcal{Z}}
\def\cP{\mathcal{P}}

\def\fus{{\mathrm{fus}}}                  %
\def\bD{\mathbb{D}}                         %
\def\Gone{G^1}
\def\Gtwo{G^2}
\def\ddim{{\mathrm{ddim}}}                %
\def\ccM{\widehat {\mathcal{M}}}                        %
\def\cV{\mathcal{V}}
\def\proj{{\mathrm{proj}}}

\begin{document}

\title{Integrable systems from Poisson reductions \\
of generalized Hamiltonian torus actions}

\maketitle

\begin{center}

L. Feh\'er${}^{a,b}$
and M. Fairon${}^c$

\medskip
${}^a$Department of Theoretical Physics, University of Szeged\\
Tisza Lajos krt 84-86, H-6720 Szeged, Hungary\\
e-mail: lfeher@physx.u-szeged.hu

\medskip
${}^b$Institute for Particle and Nuclear Physics\\
Hun-Ren Wigner Research Centre for Physics\\
 H-1525 Budapest, P.O.B.~49, Hungary

\medskip
${}^c$Université Bourgogne Europe, CNRS, IMB UMR 5584\\
F-21000 Dijon, France\\
e-mail: maxime.fairon@u-bourgogne.fr
\end{center}

\begin{abstract}
We develop a set of sufficient conditions for guaranteeing that an integrable
system with a symmetry group $K$ on a manifold $M$ descends to an integrable
system on a dense open subset of the quotient Poisson space $M/K$.
The higher dimensional phase space $M$ carries a bivector $P_M$ yielding a bracket
on $C^\infty(M)$ such that $C^\infty(M)^K$ is a Poisson algebra.
The unreduced system on $M$  is supposed to possess `action variables'
that generate a proper, effective action of a group of the form $\mathrm{U}(1)^{\ell_1} \times \mathbb{R}^{\ell_2}$
and descend to action variables of the reduced system.
In view of the form of the group and since $P_M$ could be  a quasi-Poisson bivector,
we say that we work with a generalized Hamiltonian torus action.
The reduced systems are in general superintegrable owing to the large set of
invariants of the proper Hamiltonian action of $\mathrm{U}(1)^{\ell_1} \times \mathbb{R}^{\ell_2}$.
We present several examples
and apply  our  construction for solving open problems
regarding the integrability of systems obtained previously by reductions of master
systems on doubles of compact Lie groups: the cotangent bundle, the
Heisenberg double and the quasi-Poisson double.
Furthermore, we offer numerous applications to integrable systems living on moduli spaces of flat connections,
using the quasi-Poisson approach.
\end{abstract}

 \setcounter{tocdepth}{2}

 \newpage

 \tableofcontents

 \newpage

\section{Introduction}

Integrable systems have found important applications in several branches of physics,
and their rich mathematical structures continue to attract attention.
Some of those systems enjoy exceptionally  large symmetries that render them superintegrable (degenerately integrable),
as is exemplified by the well-known $\mathrm{SO}(4)/\mathrm{SO(3,1)}$ symmetry of the Kepler--Coulomb  problem.
 In this paper we are interested in (super)integrable Hamiltonian
systems on finite dimensional Poisson (including symplectic) manifolds.
Many useful reviews of such systems are available \cite{Fas,MPW,Re2,SInt}, and from the various possible
 definitions  \cite{FeCONF,LGMV,MF,Nek,Re2}  we  adopt the following one.

\begin{defn}\label{defn:int}
Let $(\cM,P_{\cM})$ be a finite dimensional, connected, smooth Poisson manifold,
and $\fH$ an Abelian Poisson subalgebra of $C^\infty(\cM)$ subject to the conditions:
\begin{enumerate}[label=\arabic*$)$]
\item As a commutative algebra of functions, $\fH$ has functional dimension  $\ddim(\fH) = \ell\geq 1$. \label{it:int1}
\item The Hamiltonian vector fields of the elements of $\fH$ are complete and span an
$\ell$-dimensional subspace of the tangent space over a dense open subset of $\cM$. \label{it:int2}
\item The commutant $\fF$ of
$\fH$ in $C^\infty(\cM)$, which contains the joint constants of motion of the Hamiltonians $\cH \in \fH$,
has functional dimension $\ddim(\fF)=\dim(\cM) - \ell$.  \label{it:int3}
\end{enumerate}
We refer to the quadruple $(\cM,P_{\cM},\fH,\fF)$, or simply  $\fH$, as a \emph{(degenerate) integrable system of rank $\ell$}.
\end{defn}

In general, the completeness of the flows of $\fH$
is automatic if
the joint level surfaces of the constants of motion are compact.
On symplectic manifolds the second part of condition \ref{it:int2} follows from condition \ref{it:int1}.
 The standard notion of Liouville integrability results if  $\cM$ is a symplectic manifold
 and $\dim(\cM)=2\ell$.
In this case condition \ref{it:int3} is  implied by condition \ref{it:int1}.
Liouville integrability on Poisson manifolds is the case for which $\ell = k$, where $k$ is
half the dimension
of the maximal symplectic leaves, which are supposed to fill a dense subset of $\cM$.
When $\ell < k$, both on Poisson and symplectic manifolds, then one obtains the case
of degenerate integrability, alternatively called superintegrability.

We are witnesses  to intense research activity on superintegrable
Hamiltonian systems
obtained, e.g., in the presence of a magnetic field \cite{DGJ,HS},
by an algebraic characterization of first integrals \cite{BGSH,Ts2,RZ},
or by prescribing geometric properties of tensor fields \cite{MMcLSV,V}.
Of particular importance for the present study are examples  related to (spin) many-body systems of
Toda \cite{ReS}, Calogero-Moser-Sutherland \cite{Fe24,FoH,Re3,Re4} and Ruijsenaars-Schneider \cite{AO,CF,FFM} type,
as well as their recent generalizations \cite{ARe,Fa,Fe25}.
For concrete systems, either the Abelian algebra $\fH$ is defined, or $\ell$ independent Hamiltonians in involution are usually given,
and one may apply various methods for exhibiting the right number of constants of motions.
Such methods include, for example,  explicit constructions based on some artfully chosen ansatz \cite{MPW}
and Hamiltonian or Poisson reduction methods \cite{Aru,KKS,OR,Per}. In the latter approach, one starts with an
unreduced `master integrable system' on a higher dimensional phase space
that has  a large symmetry group and explicitly known constants of motion.
Then one takes a suitable (symplectic or Poisson) quotient by the symmetry group and investigates if enough constants of motion survive the
reduction. By using this method, the superintegrability of several interesting many-body systems
of spin Calogero--Moser--Sutherland and Ruijsenaars--Schneider type
has been shown on generic symplectic leaves of  Poisson quotient spaces
(see e.g. \cite{FF,Fe23,FeCONF,Re1} in addition to references cited previously).

The goal of the present paper is to further develop the Poisson reduction approach to the construction
of integrable Hamiltonian systems.
We will apply our formalism for demonstrating integrability
and exhibiting (generalized) action-angle variables in interesting examples.
We were influenced mainly by the works of Reshetikhin \cite{Re1,Re2}, Zung \cite{Zu1},
and others \cite{BJ,J} dealing with the question of integrability of reduced systems.
A fundamental underpinning for our investigations will be provided   by the following simple observation.
 Let $G:=\mathrm{U}(1)^{\ell_1} \times \bR^{\ell_2}$, where the superscripts denote the dimensions,
with $\ell=\ell_1+\ell_2>0$.
Suppose that we have a proper and effective  Hamiltonian action of the $\ell$-dimensional
`generalized torus' $G$ on a connected Poisson manifold, now denoted $Y$.
Then invariant theory guarantees the equality
\be
\ell + \ddim(C^\infty(Y)^{G}) = \dim(Y).
\label{Ieq}\ee
This is obvious if the $G$-action is free, in which case $Y/G$ is a smooth manifold.
Based on \eqref{Ieq}, we obtain an integrable system by
taking $\fH$ to be the set of functions depending only on the components of the momentum map generating
the torus action. In short,
if such a generalized torus action is given, then an integrable system in the sense of Definition \ref{defn:int} results immediately.

A substantial part of our results will be formulated using the term
 \emph{generalized action variables} of integrable systems.
To make  our statements precise, we now make a formal definition that will fit our examples.

\begin{defn}\label{defn:genact}
Consider an integrable system of rank $\ell$ on a connected Poisson manifold $(\cM, P_\cM)$ given by the Abelian Poisson algebra $\fH$.
 Suppose that we have $\ell$ smooth functions $H_1,\dots, H_\ell$ on a connected dense open submanifold $  \widetilde \cM \subset \cM$
verifying the properties:
\begin{enumerate}[label=\roman*$)$]
\item
The map $(H_1,\dots, H_\ell): \widetilde \cM \to \bR^\ell$ is  an equivariant momentum map for a proper and effective action of an $\ell$-dimensional `generalized torus'
$\mathrm{U}(1)^{\ell_1} \times \bR^{\ell_2}$ on $\widetilde\cM$.
\item
The restriction of the elements of $\fH$ on $\widetilde \cM$ can be expressed in terms of $H_1,\dots, H_\ell$
and the span of the exterior derivatives of the elements of $\fH$ coincides
with the span of the exterior derivatives $dH_1,\dots, dH_\ell$ at every point of $\widetilde \cM$.
\end{enumerate}
 Then, we say that the functions $H_1,\dots, H_\ell$ are \emph{generalized action variables} on $\widetilde \cM$ for the integrable system $\fH$.
\end{defn}
 The generalized action variables can be extended to action-angle and transversal coordinates, at least in an invariant
 open neighbourhood of a principal orbit.
 On symplectic manifolds, this can be deduced from results of
\cite{FS,Nek}.
 On Poisson manifolds, and assuming $\ell_2=0$,  it follows from an action-angle theorem proved in \cite{LGMV}.
We show in Corollary \ref{Cor:AA} of Theorem \ref{thm:AA} that such coordinates also exist in our general case.

Motivated by the above remarks, our approach for constructing integrable systems will be
based on (generalized) Poisson reduction applied to (generalized) Hamiltonian torus actions.
We shall start with a (generalized) Hamiltonian torus action that
commutes with the action of a symmetry group, say $K$,  and then go to the quotient space by
the symmetry group.   What does the term `\emph{generalized}' indicates here?
Firstly, it will be allowed that the unreduced manifold $Y$  is not necessarily a Poisson manifold, but
 $C^\infty(Y)$ will always be equipped with a bracket $\{-,-\}$, coming from  a bivector $P_Y$,
such that $C^\infty(Y)^K$ is a genuine Poisson algebra. This generality is needed in order
to cover quasi-Poisson manifolds \cite{AKSM,AMM} whose reductions underlie many interesting
integrable systems.
Secondly, as mentioned above,
we will work with a `generalized torus'
$G=\mathrm{U}(1)^{\ell_1} \times \bR^{\ell_2}$ whose action on $Y$ is engendered by vector fields
associated with functions $H_i \in C^\infty(Y)^K$, for $i=1,\dots, \ell_1 + \ell_2$.
Under favourable circumstances described in the text,
our construction gives rise to a true  integrable system on $Y_0/K$, where $Y_0 \subset Y$
is a certain dense open submanifold of $Y$.
Actually we shall prove much more,
see Theorem \ref{thm:G7}  and its subsequent corollaries.
The proofs will rely on the inspection of  the ring of invariant functions for the action
of the direct product group\footnote{In Section 2, it will be convenient to denote the symmetry group as $G^1$ and
the generalized torus as $G^2$.}
$\bG = K \times G$ and of its subgroups: the symmetry group  $K$
and the `generalized torus' $G$.
We shall suppose that the $\bG$-action is proper and the conditions
formulated in Assumption \ref{assume:G5} hold.

\medskip

An advice to our reader could be useful at this point.
Namely, since  the assumptions adopted in Section \ref{Sec:2} may appear complicated or \emph{ad hoc} at first sight,
it should be helpful to study the simplest examples of Section \ref{Sec:Cot} in parallel
with the digestion of the general construction.

After focusing on reductions of (generalized) torus actions,
we shall formulate the general Scenario \ref{scen:G11}
in which  the unreduced (generalized) torus action arises from  generalized action variables of an unreduced
integrable system  with \emph{compact} symmetry group $K$ on a suitable
Poisson or quasi-Poisson manifold $(M, P_M)$.
In particular, the previously sketched situation can be applied to a dense open submanifold $Y \subset M$.
This will lead to our main result, given
by Theorem \ref{thm:G12}.
Here, it should be noted that Definition \ref{defn:int} can be maintained
on quasi-Poisson $K$-manifolds, too, with the additional requirement that $\fH$
consists of $K$-invariant functions.

Our abstract results are tailor-made for the analysis of
interesting examples. We illustrate this  by proving integrability statements for all
systems that result by the previously  studied \cite{Fe23,FeCONF,Fe24,Re1} Poisson reductions of
 integrable master systems on doubles of compact Lie groups.
 The doubles in question are the cotangent bundle, the Heisenberg double
 and the internally fused quasi-Poisson double.
 We complete and considerably enhance the known results concerning these reduced systems.
 For example, we prove the integrability of those
 systems, on every symplectic leaf of a dense open subset of the reduced phase space,
  whose Hamiltonians stem from the class functions of the relevant compact Lie group.
This solves a problem  left open in \cite{Fe23}, which actually provided
one of our original motivations for the present work.
Our method also gives us a convenient way to understand the integrability
of several Hamiltonian systems on moduli spaces of flat connections,
which cover real forms of some
systems investigated by Arthamonov and Reshetikhin \cite{ARe}.
It is worth stressing that,  in all of our examples, the integrability of the reduced systems will be demonstrated by
utilizing explicitly identified generalized action variables.
This may be deemed an important result in itself, since the construction of action-angle variables is  in general a subtle problem.

\begin{conte*}
Subsection \ref{ss:Proper} is devoted to preliminaries on proper group actions.
The core general results of the paper are contained in Subsections \ref{ss:genIS} and \ref{ss:AA}.
 Our principal achievement is Theorem \ref{thm:G12} that establishes
how integrable systems result via reduction starting from any realization
 of Scenario \ref{scen:G11}.
 This is complemented by Proposition \ref{prop:intM} and Corollary \ref{Cor:AA} of Theorem \ref{thm:AA},
dealing with the integrability of the unreduced system
of Scenario \ref{scen:G11} and with generalized action-angle coordinates.
The usefulness of our general construction is shown by exhibiting applications.
Examples arising from cotangent bundles and Heisenberg doubles are presented, together with
necessary background material, in Sections \ref{Sec:Cot} and \ref{Sec:Heis}.
Then, we treat in Section \ref{Sec:qPoiss} numerous examples of integrable systems on moduli spaces of flat connections using the framework of quasi-Poisson geometry.
The main text ends with Section \ref{S:concl}, where we briefly recapitulate the results and discuss open problems.
Some auxiliary technical material  and further examples are presented in the appendices.

To facilitate the reading of the paper, we note that our main results
pertaining to examples are summarized  by Propositions \ref{pr:J10} (for the cotangent bundle),
\ref{pr:H3} (for the Heisenberg double) and \ref{pr:Z2} (for the quasi-Poisson double).
Furthermore, Proposition  \ref{pr:Z11} deals with a
rich collection of integrable systems on moduli spaces of flat connections over a surface of
arbitrary genus $m$ with $n+1$ boundary components.

\end{conte*}

\begin{conv*}
 First, \emph{smooth}  means $C^\infty$ and we only work with smooth manifolds, functions, actions, etc.  throughout this manuscript.
Second, our assumptions about the underlying Lie groups will `evolve' as follows.
In Section \ref{Sec:2}, we start with a proper action of a direct product Lie group $\bG=\Gone\times \Gtwo$ \eqref{bG12}, whose factors
$\Gone$ and $\Gtwo$ later become the symmetry group of an unreduced system and a `generalized torus'.
The compactness of $G^1$ is first assumed in Scenario \ref{scen:G11}.
This assumption is needed for us in order to cover the quasi-Poisson case, but otherwise all
the subsequent general results of
Section \ref{Sec:2}  require only the properness of the action of $\bG$.
In the examples, $G^1$ will be chosen as a compact connected and simply connected Lie group $K$ corresponding to a simple Lie algebra.
\end{conv*}

\section{A general mechanism  for reduced  integrability} \label{Sec:2}

Here, we first present a collection of results about proper actions of Lie groups on smooth manifolds.
Our presentation is based mainly on the reviews found in the books \cite{DK,MMOPR,Mi,OR}, and no originality is claimed.
These results will be crucial for the general construction of integrable systems discussed
in the following two Subsections, which will be applied throughout the rest of the manuscript.
All our manifolds are assumed to be finite dimensional and satisfy the second axiom of countability.
When we talk about a submanifold without qualification, we always mean an \emph{embedded} submanifold.

\subsection{Background material on proper actions} \label{ss:Proper}

Suppose that a  Lie group $\bG$ acts on a smooth manifold $Y$ and denote the smooth action map
 $\bG\times Y \to Y$ by juxtaposition $(g,y) \mapsto gy$. This  is a \emph{proper} action if for any sequence $(g_n, y_n)$  ($n\in \mathbb{N}$) for which
 both $y_n$ and $g_n y_n$ converge, there exists a convergent subsequence of the sequence $g_n$.
 Actions of compact Lie groups are automatically proper.
 For a proper action, every orbit $\bG y$ is a closed submanifold of $Y$. The orbit $\bG y$ is diffeomorphic to the quotient manifold   $\bG/\bG_y$, where $\bG_y$
 is the isotropy (stabilizer) group of $y \in Y$, which is a closed, compact subgroup of $\bG$.

For any subgroup $L < G$ denote by $(L)$ the collection of subgroups conjugate to $L$. By definition,
the point $y$ has isotropy type $(L)$ if $\bG_y\in (L)$.
 The points having mutually conjugate isotropy groups have the same isotropy type,
 and the corresponding orbits  are $\bG$-equivariantly diffeomorphic to each other.
 The smooth $\bG$-invariant functions separate the points of the quotient topological space $Y/\bG$.

 Let us suppose that we have a proper action of a  Lie group $\bG$ on a \emph{connected} manifold $Y$.
 According to a fundamental result \cite{DK},
 there exists a unique collection of pairwise conjugate isotropy groups, denoted  $(L)_{\princ}$,
 such that for every $y \in Y$ there exists $L\in (L)_{\princ}$ having the subgroup property $L < \bG_y$.
 The subset $Y_0 \subset Y$ of elements whose isotropy groups belong to $(L)_{\princ}$ is a \emph{dense, open} submanifold of $Y$.
The elements $\bG_y \in (L_{\princ})$ are called principal isotropy groups, $Y_0$ is
 called the submanifold of principal isotropy type (also called submanifold of principal orbit type), and
  the $\bG$-orbits contained in $Y_0$ are called principal orbits.
  Intuitively speaking, $(L)_{\princ}$ contains the generic isotropy groups that occur for the $\bG$-action.
  Since there is only a single isotropy type for the $\bG$-action on $Y_0$, the quotient space $Y_0/\bG$ can be made into
 a smooth manifold in such a way that the canonical projection $\pi_0: Y_0 \to Y_0/\bG$ is a smooth submersion.
 Moreover, the canonical projection $\pi: Y \to Y/\bG$ is continuous and open with respect to the
  quotient topology\footnote{It is well-known that $Y/\bG$  has a
 stratification induced by
 the various orbit types, but we shall not use this.}, and $\pi(Y_0)$ is a dense,
  open, \emph{connected} subset of the orbit space $Y/\bG$.
 If $\bG$ and $Y$ both  are connected, then $Y_0$ is also connected.

 \begin{lem}\label{lem:G1}
 Consider a smooth, proper action of a Lie group $\bG$ on a connected smooth manifold $Y$.
 Choose an arbitrary point $y$ of the submanifold $Y_0$ of principal isotropy type and denote $D:=\dim(Y_0/\bG)$.
  Then, there exist $D$ elements of $C^\infty(Y)^\bG$   that   are independent at $y$.
 \end{lem}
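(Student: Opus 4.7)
The plan is to produce the required invariant functions in two stages: first to construct $D$ smooth $\bG$-invariant functions defined on a $\bG$-saturated open neighbourhood of $y$ inside $Y_0$ whose differentials at $y$ are linearly independent, and then to glue them to all of $Y$ using a $\bG$-invariant bump function, leveraging properness of the action.

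For the first stage, recall from the preliminaries that $Y_0$ is a $\bG$-invariant, open, dense submanifold of $Y$ and $\pi_0 : Y_0 \to Y_0/\bG$ is a smooth submersion onto a smooth manifold of dimension $D$. Choose a coordinate chart $(V, f_1, \dots, f_D)$ on $Y_0/\bG$ around $[y] := \pi_0(y)$. The pullbacks $\tilde f_i := f_i \circ \pi_0$ are smooth $\bG$-invariant functions on the open $\bG$-saturated set $\pi_0^{-1}(V) \subset Y_0$, and since $\pi_0$ is a submersion, the codifferential $(d\pi_0)_y^* : T^*_{[y]}(Y_0/\bG) \to T^*_y Y$ is injective. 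Hence $d\tilde f_1|_y, \dots, d\tilde f_D|_y$ are linearly independent in $T_y^* Y$.

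For the second stage, I would invoke the existence of a $\bG$-invariant smooth function $\chi$ on $Y$ that is identically equal to $1$ on some open neighbourhood of $y$ and whose support is contained in $\pi_0^{-1}(V)$. Granted such a $\chi$, the functions $F_i := \chi \tilde f_i$ extended by zero outside $\pi_0^{-1}(V)$ belong to $C^\infty(Y)^\bG$, and since $\chi \equiv 1$ near $y$ we have $dF_i|_y = d\tilde f_i|_y$, so the $F_i$ are independent at $y$, completing the proof.

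The main obstacle is therefore the construction of the $\bG$-invariant bump function $\chi$. This is where properness is essential. I would use the slice (tube) theorem to produce an invariant tubular neighbourhood of the orbit $\bG y$ equivariantly diffeomorphic to $\bG \times_{\bG_y} S$ for some slice $S$ through $y$; since $\bG y \subset Y_0$ and $\pi_0^{-1}(V)$ is open and $\bG$-saturated, the tube can be shrunk to lie inside $\pi_0^{-1}(V)$. A standard bump function on $S$ (taking the value $1$ near $y$ and supported in a small $\bG_y$-invariant subset of $S$), averaged over the compact isotropy group $\bG_y$ to make it $\bG_y$-invariant, extends $\bG$-equivariantly over the tube and then by zero to the desired $\chi$. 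Both ingredients—compactness of isotropy groups and the slice theorem—are standard features of proper actions recorded in the references cited in Subsection \ref{ss:Proper}, so they can be quoted rather than reproved.
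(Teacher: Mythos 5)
Your proposal is correct, but it reaches the conclusion by a genuinely different route from the paper. The paper performs the cut-off \emph{downstairs}: it modifies the coordinate functions on the smooth manifold $Y_0/\bG$ itself by an ordinary bump function (equal to the coordinates on a small ball, vanishing outside a slightly larger compact ball), extends by zero to all of $Y/\bG$, and pulls back along $\pi$; the only delicate point is then the smoothness of the zero-extension at points of $Y\setminus Y_0$, which is handled by a purely point-set argument (the support is the $\pi$-preimage of a compact, hence closed, subset of $Y/\bG$, so every point of $Y\setminus Y_0$ has a neighbourhood on which the function vanishes identically). You instead perform the cut-off \emph{upstairs}, multiplying the pulled-back coordinates by a $\bG$-invariant bump function on $Y$ manufactured from the slice theorem: a compactly supported $\bG_y$-invariant bump on a slice, extended equivariantly over the tube and by zero outside. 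This works, but it quietly relies on one more fact that you should make explicit: for a proper action the saturation $\bG\cdot C$ of a compact set $C$ is closed in $Y$, which is what guarantees that $\operatorname{supp}(\chi)$ is a closed subset of $Y$ contained in the open set $\pi_0^{-1}(V)$, and hence that the zero-extension of $\chi\tilde f_i$ is smooth. The trade-off is clear: the paper's argument needs only the manifold structure on $Y_0/\bG$ and elementary topology, whereas yours invokes the heavier slice/tube machinery (plus averaging over the compact isotropy group) but produces an explicit invariant partition-of-unity-type function on $Y$, which is a slightly more robust tool if one later needs such cut-offs for other purposes.
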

 \begin{proof}
 Consider a coordinate system $(U, x_1,\dots, x_D)$ on the manifold $Y_0/\bG$  that is centered at $[y]$ and the range of the coordinates is the open ball $B_1(0)\subset \bR^D$ of radius $1$.
 As is standard, we can construct $D$ functions $f_i\in C^\infty(Y_0/\bG)$ that coincide with the coordinate functions $x_i$  on the subset of $U$ corresponding to $B_{1/2}(0)$ and vanish identically
 on the complement in $Y_0/\bG$ of the subset corresponding to the closure of  $B_{3/4}(0)$.
Then,  we extend the functions $f_i$ to functions $F_i$ on $Y/\bG$ that are identically zero on the complement of $Y_0/\bG$ in $Y/\bG$.
 Finally, with the  canonical projection $\pi: Y \to Y/\bG$, we introduce the functions $F_i \circ \pi$ ($i=1,\dots, D$) and show that they verify the properties claimed by the lemma.

 It is obvious that $F_i\circ \pi$ is smooth at every point of $Y_0$ and its support   is contained in $Y_0$.
 It is also plain that these $D$ functions are $\bG$-invariant and are independent at the arbitrarily chosen point $y\in Y_0$ that we started with.
 If $\hat y \in Y\setminus Y_0$, then there exists a coordinate system $(V, \xi_1,\dots, \xi_n)$ (with $n:= \operatorname{dim}(Y)$) so that $\hat y \in V$ and
 $V\cap \operatorname{supp}(F_i\circ \pi) = \emptyset$. This holds  since $\operatorname{supp}(F_i\circ \pi)$ is a closed subset of $Y$, which follows from the fact
 that it is a closed subset of $Y_0$.
 (For the existence of $V$, note that every manifold topology is metrizable, and in a metric space there is a positive distance between a closed set
 and a point outside it.)
 The upshot is that the function $F_i\circ \pi$ is identically zero on $V$, and is thus smooth at $\hat y$.
 \end{proof}

Let us now suppose that the Lie group $\bG$ acting properly on the connected manifold $Y$ is
a direct product
\be
\bG = \Gone \times \Gtwo.
\label{bG12}
\ee
It is plain that the restricted actions of the factors $\Gone$ and $\Gtwo$ are also proper.
Moreover, if $Y_0$, $Y_0^1$ and $Y_0^2$ denote the principal isotropy type submanifolds
for the actions of $\bG$, $\Gone$ and $\Gtwo$, respectively, then
\be
Y_0 \subset Y_0^1 \cap Y_0^2.
\ee
This follows since the respective isotropy groups satisfy
\be
\Gone_y \times \Gtwo_y < \bG_y, \qquad \forall y\in Y.
\ee
Thus, if $\bG_y$ is conjugate to a subgroup of $\bG_z$, then $\Gone_y \times \Gtwo_y$ is
conjugate to  a subgroup of $\Gone_z \times \Gtwo_z$, for any $y,z\in Y$.
Notice that the action of $\Gtwo$  preserves all of the three dense open submanifolds
$Y_0^1$, $Y_0^2$ and $Y_0$ of $Y$. The same holds for the actions of $\Gone$ and $\bG$.
For later purpose, note also that
\be
Y_0/\Gone \subset Y_0^1/\Gone
\ee
\emph{is a dense open submanifold}.

Equality does not hold in the above relations in general. For example, take
$\bG = \operatorname{SU}(n) \times \operatorname{SU}(n)$ acting on $Y = \operatorname{SU}(n)$
according to the rule $(g_1,g_2,y) \mapsto g_1 y g_2^{-1}$.
In this example, the $\Gone$- and $\Gtwo$-actions are free, thus $Y_0^1 = Y_0^2 = Y$.
In contrast, the principal isotropy groups for the $\bG$-action
 are given by the maximal tori of $\operatorname{SU}(n)$ diagonally embedded into
the direct product, and  $Y_0$ is the set of regular elements having $n$ distinct eigenvalues.

\begin{lem}\label{lem:G2}
Suppose that we have a smooth  proper action of $\bG = \Gone \times \Gtwo$ on the connected manifold $Y$.
Let $Y_0$ and  $Y_0^1$ denote the principal isotropy type submanifolds
for the actions of $\bG$ and of the subgroup $\Gone$, respectively.
Then, the action of the subgroup $\Gtwo$ descends to smooth  and proper actions on the quotient manifolds
$Y_0/\Gone$ and $Y_0^1/\Gone$. If the principal isotropy groups of the $\bG$-action are of the form
$\bG_y = \Gone_y \times \{e_2\}$, then
the induced $\Gtwo$-action on $Y_0/\Gone$ is free.
\end{lem}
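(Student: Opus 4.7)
My plan is to address the three claims---well-definedness, smoothness, properness of the descended $\Gtwo$-action, and freeness under the extra isotropy hypothesis---by reducing everything to the commutativity of the factors of $\bG = \Gone\times \Gtwo$ and to the submersion properties of $\pi_0 : Y_0 \to Y_0/\Gone$ (and analogously of $\pi_0^1 : Y_0^1 \to Y_0^1/\Gone$) guaranteed by the preliminaries. First, because the $\Gone$- and $\Gtwo$-actions on $Y$ commute, $\Gtwo$ permutes $\Gone$-orbits; together with the already noted $\Gtwo$-invariance of $Y_0$ and $Y_0^1$, this gives a well-defined set-theoretic action of $\Gtwo$ on each quotient. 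Smoothness is then a routine consequence of the universal property of surjective submersions: the composite $\Gtwo \times Y_0 \to Y_0 \to Y_0/\Gone$ is smooth and constant on the fibres of the surjective submersion $\id_{\Gtwo} \times \pi_0$, so it descends smoothly to $\Gtwo \times Y_0/\Gone \to Y_0/\Gone$. The same reasoning applies with $Y_0^1$ in place of $Y_0$.

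The technical core is properness, and this is where I expect to spend the most effort. My plan is a two-step lifting argument using local sections of $\pi_0$. Suppose $h_n \in \Gtwo$ and $[y_n] \in Y_0/\Gone$ satisfy $[y_n] \to [y]$ and $[h_n y_n] \to [z]$ in $Y_0/\Gone$. A local smooth section of $\pi_0$ near $[y]$ produces, after passing to a subsequence, elements $y'_n \in \Gone y_n$ with $y'_n \to y$; by commutativity $[h_n y'_n] = [h_n y_n]$, and a second application of the same trick near $[z]$ yields $k_n \in \Gone$ with $z'_n := k_n h_n y'_n \to z$. Then the element $(k_n, h_n) \in \bG$ sends $y'_n$ to $z'_n$, with both sequences convergent. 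Since properness of an action automatically restricts to any open invariant submanifold, the restricted $\bG$-action on $Y_0$ is proper, so a subsequence of $(k_n, h_n)$, and hence of $h_n$, converges. The same argument applies verbatim on $Y_0^1$.

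Freeness is then a one-line verification: if $h \in \Gtwo$ fixes $[y] \in Y_0/\Gone$, then $hy = gy$ for some $g \in \Gone$, hence $(g^{-1}, h) \in \bG_y$, and the hypothesis $\bG_y = \Gone_y \times \{e_2\}$ forces $h = e_2$. The main obstacle is thus the properness step, where one must move coherently between sequences in $Y_0$ and their images in the quotient; once the local section trick and the inheritance of properness under restriction to open invariant subsets are in hand, the remaining assertions are essentially bookkeeping.
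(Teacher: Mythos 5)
Your proposal is correct and follows essentially the same route as the paper: smoothness via commutativity of the factors and the submersion property of the quotient map, properness via lifting convergent sequences with local sections (the paper phrases this as local triviality of the bundle $Y_0^1\to Y_0^1/\Gone$) and then invoking properness of the $\bG$-action together with the direct-product structure, and freeness by the same one-line isotropy computation. The only cosmetic difference is that you argue directly on $Y_0$ and transfer to $Y_0^1$, whereas the paper works on $Y_0^1$ first and restricts to the dense open subset $Y_0/\Gone$.
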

\begin{proof}
The action of $(g_1,g_2) \in \Gone\times \Gtwo$ is given by $(g_1,g_2) y = g_1(g_2 y) = g_2 (g_1 y)$ for any $(g_1,g_2,y) \in \bG \times Y$,
and the action of $\Gtwo$ on $Y_0^1/\Gone$ is given by  $g_2 [y]:= [ g_2 y]$, for any $[y] \in Y_0^1/\Gone$.
Because the actions of the two factors commute, the latter $\Gtwo$-action is smooth.
 To show that it is a proper action, consider convergent sequences satisfying
\be
\lim_{n\to \infty} [y_n] = [y_*] \quad \hbox{and} \quad
\lim_{n\to \infty} g_{2,n} [y_n] = [g_{2,n} y_n] =  [\hat y].
\ee
By the local triviality of the bundle $Y_0^1 \to Y_0^1/\Gone$, we can choose the representatives so that $y_n$ converges to $y_*$,
and we can also choose a sequence $g_{1,n}$ in $\Gone$ so that $g_{1,n} g_{2,n} y_n$ converges to $\hat y$.
Then, by the properness of the original $\bG$-action, there must exist a convergent subsequence of $g_{1,n} g_{2,n}$ in $\bG$.
Because $\bG$ is a direct product group, the corresponding subsequences of $g_{1,n}$ in $\Gone$ and $g_{2,n}$ in $\Gtwo$ both are convergent.
Therefore, the induced $\Gtwo$-action is proper; and it clearly restricts to a smooth and proper action on the dense
open submanifold $Y_0/\Gone \subset Y_0^1/\Gone$.

Next, we show that the $\Gtwo$-action on $Y_0/\Gone$  is free if for any $y\in Y_0$ the relation $\bG_y = \Gone_y \times \{e_2\}$ holds.
For this, notice that the equality $g_2 [y] = [y]$ requires the existence of an element $g_1\in \Gone$ so that $g_2 y = g_1 y$.  This is equivalent to
$g_1^{-1} g_2 y =y$, which now implies that $g_2$ is the unit element of $\Gtwo$.
\end{proof}

The above proof was inspired by a similar proof of Lemma 4.1.2 in \cite{MMOPR}.
We remark in passing that the existence of $y\in Y$ for which
$\bG_y = \Gone_y \times \{e_2\}$ entails
that the $\Gtwo$-action is free on the dense open submanifold $Y_0^2$ of $Y$.

 \begin{lem}\label{lem:G3}  Suppose that we have commuting actions of two Lie groups $\Gone$ and $\Gtwo$ on
 a connected manifold $Y$ and $\Gone$ is compact.
Then, the combined action of $\bG := \Gone \times \Gtwo$ is proper if and only if the action of $\Gtwo$ is proper.
 \end{lem}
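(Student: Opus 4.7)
The plan is to establish the equivalence by proving its two implications separately, working with the sequential characterisation of properness given at the beginning of Subsection~\ref{ss:Proper}.

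For the forward direction, suppose the $\bG$-action is proper. To show properness of the $\Gtwo$-action, I would lift a sequence from $\Gtwo$ to the closed subgroup $\{e_1\}\times \Gtwo \subset \bG$: given $(g_{2,n})$ in $\Gtwo$ and $(y_n)$ in $Y$ with $y_n$ and $g_{2,n}y_n$ both convergent, the element $(e_1, g_{2,n})\in\bG$ sends $y_n$ to the same point $g_{2,n}y_n$, so properness of the $\bG$-action yields a convergent subsequence of $(e_1, g_{2,n})$, and hence of $g_{2,n}$ itself.

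For the converse, which is the substantive direction, I would exploit the compactness of $\Gone$ to extract a convergent subsequence of the $\Gone$-component `for free' and then fall back on the assumed properness of $\Gtwo$. Starting from sequences $(g_{1,n})$ in $\Gone$, $(g_{2,n})$ in $\Gtwo$ and $(y_n)$ in $Y$ with $y_n\to y_*$ and $g_{1,n}g_{2,n}y_n\to\hat y$, compactness of $\Gone$ furnishes a subsequence along which $g_{1,n}\to g_{1,*}$. Setting $z_n := g_{1,n}y_n$, continuity of the $\Gone$-action gives $z_n \to g_{1,*}y_*$, while the commutativity of the two factor actions (which is exactly what makes $\bG$ act on $Y$ in the first place) allows me to rewrite $g_{1,n}g_{2,n}y_n = g_{2,n}z_n \to \hat y$. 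The hypothesised properness of the $\Gtwo$-action then produces a further convergent subsequence of $g_{2,n}$, delivering a convergent subsequence of the combined pair $(g_{1,n}, g_{2,n})$ in $\bG$.

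No serious obstacle is anticipated: the proof is a direct book-keeping argument once one observes that compactness of $\Gone$ obviates the need to track the $\Gone$-component in any delicate manner. The only point deserving emphasis is the essential use of commutativity of the two actions, both to define the $\bG$-action and to perform the decoupling $g_{1,n}g_{2,n}y_n = g_{2,n}(g_{1,n}y_n)$ that reduces the converse implication to $\Gtwo$-properness.
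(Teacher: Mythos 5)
Your proposal is correct and follows essentially the same argument as the paper: the converse direction uses compactness of $\Gone$ to extract a convergent subsequence of $g_{1,n}$, sets $\hat y_n := g_{1,n} y_n$, rewrites $(g_{1,n},g_{2,n})y_n = g_{2,n}\hat y_n$ via commutativity, and then invokes properness of the $\Gtwo$-action. The forward direction, which the paper dismisses in one line as obvious, you simply spell out by restricting to the closed subgroup $\{e_1\}\times\Gtwo$; there is no substantive difference.
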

 \begin{proof}
   Consider sequences $y_n$  in $Y$ and $(g_{1,n}, g_{2,n})$  in $\bG$  for which $y_n$ and
 $(g_{1,n}, g_{2,n}) y_n$  both are convergent.  Since $\Gone$ is compact, we may assume that $g_{1,n}$ is a convergent sequence.
 Then $\hat y_n:= g_{1,n} y_n$ is convergent and $g_{2,n} \hat y_n$ is also convergent.
 Thus, if the $\Gtwo$-action is proper, then we can choose a convergent subsequence of $(g_{1,n}, g_{2,n})$, guaranteeing the properness of the $\bG$-action.
Of course, the $\Gtwo$-action is proper if the $\bG$-action is proper.
\end{proof}

If a  smooth proper action preserves an embedded submanifold, then the restricted action on the submanifold is obviously
smooth and proper.
This remains true for the so called initial submanifolds \cite{Mi,OR}, which contain also certain immersed submanifolds, including
the symplectic leaves of smooth  Poisson manifolds and leaves of integrable singular distributions  in general.
The statement of the next lemma, that we shall need,  is a direct consequence of \S2.3.7 and Theorem 4.1.28 in \cite{OR}.

\begin{lem}\label{lem:G4}
Suppose that we have a smooth proper action of a Lie group on a Poisson manifold and
this action preserves a symplectic leaf. Then, restriction yields a smooth proper action on
the symplectic leaf in question.
\end{lem}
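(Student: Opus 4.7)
The only subtle point in this statement is that a symplectic leaf $L$ of a Poisson manifold $(M,P)$ is in general merely an \emph{initial} (immersed) submanifold of $M$, not an embedded one. Once this is acknowledged, the proof reduces to standard properties of initial submanifolds, for which the book \cite{OR} (cited by the authors) provides the relevant toolbox: the inclusion $\iota : L \hookrightarrow M$ is a smooth immersion that is continuous with respect to the intrinsic topology of $L$, and any smooth map from a manifold $N$ into $M$ whose image lies in $L$ is automatically smooth as a map into $L$.

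For smoothness of the restricted action $\bG \times L \to L$, my plan is to factor it through $M$. The composition $\bG \times L \hookrightarrow \bG \times M \to M$ of the inclusion with the given smooth $\bG$-action is smooth and, by the hypothesis that $L$ is $\bG$-invariant, takes values in $L$. The universal property of initial submanifolds then upgrades it to a smooth map $\bG \times L \to L$, which is precisely the restricted action.

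For properness, I would exploit that $\iota$ is continuous, so every sequence convergent in $L$ is also convergent in $M$ to the same limit. Given $(g_n, y_n) \in \bG \times L$ with $y_n \to y_\ast$ and $g_n y_n \to \hat y$ in $L$, these convergences persist in $M$; the properness of the original $\bG$-action on $M$ then produces the desired convergent subsequence of $g_n$ in $\bG$, whose existence is exactly what must be checked.

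I do not anticipate any genuine obstacle here: the entire argument is a bookkeeping consequence of the fact that symplectic leaves carry an initial-submanifold structure, together with the transport of smoothness and properness through continuous/smooth maps. The step most dependent on the cited reference is the invocation of \S2.3.7 of \cite{OR} to know that symplectic leaves of arbitrary Poisson manifolds --- not only those of constant rank --- are initial submanifolds in the appropriate sense; Theorem 4.1.28 of \cite{OR} then packages the restriction statement I have sketched above.
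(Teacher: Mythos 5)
Your proposal is correct and follows the same route as the paper, which simply declares the lemma a direct consequence of \S2.3.7 and Theorem 4.1.28 of \cite{OR}; you have merely unpacked that citation, using the initial-submanifold property of symplectic leaves for smoothness and the continuity of the inclusion for properness. Both steps are sound and match the sequence-based definition of properness the paper works with.
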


\subsection{Integrable systems from generalized Poisson reduction} \label{ss:genIS}

The assumptions that we make below
may appear complicated at first sight, but we shall see later
that they hold in interesting examples.

We begin with a \emph{connected} smooth manifold $Y$ equipped with a smooth bivector $P_Y$
and a smooth \emph{proper} action of a \emph{connected} Lie
group  of the form
\be
\bG = \Gone \times \Gtwo
\quad \hbox{with}\quad
\Gtwo=  \mathrm{U}(1)^{\ell_1} \times \bR^{\ell_2}  \quad\hbox{for some}\quad \ell_1, \ell_2\in \bZ_{\geq 0},\, \ell:=\ell_1 + \ell_2 >0.
\label{G1G2}\ee
The bivector is used to define the `bracket' of any two functions $F,H\in C^\infty(Y)$
as well as the vector field $V_H$ by the formula
\be
(dF \otimes dH)(P_Y) =: \{ F, H\} =: dF(V_H).
\ee
This bracket is anti-symmetric and satisfies the Leibniz property, but is not necessarily a Poisson bracket.
The restriction of the bivector defines a bracket for functions defined on any open subset of $Y$, too.

\begin{assume}\label{assume:G5}
 We require the following properties:
\begin{enumerate}[label=\alph*$)$]
\item \label{itG5:1}
For any $\Gone$-invariant open submanifold $Y'\subseteq Y$, $C^\infty(Y')^{\Gone}$ becomes a Poisson
algebra with the bracket $\{-,-\}$ defined by  the bivector $P_Y$ and its corresponding restrictions.
\item \label{itG5:2}
Let $T_i^Y$ $(i=1,\dots, \ell)$ denote the vector fields on $Y$ associated with the natural basis $T_i$
of the Lie algebra of $\Gtwo$ \eqref{G1G2}. We suppose that we have $\ell$ functions $H_i \in C^\infty(Y)^{\Gone}$ satisfying
\be
T_i^Y = V_{H_i}
\quad \hbox{and}\quad
\{ H_i, H_j\} =0, \qquad \forall i,j=1,\dots, \ell.
\label{Hicond}\ee
\item \label{itG5:3}
The principal isotropy group for the $\Gtwo$-action on $Y$ is trivial (this action is effective)
 and on the principal isotropy type submanifold $Y_0$ of the
$\bG$-action we have $\bG_y = \Gone_y \times \{e_2\}$.
\end{enumerate}
\end{assume}

\begin{rem}\label{rem:G6}
In general, assuming property \ref{itG5:1} on $Y$ may imply that
it is also  valid on any $\Gone$-invariant open submanifold $Y'$.
This property holds automatically, for example, if $(Y,P_Y)$ is a Poisson manifold with a $\Gone$-invariant Poisson bivector.
In the presence of the first equality in equation \eqref{Hicond}, the second equality is equivalent
to the $\Gtwo$-invariance of the  functions $H_i$.
\end{rem}

\begin{thm}\label{thm:G7}
Let us consider a connected, proper $\bG$-manifold $Y$ equipped with a bivector $P_Y$
subject to Assumption \ref{assume:G5}.
Let $H_i$ ($i=1,\dots, \ell$) be functions satisfying \eqref{Hicond} and
denote by $Y_0^1$ and $Y_0$ the principal isotropy type submanifolds of the $\Gone$- and $\bG$-actions,
respectively.
Then, the natural identifications
\be
C^\infty(Y_0^1/\Gone) \simeq C^\infty(Y_0^1)^{\Gone}
\quad \hbox{and} \quad C^\infty(Y_0/\Gone) \simeq C^\infty(Y_0)^{\Gone}
\label{209}\ee
induce Poisson structures on the smooth manifolds $Y_0^1/\Gone$ and  $Y_0/\Gone$.
The $\Gtwo$-action on $Y_0^1$ descends to a  smooth, proper and effective Hamiltonian action
on the connected Poisson manifold $Y_0^1/\Gone$, which is generated by the momentum map
\be
(\cH_1,\dots, \cH_\ell): Y_0^1/\Gone \to \bR^\ell
\label{210}\ee
defined by the relation $\cH_i \circ \pi_0^1 = H_i $ on $Y_0^1$, with the  projection
$\pi_0^1: Y_0^1 \to Y_0^1/\Gone$.
This  `reduced $\Gtwo$-action' restricts to a smooth, proper and free Hamiltonian action on the connected, dense open submanifold
$Y_0/\Gone \subset Y_0^1/\Gone$.
\end{thm}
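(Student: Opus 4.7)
The plan is to construct a Poisson structure on each of the quotients $Y_0^1/\Gone$ and $Y_0/\Gone$, then to invoke Lemma \ref{lem:G2} to obtain the descended $\Gtwo$-action with the claimed properness, effectiveness and freeness, and finally to identify the functions $\cH_i$ as an equivariant momentum map for this action.

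For the Poisson structures, I use that the proper action of the connected group $\Gone$ on the invariant open submanifold $Y_0^1$ has a single isotropy type (its own principal one), so $Y_0^1/\Gone$ is a smooth connected manifold with $\pi_0^1 \colon Y_0^1 \to Y_0^1/\Gone$ a smooth submersion; the natural isomorphism $C^\infty(Y_0^1/\Gone) \simeq C^\infty(Y_0^1)^{\Gone}$ then transports the Poisson algebra structure provided by Assumption \ref{assume:G5}\ref{itG5:1} (applied with $Y'=Y_0^1$) to a Poisson bracket on the quotient. Applying the same reasoning to $Y'=Y_0$ gives the Poisson structure on $Y_0/\Gone$, and the open inclusion $Y_0/\Gone \hookrightarrow Y_0^1/\Gone$ is automatically a Poisson embedding, being induced from the $\Gone$-equivariant open inclusion $Y_0 \hookrightarrow Y_0^1$.

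For the reduced $\Gtwo$-action, I appeal directly to Lemma \ref{lem:G2}: it yields a smooth and proper descended action on $Y_0^1/\Gone$ that restricts to a smooth and proper action on $Y_0/\Gone$, and the hypothesis $\bG_y = \Gone_y \times \{e_2\}$ for $y\in Y_0$ from Assumption \ref{assume:G5}\ref{itG5:3} guarantees that this restricted action is free. Effectiveness on $Y_0^1/\Gone$ then comes for free: if $g_2 \in \Gtwo$ acts trivially there, it also acts trivially on the dense open subset $Y_0/\Gone$ where the action is free, whence $g_2 = e_2$.

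The Hamiltonian identification is the substantive step. Each $H_i$ is $\Gone$-invariant by Assumption \ref{assume:G5}\ref{itG5:2}, so descends to $\cH_i \in C^\infty(Y_0^1/\Gone)$ with $\cH_i \circ \pi_0^1 = H_i$, and the $\cH_i$ Poisson-commute on the quotient because the $H_i$ do on $Y_0^1$. The vector field $T_i^Y = V_{H_i}$ is itself $\Gone$-invariant since it is the fundamental vector field of the commuting $\Gtwo$-action attached to $T_i$; it therefore descends along $\pi_0^1$, and because $\pi_0^1$ intertwines the two $\Gtwo$-actions, the descended vector field is precisely the fundamental vector field of $T_i$ for the reduced $\Gtwo$-action. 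To check that it coincides with $V_{\cH_i}$, I would push the identity $T_i^Y(F) = \{F, H_i\}$ forward along $\pi_0^1$ for an arbitrary $\Gone$-invariant lift $F = \bar F \circ \pi_0^1$: by definition of the quotient Poisson bracket the right-hand side becomes $\{\bar F, \cH_i\}_{\mathrm{red}}$, while the left-hand side becomes the descended vector field applied to $\bar F$. Equivariance of $(\cH_1,\dots,\cH_\ell)$ is then automatic, since $\Gtwo$ is Abelian and each $\cH_i$ is $\Gtwo$-invariant by virtue of $\{H_i,H_j\}=0$. The only slightly delicate point in the plan is this projection-of-Hamiltonian-identity step; once it is in hand, everything else is bookkeeping using Lemma \ref{lem:G2} and the preparatory material of Subsection \ref{ss:Proper}.
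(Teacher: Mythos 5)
Your proposal is correct and follows essentially the same route as the paper's proof: both rest on Lemma \ref{lem:G2} for the descended smooth, proper action (with freeness on $Y_0/\Gone$ from Assumption \ref{assume:G5}\ref{itG5:3} and effectiveness by density), and both identify the momentum map by projecting the identity $T_i^Y = V_{H_i}$ through $\pi_0^1$ using the quotient Poisson bracket defined via $C^\infty(Y_0^1/\Gone)\simeq C^\infty(Y_0^1)^{\Gone}$. Your write-up merely spells out in more detail the projection-of-the-Hamiltonian-identity step that the paper states tersely.
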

\begin{proof}
It follows from Lemma \ref{lem:G2} that we obtain a smooth  proper action of  $\Gtwo$ on $Y_0^1/\Gone$.
The generating vector fields of this  action  are the projections of the vector fields $T_i^Y$ \eqref{Hicond} restricted on $Y_0^1$.
Item \ref{itG5:2} of  Assumption \ref{assume:G5} and the definition of the Poisson bracket on  $C^\infty(Y_0^1/\Gone)$ guarantee that
these generating vector fields are the Hamiltonian vector fields of the component functions $\cH_i$ in \eqref{210}.
The same arguments prove that the restricted action on the dense open submanifold $Y_0/\Gone \subset Y_0^1/\Gone$ is Hamiltonian.
The action of $\Gtwo$ on $Y_0/\Gone$ is free  by Lemma \ref{lem:G2} and item \ref{itG5:3} of  Assumption \ref{assume:G5}, thus it is effective on $Y_0^1/\Gone$.
 \end{proof}

Below, the restriction of the function  $\cH_i$  onto $Y_0/\Gone \subset Y_0^1/\Gone$ will be denoted by the same letter.

\begin{cor}\label{cor:G8}
Consider the  smooth, proper and free  Hamiltonian action of $\Gtwo$ \eqref{G1G2} on $Y_0/\Gone$ and its
associated momentum map $(\cH_1,\dots, \cH_\ell)$. This restricts to a smooth, proper and free Hamiltonian
action on every symplectic leaf $S \subset Y_0/\Gone$.
Consequently, at every point $[y] \in Y_0/\Gone$ there exists $\dim(Y_0/\Gone) - \ell$ functionally
independent elements of $C^\infty(Y_0/\Gone)^{\Gtwo}$, and
at every point $[y] \in S$ there exist
$\dim(S) - \ell$ functionally independent elements of $C^\infty(S)^{\Gtwo}$.
This means that the Hamiltonians $(\cH_1,\dots, \cH_\ell)$ yield
an  integrable system of rank $\ell$ on the Poisson manifold $Y_0/\Gone$ and on
its every symplectic leaf.
\end{cor}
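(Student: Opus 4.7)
The plan is to verify the three conditions of Definition \ref{defn:int} for the Abelian Poisson algebra $\fH$ of smooth functions of $\cH_1,\dots,\cH_\ell$, separately on $Y_0/\Gone$ and on each of its symplectic leaves $S$. All the hard work has been done in Theorem \ref{thm:G7}; what remains is to bundle its outputs with Lemmas \ref{lem:G1} and \ref{lem:G4} together with the connectedness of $\Gtwo = \mathrm{U}(1)^{\ell_1}\times\bR^{\ell_2}$.

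First I would restrict the reduced $\Gtwo$-action to a symplectic leaf. Since each $V_{\cH_i}$ is tangent to every symplectic leaf and $\Gtwo$ is connected, $\Gtwo$ preserves $S$; Lemma \ref{lem:G4} then yields a smooth proper action on $S$, and freeness descends trivially (the $\Gtwo$-stabilizer of a point of $S$ is a subgroup of its stabilizer in $Y_0/\Gone$, which is trivial). This settles the first claim of the corollary and lets me treat $Y_0/\Gone$ and $S$ in parallel.

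Next I would apply Lemma \ref{lem:G1} to produce invariants and then check Definition \ref{defn:int}. Both actions under consideration are free, so the principal isotropy subgroup is trivial, the ambient manifold is itself its principal isotropy type submanifold, and Lemma \ref{lem:G1} yields, at any prescribed point, $\dim(Y_0/\Gone)-\ell$ (respectively $\dim(S)-\ell$) functionally independent $\Gtwo$-invariants. This gives the second and third claims of the corollary. Checking Definition \ref{defn:int} is then straightforward. For \ref{it:int1}: freeness of an $\ell$-dimensional Hamiltonian action makes the $V_{\cH_i}$ pointwise linearly independent, whence the $d\cH_i$ are linearly independent as well, since a relation $\sum c_i\, d\cH_i=0$ would apply the bivector to give $\sum c_i V_{\cH_i}=0$. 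For \ref{it:int2}: properness of the $\Gtwo$-action gives complete flows for the generators $V_{\cH_i}$, and any $H\in\fH$, being a function of the $\cH_i$, has Hamiltonian flow that preserves joint level sets and restricts on each of them to a constant combination of the commuting complete fields $V_{\cH_i}$. For \ref{it:int3}: by connectedness of $\Gtwo$, the commutant $\fF$ equals $C^\infty(Y_0/\Gone)^{\Gtwo}$ (respectively $C^\infty(S)^{\Gtwo}$); the lower bound on $\ddim(\fF)$ comes from Lemma \ref{lem:G1}, while the upper bound is immediate from the fact that any $\Gtwo$-invariant function has differential vanishing on the $\ell$-dimensional tangent space to the free orbit through $[y]$.

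The main point requiring real care, and in my view the only substantive subtlety, is the identification $\fF = C^\infty(\cdot)^{\Gtwo}$: it genuinely uses connectedness of both $\mathrm{U}(1)^{\ell_1}$ and $\bR^{\ell_2}$ to upgrade the pointwise infinitesimal condition $V_{\cH_i}F=0$ for all $i$ into global $\Gtwo$-invariance of $F$. Once this observation is explicit, the remainder of the corollary is a matter of assembling the dimension counts above.
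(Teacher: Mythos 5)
Your proposal is correct and follows essentially the same route as the paper: Theorem \ref{thm:G7} plus Lemma \ref{lem:G4} for the restriction to leaves, Lemma \ref{lem:G1} (with trivial principal isotropy, since the action is free) for the count of independent invariants, and then a verification of Definition \ref{defn:int}. The only cosmetic difference is that the paper establishes completeness of the flows by writing the integral curves explicitly as one-parameter subgroup orbits of $\Gtwo$ (which is the correct reason — completeness comes from the flows being restrictions of a globally defined group action, not from properness as your wording suggests), but this does not affect the substance of the argument.
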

\begin{proof}
We have a smooth, proper and free Hamiltonian action of $\Gtwo$ on $Y_0/\Gone$ by Theorem \ref{thm:G7}, and see from Lemma  \ref{lem:G4}
that the restricted free Hamiltonian action remains  smooth and proper on every symplectic leaf $S$. The claims about the functionally independent
invariants are direct consequences of Lemma \ref{lem:G1} applied to the pertinent actions of
$\Gtwo$, since $\dim(\Gtwo) = \ell$.
These invariants are the joint constants of motion for the Hamiltonians $\cH_i$ \eqref{210} and for their restrictions on $S$.
They automatically form Poisson subalgebras in $C^\infty(Y_0/\Gone)$ and in $C^\infty(S)$, respectively.
In fact, the invariants in $C^\infty(Y_0/\Gone)$ form the commutant  of the Abelian Poisson algebra of functional dimension $\ell$ given by
\begin{equation*}
 \fH_{Y_0/\Gone}:=\vec{\cH}^\ast\bigl(C^\infty(\bR^\ell)\bigr)\,, \quad \vec{\cH}:=(\cH_1,\dots, \cH_\ell),
\end{equation*}
 and the flow of  any  $\Psi\in \fH_{Y_0/\Gone}$ is complete. We can write
 $\Psi=\psi\circ (\cH_1,\ldots,\cH_\ell)\circ \pi_0$ for some $\psi \in C^\infty(\bR^\ell)$, and the corresponding
 integral curve starting at $\pi_0(y_0)\in Y_0/\Gone$ is provided by
 the $\pi_0$-projection\footnote{Here, $\pi_0: Y_0\to Y_0/\Gone$ is the canonical projection;
 the same method gives the integral curves
 on $Y_0^1/G^1$.}
 of the curve
\be
\tau \mapsto \exp_{\Gtwo}(\tau \, T_\psi(y_0))\, y_0,
\qquad T_\psi(y_0):=\sum_{j=1}^\ell \,\partial_j \psi(H_1(y_0),\ldots,H_\ell(y_0))\, T_j \in \operatorname{Lie}(\Gtwo)
\ee
for an arbitrary lift $y_0\in Y_0$ of $\pi_0(y_0)$, cf. item \ref{itG5:2} of Assumption \ref{assume:G5}.
The analogous statement regarding the symplectic leaves of $Y_0/\Gone$ is verified in the same way.
Therefore, according to Definition \ref{defn:int}, we obtain an integrable system on the Poisson manifold $Y_0/\Gone$ and
on its every symplectic leaf.
\end{proof}

\begin{cor}\label{cor:G9}
The proper Hamiltonian action of $\Gtwo$ \eqref{G1G2} is generically free on the Poisson manifold $Y_0^1/\Gone$
and is also generically free on every such  symplectic leaf $\mathcal{S} \subset Y_0^1/\Gone$
that has non-empty intersection with $Y_0/\Gone$.
This implies the following formulae for the functional dimensions of invariants
\be
\operatorname{ddim}\left( C^\infty(Y_0^1/\Gone)^{\Gtwo}\right) = \dim\left(Y_0^1/\Gone\right) - \ell
\quad \hbox{and}\quad
\operatorname{ddim}\left( C^\infty(\mathcal{S})^{\Gtwo}\right) = \dim\left(\mathcal{S}\right) - \ell,
\label{211} \ee
 for every leaf for which $\mathcal{S}\cap Y_0/\Gone \neq \emptyset$.
 This means that the Hamiltonians $(\cH_1,\dots, \cH_\ell)$  define
 integrable systems of rank $\ell$ on $Y_0^1/\Gone$ and
 on its symplectic leaves with the specified property.
 In contrast to the case of $Y_0/\Gone$, these systems can have fixed points.
\end{cor}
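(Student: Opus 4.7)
The plan is to mirror the argument of Corollary \ref{cor:G8}, with the new feature that the $\Gtwo$-action on $Y_0^1/\Gone$ is only \emph{generically} free rather than globally free. The single non-routine step is establishing this generic freeness; everything else will follow by rerunning Corollary \ref{cor:G8} almost verbatim. To prove generic freeness on $Y_0^1/\Gone$, I would use that by Theorem \ref{thm:G7} the $\Gtwo$-action there is smooth, proper, and effective, and that the connected manifold $Y_0^1/\Gone$ contains the non-empty, dense, open submanifold $Y_0/\Gone$ on which $\Gtwo$ acts freely. The general theory of proper actions recalled in Subsection \ref{ss:Proper} provides a principal isotropy class $(L_\princ)$ such that some conjugate of $L_\princ$ is contained in the isotropy of every point. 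Evaluating this property at any $[y]\in Y_0/\Gone$ forces $L_\princ=\{e_2\}$, so the principal isotropy type submanifold of $Y_0^1/\Gone$ coincides with the locus of trivial $\Gtwo$-isotropy and is dense and open.

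For a symplectic leaf $\mathcal{S}\subset Y_0^1/\Gone$ with $\mathcal{S}\cap Y_0/\Gone\neq\emptyset$, Lemma \ref{lem:G4} provides a smooth and proper $\Gtwo$-action on $\mathcal{S}$; since the inclusion of the initial submanifold $\mathcal{S}$ is continuous, $\mathcal{S}\cap Y_0/\Gone$ is open and non-empty in $\mathcal{S}$, and the $\Gtwo$-isotropy there is trivial. Rerunning the previous paragraph on the proper $\Gtwo$-manifold $\mathcal{S}$ then yields trivial principal isotropy, i.e.\ generic freeness on $\mathcal{S}$. I expect this to be the subtlest point of the proof, because a symplectic leaf is in general only an initial (possibly immersed) submanifold and one must ensure that the framework of Subsection \ref{ss:Proper} still applies to it --- this is exactly what Lemma \ref{lem:G4} delivers.

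With generic freeness secured, Lemma \ref{lem:G1} applied to the proper $\Gtwo$-actions on $Y_0^1/\Gone$ and on each such leaf $\mathcal{S}$ produces globally defined, smooth $\Gtwo$-invariant functions that are independent at an arbitrarily chosen point of the respective principal isotropy type submanifold; combined with the fact that $\Gtwo$-orbits there are $\ell$-dimensional, this yields both equalities in \eqref{211}. To match Definition \ref{defn:int} for $\fH:=\vec{\cH}^*(C^\infty(\bR^\ell))$ I would copy the reasoning of Corollary \ref{cor:G8} line by line: involutivity of the $\cH_i$ together with the submersion property of $\vec{\cH}$ on the dense open locus of trivial isotropy gives $\ddim(\fH)=\ell$; the explicit exponential integral-curve formula in the proof of Corollary \ref{cor:G8} applies verbatim with $Y_0$ replaced by $Y_0^1$, establishing completeness of the flows; and since the Hamiltonian vector fields of the $\cH_i$ generate the $\Gtwo$-action, the commutant of $\fH$ coincides with the algebra of $\Gtwo$-invariants, whose functional dimension is provided by \eqref{211}. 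The parallel statement on $\mathcal{S}$ is identical. The final remark that fixed points can occur simply reflects that the $\Gtwo$-isotropy may be non-trivial on $Y_0^1/\Gone\setminus Y_0/\Gone$, a phenomenon excluded by the stronger freeness hypothesis underlying Corollary \ref{cor:G8} but entirely consistent with the weaker setting here.
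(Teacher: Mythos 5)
Your proposal is correct and follows essentially the same route as the paper's own proof: the paper likewise deduces that the principal $\Gtwo$-isotropy on $Y_0^1/\Gone$ and on the relevant leaves is trivial (because the free locus $Y_0/\Gone$ is dense and open), then invokes Lemma \ref{lem:G1} for the dimension counts \eqref{211} and repeats the completeness argument of Corollary \ref{cor:G8}. Your write-up merely makes explicit some steps the paper leaves as "by the same token" (the subconjugacy characterization of $(L)_{\princ}$ and the openness of $\mathcal{S}\cap Y_0/\Gone$ in the leaf via Lemma \ref{lem:G4}), which is entirely consistent with the intended argument.
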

\begin{proof}
By the same token as in the proof of Corollary \ref{cor:G8},
we have a proper Hamiltonian action of $\Gtwo$ on $Y_0^1/\Gone$ and on any of its symplectic leaves $\cS$, for which
the functions $\cH_i$ \eqref{210} and their restrictions yield the momentum map.
The principal isotropy group is just the identity subgroup of $\Gtwo$ both for the action on $Y_0^1/\Gone$ and for its
restriction on those symplectic leaves that intersect $Y_0/\Gone$.
Consequently,  Lemma \ref{lem:G1} implies the equalities \eqref{211}.
The completeness of the flows of the Hamiltonian vector fields of the functions  of the momentum map \eqref{210}
is seen similarly as in the proof of Corollary \eqref{cor:G8}.
\end{proof}

\begin{rem}\label{rem:G10}
We obtain a proper Hamiltonian action of $\Gtwo$ \eqref{G1G2} on every symplectic leaf $\mathcal{\mathcal{S}}$ of
$Y_0^1/\Gone$.   If the dimension of the pertinent principal $\Gtwo$-orbits in $\mathcal{\mathcal{S}}$ is $0< d\leq \ell$, this gives an integrable
system of rank $d$ associated with $d$ linear combinations of the Hamiltonians $(\cH_1,\dots, \cH_\ell)$, whose
Hamiltonian flows engender the principal orbits in question.
\end{rem}

In Sections \ref{Sec:Cot}, \ref{Sec:Heis} and \ref{Sec:qPoiss} we will present
interesting realizations of the following scenario.

\begin{scen}\label{scen:G11}
We have a manifold $M$ equipped with a bivector $P_M$ and  an action of
a  connected \emph{compact} Lie group $\Gone$ fitting into  one of the following three types:
\begin{itemize}
\item $(M,P_M)$ is a Poisson manifold with a $\Gone$-invariant Poisson bivector.
\item $(M,P_M)$ is a Poisson manifold, $\Gone$ is a Poisson--Lie group and the action map $\Gone\times M \to M$ is Poisson.
\item  $(M, P_M, \Gone)$ is a quasi-Poisson manifold in the sense of \cite{AKSM}.
\end{itemize}
The following four conditions hold:
\begin{enumerate}[label=\arabic*)]
\item \label{itScen:1}
The manifold $M$ carries an Abelian Poisson subalgebra $\fH\subset C^\infty(M)^{\Gone}$
for which the vector fields $V_\cH$, $\cH\in \fH$ are all complete.
\item \label{itScen:2}
 On a dense, open, connected $\Gone$-invariant submanifold $Y\subset M$ we have
$\ell$ smooth $\Gone$-invariant functions $H_1,\dots, H_\ell \in  C^\infty(Y)^{\Gone}$ that are independent at every point of $Y$,
satisfy  $\{H_i, H_j\}=0$ for all $1\leq i,j\leq \ell$,  and
 the vector fields $V_{H_i}$ (defined with the aid of the restriction
$P_Y$ of $P_M$ to $Y$) generate a  proper and effective, smooth action of a Lie group $\Gtwo= \mathrm{U}(1)^{\ell_1} \times \bR^{\ell_2}$ ($ \ell_1 + \ell_2 =\ell$)
on $Y$.
\item \label{itScen:3}
The restriction of the elements of $\fH$ on $Y$ can be expressed in terms of $H_1,\dots, H_\ell$
and the span of the exterior derivatives of the elements of $\fH$ coincides
with the span of the exterior derivatives $dH_1,\dots, dH_\ell$ at every point of $Y$.
\item \label{itScen:4}
Denote by $Y_0\subset Y$ the principal isotropy type submanifold for the combined action of the direct product
group $\bG = \Gone \times \Gtwo$ on $Y$. Then, for any $y \in Y_0$ the isotropy group $\bG_y$ is equal to  $\Gone_y \times \{e_2\}$.
\end{enumerate}
\end{scen}

 A key feature of Scenario \ref{scen:G11} is that it ensures the validity of the previous
Assumption \ref{assume:G5}. This is obvious if $(M,P_M)$ is a Poisson manifold,  and directly follows
from the definitions for Poisson--Lie actions \cite{STS} and for quasi-Poisson manifolds \cite{AKSM} as well.
In particular, the  $G^1$- and $G^2$-actions  commute since vector fields $V_{H_i}$ are $G^1$-invariant because $H_i \in C^\infty(Y)^{G^1}$.
It is also easily seen for all three types of cases that
the algebra of first integrals,
\be
\fF:= \{ \cF  \in C^\infty(M) \mid \{ \cF, \cH \} = 0, \,\, \forall \cH \in \fH\},
\label{fF213}\ee
is closed under the bracket defined by $P_M$.

\begin{prop}\label{prop:intM}
Under Scenario \ref{scen:G11},  the quadruple $(M, P_M, \fH, \fF)$ represents an integrable system of rank $\ell$
in the sense of  Definition \ref{defn:int} (also extended to quasi-Poisson manifolds).
\end{prop}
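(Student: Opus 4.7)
The plan is to verify the three conditions of Definition \ref{defn:int} applied to $(M, P_M, \fH, \fF)$. The first two follow rather directly from the scenario. Conditions \ref{itScen:2}--\ref{itScen:3} give $\ddim(\fH) = \ell$: every $\cH \in \fH$ restricts on the dense open $Y$ to a smooth function of $H_1, \dots, H_\ell$, and the spans $\{d\cH(y)\}_{\cH\in\fH}$ and $\{dH_1(y), \dots, dH_\ell(y)\}$ agree at every $y\in Y$, the latter being $\ell$-dimensional by the independence of the $H_i$. Completeness of the flows of $V_\cH$ on $M$ is condition \ref{itScen:1}. For the span statement in \ref{it:int2}, condition \ref{itScen:4} ensures that the $G^2$-action is free on $Y_0$, so $V_{H_1}, \dots, V_{H_\ell}$ are pointwise linearly independent there, and applying the sharp map of $P_M$ to condition \ref{itScen:3} shows $\span\{V_\cH(y)\}_{\cH\in\fH} = \span\{V_{H_i}(y)\}_{i=1}^\ell$ at every $y\in Y$, a subspace of dimension $\ell$ on the dense open $Y_0$.

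The heart of the matter is condition \ref{it:int3}, that is, $\ddim(\fF) = \dim M - \ell$. The upper bound is immediate from the above: any $F \in \fF$ satisfies $\{F,\cH\}(y) = dF(y)(V_\cH(y)) = 0$, so at $y \in Y_0$ the differential $dF(y)$ lies in the $(\dim M - \ell)$-dimensional annihilator of $\span\{V_\cH(y)\}$. For the lower bound I would exhibit $\dim M - \ell$ elements of $\fF$ functionally independent at a prescribed point $y_* \in Y_0$, by adapting the construction of Lemma \ref{lem:G1} to the smooth proper action of $G^2$ on $Y$, which is free on the dense open $Y_0$. Concretely, one picks a chart of the smooth manifold $Y_0/G^2$ centered at $[y_*]$ with $\dim M - \ell$ bump functions equal to local coordinates near $[y_*]$ and supported compactly in the chart, pulls them back via the orbit map to smooth $G^2$-invariant functions on $Y_0$, and extends by zero first to $Y$ and then to $M$. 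Each such $F$ is $G^2$-invariant on $Y$, so by condition \ref{itScen:3} one has $\{F,\cH\}|_Y = \sum_i \partial_i\psi(H_\bullet)\{F,H_i\}|_Y = 0$ for every $\cH = \psi(H_1,\dots,H_\ell) \in \fH|_Y$; by smoothness of the bilinear expression $\{F,\cH\}$ defined through $P_M$ and density of $Y$ in $M$, this vanishing persists on all of $M$, placing $F$ in $\fF$. Functional independence at $y_*$ comes from the coordinate construction.

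The main technical obstacle I anticipate is the smoothness of the extension from $Y$ to $M$. Smoothness on $Y$ itself follows as in the proof of Lemma \ref{lem:G1} from the closedness of the orbit map of a proper action. However, since $G^2 = \mathrm{U}(1)^{\ell_1} \times \bR^{\ell_2}$ is non-compact as soon as $\ell_2 > 0$, the $G^2$-invariant supports produced by pullback from a compact set in $Y_0/G^2$ are themselves non-compact, and their closedness in $M$ across the boundary $M \setminus Y$ is not automatic. When $\ell_2 = 0$ the supports can be taken compact and the extension is immediate. In the general case, I would combine the properness of the $G^2$-action on $Y$ with the completeness of the $V_\cH$-flows on $M$ from condition \ref{itScen:1} to preclude accumulation of the relevant orbits at points of $M \setminus Y$ within the reach of a compactly supported slice neighborhood; in the specific examples developed in Sections \ref{Sec:Cot}, \ref{Sec:Heis}, and \ref{Sec:qPoiss}, this closedness can be verified directly from the geometry of the phase spaces.
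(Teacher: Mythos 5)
Your proposal follows the paper's proof essentially step for step: conditions \ref{it:int1} and \ref{it:int2} of Definition \ref{defn:int} are read off directly from items of Scenario \ref{scen:G11}, and condition \ref{it:int3} is obtained by transplanting the bump-function construction of Lemma \ref{lem:G1} to the proper, effective $\Gtwo$-action on $Y$, producing $\dim(M)-\ell$ independent $\Gtwo$-invariants supported in a saturated subset of $Y$ and extending them by zero to $M$. (The paper phrases this with the principal orbit type submanifold $Y_0^2$ of the $\Gtwo$-action rather than your $Y_0$; since the $\Gtwo$-action has trivial principal isotropy this is immaterial.)

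The one point where you go beyond the paper is the ``main technical obstacle'' you flag, and you are right that it is not automatic. The support produced by Lemma \ref{lem:G1} is of the form $\pi^{-1}(\mathcal{K})$ with $\mathcal{K}$ compact in the quotient; this is closed in $Y$ (preimage of a compact, hence closed, subset of the Hausdorff quotient $Y/\Gtwo$), but for $\ell_2>0$ it is a non-compact $\Gtwo$-saturated set whose closure in $M$ could a priori meet $M\setminus Y$. The paper's proof does not address this: it simply asserts that the functions ``can be smoothly extended to $M$ by setting them to zero on $M\setminus Y$''. For $\ell_2=0$ the support is compact and there is nothing to check. For $\ell_2>0$ your proposed remedy (properness on $Y$ combined with completeness of the flows from condition \ref{itScen:1}) is not by itself a proof, since properness of the $\Gtwo$-action on $Y$ only controls orbit closures inside $Y$; strictly speaking the closedness of the saturation in $M$ is an extra hypothesis to be verified in each realization of the Scenario. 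In the paper's $\bR^\ell$ examples it does hold, for a transparent reason you could have invoked: the $\Gtwo$-flow preserves the quantities (e.g.\ the $K^\reg$-valued component $g$, or $\Xi_R$ up to conjugation) whose regularity defines $Y$, so the saturation of a compact subset of $Y$ stays uniformly away from $M\setminus Y$. In short: same method as the paper, correct where the paper is correct, and your caveat identifies a genuine soft spot that the paper shares rather than resolves.
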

\begin{proof}
It is immediate from conditions \ref{itScen:1}, \ref{itScen:2} and \ref{itScen:3} in Scenario \ref{scen:G11} that $\fH$ satisfies the requirements  in  Definition \ref{defn:int},
so it only remains to demonstrate that
\be
\operatorname{ddim}(\fH) + \operatorname{ddim}(\fF) = \dim(M).
\label{214}\ee
Using also condition \ref{itScen:4}, the Scenario ensures that we have a proper effective action of the group $\Gtwo$  on the connected
dense open submanifold $Y\subset M$.
Recall that $Y_0^2\subset Y$  denotes the dense open subset of principal orbit type for the $\Gtwo$-action.
We know from  Lemma \ref{lem:G1}, applied  to $\Gtwo$ instead of $\bG$, that $C^\infty(Y)^{\Gtwo}$ has functional dimension
$\dim(Y_0^2)-\dim(\Gtwo) = \dim(M) - \ell$.
Moreover,  the proof of Lemma \ref{lem:G1} shows that
there exist
$\dim(M)-\ell$ elements of  $C^\infty(Y)^{\Gtwo}$ that are independent at an arbitrarily chosen point
$y\in Y_0^2$ and their support lies in a proper open  subset of $Y_0^2$.
These functions can be smoothly  extended to $M$ by setting them to zero on $M\setminus Y$.
Condition \ref{itScen:3} of Scenario \ref{scen:G11} implies that the resulting  functions belong to $\fF$ \eqref{fF213}.
Since $Y_0^2$ is dense and open also in $M$, the equality \eqref{214} is proved.
\end{proof}

Comparing  Definition \ref{defn:genact} with  Scenario \ref{scen:G11},
we may now say that the functions $H_1,\dots, H_\ell$ are
 \emph{generalized action variables} on $Y\subset M$ for the integrable system engendered by the Abelian Poisson algebra $\fH$.
In the examples presented later, the functions $H_i\in C^\infty(Y)$ can be extended to continuous functions on $M$,
but not to smooth functions
(essentially since they arise from eigenvalues of matrices).

\medskip

To proceed,
we note that  the $\Gone$ isotropy group is constant along every integral curve
of every $F\in C^\infty(M)^{\Gone}$ for any of the three types of $(M,P_M)$ in Scenario \ref{scen:G11}.
This is  easily demonstrated and is well known.
In particular, the flows of $\fH$  can be restricted to each isotropy type submanifold
of the $\Gone$-action on $M$ and it follows from conditions \ref{itScen:2} and \ref{itScen:3} in Scenario \ref{scen:G11}
that they can also be  restricted on every submanifold in the chain
\be
Y_0 \subset Y_0^1 \subset Y \subset M.
\label{chainI}\ee
(We remind that $Y_0^1\subset Y$ is the principal orbit type submanifold of $Y$ for the $G^1$-action
and $Y_0$  appears in item \ref{itScen:4} of Scenario \ref{scen:G11}.)
Plainly,  $Y_0^1$ is also contained in the principal orbit type submanifold  of $M$ with respect to the $G^1$-action,
which we henceforth  denote by $M_*$.
Indeed, there must exist an element $z\in Y_0^1 \cap M_*$, since  $M_*$ and $Y_0^1$ both are dense open subsets of $M$.
But $Y_0^1$ and $M_*$ are the subsets of points of $Y$ and $M$, respectively,  whose isotropy subgroups are
conjugate to $\Gone_z$.  Therefore, $Y_0^1 = Y \cap M_*$ holds.
A notable consequence is  that
\be
Y_0/\Gone \subset Y_0^1/\Gone \subset  M_*/\Gone
\label{proper?}\ee
is a chain of \emph{dense open} submanifolds. These are expected to be proper subsets in general.

Below, we focus on the `smooth component' of the quotient space  $M/\Gone$ given by the principal orbit type,
denoted as
\be
M_*^\red := M_*/\Gone.
\label{M*red}\ee
This is a Poisson manifold due to the natural identification
\be
C^\infty(M_*^\red) \equiv C^\infty(M_*)^{\Gone}.
\ee
Let $\fH_*$ be the  restriction of $\fH$ on $M_*$ and consider
its commutant  in the Poisson algebra $C^\infty(M_*)^{\Gone}$,
\be
\fF_*^{\Gone}:= \{ \cF  \in C^\infty(M_*)^{\Gone} \mid \{ \cF, \cH \}_* = 0, \,\, \forall \cH \in \fH_*\},
\label{215}\ee
where $\{-,-\}_*$ is the bracket arising from the bivector $P_M$ restricted on $M_*$.

\begin{lem}\label{lem:214}
In any realization of Scenario \ref{scen:G11}, we have the equality
\be
\operatorname{ddim}(\fH_*) + \operatorname{ddim}(\fF^{\Gone}_*) = \dim(M_*/\Gone).
\label{prop213}\ee
\end{lem}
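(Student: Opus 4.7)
The plan is to establish the two equalities $\ddim(\fH_*)=\ell$ and $\ddim(\fF^{\Gone}_*)=\dim(M_*/\Gone)-\ell$ separately. The first is immediate from Proposition \ref{prop:intM}: since $M_*$ is dense and open in $M$, any family in $\fH$ with independent differentials at a generic point of $M$ can be relocated (by density of $M_*$ and openness of independence) to have independent differentials at a point of $M_*$, so $\ddim(\fH_*)=\ddim(\fH)=\ell$. The substance of the argument lies in the second equality, which I plan to bracket between matching inequalities.

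For the upper bound, I identify $\fF^{\Gone}_*$ with a subalgebra of $C^\infty(M_*/\Gone)$ via $\Gone$-invariance. Given $F_1,\dots,F_k\in \fF^{\Gone}_*$ whose differentials are independent at some $[x]\in M_*/\Gone$, I use openness of independence together with the fact, recorded in \eqref{proper?}, that $Y_0^1/\Gone$ is a dense open submanifold of $M_*/\Gone$, to move $[x]$ into $Y_0^1/\Gone$. There, the $F_i$ Poisson-commute with $\fH_*$, and condition \ref{itScen:3} of Scenario \ref{scen:G11} then implies that they also Poisson-commute with $\cH_1,\dots,\cH_\ell$; since by Theorem \ref{thm:G7} the Hamiltonian vector fields of the $\cH_i$ generate the connected $\Gtwo$-action, the restricted $F_i$ are $\Gtwo$-invariant. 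Corollary \ref{cor:G9} then yields $k\le \dim(Y_0^1/\Gone)-\ell=\dim(M_*/\Gone)-\ell$, using that $Y_0^1=Y\cap M_*$ is open in $M_*$ so that $\dim(Y_0^1/\Gone)=\dim(M_*/\Gone)$.

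For the lower bound, I plan to apply Lemma \ref{lem:G1} to the proper action of $\bG=\Gone\times\Gtwo$ on the connected manifold $Y$. By condition \ref{itScen:4} of Scenario \ref{scen:G11}, the principal isotropy groups of this action are of the form $L_{\princ}^{\Gone}\times\{e_2\}$, which gives the dimension count
\[
\dim(Y_0/\bG)=\dim(M)-\dim(\Gone/L_{\princ}^{\Gone})-\ell=\dim(M_*/\Gone)-\ell=:D.
\]
Lemma \ref{lem:G1} then produces $D$ smooth $\bG$-invariant functions on $Y$ that are independent at a chosen point of $Y_0$ and have support in a proper open subset of $Y_0$. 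Following the template of Proposition \ref{prop:intM}, I extend them to $M$ by zero outside $Y$. The extensions are $\Gone$-invariant by construction, and on $Y$ their $\Gtwo$-invariance combined with condition \ref{itScen:3} yields Poisson-commutativity with every element of $\fH$; outside $Y$ they vanish identically, so the bracket with any element of $\fH$ vanishes on all of $M$. Their restrictions to $M_*$ lie in $\fF^{\Gone}_*$ and remain independent at the chosen point, giving the reverse inequality.

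The main technical point, as in Proposition \ref{prop:intM}, is the zero-extension step: one must ensure that the support of the functions from Lemma \ref{lem:G1}, a $\bG$-saturated set of the form $\bG\cdot C$ with $C$ compact in a slice chart of $Y_0/\bG$, behaves appropriately near the boundary of $Y$ in $M$ so that the extension is smooth rather than merely continuous. Closedness in $Y$ follows from properness of the $\bG$-action; extension across $\partial Y\subset M$ is handled by the same argument as in Proposition \ref{prop:intM} and is the only subtle part of the proof.
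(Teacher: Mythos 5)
Your proof is correct and follows essentially the same route as the paper: the lower bound comes from Lemma \ref{lem:G1} applied to the proper $\bG=\Gone\times\Gtwo$ action on $Y$ followed by extension by zero, and the upper bound from the observation that elements of $\fF^{\Gone}_*$ are $\Gtwo$-invariant (hence $\bG$-invariant) on the relevant dense open set. The only cosmetic difference is that you route the upper bound through $Y_0^1/\Gone$ and Corollary \ref{cor:G9}, whereas the paper states directly that the restriction of any $\cF\in C^\infty(M_*)^{\Gone}$ to $Y_0$ lies in $C^\infty(Y_0)^{\bG}$, so it admits at most $\dim(Y_0/\bG)$ independent elements.
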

\begin{proof}
Consider $\bG = \Gone \times \Gtwo$ and choose an arbitrary $y \in Y_0$.
In the proof of Lemma \ref{lem:G1} we constructed
$D:= \dim(Y_0/\bG)$ elements of $C^\infty(Y_0)^\bG$  that are independent at $y$ and are supported
inside a proper $\bG$-invariant open subset of $Y_0$.
A quick look at the proof shows that extending those functions by zero we obtain $\Gone$-invariant functions
on $M_*$. The extended functions belong to $\fF^{\Gone}_*$ \eqref{215}, because on $Y_0$ the span
of the  vector fields $V_\cH$ for $\cH\in \fH$ is the same as the span of vector fields
generating the $\Gtwo$-action (see item \ref{itScen:3} in Scenario \ref{scen:G11}).
In our case
\be
D = \dim(Y_0/\bG)=\dim(M_*/\Gone) - \ell = \dim(M_*^\red) - \ell,
\ee
and we obviously have $\ddim(\fH_*) = \ell$.
The equality \eqref{prop213} is proved since at any point $y$ from the dense open subset $Y_0$ of $M_*$
we exhibited $D$ independent elements  of $\fF^{\Gone}_*$,
and there cannot be more such elements because the restriction of any $\cF  \in C^\infty(M_*)^{\Gone}$ on $Y_0$ belongs
to $C^\infty(Y_0)^\bG$.
\end{proof}

The following theorem is the main result of the present paper.
After restriction on $M_*$,
it characterizes  the Hamiltonian reductions of the integrable systems obeying Scenario \ref{scen:G11}.

\begin{thm}\label{thm:G12}
Suppose that  Scenario \ref{scen:G11} is verified. Then,
the Abelian Poisson algebra $\fH$  descends to
an integrable system of rank $\ell$ on
the Poisson manifold $M_\ast^\red$ \eqref{M*red}.
The restrictions of this system to the  dense open submanifolds  $Y_0/\Gone$ and $Y_0^1/\Gone$ \eqref{proper?} of $M_*^\red$
possess generalized action variables given by\footnote{
To be precise, we here should use the projected Hamiltonians $(\cH_1,\ldots,\cH_\ell)$ satisfying $\cH_i\circ \pi_0^1=H_i$, cf. \eqref{210}. By abuse of notation, we identify
the functions $H_i$ and $\cH_i$, and we also adopt this convention in the next sections.}
$(H_1,\ldots,H_\ell)$, and the corresponding Hamiltonian $\Gtwo$-action  is free on $Y_0/\Gone$.
As a result, $\fH$ induces integrable systems of rank $\ell$ with generalized action variables given by the restriction of $(H_1,\ldots,H_\ell)$ on
\begin{itemize}
\item every symplectic leaf of $Y_0/\Gone$;
\item every such symplectic leaf of $Y_0^1/\Gone$ that intersects $Y_0/\Gone$.
\end{itemize}
\end{thm}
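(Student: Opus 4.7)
The plan is to assemble the theorem from the machinery already developed in Subsection \ref{ss:genIS}. The key preliminary observation is that Scenario \ref{scen:G11} entails Assumption \ref{assume:G5} for the $\bG$-action on the connected, $\bG$-invariant open submanifold $Y\subset M$, equipped with the restriction $P_Y$ of $P_M$. Item \ref{itG5:1} of that Assumption holds in each of the three listed cases (genuine Poisson, Poisson--Lie, or quasi-Poisson in the sense of \cite{AKSM}), since compactness of $G^1$ together with the structure-compatible action guarantees that $C^\infty(-)^{G^1}$ inherits a Poisson bracket. Items \ref{itG5:2} and \ref{itG5:3} of Assumption \ref{assume:G5} are then respectively conditions \ref{itScen:2} and \ref{itScen:4} of the Scenario (using the standing commutativity $\{H_i,H_j\}=0$). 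Therefore, Theorem \ref{thm:G7} together with its Corollaries \ref{cor:G8} and \ref{cor:G9} apply, and they directly deliver the integrable systems of rank $\ell$ on $Y_0/G^1$, on $Y_0^1/G^1$, and on the two classes of symplectic leaves named in the final bullets of the statement, with Hamiltonians $\cH_i$ defined via $\cH_i\circ\pi_0^1=H_i$ (identified notationally with $H_i$ as per the footnote). The freeness of the induced $G^2$-action on $Y_0/G^1$ is part of Theorem \ref{thm:G7}.

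Next I would check that $(H_1,\ldots,H_\ell)$ fulfil Definition \ref{defn:genact} on both $Y_0/G^1$ and $Y_0^1/G^1$. Item (i) of that definition---being an equivariant momentum map for a proper, effective action of $G^2= \mathrm{U}(1)^{\ell_1}\times\bR^{\ell_2}$, additionally free on $Y_0/G^1$---is precisely the content of Theorem \ref{thm:G7}. Item (ii)---that elements of $\fH$ restrict to functions of $H_1,\ldots,H_\ell$ and that their exterior derivatives span the same subspace---is a direct transcription of condition \ref{itScen:3} of Scenario \ref{scen:G11}, which descends through $\pi_0^1$ and $\pi_0$ because every function involved is $G^1$-invariant. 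The analogous verification for the leaves rests on Lemma \ref{lem:G4} to restrict the proper $G^2$-action to symplectic leaves, after which the same two items hold by restriction.

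For the descent to the smooth quotient $M_\ast^\red$, I would first argue that $\fH\subset C^\infty(M)^{G^1}$ projects to an Abelian Poisson subalgebra $\fH_\ast^\red$ of $C^\infty(M_\ast^\red)$ of functional dimension $\ell$. The projection is well defined since $\fH$ is $G^1$-invariant, the Abelian property is preserved, and the functional dimension equals $\ell$ because the $H_i$ are independent on the dense open subset $Y_0^1/G^1\subset M_\ast^\red$ and the elements of $\fH$ are expressible in terms of the $H_i$ there by condition \ref{itScen:3}. The commutant of $\fH_\ast^\red$ in $C^\infty(M_\ast^\red)$ is exactly $\fF_\ast^{G^1}$ of \eqref{215}, whose functional dimension is given by Lemma \ref{lem:214}, yielding condition \ref{it:int3} of Definition \ref{defn:int}. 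Completeness of the Hamiltonian flows required by condition \ref{it:int2} passes from $M$ to $M_\ast^\red$ because the flow of every $G^1$-invariant function preserves each isotropy-type stratum of the $G^1$-action; in particular it preserves $M_\ast$, and condition \ref{itScen:1} of the Scenario then supplies the completeness.

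The main obstacle is not any single deep step but careful bookkeeping along the chains \eqref{chainI} and \eqref{proper?} of dense open submanifolds: one must track that every restriction and every projection (on $Y$, $Y_0^1$, $Y_0$, $M_\ast$, and their quotients by $G^1$) is compatible with the bivector, the $G^2$-action, and the functions $H_i$, and then invoke Lemma \ref{lem:214} at the decisive moment to identify the functional dimension of $\fF_\ast^{G^1}$. Everything else amounts to matching the conditions of Definitions \ref{defn:int} and \ref{defn:genact} against the conclusions of Theorem \ref{thm:G7} and its Corollaries.
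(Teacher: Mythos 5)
Your proposal is correct and follows essentially the same route as the paper: both verify that Scenario \ref{scen:G11} entails Assumption \ref{assume:G5} so that Theorem \ref{thm:G7} and Corollaries \ref{cor:G8}, \ref{cor:G9} deliver the statements on $Y_0/\Gone$, $Y_0^1/\Gone$ and their symplectic leaves, and both obtain integrability on $M_*^\red$ by projecting $\fH_*$ and $\fF_*^{\Gone}$ through the submersion $M_*\to M_*/\Gone$, invoking Lemma \ref{lem:214} for the functional dimension of the commutant and matching the span of the reduced Hamiltonian vector fields with that of the fundamental vector fields of the free $\Gtwo$-action on $Y_0/\Gone$. The bookkeeping you flag (flow-invariance of the chains \eqref{chainI}, \eqref{proper?} via conditions \ref{itScen:2} and \ref{itScen:3}) is exactly how the paper handles it.
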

\begin{proof}
Since the map $M_* \to M_*/\Gone$ is a surjective submersion,
the rings of $\Gone$-invariant functions $\fH_*$ and $\fF_*^{\Gone}$ in \eqref{prop213} descend to
 rings of functions on  $M_*^\red$ \eqref{M*red} of the same functional dimensions,
  $\ell$ and  $\dim(M_*/\Gone) - \ell$.
 The restrictions of the Hamiltonian vector fields of $\fH$  on $M_*$ are  $\Gone$-invariant, and their projections
  yield
 the Hamiltonian vector fields of the reduced Hamiltonians on $M_*^\red$.     These vector field span an $\ell$-dimensional linear space
 at every point of $Y_0/\Gone$, since their span
  coincides with the span of the fundamental vector fields of the reduced $\Gtwo$-action,
  which  is free on $Y_0/\Gone$ by
 Theorem \ref{thm:G7}.  Thus, we conclude  that the reduction results in an integrable system on the Poisson manifold $M_*^\red$ \eqref{M*red}.

The conditions \ref{itScen:2} and \ref{itScen:3} in Scenario \ref{scen:G11} imply that the dense open submanifolds $Y_0/\Gone$ and $Y_0^1/\Gone$ \eqref{proper?} are invariant
with respect to the flows of the integrable system living on $M_*^\red$.
The remaining claims about generalized action variables and integrability on symplectic leaves
are then direct consequences of
Theorem \ref{thm:G7} and
 its  Corollaries \ref{cor:G8} and \ref{cor:G9}.
Because of conditions \ref{itScen:2} and \ref{itScen:3}, the $\Gtwo$-invariant functions mentioned in
the two corollaries are the joint constants of motion
for the reduced Hamiltonians that arise from the elements of $\fH$ on the relevant spaces.
The functional dimension of the ring of functions descending from $\fH$
is equal to $\ell$ on each of the pertinent symplectic leaves, since  the exterior derivatives of those functions span the
same linear space as do the components of the momentum map \eqref{210} that generates a proper and effective
action of $\Gtwo$ on $Y_0^1/\Gone$, which is a free action on $Y_0/\Gone$.
\end{proof}

The reductions of the integrable system  $(M,P_M,\fH)$ outside the principal orbit type $M_*\subset M$
should be studied as well, but  this  is outside
the scope of the present paper.
We merely mention in passing that the
following equality should be useful in such a study:
\be
\operatorname{ddim}(\fH) + \operatorname{ddim}(\fF^{\Gone}) = \dim(M_*/\Gone),
\label{212}\ee
where
\be
\fF^{\Gone}:= \fF \cap C^\infty(M)^{\Gone}
\label{212+}\ee
with $\fF$ in \eqref{fF213}.
This basically follows from the same arguments that led to  \eqref{prop213}.

In our general considerations,
 the constants of motion arose from the invariants for the action of the group $\Gone\times \Gtwo$.
 In specific examples,  one may wish to describe the Poisson algebra $\fF^{\Gone}$ \eqref{212+} and its reductions as explicitly as possible.
Finally, it is worth noting that
the $\Gone$-action on $M$ is usually generated by a momentum map, which is constant along the flows
of the elements of $C^\infty(M)^{\Gone}$.
As a  result,  at least generically, one can obtain symplectic leaves in the reduced phase space  by
fixing  the $\Gone$-invariant functions depending only on the momentum map to constant values.

\subsection{Action-angle coordinates from generalized Hamiltonian torus actions} \label{ss:AA}
Before proceeding with the examples, let us explain an auxiliary result about the existence of action-angle coordinates, that fits our mechanism.
This justifies the terminology of \emph{generalized action variables} used in Theorem \ref{thm:G12}.
Our proof is strongly inspired by \cite[Thm.~3.6]{LGMV}.

Hereafter, $C_r^k = (-r,r)^k$ denotes the open hypercube in $\bR^k$ centred at the origin, for which
all coordinates vary  between $-r$ and $r$:
\be
 C_r^k:= \{ x \in \bR^k \mid  -r < x_i < r,\,\, \forall i=1,\dots, k\}.
\ee

\begin{thm} \label{thm:AA}
Assume that $(\cM,P_{\cM})$ is a connected smooth Poisson manifold endowed with a proper and effective, smooth Hamiltonian action of the Lie group
$\Gtwo=  \mathrm{U}(1)^{\ell_1} \times \bR^{\ell_2}$  where $\ell_1, \ell_2\in \bZ_{\geq 0}$, $\ell:=\ell_1 + \ell_2 >0$.
Write $(P_1,\ldots,P_\ell):\cM\to \bR^\ell$ for a corresponding momentum map, and set $d=\dim(\cM)$.
Take any $y_0 \in \cM$ with trivial isotropy group and put $p_i := P_i - P_i(y_0)$.
Then,  there exist a $\Gtwo$-invariant open neighbourhood $\cU\subset \cM$ around $y_0$ and functions
\be
\theta_1,\ldots,\theta_{\ell},z_1,\ldots,z_{d-2\ell}:\cU\to\bR,
\label{app:AA1}
\ee
such that
\begin{enumerate}
 \item the functions $(e^{\ri \theta_{1}},\ldots,e^{\ri \theta_{\ell_1}},\theta_{\ell_1+1},\ldots,\theta_{\ell},p_1,\ldots,p_\ell,z_1,\ldots,z_{d-2\ell})$ define a diffeomorphism
 $\cU \stackrel{\sim}{\longrightarrow} \Gtwo \times C^{d-\ell}_\epsilon$ for some $\epsilon >0$, and $y_0$ corresponds to $(e,0)$;
 \item the Poisson structure can be written in terms of these coordinates as
 \be
P_{\cM}\big|_{\cU} = \sum_{i=1}^{\ell} \frac{\partial}{\partial \theta_i} \wedge \frac{\partial}{\partial p_i}
+ \sum_{\substack{a,b=1\\a<b}}^{d-2\ell} f_{ab}(z)\,\frac{\partial}{\partial z_a} \wedge \frac{\partial}{\partial z_b}\,,
 \label{app:AA2}
 \ee
 for some smooth functions $f_{ab}$ depending only on $z_1,\ldots,z_{d-2\ell}$.
\end{enumerate}
\end{thm}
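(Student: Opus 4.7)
The plan is to combine the slice/tube theorem for proper free actions with the Carath\'eodory--Jacobi--Lie--Weinstein (CJL) splitting theorem for Poisson manifolds. The tube theorem will supply a $\Gtwo$-equivariant local product decomposition with natural group coordinates $\theta_1,\ldots,\theta_\ell$, while the CJL splitting applied to the Poisson-commuting independent functions $p_1,\ldots,p_\ell$ will produce canonical local coordinates $(q_i,p_i,z_a)$ putting $P_{\cM}$ in the form $\sum_i \partial_{q_i}\wedge\partial_{p_i}+\sum_{a<b}f_{ab}(z)\,\partial_{z_a}\wedge\partial_{z_b}$. The remaining task is to identify $q_i$ with the group coordinate $\theta_i$ by reparametrising the slice.

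First, since $y_0$ has trivial isotropy and the action is proper, the $\Gtwo$-action is free on a $\Gtwo$-invariant open neighbourhood $V\supset\Gtwo\cdot y_0$, and the tube theorem yields a $\Gtwo$-equivariant diffeomorphism $V\simeq \Gtwo\times S'$ with a slice $S'\ni y_0$ of dimension $d-\ell$. Parametrising $\Gtwo=\mathrm{U}(1)^{\ell_1}\times \bR^{\ell_2}$ by coordinates $\theta_i$ associated with the basis $T_i$ of its Lie algebra, the fundamental vector field of $T_i$ in the product chart is $\partial_{\theta_i}$; hence the momentum-map condition $V_{P_i}=T_i^{\cM}$ gives $V_{p_i}=\partial_{\theta_i}$ on all of $V$. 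The abelianness of $\Gtwo$ and the Poisson property of the momentum map imply $\{p_i,p_j\}=0$; trivial isotropy at $y_0$ makes the $V_{p_i}(y_0)$ independent, and since $V_{p_i}=P_{\cM}^\sharp(dp_i)$ this forces linear independence of the $dp_i$ at $y_0$ as well.

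Applying the CJL splitting theorem (in the spirit of \cite{LGMV}) to the $\ell$ independent Poisson-commuting functions $p_1,\ldots,p_\ell$ then yields local coordinates $(q_1,\ldots,q_\ell,p_1,\ldots,p_\ell,z_1,\ldots,z_{d-2\ell})$ on a neighbourhood $\tilde \cU\subset V$ centred at $y_0$, such that $\{q_j,p_i\}=\delta_{ij}$, $\{q_i,q_j\}=\{p_i,p_j\}=0$, $\{p_i,z_a\}=\{q_i,z_a\}=0$, and $\{z_a,z_b\}=f_{ab}(z)$ depends only on $z$; this last property is a direct Jacobi-identity consequence of the preceding bracket relations. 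The key compatibility step comes next: $V_{p_i}(q_j)=\{q_j,p_i\}=\delta_{ij}$ combined with $V_{p_i}=\partial_{\theta_i}$ forces $q_j=\theta_j+\phi_j(s)$ on $\tilde\cU$ for some smooth functions $\phi_j$ on the slice (using a local lift to $\bR$ for the $\mathrm{U}(1)$-factors, centred so that $\phi_j(y_0)=0$). Replacing $S'$ by the adjusted slice $\tilde S=\{\exp\bigl(-\sum_k \phi_k(s)T_k\bigr)\cdot s : s\in S'\}$, the corresponding new group coordinates on the tube coincide with the $q_i$, giving the identification $\theta_i\equiv q_i$.

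Finally, $V_{p_i}z_a=0$ makes the transversal coordinates $z_a$ $\Gtwo$-invariant, so they extend from $\tilde\cU$ to a $\Gtwo$-saturated neighbourhood by pullback from $\tilde S$; the bivector formula extends as well, because both sides are $\Gtwo$-invariant tensors that agree on $\tilde\cU$. Taking $\cU$ to be the preimage of a hypercube $C^{d-\ell}_\epsilon$ in the coordinates $(p_1,\ldots,p_\ell,z_1,\ldots,z_{d-2\ell})$ on $\tilde S$, for $\epsilon>0$ small enough, will then deliver the diffeomorphism $\cU\simeq \Gtwo\times C^{d-\ell}_\epsilon$ together with the Poisson bivector formula. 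The main obstacle is expected to be the compatibility between the real local coordinates $q_i$ produced by CJL and the $\Gtwo$-valued coordinates on $\cU$, with their $\mathrm{U}(1)$-periodicities for $i\le \ell_1$: this is resolved by the $\Gtwo$-invariance of $q_i-\theta_i$, which ensures that the slice adjustment via the $\Gtwo$-valued map $s\mapsto \exp\bigl(-\sum_k \phi_k(s)T_k\bigr)$ is defined globally on the slice and is compatible with the passage from the local $\bR$-valued lift of $\theta_i$ to the global coordinate $e^{\ri\theta_i}$ on the $\mathrm{U}(1)$-factors.
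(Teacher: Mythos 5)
Your proposal is correct and follows essentially the same route as the paper's proof: a local trivialization of the free proper $\Gtwo$-action (your tube theorem, the paper's local triviality of $\pi:\ccM\to\ccM/\Gtwo$) giving coordinates $(\vartheta_i,p_j,x_a)$, the generalized Carath\'eodory--Jacobi--Lie theorem of \cite{LGMV} producing $(\theta_i,p_j,z_a)$ on a small non-invariant neighbourhood, the observation that $\theta_i-\vartheta_i$ and $z_a$ are $\Gtwo$-invariant (your $q_j=\theta_j+\phi_j(s)$, the paper's \eqref{AA12}--\eqref{AA13}), and extension to a saturated tube by invariance. Your reformulation of the last step as an adjustment of the slice is only a cosmetic repackaging of the paper's explicit change of trivialization, and your handling of the $\mathrm{U}(1)$-periodicity matches the paper's footnote on angle coordinate patches.
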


\begin{proof}
Let $\ccM$ denote the submanifold of principal orbit type for the $G^2$-action.
Recall that $\pi: \ccM \to \ccM/G^2$ is a locally trivial principal fiber bundle.
By choosing a suitable open subset $\varSigma \subset \ccM/G^2$ and a local section over $\varSigma$
that passes through $y_0$,
we can trivialize the restricted bundle.
Thus, we obtain a  $G^2$-equivariant `trivializing diffeomorphism'
\be
T: \cW := \pi^{-1}(\varSigma) \to G^2 \times \varSigma,
\label{AC3}\ee
where $G^2$ acts by left-multiplications of the first factor of $G^2 \times \varSigma$.
Since the  momentum map is $G^2$-invariant, its shifted components  can be regarded as functions on $\varSigma$,  still denoted as $p_i$.
By choosing $\varSigma$, we can extend these $\ell$ functions to a coordinate system on $\varSigma$ that
yields a diffeomorphism
\be
\psi: \varSigma \to  C_r^{d-\ell},
\label{AA4}\ee
in such a way
that $\psi(\pi(y_0))$ is
the center of the cube.
Let us call the resulting  coordinates  $(p_i, x_a)$.
Let us also introduce coordinates $\vartheta_1,\ldots,\vartheta_\ell$ on $\Gtwo$
by means of the presentation
\be
\Gtwo=\mathrm{U}(1)^{\ell_1} \times \bR^{\ell_2}
=\{(e^{\ri \vartheta_{1}},\ldots,e^{\ri \vartheta_{\ell_1}},\vartheta_{\ell_1+1},\ldots,\vartheta_{\ell})\}\,.
\label{AAtriv}
\ee
By combining the above, we arrive at  the diffeomorphism
\be
\Psi = \id \times \psi: G^2 \times \varSigma \to \mathrm{U}(1)^{\ell_1} \times \bR^{\ell_2}  \times C_r^{d-\ell},
\label{AA6}\ee
whereby $\Psi\circ T$  equips $\cW$ with the coordinates\footnote{Strictly speaking, one should cover each
$\mathrm{U}(1)$ factor with two coordinate patches using angle coordinates that agree modulo $2\pi$ on the overlap.
 One can check that the relations \eqref{app:AA2} and \eqref{AA3} are then valid in all of the resulting coordinate systems.}
$(\vartheta_i, p_j, x_a)$,
and $y_0\in \cW$ corresponds to $(e,0)\in \Gtwo \times C_r^{d-\ell}$.

Since $(p_1,\ldots,p_\ell)$ is a momentum map and the $(p_j,x_a)$ are $\Gtwo$-invariant, one sees that the Poisson structure restricted to $\cW$ must satisfy
\begin{equation}
 \begin{aligned}
  \{\vartheta_i,p_j\}&=\delta_{ij}, \qquad \{p_i,p_j\}=0, \quad &&\{p_i,x_a\}=0, \\
\{\vartheta_i,\vartheta_j\}&=g_{ij}, \qquad \{\vartheta_i,x_b\}=h_{ib}, \quad &&\{x_a,x_b\}=F_{ab},
\label{AA3}
 \end{aligned}
\end{equation}
for some smooth functions $F_{ab},g_{ij},h_{ib}$ depending only on the coordinates $(p_j,x_a)$.

By the generalization of the Carathéodory--Jacobi--Lie theorem given in \cite[Thm.~2.1]{LGMV},
we can find a coordinate system on a (not necessarily $\Gtwo$-invariant)
open neighbourhood $\cV\subset \cW$ of $y_0$
such that the Poisson structure takes the form \eqref{app:AA2}.
These coordinates can be collected into a diffeomorphism
\be
\Phi: \cV \to C_{r_1}^d,
\label{AA8}\ee
for some $0<r_1<\pi$,
in such a manner that  $\theta_i, p_j, z_a$ denote the  component functions of $\Phi$,   $\Phi(y_0)=0 \in \bR^d$,
and the $p_i$ represent the  shifted momentum map restricted on $\cV$.

We can choose an open subset
\be
\cU' \subset  \cV \subset \cW
\label{AA9}\ee
whose image by  the map $\Psi\circ T$ is a  hypercube $C_{r_2}^d$,
with some $0< r_2 < \mathrm{min}(r,\pi)$.
Then, we have two coordinate systems on $\cU'$, and the change of coordinates gives
a diffeomorphism
\be
 \Phi \circ (\Psi \circ T)^{-1}:   C_{r_2}^d \to \cD,
 \label{AA10}\ee
 where $\cD$ is some open subset of $C_{r_1}^d = \mathrm{Im}(\Phi)$.
In terms of the respective coordinates, this can be described  as  a  map
\be
(\vartheta_i, p_j, x_a) \mapsto (\theta_k, p_m, z_b).
\label{AA11}\ee
By the construction of the two coordinate systems on $\cU'$,  if we express $(\theta_n, p_m, z_b)$ as functions
of  $(\vartheta_i, p_j, x_a)$, then the fundamental Poisson brackets in \eqref{AA3} must imply
the Poisson brackets of the form encoded by the Poisson tensor in \eqref{app:AA2}.
Observing that $\{\theta_i - \vartheta_i, p_j\} =0$ on $\cU'$, we see that
\be
\theta_i = \vartheta_i + F_i(\vec{p}, \vec{x})
\label{AA12}\ee
must hold for the change of coordinates map \eqref{AA11}, with some smooth functions $F_i$.
 Moreover, because $\{z_a, p_i\}=0$ by \eqref{app:AA2}, we must have
\be
z_a = \cZ_a(\vec{p}, \vec{x})
\label{AA13}\ee
for some smooth functions $\cZ_a$.
Because the change of coordinates is a diffeomorphism,  we may conclude that after
restriction on $C_{r_3}^{d-\ell}$ for some $0< r_3 \leq r_2$,
 the map
\be
\zeta: (\vec{p}, \vec{x}) \mapsto (\vec{p}, \vec{\cZ}(\vec{p}, \vec{x}))
\label{AA14}\ee
is a diffeomorphism from $C_{r_3}^{d-\ell}$ onto some open neighbourhood
of the origin in $\bR^{d-\ell}$, which we denote $\cD_{\proj}'$.
(We here used that the Jacobian determinant of the map \eqref{AA11} is nowhere zero.
Note also that $\cD_{\proj}'$ is an open subset of the obvious projection on $\bR^{d-\ell}$ of $\cD$ in \eqref{AA10}.)

By simply forgetting the restrictions $- r_2 < \vartheta_j < r_2$  on the coordinates on $\cU'$,
the formulae \eqref{AA12}, \eqref{AA13}  give rise to a diffeomorphism
 \be
G^2 \times C_{r_3}^{d-\ell} \to G^2 \times \cD_{\proj}'.
\label{AA15}\ee
In detail, this map operates according to
\be
(e^{\ri \vartheta_{1}},\ldots,e^{\ri \vartheta_{\ell_1}},\vartheta_{\ell_1+1},\ldots,\vartheta_{\ell}, \vec{p}, \vec{x})
\mapsto
(e^{\ri \theta_{1}},\ldots,e^{\ri \theta_{\ell_1}},\theta_{\ell_1+1},\ldots,\theta_{\ell},
{\zeta}(\vec{p},\vec{x}))
\label{A15+1}
\ee
using equations \eqref{AA12} and \eqref{AA14}.
The non-zero Poisson brackets of the functions $(\theta_i, p_j, z_a)$ as given by this mapping using the relations \eqref{AA3}
are still of the form
\be
\{ \theta_i, \theta_j \} =0,
\quad
\{\theta_i, p_j\} = \delta_{ij},
\quad
\label{AA16}\{ z_a, z_b\} = f_{ab}
\ee
with some functions $f_{ab}$ of $\vec{z}$.
Indeed, if this holds for $\vec{\vartheta} \in C_{r_2}^\ell$, then it
holds everywhere, as is easily seen from the formulae \eqref{AA12}, \eqref{AA14}.

The coordinates $(\theta_i, p_j, z_a)$  constructed by the above change of variables
cover an open subset $\pi^{-1}(\varSigma_0') \subset \cW$,
where $\varSigma_0'$ correspond to
$\cD_{\proj}'$ in terms of the coordinates $(p_i, z_a)$.
We can further restrict to an open subset $\varSigma_0\subset \varSigma_0'$ that corresponds
to a hypercube $C_\epsilon^{d-\ell}$ in terms of the coordinates $(p_i, z_a)$.
We have shown that the restriction of the Poisson tensor $P_\cM$  on $\cU := \pi^{-1}(\varSigma_0)$ acquires the form  \eqref{app:AA2}, and
hence the theorem is proved.
\end{proof}

\begin{rem}
Following \cite{LGMV}, we call the variables $p_i, \theta_j$
(generalized) action-angle coordinates
and the $z_a$ transversal coordinates.
Their construction out of the obvious
coordinates $\vartheta_i, p_j, x_a$ involves two basic ingredients:
 i) shrinking the base $\varSigma$ and
changing coordinates on it according to \eqref{AA14}; ii) changing the trivialization of the bundle $\pi^{-1}(\varSigma_0)$
as is induced by the formula \eqref{AA12}.
The term `generalized' indicates that $G^2$ is not necessarily compact.
\end{rem}

\begin{cor} \label{Cor:AA}
Consider an integrable system with generalized  action variables in the
sense of Definitions \ref{defn:int} and \ref{defn:genact}.
Then, there exist generalized action-angle and transversal coordinates on
a $G^2$-invariant open
neighbourhood $\cU$ of every point of $y_0\in \widetilde \cM$ (see Definition \ref{defn:genact})
having trivial $G^2$ isotropy group, and
the action coordinates $p_i$ can be taken to be restrictions
of the functions $H_i$.
Moreover, $\cU$ can be chosen in such a manner that the action coordinates $p_i$
and transversal coordinates $z_a$
can be expressed in terms of restrictions of elements of the Abelian Poisson algebra $\fH$ and  its
constants of motion $\fF$, respectively.
\end{cor}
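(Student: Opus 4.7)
The plan is to invoke Theorem~\ref{thm:AA} directly on the submanifold $\widetilde\cM$, viewed as a Poisson manifold via the restriction of $P_\cM$, and then to massage the resulting coordinates so that the action coordinates descend from restrictions of elements of $\fH$ and the transversal coordinates descend from restrictions of elements of $\fF$. By hypothesis $(H_1,\dots,H_\ell)$ is an equivariant momentum map for a proper and effective $G^2$-action on $\widetilde\cM$, so Theorem~\ref{thm:AA} delivers a $G^2$-invariant open neighbourhood $\cU_1\subset\widetilde\cM$ of $y_0$ together with coordinates $(\theta_i,p_j,z_a)$ satisfying \eqref{app:AA1} and \eqref{app:AA2}, where $p_i=H_i-H_i(y_0)$.

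To realise the $p_i$ as smooth functions of restrictions of elements of $\fH$, I would exploit condition~(ii) of Definition~\ref{defn:genact}. Since the span of $\{d\cH\mid \cH\in\fH\}$ coincides with the span of $\{dH_1,\dots,dH_\ell\}$ at every point of $\widetilde\cM$, one can pick $\cH_1,\dots,\cH_\ell\in\fH$ whose differentials at $y_0$ form a basis of this $\ell$-dimensional subspace. The same condition provides smooth functions $\Phi_i$ with $\cH_i|_{\widetilde\cM}=\Phi_i(H_1,\dots,H_\ell)$, and the Jacobian of $\Phi=(\Phi_1,\dots,\Phi_\ell)$ at $(H_1(y_0),\dots,H_\ell(y_0))$ is therefore invertible. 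The inverse function theorem then furnishes a smooth local inverse $\Psi=(\Psi_1,\dots,\Psi_\ell)$ with $H_i=\Psi_i(\cH_1,\dots,\cH_\ell)$ on a $G^2$-invariant open neighbourhood $\cU_2\subseteq\cU_1$ of $y_0$, so that $p_i=\Psi_i(\cH_1|_{\cU_2},\dots,\cH_\ell|_{\cU_2})-H_i(y_0)$ on $\cU_2$.

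For the transversal coordinates, the idea is to extend each $z_a$ by zero, after a cutoff, to an honest element of $\fF$. Using the trivialisation $\cU_2\cong G^2\times C^{d-\ell}_{\epsilon}$ supplied by Theorem~\ref{thm:AA}, pick a smooth compactly supported function $\phi$ on $C^{d-\ell}_\epsilon$ that equals $1$ on a smaller hypercube $C^{d-\ell}_{\epsilon'}$ with $\epsilon'<\epsilon$, pull it back to a compactly supported $G^2$-invariant function on $\cU_2$, and define $\cZ_a\in C^\infty(\cM)$ by $\cZ_a=\phi\,z_a$ on $\cU_2$ and $\cZ_a=0$ outside the support of $\phi$. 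From the explicit form \eqref{app:AA2} of $P_\cM$ one reads off that $\{p_i,\cZ_a\}=0$ on $\cU_2$; hence $\cZ_a$ Poisson-commutes there with every function of the $p_i$'s, and therefore, by the previous step, with the restriction of every $\cH\in\fH$; outside $\mathrm{supp}(\phi)$ the bracket vanishes trivially, so $\cZ_a\in\fF$. Taking $\cU\subset\cU_2$ to be the $G^2$-invariant open preimage of $C^{d-\ell}_{\epsilon'}$, we have $z_a=\cZ_a|_{\cU}$ on $\cU$ and both clauses of the corollary hold simultaneously.

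The main subtlety is arranging the extension-by-zero so that the output lies in $\fF$, is $G^2$-invariant, and restricts to the prescribed $z_a$ on a $G^2$-invariant open set around $y_0$. $G^2$-invariance is ensured by performing the cutoff on the base of the bundle $\cU_2\to\cU_2/G^2$, while the key algebraic input for the $\fF$-membership is the normal form \eqref{app:AA2}, which decouples the $(p,\theta)$-block from the $(z_a,z_b)$-block and reduces the bracket of any element of $\fH|_{\cU_2}$ with any $G^2$-invariant function of $(p,z)$ to a bracket with the $p_i$'s, which vanishes by inspection.
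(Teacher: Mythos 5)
Your argument is correct and follows essentially the same route as the paper: everything except the $\fF$-expressibility of the transversal coordinates is read off from Theorem \ref{thm:AA} and Definition \ref{defn:genact}, and the $\fF$-part is handled by the same cutoff/extension-by-zero device that the paper invokes via the proof of Proposition \ref{prop:intM} (which produces $\dim(\cM)-\ell$ independent elements of $\fF$ near $y_0$ and then inverts, whereas you extend the $z_a$ themselves — a slightly more direct variant of the same idea). Your inverse-function-theorem argument for expressing the $p_i$ through elements of $\fH$ just spells out what the paper leaves to "the definitions", and your verification of $\fF$-membership via the normal form \eqref{app:AA2} matches the intended mechanism.
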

\begin{proof}
The statement follows directly from the definitions and Theorem \ref{thm:AA}.
The only point that is  not immediately obvious
is that the transversal coordinates can be expressed in terms of elements of
$\fF\subset C^\infty(\cM)$.
To see this, one may apply the arguments in the proof of Proposition \ref{prop:intM}
 to show that for  any $y_0$ with trivial $G^2$ isotropy group
there exist $\dim(\cM) - \ell$  elements of $\fF$
that are independent on a suitable $G^2$-invariant open neighbourhood of $y_0$.
\end{proof}

\section{Examples based on cotangent bundles of compact Lie groups} \label{Sec:Cot}

Here, we present examples for which
the manifold $M$ of Scenario \ref{scen:G11} will be
a cotangent bundle.
First, we need to recall a few facts \cite{DK,Knapp,Sam} about compact Lie groups associated with simple Lie algebras
that will be utilized in our examples.

\subsection{Useful facts about compact simple Lie algebras and  Lie groups}

Let $\fk$ be a compact simple Lie algebra with a fixed maximal Abelian subalgebra $\ft$
and consider also their complexifications $\fk^\bC$  and $\ft^\bC$.
Equip $\fk^\bC$ with the normalized Killing form $\langle - ,- \rangle$, given by
\be
\langle Z_1,  Z_2 \rangle = c\, \tr (\ad_{Z_1} \circ \ad_{Z_2}), \quad Z_1, Z_2\in \fk^\bC,
\label{J1}\ee
where $c$ is some convenient, positive constant.
Choose a system of  positive roots, $\fR^+$,  with respect to the Cartan subalgebra $\ft^\bC$,
and let
\be
\Delta = \{ \alpha_1, \dots, \alpha_\ell\}
\label{J2}\ee
be the corresponding set of simple roots, where $\ell$ is the rank of $\fk$.
Next, denote by $\theta$ the complex conjugation on $\fk^\bC$ with respect to  $\fk$, and  select root vectors $e_{\pm \alpha}$ for all $\alpha \in \fR^+$ in such a way that
$\langle e_\alpha, e_{-\alpha} \rangle = 2/\vert \alpha \vert^2$ and $e_{-\alpha} = - \theta(e_\alpha)$.
Letting
\be
h_{\alpha_j}:= [e_{\alpha_j}, e_{-\alpha_j}] \quad \hbox{ for}
\quad \alpha_j \in \Delta,
\label{J3}\ee
a Weyl-Chevalley basis of $\fk^\bC$ is provided by the elements
\be
e_\alpha,\,\, e_{-\alpha},\,\, h_{\alpha_j}
\quad \hbox{with}\quad \alpha \in \fR^+,\, j=1,\dots, \ell.
\label{J4}\ee
Using such a basis, we have
\be
\fk = \span_{\bR}\{\ri h_{\alpha_j},\, (e_\alpha -e_{-\alpha}),\, \ri(e_\alpha + e_{-\alpha}) \mid \alpha_j\in \Delta,\, \alpha\in \fR^+\},
\label{J5}\ee
and we also have the `Borel' subalgebra
\be
\fb = \span_{\bR}\{ h_{\alpha_j},\, e_\alpha,\, \ri e_\alpha  \mid \alpha_j\in \Delta,\, \alpha\in \fR^+\}
\label{J6}\ee
of the realification of $\fk^\bC$.

Let $K^\bC$ be a connected and simply connected Lie group corresponding to $\fk^\bC$.
We regard $K^\bC$ as a \emph{real} Lie group, and denote by
$K$ and $B$  its connected  subgroups corresponding to $\fk$ and $\fb$, respectively.
These are closed subgroups and are also simply connected ($B$ is topologically trivial).
The compact Lie group $K$  acts on itself as well as on $\fk$ by  conjugations\footnote{We will denote the
adjoint action of $K$ on $\fk$ also as conjugation, for example, in \eqref{J20}  $\eta J \eta^{-1}$ denotes
$ \mathrm{Ad}_\eta(J)$.}, and
we shall need the submanifolds of principal orbit type, $K^\reg$ and $\fk^\reg$,
on which the isotropy groups are the maximal tori of $K$.

Let $\cC \subset \ri \ft$ be the open Weyl chamber defined as follows:
\be
\cC = \{  X \in \ri \ft \mid 0<\alpha_j(X),\,\, \forall j=1,\dots ,\ell\}.
\label{J7}\ee
Define also the open Weyl alcove $\cA\subset \cC \subset \ri \ft$:
\be
 \cA = \{ X \in \ri \ft \mid 0 < \alpha_j(X),\,\, \forall j=1,\dots,\ell, \,\,\hbox{and}\,\, \vartheta(X)< 2\pi \},
\label{J8}\ee
where $\vartheta$ is the highest root with respect to the base $\Delta$.
Both $\cC$ and $\cA$  are convex domains, whose closures are denoted $\bar \cC$ and $\bar \cA$, with $\bar \cA$ compact.
It is well known that $\fk^\reg/K$ is diffeomorphic to $\cC$ and
$\fk/K$ is homeomorphic to $\bar \cC$.
Furthermore,  $K^\reg/K$ is diffeomorphic to $\cA$ and $K/K$ is homeomorphic to $\bar \cA$.
For later use, we below elaborate the pertinent diffeomorphisms.

\begin{defn}\label{defn:J1}
Introduce the mappings $\underline{\varphi}: \fk^\reg \to \cC$  and $\underline{\chi}: K^\reg \to \cA$ by the following recipes:
\be
\underline{\varphi}(J)  =\xi  \quad \hbox{if}\quad  \ri \xi = \Ad_{\Gamma_1}(J) \equiv \Gamma_1(J) J \Gamma_1(J)^{-1}
\quad \hbox{for some}\quad  \Gamma_1(J) \in K,
\label{J9}\ee
\end{defn}
and
\be
\underline{\chi}(g) = \xi
 \quad \hbox{if}\quad   e^{\ri \xi} = \Gamma_2(g) g \Gamma_2(g)^{-1}\quad \hbox{for some}\quad \Gamma_2(g)\in K.
\label{J10}\ee

For regular $J$ and $g$, $\Gamma_1(J)$ and $\Gamma_2(g)$ are unique up to \emph{left-multiplication} by arbitrary elements
of the maximal torus $\bT < K$ whose Lie algebra is $\ft$. Their equivalence classes  give well-defined
mappings onto the homogeneous space $\bT\backslash K $.
We now quote a standard result that can be found in many sources \cite{DK}. Here, we denote $[\Gamma_1(J)]= \bT \Gamma_1(J) \in \bT\backslash K$ and
$[\Gamma_2(g)] = \bT \Gamma_2(g) \in \bT\backslash K$.
\begin{lem}\label{lem:J2}
The combined mappings
\be
([\Gamma_1], \underline{\varphi}): \fk^\reg \to \bT\backslash K \times \cC
\quad
\hbox{and}\quad
([\Gamma_2], \underline{\chi}): K^\reg \to  \bT \backslash K  \times \cA
\label{J11}\ee
 are real-analytic    diffeomorphisms.
Consequently,  the formulae
 \be
\varphi_j := \langle h_{\alpha_j}, \underline{\varphi} \rangle
\quad\hbox{and}\quad
\chi_j := \langle h_{\alpha_j}, \underline{\chi} \rangle,
\qquad \forall j=1,\dots, \ell,
\label{J12}\ee
define real-analytic, $K$-invariant real functions $\varphi_j$ on $\fk^\reg$ and $\chi_j$ on $K^\reg$,  respectively.
\end{lem}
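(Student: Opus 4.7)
The plan is to realize each map in \eqref{J11} as the inverse of a natural real-analytic map. Define
$\Phi_1 \colon K \times \cC \to \fk^\reg$ by $(\Gamma, \xi) \mapsto \Gamma^{-1} \ri \xi \Gamma$ and
$\Phi_2 \colon K \times \cA \to K^\reg$ by $(\Gamma, \xi) \mapsto \Gamma^{-1} e^{\ri \xi} \Gamma$;
both are real-analytic and invariant under the free left action of $\bT$ on the $K$-factor. The aim is to show that each $\Phi_j$ is a surjective submersion whose fibers coincide exactly with the $\bT$-orbits, so that it descends to a real-analytic diffeomorphism from $\bT \backslash K \times \cC$ (resp.\ $\bT \backslash K \times \cA$) onto $\fk^\reg$ (resp.\ $K^\reg$), whose inverse is the map of \eqref{J11}.

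Surjectivity and the fiber description come from classical structure theory of compact Lie groups \cite{DK}: every regular element of $\fk$ is $K$-conjugate to a unique element of $\ri \cC$, every regular element of $K$ is $K$-conjugate to a unique element of $\exp(\ri \cA)$, and regularity forces the centralizers of such $\ri \xi$ and $e^{\ri \xi}$ in $K$ to be precisely $\bT$. Consequently, $\Phi_j(\Gamma', \xi') = \Phi_j(\Gamma, \xi)$ forces $\xi' = \xi$ and $\Gamma' \Gamma^{-1} \in \bT$, confirming that the fibers are indeed the $\bT$-orbits.

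To verify the submersion property, by the $K$-equivariance of $\Phi_j$ (via right translation on the $K$-factor and by conjugation on the target) it suffices to compute the differentials at points $(e, \xi_0)$. A direct calculation yields
\begin{equation*}
d\Phi_1(A, \eta) = \ri \eta + [\ri \xi_0, A], \qquad d\Phi_2(A, \eta) = \ri \eta + (\Ad_{e^{\ri \xi_0}} - \id)(A),
\end{equation*}
where for $\Phi_2$ one identifies $T_{e^{\ri \xi_0}} K \cong \fk$ via right translation. Regularity of $\xi_0$ implies that $\ad_{\ri \xi_0}$ and $\Ad_{e^{\ri \xi_0}} - \id$ both have kernel $\ft$ on $\fk$, and by the skew-adjointness of $\ad$ (respectively orthogonality of $\Ad$) with respect to the Killing form, their images equal the orthogonal complement $\ft^\perp$. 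Combining this with the fact that $\ri \eta$ ranges over $\ft$ yields surjectivity of $d\Phi_j$, with kernel exactly $\ft \times \{0\}$, i.e.\ the Lie algebra of the left $\bT$-action.

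Once these structural properties are established, the quotient map $\bar\Phi_j$ is a bijective real-analytic local diffeomorphism, hence a real-analytic diffeomorphism, and its inverse is the map of \eqref{J11}. The real-analyticity and $K$-invariance of $\varphi_j$ and $\chi_j$ in \eqref{J12} then follow at once: they are compositions of real-analytic maps, and $\underline{\varphi}(J)$ (resp.\ $\underline{\chi}(g)$) depends only on the $K$-conjugacy class because its representative in $\cC$ (resp.\ $\cA$) is unique. The main technical obstacle is pinning down the kernel and image of $\ad_{\ri \xi_0}$ and of $\Ad_{e^{\ri \xi_0}} - \id$ on $\fk$; this is where the root space decomposition and the precise defining inequalities of $\cC$ and $\cA$ come into play, ensuring that the fiber dimension agrees with $\dim \bT$.
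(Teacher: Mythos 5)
Your argument is correct, but note that the paper does not actually prove this lemma: it is introduced with ``we now quote a standard result that can be found in many sources \cite{DK}'' and no proof environment follows. So there is nothing to compare against except the citation; what you have written is a complete and essentially standard derivation of the quoted fact. Your route — realizing each map in \eqref{J11} as the inverse of the real-analytic, $\bT$-invariant parametrization $\Phi_1(\Gamma,\xi)=\Gamma^{-1}\ri\xi\,\Gamma$ (resp.\ $\Phi_2(\Gamma,\xi)=\Gamma^{-1}e^{\ri\xi}\Gamma$), checking that the fibers are exactly the $\bT$-orbits, and verifying the submersion property via the differential at $(e,\xi_0)$ — is sound, and the computation of $\ker$ and $\operatorname{im}$ of $\ad_{\ri\xi_0}$ and $\Ad_{e^{\ri\xi_0}}-\id$ using skew-adjointness, respectively orthogonality, with respect to the (definite) Killing form on $\fk$ is the right way to see that $d\Phi_j$ is surjective with kernel $\ft\times\{0\}$. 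Two small points worth making explicit if this were to be written out in full: (i) the identification of the centralizer of $e^{\ri\xi}$ with $\bT$ for $\xi\in\cA$ uses both the alcove inequalities (which force $0<\alpha(\xi)<2\pi$ for all positive roots, so that $\ker(\Ad_{e^{\ri\xi}}-\id)=\ft$) and the simple connectedness of $K$ (which guarantees that centralizers of elements are connected); and (ii) the uniqueness of the representative in $\cC$, resp.\ $\cA$, is the statement that the open chamber, resp.\ open alcove, is a fundamental domain with trivial stabilizers for the (affine) Weyl group action on regular elements. Both are the standard facts from \cite{DK} that the paper is implicitly invoking, so your proof fills in exactly what the authors chose to leave to the references.
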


For real functions $\varphi \in C^\infty(\fk)$ and $\chi\in C^\infty(K)$, introduce the $\fk$-valued derivatives
$d \varphi$ and $\nabla \chi$ by
\be
 \langle Z, d \varphi (J) \rangle  := \dt \varphi( J + t Z),
 \quad\hbox{and}\quad
 \langle Z, \nabla \chi(g) \rangle := \dt \chi(e^{tZ} g ), \quad \forall
  Z \in \fk.
\label{J13} \ee
 It is easily seen that the derivatives of  invariant functions $\varphi \in C^\infty(\fk)^K$ and $\chi \in C^\infty(K)^K$ are
 equivariant with respect to the respective actions of $K$. Moreover,  at regular elements we have
 \be
 \fk_J = \operatorname{span}\{ d \varphi(J) \mid \varphi \in C^\infty(\fk)^K \}
 \quad \hbox{and}\quad
 \fk_g= \operatorname{span}\{ \nabla \chi(g) \mid \chi \in C^\infty(\fk)^K \},
 \label{J14}\ee
 where $\fk_J$ and $\fk_g$ are the Lie algebras of the maximal tori fixing $J\in \fk^\reg$ and $g\in K^\reg$, respectively.

 \begin{lem}\label{lem:J3}
 The derivatives of $\varphi_{j} \in C^\infty(\fk^\reg)^K$ and $\chi_{j} \in C^\infty(K^\reg)^K$ defined in \eqref{J12} are given by
 \begin{equation}
  \begin{aligned} \label{J15}
d \varphi_{j} (J) &\,= -\Gamma_1(J)^{-1} \ri h_{\alpha_j}  \Gamma_1(J),\,\, \forall J\in \fk^\reg ; \\
\nabla \chi_{j}(g) &\,= -\Gamma_2(g)^{-1} \ri h_{\alpha_j}  \Gamma_2(g),\,\, \forall g\in K^\reg.
  \end{aligned}
 \end{equation}
 \end{lem}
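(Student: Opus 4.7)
The plan is to differentiate the defining equations \eqref{J9} and \eqref{J10} for $\Gamma_1$ and $\Gamma_2$ implicitly, pair the result with $h_{\alpha_j}$, and exploit the fact that $h_{\alpha_j}\in\ft^\bC$ to eliminate the contribution coming from the variation of $\Gamma_{1,2}$.

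For the first formula, I would fix $J\in\fk^{\reg}$ and $Z\in\fk$, set $\Gamma(t):=\Gamma_1(J+tZ)$, and let $A:=\dot\Gamma(0)\Gamma(0)^{-1}\in\fk$. Differentiating the relation $\ri\,\underline{\varphi}(J+tZ)=\Gamma(t)(J+tZ)\Gamma(t)^{-1}$ at $t=0$ gives
\[
\ri\,\dt\underline{\varphi}(J+tZ)=[A,\ri\underline{\varphi}(J)]+\Gamma_1(J)\,Z\,\Gamma_1(J)^{-1}.
\]
Pairing with $h_{\alpha_j}$ and using $\mathrm{Ad}$-invariance of the Killing form, one has $\langle h_{\alpha_j},[A,\underline{\varphi}(J)]\rangle=\langle[h_{\alpha_j},A],\underline{\varphi}(J)\rangle$. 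Since $[h_{\alpha_j},-]$ kills $\ft^\bC$ and preserves each root space, $[h_{\alpha_j},A]$ lies in the sum of root spaces, which is Killing-orthogonal to $\underline{\varphi}(J)\in\ft^\bC$; hence this term vanishes. The surviving contribution, rewritten via $\mathrm{Ad}$-invariance of the form as $\langle Z,-\Gamma_1(J)^{-1}\ri h_{\alpha_j}\Gamma_1(J)\rangle$, yields the first formula by \eqref{J13}.

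The group case is analogous but uses one extra ingredient. With $g(t):=e^{tZ}g$, $\Gamma(t):=\Gamma_2(g(t))$ and $B:=\dot\Gamma(0)\Gamma(0)^{-1}$, the key simplification is that $\underline{\chi}(g(t))$ takes values in the abelian subspace $\ri\ft$, so the matrix exponential derivative collapses to $\dt e^{\ri\underline{\chi}(g(t))}=\ri\,\dt\underline{\chi}(g(t))\,e^{\ri\underline{\chi}(g)}$. Differentiating $e^{\ri\underline{\chi}(g(t))}=\Gamma(t)g(t)\Gamma(t)^{-1}$, multiplying on the right by $e^{-\ri\underline{\chi}(g)}$ and rearranging, one obtains
\[
\ri\,\dt\underline{\chi}(g(t))=B-\mathrm{Ad}_{e^{\ri\underline{\chi}(g)}}B+\Gamma_2(g)\,Z\,\Gamma_2(g)^{-1}.
\]
Pairing with $h_{\alpha_j}$, the first two terms cancel because $e^{\ri\underline{\chi}(g)}$ lies in the complex maximal torus and $\mathrm{Ad}$-fixes $h_{\alpha_j}$, so $\langle h_{\alpha_j},\mathrm{Ad}_{e^{\ri\underline{\chi}(g)}}B\rangle=\langle \mathrm{Ad}_{e^{-\ri\underline{\chi}(g)}}h_{\alpha_j},B\rangle=\langle h_{\alpha_j},B\rangle$. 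The remaining term is treated exactly as in the Lie algebra case and delivers $\nabla\chi_j(g)=-\Gamma_2(g)^{-1}\ri h_{\alpha_j}\Gamma_2(g)$.

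The only slightly non-routine step in both computations is the vanishing of the connection-type contribution containing $\dot\Gamma_{1,2}$, which rests squarely on the placement of $h_{\alpha_j}$ in $\ft^\bC$ together with $\mathrm{Ad}$-invariance of the Killing form. I do not anticipate a genuine obstacle, and no ambiguity in $\Gamma_1(J)$ or $\Gamma_2(g)$ modulo left multiplication by $\bT$ arises in the formulas, since $\mathrm{Ad}_{\bT}$ fixes $h_{\alpha_j}$.
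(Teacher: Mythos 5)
Your proposal is correct, but it takes a different route from the paper. The paper's proof is a two-line equivariance argument: since $\varphi_j$ and $\chi_j$ are $K$-invariant, their derivatives are $K$-equivariant, so it suffices to compute $d\varphi_j$ at points $\ri\xi$ with $\xi\in\cC$ and $\nabla\chi_j$ at points $e^{\ri\xi}$ with $\xi\in\cA$, where one can take $\Gamma_1=\Gamma_2=e$; there the derivative is $\ft$-valued and a first-order perturbation gives $-\ri h_{\alpha_j}$, and conjugating back yields \eqref{J15}. You instead differentiate the defining relations \eqref{J9}--\eqref{J10} implicitly at a \emph{general} regular point and kill the connection-type term $[A,\ri\underline{\varphi}(J)]$ (resp. $B-\Ad_{e^{\ri\underline{\chi}(g)}}B$) by pairing with $h_{\alpha_j}\in\ft^\bC$ and using invariance of the Killing form plus the orthogonality of $\ft^\bC$ to the root spaces. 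Your computation is sound (I checked both the algebra and the group case, including the collapse of the exponential derivative on the abelian subspace), and it has the small advantage of not requiring the equivariance of $d\varphi_j$, $\nabla\chi_j$ as a separate input; the paper's argument is shorter because equivariance does all the work at once. One point you should make explicit: $\Gamma_1(J+tZ)$ and $\Gamma_2(e^{tZ}g)$ are only well defined modulo left multiplication by $\bT$, so to form $\dot\Gamma(0)$ you must fix a smooth local lift of the real-analytic maps $[\Gamma_1]$, $[\Gamma_2]$ of Lemma \ref{lem:J2}; this is harmless precisely because, as you note at the end, the final formulas are insensitive to the $\bT$-ambiguity, but the derivative itself needs the lift to exist before that observation applies.
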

 \begin{proof}
Since the derivatives of the invariant function are equivariant, it is enough to calculate  $d \varphi_{j}$ at the points of $\ri \cC$ with the Weyl chamber $\cC$,
and $\nabla \chi_j$ at the points of $e^{\ri \cA}$ with the Weyl alcove $\cA$. Also by the equivariance, the derivatives
at those points are $\ft$-valued.
It is easy to verify that
\be
 d \varphi_{j} (\ri \xi) = - \ri h_{\alpha_j} ,\,\, \forall \xi\in \cC ,
\quad \hbox{and}\quad
\nabla \chi_{j}(e^{\ri \xi}) = - \ri h_{\alpha_j}  ,\,\, \forall \xi\in \cA,
 \label{J16}\ee
 from which the claim follows.
\end{proof}

\begin{rem}\label{rem:J4}
We shall use the existence of pairs of maximal tori $\bT$ and $\bT'$ for which
\be
\bT \cap \bT' = \cZ(K),
\label{J17}\ee
where $\cZ(K)<K$ is the center, and their Lie algebras  $\ft$ and $\ft'$  are orthogonal to each other.
Recall that $\cZ(K)$ is the intersection of all maximal tori, and see Remark 4.5 in \cite{Fe23}  for an example in $\mathrm{SU}(n)$.
For completeness, and since later we shall also need the underlying auxiliary information on Coxeter elements,
we here elaborate on the existence of such tori.

Let $w_* \in N_K(\bT')/\bT'$ be a Coxeter element of the Weyl group of the pair $(K,\bT')$ for some maximal torus $\bT'$.
 It is known that $(w_* -\id)$ yields an invertible linear transformation
on $\ft'$  and the fixed point set for the action of $w_*$ on $\bT'$ is the center $\cZ(K)$ \cite{DW}.
Suppose that $g_* \in  N_K(\bT')\cap \bT^\reg$ is a representative of $w_*$.
 Since $\bT<K$ is the fixed point set of $\Ad_{g_*}$, it follows that \eqref{J17} holds.
 Consider the orthogonal decomposition  $\fk= \ft + \ft^\perp$.    By using that $w_* = [g_*]$ with $g_*\in \bT$, we obtain  $(w_* - \id)(X)\in  \ft^\perp$   for  every $X\in \ft'$,
which entails $\ft' \subset \ft^\perp$.

Take the `principal element' $g_* = \exp(2\pi \ri \rho^\vee/N)\in \bT^\reg$,  where $N$ is the Coxeter number
 and $\rho^\vee = \frac{1}{2} \sum_{\alpha\in \fR^+} h_\alpha$.
In effect,
 Kostant \cite{Ko} constructed a maximal torus $\bT'$  `in apposition' to $\bT$ on which
the principal element acts as a Coxeter element.  We see that this torus satisfies \eqref{J17} and $\ft' $ is orthogonal to $\ft$.
For a review of Kostant's pertinent construction and other related results,   see also \cite[\S2]{Me}.
\end{rem}

\subsection{Integrable systems from Poisson reduction of \texorpdfstring{$T^*K$}{T*K}}

Below, $K$ is a compact, connected and simply connected Lie group associated with the simple Lie algebra $\fk$.
We identify the cotangent bundle $T^*K$ with the manifold
\be
M := K \times \fk = \{(g,J)\mid g\in K,\, J\in \fk\}
\label{J18}\ee
by using right-translations and the Killing form.
Then, for any $F,H\in C^\infty(M)$, the canonical (symplectic) Poisson bracket of the cotangent bundle can be written as
\be
\{ F, H\}(g,J) =
\langle \nabla_1 F, d_2 H\rangle - \langle \nabla_1 H, d_2 F \rangle + \langle J, [d_2 F, d_2 H]\rangle,
\label{J19}\ee
where the derivatives are taken with respect to the first and second variable, respectively, and are evaluated  at $(g,J)$.
The group $K$ acts on $M$ by the mapping
\be
K\times M\ni (\eta, (g,J)) \mapsto (\eta g \eta^{-1}, \eta J \eta^{-1})\in M.
\label{J20}\ee
This is a Hamiltonian action generated by the momentum map
\be
\Phi(g,J) = J - g^{-1} J g,
\label{J21}\ee
where we identified $\fk^*$ with $\fk$.

Let $p_\fk: M\to \fk$ and $p_K: M\to K$ denote the projections, for which
\be
p_\fk: (g,J) \mapsto J \quad \hbox{and} \quad  p_K: (g,J) \mapsto g,
\label{J22}\ee
and consider the $K$-invariant functions
\be
\varphi\circ p_\fk,\,\, \forall \varphi \in C^\infty(\fk)^K
\quad \hbox{and}\quad
\chi \circ  p_K,\,\, \forall \chi \in C^\infty(K)^K.
\label{J23}\ee
For an arbitrary initial value $(g_0, J_0) \in M$, the integral curve of the evolution equation generated
by the Hamiltonian  $\varphi \circ p_\fk$ is
\be
(g(\tau), J(\tau)) = (\exp\left( \tau d \varphi(J_0)\right) g_0, J_0),\quad \forall \tau \in \bR,
\label{J24}\ee
and for the Hamiltonian $\chi \circ  p_K$ it is
\be
(g(\tau), J(\tau)) = (g_0, J_0 - \tau \nabla \chi(g_0)), \quad \forall \tau \in \bR.
\label{J25}\ee
Thus,  $M$ carries the  following two Abelian Poisson subalgebras of $C^\infty(M)^K$ having complete flows,
\be
\fH := p_\fk^*(C^\infty(\fk)^K)
\quad\hbox{and}\quad
\tilde \fH:=p_K^*(C^\infty(K)^K).
\label{J26}\ee

\begin{rem}\label{rem:J5}
Note that the pair $(g(\tau)^{-1} J(\tau) g(\tau), J(\tau))$ is constant along the integral curves \eqref{J24} and the pair $(g(\tau), J(\tau) - g(\tau)^{-1} J(\tau) g(\tau))$ is constant
along the integral curves \eqref{J25}.
\end{rem}

Consider the $K$-invariant dense open, connected submanifolds of  $M$,
\be
Y:= K \times \fk^\reg
\quad
\hbox{and}\quad \tilde Y := K^\reg \times \fk.
\label{J27}\ee
The connectedness follows from Lemma \ref{lem:J2}.
Then, define the $K$-invariant mappings
\be
 (H_1,\dots, H_\ell): Y \to \bR^\ell
\quad \hbox{and}
\quad
(\tilde H_1,\dots, \tilde H_\ell) : \tilde Y \to \bR^\ell
\label{J28}\ee
by introducing the functions
\be
H_j := \varphi_j \circ p_\fk
\quad\hbox{and}\quad
\tilde H_j := \chi_j \circ p_K, \qquad j=1,\dots, \ell,
\label{J29}\ee
where \eqref{J12} is used  and $p_\fk$ and $p_K$ now denote the corresponding restrictions of the projections.
Introduce also the \emph{diffeomorphisms}
\be
\cT: \bR^\ell \to \ft
\quad \hbox{and}\quad T: \bR^\ell / (2 \pi \bZ)^\ell \to \bT
\label{J30}\ee
 by the formulas
 \be
 \cT(\taU):= -\ri \sum_{j=1}^\ell \tau_j h_{\alpha_j}
 \quad \hbox{and}\quad
T(\taU):= \exp\bigl( -\ri \sum_{j=1}^\ell \tau_j h_{\alpha_j}\bigr),
\quad \forall \taU=(\tau_1,\dots, \tau_\ell) \in \bR^\ell.
\label{J31}\ee

\begin{lem}\label{lem:J6}
The map $(H_1,\dots, H_\ell)$ \eqref{J28}  is the momentum map for the free Hamiltonian action of the maximal torus $\bT<K$ on $Y$ \eqref{J27}
that works according to the formula
\be
(T(\taU), (g,J)) \mapsto (\Gamma_1(J)^{-1}T(\taU) \Gamma_1(J) g, J), \quad
\forall  \taU\in \bR^\ell,\,\, (g,J) \in Y.
\label{J32}\ee
The map $(\tilde H_1,\dots, \tilde H_\ell)$ \eqref{J28} serves as the momentum map generating the free and proper Hamiltonian action of
$\bR^\ell$ on $\tilde Y$ \eqref{J27} that operates as
\be
(\taU, (g, J)) \mapsto (g, J - \Gamma_2(g)^{-1}\cT(\taU)\Gamma_2(g)), \quad \forall \taU\in \bR^\ell, \,\, (g,J) \in \tilde Y.
\label{J33}\ee
These Abelian group actions commute with the $K$-action \eqref{J20} restricted on $Y$ and on $\tilde Y$, respectively.
\end{lem}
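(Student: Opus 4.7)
The plan is to verify each claim separately using the Poisson bracket formula \eqref{J19}, the derivative formulas of Lemma \ref{lem:J3}, and the commutativity of $\bT$. The starting point is to compute the Hamiltonian vector fields of the $H_j$ and $\tilde H_j$, then exponentiate them to match the stated group actions.

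\medskip

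First, I would compute the Hamiltonian vector fields. For $H_j=\varphi_j\circ p_\fk$, one has $\nabla_1 H_j=0$ and $d_2 H_j=d\varphi_j(J)$, so \eqref{J19} reduces as in \eqref{J24} to the flow $(g,J)\mapsto(\exp(\tau d\varphi_j(J))g,J)$. Substituting $d\varphi_j(J)=-\Gamma_1(J)^{-1}\ri h_{\alpha_j}\Gamma_1(J)$ from Lemma \ref{lem:J3} and using that the $H_j$ mutually Poisson-commute (so their flows commute and combine linearly), the combined flow at parameters $\taU$ is
\begin{equation*}
(g,J)\ \longmapsto\ \bigl(\Gamma_1(J)^{-1}\exp(-\ri\textstyle\sum_j\tau_j h_{\alpha_j})\Gamma_1(J)\, g,\ J\bigr),
\end{equation*}
which is precisely \eqref{J32} in view of the definition \eqref{J31} of $T$. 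The argument for $\tilde H_j$ is parallel, using \eqref{J25} and the formula for $\nabla\chi_j$; linearity of $\cT$ together with the commuting flows reproduces \eqref{J33}. Well-definedness in both cases follows from the fact that $\Gamma_1(J)$ and $\Gamma_2(g)$ are defined only up to left multiplication by some $t\in\bT$, and conjugation of $T(\taU)\in\bT$ or $\cT(\taU)\in\ft$ by such $t$ is trivial since $\bT$ is Abelian. Together with $T(\taU+\taU')=T(\taU)T(\taU')$ and the analogous identity for $\cT$, this shows both formulas define actual group actions. The momentum map property is then immediate: $K$-invariance of $\varphi_j$ and $\chi_j$ makes the maps $K$-invariant (hence equivariant in the Abelian sense), and by construction the component Hamiltonians generate the right infinitesimal actions.

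\medskip

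Freeness is a short check. For \eqref{J32}, conjugating out $\Gamma_1(J)$ reduces the fixed-point equation to $T(\taU)g'=g'$ for $g'=\Gamma_1(J)g\Gamma_1(J)^{-1}\in K$; group cancellation forces $T(\taU)=e$, i.e.\ $\taU\in(2\pi\bZ)^\ell$. For \eqref{J33}, the fixed-point equation gives $\Gamma_2(g)^{-1}\cT(\taU)\Gamma_2(g)=0$, hence $\cT(\taU)=0$, hence $\taU=0$ because $\cT$ is a diffeomorphism. Properness of the $\bT$-action is automatic by compactness. For the $\bR^\ell$-action on $\tilde Y$, given sequences $(g_n,J_n)$ and $\taU_n$ such that both sequences $(g_n,J_n)$ and $\taU_n\cdot(g_n,J_n)$ converge, compactness of $K$ lets me extract a convergent subsequence of $\Gamma_2(g_n)$; the displacement $\Gamma_2(g_n)^{-1}\cT(\taU_n)\Gamma_2(g_n)$ converges by hypothesis, so conjugating back yields convergence of $\cT(\taU_n)$ in $\ft$, and $\cT^{-1}$ gives convergence of $\taU_n$.

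\medskip

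Commutativity with the $K$-action \eqref{J20} is the step I expect to require the most care. The $K$-invariance of $\underline{\varphi}$ gives $\Gamma_1(\eta J\eta^{-1})\eta J\eta^{-1}\Gamma_1(\eta J\eta^{-1})^{-1}=\Gamma_1(J)J\Gamma_1(J)^{-1}$, hence the element $u:=\Gamma_1(\eta J\eta^{-1})\eta\Gamma_1(J)^{-1}$ belongs to the stabilizer $K_{\underline{\varphi}(J)}$; for $J\in\fk^\reg$ this stabilizer is exactly $\bT$. Since $\bT$ is Abelian, $u$ commutes with $T(\taU)$, and comparing the two orders of composition yields equality of the two $K\bT$-actions on $Y$. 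The same argument applied with $\underline{\chi}$ in place of $\underline{\varphi}$ and $\cT(\taU)\in\ft$ in place of $T(\taU)\in\bT$ establishes commutativity of the $\bR^\ell$-action with the $K$-action on $\tilde Y$. The harder point here is ensuring one uses the regularity hypothesis in the right place (on $\fk^\reg$ for the first action, on $K^\reg$ for the second), which is what makes the centralizer reduce to $\bT$ and forces $u\in\bT$.
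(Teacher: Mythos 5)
Your proposal is correct and follows essentially the same route as the paper: combine the explicit flows \eqref{J24}--\eqref{J25} with the derivative formulas of Lemma \ref{lem:J3} to identify the joint flows with \eqref{J32}--\eqref{J33}, then check freeness, properness and commutation with the $K$-action. The only (harmless) deviation is in the properness step, where you extract a convergent subsequence of $\Gamma_2(g_n)$ by compactness of $K$, whereas the paper invokes the smoothness of $g\mapsto[\Gamma_2(g)]\in\bT\backslash K$ from Lemma \ref{lem:J2}; both yield the required convergent subsequence of $\taU_n$.
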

\begin{proof}
The fact that the maps \eqref{J28} and \eqref{J29} serve as momentum maps generating the above actions follows by combining the formulas
\eqref{J24} and \eqref{J25}  with the derivatives of the functions $\varphi_j$ and $\chi_j$ determined in Lemma \ref{lem:J3}.
It is clear that these actions of $\bT$ on $Y$ and $\bR^\ell$ on $\tilde Y$ are free and commute with the respective $K$-action.

To show  that the $\bR^\ell$-action  \eqref{J33} is proper,
consider a sequence $(g_n, J_n)$ in $\tilde Y$ that converges to $(g_*, J_*)\in \tilde Y$.
 Suppose that $\taU_n$ is a sequence in $\bR^\ell$ such that the sequence
 $(g_n, J_n - \Gamma_2(g_n)^{-1}\cT(\taU_n)\Gamma_2(g_n))$ converges to $(\hat g, \hat J) \in \tilde Y$.
 Then, the sequence  $\Gamma_2(g_n)^{-1}\cT(\taU_n)\Gamma_2(g_n)$ is also convergent, and so is
 the sequence  $\cT(\taU_n)$ since  $ K^\reg \ni g \mapsto [\Gamma_2(g)] \in \bT\backslash K$ is a smooth map (by Lemma \ref{lem:J2}).
\end{proof}

\begin{lem}\label{lem:J7}
The principal isotropy group for the $K$-action on $Y$ as well as for the $K$-action on $\tilde Y$ is given
by the center $\cZ(K)$ of the group $K$.
\end{lem}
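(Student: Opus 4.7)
The plan is to prove the statement in three steps: (i) show that the centre $\cZ(K)$ is contained in the isotropy of every point of $M=K\times\fk$, in particular of every point of $Y$ and of $\tilde Y$; (ii) exhibit, in each case, a point whose isotropy equals exactly $\cZ(K)$; (iii) deduce that $(L)_\princ=\{\cZ(K)\}$ using the characterization of principal isotropy recalled in Subsection \ref{ss:Proper}. The essential input for step (ii) is the pair of maximal tori furnished by Remark \ref{rem:J4}; no serious obstacle is expected. Step (i) is immediate from the definition \eqref{J20} of the $K$-action, since any $z\in\cZ(K)$ commutes with every element of $K$ and acts as the identity on $\fk$ via $\Ad$, so $\cZ(K)\subseteq K_{(g,J)}$ for each $(g,J)\in M$.

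For step (ii), Remark \ref{rem:J4} provides a pair of maximal tori $\bT,\bT'<K$ satisfying $\bT\cap\bT'=\cZ(K)$. On $Y=K\times\fk^\reg$, pick $J_*\in\ft$ with stabilizer $K_{J_*}=\bT$ (for instance $J_*=\ri\xi$ with $\xi\in\cC$) and $g_*\in(\bT')^\reg$, for which $K_{g_*}=\bT'$; then
\[
K_{(g_*,J_*)}=K_{g_*}\cap K_{J_*}=\bT'\cap\bT=\cZ(K).
\]
The symmetric recipe $g_*\in\bT^\reg$ and $J_*\in\ft'$ with $K_{J_*}=\bT'$ produces a point of $\tilde Y=K^\reg\times\fk$ whose isotropy also equals $\cZ(K)$.

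For step (iii), fix a point $y_0$ in the principal isotropy type submanifold $Y_0\subset Y$, so that $K_{y_0}\in(L)_\princ$ by the characterization in Subsection \ref{ss:Proper}. Step (i) yields $\cZ(K)\subseteq K_{y_0}$. Conversely, applying the subgroup property of $(L)_\princ$ to the point $y_*$ constructed in step (ii) produces some $L\in(L)_\princ$ with $L\subseteq K_{y_*}=\cZ(K)$; since every subgroup of $\cZ(K)$ is normal in $K$, the conjugation relating $L$ to $K_{y_0}$ forces $L=K_{y_0}$, giving $K_{y_0}\subseteq\cZ(K)$. Combining both inclusions yields $K_{y_0}=\cZ(K)$, whence $(L)_\princ=\{\cZ(K)\}$. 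The argument on $\tilde Y$ is identical.
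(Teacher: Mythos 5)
Your proof is correct and follows essentially the same route as the paper: both exhibit a point whose isotropy is exactly $\cZ(K)$ using a pair of maximal tori with $\bT\cap\bT'=\cZ(K)$ from Remark \ref{rem:J4}, and conclude via the fact that $\cZ(K)$ lies in every isotropy group. The only cosmetic difference is that the paper picks a single point in $Y\cap\tilde Y$ (with $g\in\bT^\reg$, $J\in\ft'^\reg$) that settles both cases at once, whereas you use two symmetric points and spell out the conjugacy/normality step in more detail.
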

\begin{proof}
We can choose $(g,J) \in Y \cap \tilde Y$ in such a way that $g\in  \bT^\reg$ and $J\in \ft'^\reg$, where
$\bT$ and $\bT'$ are two maximal tori whose intersection is $\cZ(K)$.  (Recall Remark \ref{rem:J4}.)
The $K$ isotropy group of this point $(g,J)$
is $\cZ(K)$. Since $\cZ(K)$ is contained in every isotropy group, the statement follows.
\end{proof}

\begin{lem}\label{lem:J8}
The principal isotropy group for the combined action of $\bG := K \times \bT$
  on $Y$ is $\cZ(K) \times \{ e\}$.
\end{lem}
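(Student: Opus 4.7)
My plan is to exhibit a single point $(g,J) \in Y$ where the $\bG$-isotropy collapses to $\cZ(K) \times \{e\}$, and then observe that this automatically identifies it as the principal isotropy. The observation is cheap: elements of $\cZ(K)$ commute with everything in $K$ and in $\fk$, so they fix $(g,J)$ via the conjugation action \eqref{J20}, while the identity of $\bT$ acts trivially through \eqref{J32}. Hence $\cZ(K) \times \{e\} \subseteq \bG_{(g,J)}$ at every point of $Y$. Once equality is shown at some point, the minimality of $\cZ(K) \times \{e\}$ pins down its conjugacy class as the principal isotropy type $(L)_{\princ}$.

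To produce the desired point, I will invoke the appositional pair of maximal tori from Remark~\ref{rem:J4}: fix $\bT$ and $\bT'$ with $\bT \cap \bT' = \cZ(K)$. Then I choose $J \in \ri \cC \subset \ri \ft^\reg$, so that $\bT_J = \bT$ and one may normalise $\Gamma_1(J) = e$ in Lemma~\ref{lem:J2}. I choose $g \in K$ conjugating $\bT$ to $\bT'$, meaning $g^{-1} \bT g = \bT'$; such a $g$ exists because all maximal tori of $K$ are mutually conjugate. The pair $(g,J)$ lies in $Y$ since $J$ is regular.

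The computation of $\bG_{(g,J)}$ should then be short. The commuting actions \eqref{J20} and \eqref{J32} combine, using that $\eta J \eta^{-1} = J$ for $\eta \in \bT$ (so that $\Gamma_1(\eta J \eta^{-1}) = \Gamma_1(J) = e$), into
\[
(\eta, T(\taU)) \cdot (g, J) = (T(\taU)\, \eta g \eta^{-1},\ \eta J \eta^{-1}).
\]
Stabilising $(g,J)$ forces $\eta \in K_J = \bT$ from the second slot, and the first slot rearranges to the commutator equation $T(\taU) = g \eta g^{-1} \eta^{-1}$. For the right-hand side to lie in $\bT$ one needs $g\eta g^{-1} \in \bT$, that is $\eta \in g^{-1}\bT g = \bT'$; combined with $\eta \in \bT$ this yields $\eta \in \bT \cap \bT' = \cZ(K)$. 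Centrality then forces the commutator to vanish, so $T(\taU) = e$. Hence $\bG_{(g,J)} = \cZ(K) \times \{e\}$, as required.

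The only conceptually delicate ingredient is the choice of $g$ with $g^{-1}\bT g$ appositional to $\bT$; this is exactly the Kostant input from Remark~\ref{rem:J4}. Without it, a naively chosen $g$ would leave $\bT \cap g^{-1}\bT g$ strictly larger than $\cZ(K)$ and would contribute a non-trivial piece to the isotropy, so recognising that the center is achievable as this intersection is the key point. Everything else is bookkeeping with the action formulas and the identification $K_J = \bT_J = \bT$ for $J \in \ri \ft^\reg$.
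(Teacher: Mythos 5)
Your proof is correct and follows essentially the same route as the paper: both exhibit a single point of $Y$ whose $K\times\bT$ isotropy is exactly $\cZ(K)\times\{e\}$ by means of the appositional pair of maximal tori from Remark~\ref{rem:J4}, and then conclude by minimality of the central subgroup. The only cosmetic difference is that the paper extracts $\eta\in\cZ(K)$ from the invariance of the pair $(g^{-1}Jg,\,J)$ (Remark~\ref{rem:J5}) at a point where its joint stabilizer is central, whereas you normalise $\Gamma_1(J)=e$ and solve the fixed-point equations directly; your chosen point is an instance of the paper's, since $K_J\cap K_{g^{-1}Jg}=\bT\cap\bT'=\cZ(K)$.
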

\begin{proof}
Let us choose $(g,J)\in Y$ in such a way that the joint isotropy group of $g^{-1} J g$ and $J$ in $K$ is the center $\cZ(K)$.
One sees from Remark \ref{rem:J5}, and can also check directly, that
the equality
\be
(g,J) = (\eta \Gamma_1(J)^{-1} T(\taU) \Gamma_1(J) g \eta^{-1}, \eta J \eta^{-1}), \quad  (\eta, T(\taU)) \in \bG,
\label{J34}\ee
entails the equality $(g^{-1} J g, J)=  (\eta (g^{-1} J g)\eta^{-1}, \eta J \eta^{-1})$.
With our choice of $(g,J)$,  it follows that $\eta \in \cZ(K)$, and then \eqref{J34}  implies  $T(\taU) = e$.
The same argument that was used in the previous lemma concludes the proof.
\end{proof}

\begin{lem}\label{lem:J9}
The principal isotropy group for the combined action of $\tilde \bG := K \times \bR^\ell$
on $\tilde Y$ is $\cZ(K) \times \{ 0\}$.
\end{lem}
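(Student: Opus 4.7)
The plan is to follow the strategy of Lemma \ref{lem:J8} with the roles of the two factors of $\tilde Y = K^\reg \times \fk$ essentially swapped. Since the $K$- and $\bR^\ell$-actions commute, the combined action of $(\eta,\taU)\in \tilde\bG$ sends $(g,J)$ to $(\eta g\eta^{-1},\, \eta J\eta^{-1}-\eta\Gamma_2(g)^{-1}\cT(\taU)\Gamma_2(g)\eta^{-1})$. The isotropy condition at $(g,J)$ therefore demands $\eta\in K_g$; because $g\in K^\reg$, $K_g$ is the maximal torus through $g$, with Lie algebra $\fk_g = \Gamma_2(g)^{-1}\ft\,\Gamma_2(g)$. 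Since $\Gamma_2(g)^{-1}\cT(\taU)\Gamma_2(g)$ lies in $\fk_g$ and $\eta\in K_g$ commutes with everything in $\fk_g$, the condition on the second component simplifies to $\eta J\eta^{-1}-J = \Gamma_2(g)^{-1}\cT(\taU)\Gamma_2(g) \in \fk_g$.

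The decisive step is then to decompose $J = J_\parallel + J_\perp$ orthogonally with respect to $\langle -,-\rangle$, with $J_\parallel\in \fk_g$ and $J_\perp \in \fk_g^\perp$. The $\Ad$-invariance of $\langle-,-\rangle$ ensures that $K_g$ preserves this decomposition, and since $K_g$ is connected and abelian, $\Ad_\eta$ acts trivially on $\fk_g$. Hence $\eta J\eta^{-1}-J = \Ad_\eta(J_\perp) - J_\perp \in \fk_g^\perp$. Comparing with the right-hand side, which lies in $\fk_g$, forces both sides to vanish: $\cT(\taU)=0$ (so $\taU=0$, as $\cT$ is a linear isomorphism onto $\ft$) and $\Ad_\eta(J_\perp) = J_\perp$, i.e.\ $\eta\in K_g\cap K_{J_\perp}$. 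Consequently $\tilde\bG_{(g,J)} = (K_g \cap K_{J_\perp}) \times \{0\}$ for every $(g,J)\in \tilde Y$.

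To conclude, I would invoke Remark \ref{rem:J4}: pick a pair of maximal tori $\bT,\bT'$ satisfying $\bT\cap\bT' = \cZ(K)$ with $\ft'\subset \ft^\perp$, take any $g\in \bT^\reg$ (so that $\fk_g=\ft$), and any $J$ whose $\fk_g^\perp$-component $J_\perp$ is a regular element of $\ft'$. Then $K_{J_\perp} = \bT'$ and $K_g\cap K_{J_\perp} = \bT\cap\bT' = \cZ(K)$. Combined with the universal inclusion $\cZ(K)\times\{0\} \subset \tilde\bG_{(g,J)}$ (the centre acts trivially on $K$ and on $\fk$), this shows that $\cZ(K)\times\{0\}$ is the principal isotropy, as claimed. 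I expect no real obstacle; the only delicate point is the orthogonal decomposition step, which exploits $\eta\in K_g$ to confine $\eta J\eta^{-1}-J$ to $\fk_g^\perp$ while $\cT(\taU)$ lives transversally in $\fk_g$, so that the equation can only be satisfied when $\taU=0$ and $\eta$ already stabilises $J_\perp$.
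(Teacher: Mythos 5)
Your proof is correct, and it reaches the conclusion by a somewhat different route than the paper. The paper's proof goes straight to a single well-chosen point: it takes $g\in\exp(\ri\cA)\subset\bT^\reg$ and $J\in\ft'^\reg$ with $\ft'\perp\ft$, observes that then $\Gamma_2(g)^{-1}\cT(\taU)\Gamma_2(g)=\cT(\taU)$, and kills $\cT(\taU)$ by a norm computation: $\Ad$-invariance gives $\langle J,J\rangle=\langle J-\cT(\taU),J-\cT(\taU)\rangle$, the cross term vanishes by the orthogonality $\ft\perp\ft'$, and definiteness of the form forces $\cT(\taU)=0$; after that $\eta\in\bT\cap\bT'=\cZ(K)$. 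You instead first compute the isotropy group at an \emph{arbitrary} point of $\tilde Y$: the splitting $\fk=\fk_g\oplus\fk_g^\perp$ together with the fact that $\Ad_\eta-\id$ annihilates $\fk_g$ for $\eta\in K_g$ confines $\eta J\eta^{-1}-J$ to $\fk_g^\perp$, while the $\cT(\taU)$-term lives in $\fk_g$, so both must vanish. This yields the stronger intermediate statement $\tilde\bG_{(g,J)}=(K_g\cap K_J)\times\{0\}$ for every $(g,J)\in\tilde Y$ (in particular the $\bR^\ell$-factor of the isotropy is trivial everywhere on $\tilde Y$, re-deriving the freeness part of Lemma \ref{lem:J6}), and you then specialize using the tori of Remark \ref{rem:J4} exactly as the paper does. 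Both arguments ultimately rest on the definiteness of the invariant form and on Remark \ref{rem:J4}; yours buys a pointwise description of all isotropy groups on $\tilde Y$ at the cost of a slightly longer setup, while the paper's is a shorter one-point verification. The closing step — a single point with isotropy $\cZ(K)\times\{0\}$ plus the universal inclusion of $\cZ(K)\times\{0\}$ in every isotropy group identifies the principal type — is the same standard argument used in Lemmas \ref{lem:J7} and \ref{lem:J8}.
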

\begin{proof}
Let us choose $g\in \exp(\ri \cA) \subset \bT^\reg$ and $J\in {\ft'}^\reg$ so that $\bT \cap  \bT' = \cZ(K)$ and
the maximal Abelian subalgebras $\ft$ and $\ft'$ of $\fk$ are orthogonal to each other.
In this case $\Gamma_2(g)^{-1} \cT(\taU) \Gamma_2(g) = \cT(\taU)$,
and the equality
\be
(g,J) = (\eta g \eta^{-1}, \eta (J - \cT(\taU)) \eta^{-1}),  \quad  (\eta, \cT(\taU)) \in \tilde \bG,
\label{J35}\ee
implies that
\be
\langle J, J \rangle = \langle J - \cT(\taU), J - \cT(\taU) \rangle = \langle J, J \rangle + \langle \cT(\taU), \cT(\taU) \rangle,
\label{J36}\ee
which in turn implies that $\cT(\taU) = 0$.  Putting this back into \eqref{J35}, we get that $\eta \in \cZ(K)$.
Since we have ${\tilde \bG}_{(g,J)} = \cZ(K) \times \{0\}$,  the claim of the lemma follows.
\end{proof}

\begin{prop}\label{pr:J10}
Theorem \ref{thm:G12} is applicable to  the Abelian Poisson algebra $\fH$ \eqref{J26} with
 $M=T^* K$,
$Y\subset M$ \eqref{J27}, $(H_1,\ldots,H_\ell)$ \eqref{J29} and
$\Gone=K$, $\Gtwo=\bT$ acting as described above.
Consequently,  $\fH$  engenders integrable systems of rank $\ell$ whose action variables arise from $H_1,\dots, H_\ell$.
Moreover, the analogous statements hold if we replace $(\fH,Y,H_1,\dots, H_\ell, K\times \bT)$ by $(\tilde \fH, \tilde Y,
 \tilde H_1,\dots, \tilde H_\ell,  K\times \bR^\ell)$.
\end{prop}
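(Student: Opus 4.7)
The plan is to verify the four conditions of Scenario \ref{scen:G11} for the data $(M=T^*K,\, P_M,\, G^1=K,\, G^2=\bT,\, Y,\, \fH,\, H_1,\ldots,H_\ell)$ and then invoke Theorem \ref{thm:G12}; the argument for the tilde case with $\Gtwo=\bR^\ell$ is strictly parallel, with Lemma \ref{lem:J9} replacing Lemma \ref{lem:J8} and Lemma \ref{lem:J6} supplying the crucial properness. Since $T^*K$ is a genuine Poisson (symplectic) manifold and the $K$-action \eqref{J20} is Poisson, Assumption \ref{assume:G5}\ref{itG5:1} holds automatically on every $K$-invariant open subset.

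For condition \ref{itScen:1}, I would observe that $\fH=p_\fk^*(C^\infty(\fk)^K)\subset C^\infty(M)^K$ because $p_\fk$ is $K$-equivariant, and it is Abelian because the bracket \eqref{J19} restricted to functions of $J$ alone is the Lie--Poisson bracket on $\fk^*\simeq\fk$, for which $K$-invariant functions are Casimirs. Completeness of all Hamiltonian flows is immediate from the explicit formula \eqref{J24}, which is defined for all $\tau\in\bR$.

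For condition \ref{itScen:2}, the functions $H_j$ are smooth and $K$-invariant on $Y$ by Lemma \ref{lem:J2}, and they Poisson commute by restriction from $\fH$. Their independence at every point of $Y$ is a direct consequence of the diffeomorphism $([\Gamma_1],\underline\varphi):\fk^\reg\to\bT\backslash K\times\cC$ of Lemma \ref{lem:J2}, since $\underline\varphi$ is a submersion onto the open chamber $\cC$. Lemma \ref{lem:J6} identifies the vector fields $V_{H_j}$ as the infinitesimal generators of the $\bT$-action \eqref{J32}, which is free (hence effective) and is automatically proper since $\bT$ is compact; in the tilde case, the nontrivial properness of the $\bR^\ell$-action on $\tilde Y$ is already verified in Lemma \ref{lem:J6} using the smoothness of $g\mapsto[\Gamma_2(g)]$. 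For condition \ref{itScen:3}, every element of $\fH|_Y$ depends on $J\in\fk^\reg$ only through its $K$-orbit, i.e. through $\underline\varphi(J)\in\cC$, and hence is expressible in terms of $H_1,\ldots,H_\ell$; moreover, by \eqref{J14} together with Lemma \ref{lem:J3}, both the span of $\{dH_j(y)\}$ and the span of exterior derivatives of arbitrary elements of $\fH$ coincide with the Cartan subalgebra $\fk_J\subset\fk$ at the point $y=(g,J)\in Y$.

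Condition \ref{itScen:4} is the only one requiring a short extra remark beyond quoting a lemma. Lemma \ref{lem:J8} identifies the principal isotropy of $\bG=K\times\bT$ on $Y$ as $\cZ(K)\times\{e\}$; combined with Lemma \ref{lem:J7}, which pins down the principal $K$-isotropy on $Y$ as $\cZ(K)$, a short abstract argument shows that at any $y\in Y_0$ one has $K_y=\cZ(K)$ (the inclusion $\cZ(K)\subset K_y$ is automatic, and conversely every $\eta\in K_y$ lifts to $(\eta,e)\in\bG_y=\cZ(K)\times\{e\}$), so $\bG_y=K_y\times\{e\}$ as required. The analogous identity in the tilde case follows identically from Lemmas \ref{lem:J7} and \ref{lem:J9}. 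With all conditions of Scenario \ref{scen:G11} verified, Theorem \ref{thm:G12} produces the asserted integrable systems of rank $\ell$ on $M_*^\red$, and on the symplectic leaves of the specified form, with generalized action variables furnished by $(H_1,\ldots,H_\ell)$ (resp.\ $(\tilde H_1,\ldots,\tilde H_\ell)$). The main substantive obstacle is thus absorbed into the preparatory Lemmas \ref{lem:J6}--\ref{lem:J9}; the proof itself reduces to an organized verification.
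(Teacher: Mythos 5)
Your proposal is correct and follows essentially the same route as the paper's proof: both verify the four conditions of Scenario \ref{scen:G11} using Lemmas \ref{lem:J2}--\ref{lem:J9} and then invoke Theorem \ref{thm:G12}, with the tilde case handled in parallel. Your extra remark deducing $\bG_y=K_y\times\{e\}$ on $Y_0$ from Lemmas \ref{lem:J7} and \ref{lem:J8} (using that $\cZ(K)$ sits in every isotropy group) is a valid filling-in of a step the paper leaves implicit.
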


\begin{proof}
We check that the assumptions of Scenario \ref{scen:G11} are satisfied by $(M,P_M)$,
$\fH$ \eqref{J26} and  $Y$ \eqref{J27}
with $\Gone= K$ and $\Gtwo= \bT$ as described above.
We already noticed that $\fH$ is Abelian with complete flows, proving item \ref{itScen:1}.
Then, any $\varphi\in C^\infty(\fk)^K$ can be expressed in terms of the $\varphi_j$ \eqref{J12} on $\fk^\reg$, hence
we can express any element of $\fH$ in terms of the $H_j$ on $Y$.
It follows from the equalities in \eqref{J14} and \eqref{J15} that $H_1,\dots, H_\ell$ are independent at every point of $Y$,
and item \ref{itScen:3} holds.
The vector fields $V_{H_j}$  generate the $\bT$-action \eqref{J32}, which is obviously proper and effective,
and we also have $\{H_i,H_j\}=0$; thus item \ref{itScen:2} is verified.
Finally,  by Lemma \ref{lem:J8}, the action of $K\times \bT$ has principal isotropy type $\cZ(K)\times \{e\}$;
so item \ref{itScen:4} of Scenario \ref{scen:G11} holds, too.
Therefore, Theorem \ref{thm:G12} is directly applicable.
We similarly arrive at the same conclusions for $\tilde Y$, $\tilde \fH$ and $\Gtwo= \bR^\ell$.
\end{proof}

\begin{rem}\label{GSrem}
The construction of the torus action \eqref{J32} is a special case of a general construction
of $2\pi$-periodic Hamiltonian flows on (an open subset
of) a symplectic $K$ manifold $M$ equipped with an equivariant momentum map, here denoted $\Psi: M \to \fk^* \simeq \fk$.
This is due to Guillemin and Sternberg \cite{GS} who applied it, for example, for obtaining action variables
of Gelfand-Cetlin systems.
Specifically, the functions $H_j$ \eqref{J29} arise from Theorem 3.4 of \cite{GS}
by using the momentum map $\Psi(g,J)=J$ that generates the cotangent lift of the action
of $K$ on itself by left multiplications.
We learned this connection with \cite{GS}  after  completion of the present paper, and thank R.~Fioresi
for bringing it to our attention.
\end{rem}

\begin{rem}\label{rem:oldres}
The systems $(M,P_M, \fH)$ and $(M, P_M, \tilde \fH)$ with the constants of motion
mentioned in Remark \ref{rem:J5} are degenerate integrable systems.
It was shown in the papers \cite{FeCONF,Re1}  that the Poisson reductions of these systems inherit
degenerate integrability on \emph{generic symplectic leaves} of (the smooth part of)
the quotient space $M/K$
 where they can be interpreted as  trigonometric
spin Sutherland systems and their rational Ruijsenaars--Schneider  type
duals.
Our Proposition \ref{pr:J10} gives considerably stronger results,
which we derived by a method different from those  used in \cite{FeCONF,Re1}.
 For a spinless special case that occurs for $K={\mathrm{SU}}(n)$, see also \cite{FeA}.
\end{rem}

\section{Generalization to Heisenberg doubles  of compact Lie groups} \label{Sec:Heis}

In what follows we can restrict ourselves to a brief exposition,
since the omitted verifications can be extracted from \cite{Fe23,Fe24}.

The Heisenberg double \cite{STS85,STS} is a Poisson--Lie analogue of the cotangent bundle $T^*K$.
It is given by a (symplectic) Poisson structure on the real group manifold
\be
M:= K^\bC,
\label{H1}\ee
whose definition relies on the non-degenerate bilinear form
\be
\langle Z_1, Z_2\rangle_\bI := \Im \langle Z_1,Z_2 \rangle,
\qquad
\forall Z_1, Z_2 \in \fk^\bC,
\label{H2}\ee
and on the real vector space decomposition
\be
\fk^\bC = \fk + \fb.
\label{H3}\ee
We also need the $\fk^\bC$-valued derivatives
$DF$ and $D'F$ of the real functions $F\in C^\infty(M)$ that satisfy
\be
 \langle Z_1, D F(X) \rangle_\bI + \langle Z_2, D' F(X) \rangle_\bI := \dt F(e^{tZ_1} X e^{tZ_2}), \quad \forall X\in M, \, Z_1,Z_2 \in \fk^\bC.
 \label{H4}\ee
Then, the Poisson bracket on $C^\infty(M)$ is defined by
\be
\{ F, H\}: = \langle D F, \varrho D H \rangle_\bI +  \langle D' F, \varrho D' H \rangle_\bI
\quad\hbox{with}\quad \varrho := \frac{1}{2}\left( \pi_{\fk} - \pi_{\fb}\right),
\label{H5}\ee
where $\pi_\fk$ and $\pi_\fb$ are the projections from $\fk^\bC$ onto the isotropic subalgebras $\fk$ and $\fb$.

According to the Iwasawa decomposition \cite{Knapp}, every element $X\in K^\bC$ admits the unique decompositions
\be
X = g_L b_R^{-1} = b_L g_R^{-1} \quad \hbox{with}\quad g_L, g_R \in K,\, b_L, b_R \in B.
\label{H6}\ee
These decompositions yield the smooth maps $\Xi_L, \Xi_R: M\to K$ and $\Lambda_L, \Lambda_R: M\to B$,
\be
\Xi_L(X) := g_L,\quad \Xi_R(X):= g_R,\quad \Lambda_L(X):= b_L,\quad \Lambda_R(X):= b_R.
\label{H7}
\ee
Then,
\be
\Lambda := \Lambda_L \Lambda_R
\label{H8}\ee
is a group valued Poisson--Lie momentum map \cite{Lu}, which generates
the so-called quasi-adjoint action of $K$ on $M$. The pertinent action map \cite{Klim} operates according to
\be
K \times M \ni (\eta, X) \mapsto \eta X \Xi_R( \eta \Lambda_L(X)),
\qquad
\forall (\eta,X) \in K \times M.
\label{H9}\ee
With respect to this $K$-action, $C^\infty(M)^K$ is closed under the Poisson bracket.

Before proceeding further,
it is useful to introduce the closed submanifold
\be
\fP := \exp(\ri \fk) \subset K^\bC,
\label{H16}\ee
and  the diffeomorphism
\be
\cP: B \to \fP \quad \hbox{for which} \quad \cP(b):= b b^\dagger,
\label{H17}\ee
where $ b^\dagger := \Theta(b^{-1})$ with the Cartan involution
$\Theta: K^\bC \to K^\bC$.
For matrix Lie groups, one can identify $b^\dagger$ with the usual adjoint of $b$.
Using the dressing action of $K$ on $B$, for which $\eta \in K$ acts on $B$ by
 \be
\Dress_\eta (b) := \Lambda_L(\eta b),
\quad
\forall \eta \in K,\, b\in B,
\label{H11}\ee
 we have
\be
\cP(\Dress_\eta(b)) = \eta \cP(b) \eta^{-1},
\quad \forall \eta \in K,\, b\in B.
\label{H18}\ee
Using also that $\fP$ is diffeomorphic to $\ri \fk$ by the exponential map, we see that
the principal isotropy type submanifolds of $\fk$, $\fP$ and $B$,  with respect to the conjugation and dressing actions of $K$,
are related according to
\be
\exp(\ri \fk^\reg) = \fP^\reg =  \cP(B^\reg).
\label{H19}\ee

Now, we introduce  two Abelian Poisson subalgebras  $\fH$ and $\tilde \fH$ of $C^\infty(M)^K$ as follows:
\be
\fH = \Lambda_R^*( C^\infty(B)^K)
\quad \hbox{and}\quad
\tilde \fH = \Xi_R^* ( C^\infty(K)^K).
\label{H10}\ee
 The $K$-invariance in \eqref{H10} refers to the dressing action
 \eqref{H11} and to the conjugation action of $K$ on itself.
 The pull-backs are invariant with respect to the action \eqref{H9} since $X \mapsto \eta X \Xi_R( \eta \Lambda_L(X))$ implies
 \be
 (g_R, b_R) \mapsto (\tilde \eta g_R \tilde \eta^{-1}, \Dress_{\tilde \eta} (b_R)) \quad \hbox{with}\quad
 \tilde \eta = \Xi_R(\eta b_L)^{-1},
 \label{messy}\ee
 using equations \eqref{H6} and \eqref{H7}.

In order to describe the integral curves \cite{Fe23},  we introduce the $\fk$-valued derivative of any
$\phi \in C^\infty(B)$ by the requirement
\be
\langle Z, D\phi(b) \rangle_\bI = \dt \phi (e^{t Z} b), \quad \forall Z\in \fb.
\label{H12}\ee
The integral curve of the Hamiltonian $\Lambda_R^*(\phi) \in \fH$
through the initial value $X_0$ has the form
\be
X(\tau) = X_0 \exp\left( - \tau D\phi(\Lambda_R(X_0))\right), \qquad \forall \tau \in \bR.
\label{H13}\ee
Next, take any $\chi\in C^\infty(K)^K$, and consider the Iwasawa decomposition
for fixed $g \in K$
\be
\exp(\ri \tau \nabla \chi(g)) = \beta_\chi(\tau,g) \gamma_\chi(\tau,g), \quad
\beta_\chi(\tau,g) \in B,\,\, \gamma_\chi(\tau,g) \in K.
\label{H14}\ee
With this, the integral curve of the Hamiltonian $\Xi_R^*(\chi) \in \tilde \fH$  turns out  to be
\be
X(\tau) = X_0 \beta_\chi(\tau,\Xi_R(X_0)), \qquad \forall \tau\in \bR.
\label{H15}\ee
We see that the elements $\fH$ and $\tilde \fH$ generate complete flows.
Taking a closer look at the above formulas, one can also confirm that $\fH$ and $\tilde \fH$
both are Abelian Poisson algebras.

By using the relations \eqref{H19}, we define the functions $\phi_j \in C^\infty(B^\reg)^K$ by setting
\be
\phi_j(b):= \frac{1}{2} \varphi_j(\ri \log (bb^\dagger)), \qquad \forall j=1,\dots, \ell,
\label{H20}\ee
with  $\varphi_j\in C^\infty(\fk^\reg)^K$  introduced in Lemma \ref{lem:J2}.

\begin{lem}\label{lem:H1}
With the map $[\Gamma_1]: \fk^\reg \to \bT\backslash K $ in  \eqref{J11}, the derivative of  $\phi_j \in C^\infty(B^\reg)^K$ is given by
\be
D\phi_j(b) = \Gamma_1(\ri \log(bb^\dagger))^{-1} \ri h_{\alpha_j} \Gamma_1(\ri \log(bb^\dagger)).
\label{H21}\ee
\end{lem}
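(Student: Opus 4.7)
My plan is to apply the chain rule to the composition $\phi_j = \tfrac12\, \varphi_j\circ F$, where $F:B^\reg\to \fk^\reg$ is defined by $F(b):=\ri\log(bb^\dagger)$, and then substitute the known formula for $d\varphi_j$. For $Z\in \fb$, the definition \eqref{H12} and the chain rule give
\be
\langle Z, D\phi_j(b)\rangle_\bI \;=\; \tfrac12 \langle\, dF|_b(Z),\, d\varphi_j(F(b))\,\rangle
\ee
with the Killing form on the right. Combining this with $d\varphi_j(F(b))=-\Gamma^{-1}\ri h_{\alpha_j}\Gamma$ from Lemma~\ref{lem:J3} (where $\Gamma:=\Gamma_1(F(b))$), and rewriting the target identity $D\phi_j(b)=\Gamma^{-1}\ri h_{\alpha_j}\Gamma$ via $\langle\cdot,\cdot\rangle_\bI=\Im\langle\cdot,\cdot\rangle$, the claim reduces to the equality
\be
-\tfrac12\,\langle \tilde W,\, \ri h_{\alpha_j}\rangle \;=\; \Re\langle \tilde Z,\, h_{\alpha_j}\rangle, \qquad \tilde Z:=\Gamma Z\Gamma^{-1},\quad \tilde W:=\Gamma\,(dF|_b(Z))\,\Gamma^{-1}.
\ee

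To compute $\tilde W$ explicitly, I would conjugate the defining relation $\exp(-\ri F(b(t)))=b(t)b(t)^\dagger$ for $b(t):=e^{tZ}b$ by $\Gamma$. Since $\Theta$ commutes with $\Ad_\Gamma$ for $\Gamma\in K$ (as $\Theta|_K=\id$), and $H:=\Gamma F(b)\Gamma^{-1}$ lies in $\ft$ by the definition of $\Gamma_1$ together with $\ri\cC\subset\ft$, the relation transports to $\exp(-\ri\Gamma F(b(t))\Gamma^{-1})=e^{t\tilde Z}\,e^{-\ri H}\,e^{-t\Theta(\tilde Z)}$. The standard expansion $\log(e^{tB_1}e^A e^{-tB_2}) = A + t\,\frac{\ad_A}{1-e^{-\ad_A}}(e^{-\ad_A}(B_1)-B_2)+O(t^2)$ applied with $A=-\ri H$, $B_1=\tilde Z$, $B_2=\Theta(\tilde Z)$ then yields
\be
\tilde W \;=\; \frac{\ad_H}{1-e^{\ri\ad_H}}\bigl(e^{\ri\ad_H}(\tilde Z) - \Theta(\tilde Z)\bigr).
\ee

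The key step is to project onto $\ft^\bC$. Since $h_{\alpha_j}\in \ft^\bC$ is Killing-orthogonal to every root space, only $\tilde W|_{\ft^\bC}$ contributes to $\langle \tilde W,\ri h_{\alpha_j}\rangle$, and likewise only $\tilde Z_0:=\pi_{\ft^\bC}(\tilde Z)$ to $\langle \tilde Z,h_{\alpha_j}\rangle$. On $\ft^\bC$ we have $\ad_H=0$, so $e^{\ri\ad_H}$ is the identity and, by the limit $\lim_{x\to 0}\tfrac{x}{1-e^{\ri x}}=\ri$, the operator $\tfrac{\ad_H}{1-e^{\ri\ad_H}}$ specializes to multiplication by $\ri$; since $\Theta$ acts on $\ft^\bC$ as complex conjugation with respect to the real form $\ft$, writing $\tilde Z_0=X_1+\ri X_2$ with $X_1,X_2\in \ft$ gives $\tilde W|_{\ft^\bC}=\ri(\tilde Z_0-\Theta(\tilde Z_0))=-2X_2$, and with $H_j:=\ri h_{\alpha_j}\in \ft$ one finds $\langle \tilde W,\ri h_{\alpha_j}\rangle=-2\langle X_2, H_j\rangle$ while $\Re\langle \tilde Z, h_{\alpha_j}\rangle=\langle X_2, H_j\rangle$, so both sides of the required identity match. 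The main obstacle will be the careful bookkeeping of the four independent sources of sign-and-$\ri$ factors---the $\ri\log$ in $F$, the identification $\cC\subset\ri\ft$ used by $\Gamma_1$, the Cartan involution $\Theta$, and the pairing $\langle\cdot,\cdot\rangle_\bI$---together with justifying the derivative-of-log-product expansion, valid because $bb^\dagger$ lies in the exponential image of a regular element of $\fk$ and hence $\log$ is a local diffeomorphism there.
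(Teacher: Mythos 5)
Your proposal is correct, and all the sign-and-$\ri$ bookkeeping checks out: with $\tilde Z_0=X_1+\ri X_2$ one indeed gets $\pi_{\ft^\bC}(\tilde W)=\ri(\tilde Z_0-\theta(\tilde Z_0))=-2X_2$, and both sides of your reduced identity equal $\langle X_2,\ri h_{\alpha_j}\rangle$, which (by non-degeneracy of the $\langle-,-\rangle_\bI$ pairing between $\fb$ and $\fk$) proves \eqref{H21}. However, your route is genuinely different from the paper's. The paper first invokes the equivariance property $D\phi(\Dress_\eta(b))=\eta\,(D\phi(b))\,\eta^{-1}$ for $\phi\in C^\infty(B)^K$ (quoted from \cite{Fe23}), uses \eqref{H18}--\eqref{H19} to move any $b\in B^\reg$ to a unique $e^\xi$ with $\xi\in\cC$, observes that $\bT$-invariance of $e^\xi$ under dressing forces $D\phi_j(e^\xi)\in\ft$, verifies $D\phi_j(e^\xi)=\ri h_{\alpha_j}$ by a short direct computation at that special point, and then propagates the answer by equivariance. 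You instead work at an arbitrary $b$, pushing the chain rule through $F(b)=\ri\log(bb^\dagger)$, substituting Lemma~\ref{lem:J3}, and computing $dF$ explicitly via the derivative-of-exponential operator; your projection onto $\ft^\bC$ (where $\tfrac{\ad_H}{1-e^{\ri\ad_H}}$ degenerates to multiplication by $\ri$ and $\theta$ acts as conjugation) is in effect the general-point avatar of the paper's observation that the answer must land in $\ft$ at the distinguished point. The trade-off: the paper's argument is shorter and leans on symmetry, but imports the equivariance lemma from an external reference and leaves the computation at $e^\xi$ as ``easy to verify''; yours is self-contained and makes every step explicit, at the price of heavier computation and of having to justify the differentiability and invertibility of the $\log$-derivative (which you correctly note holds because $\exp:\ri\fk\to\fP$ is a diffeomorphism and $\ad_{\log(bb^\dagger)}$ has real spectrum). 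One cosmetic remark: you write $\Theta(\tilde Z)$ for the Lie-algebra-level involution, which in the paper's notation is $\theta$, the complex conjugation of $\fk^\bC$ with respect to $\fk$; this is the differential of the group-level $\Theta$ and is what actually appears in $b(t)^\dagger=b^\dagger e^{-t\theta(Z)}$.
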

\begin{proof}
For any $\phi\in C^\infty(B)^K$,  one has $D\phi (\Dress_\eta(b)) = \eta (D\phi(b)) \eta^{-1}$, as proved in \cite{Fe23}.
 It follows from \eqref{H18} and \eqref{H19} that $b\in B^\reg$
can be transformed into a unique element $e^\xi$ for some $\xi \in \cC$ by the dressing action, and $D\phi (e^\xi) \in \ft$ must hold because
$\Dress_\eta(e^\xi) = e^\xi$ for $\eta \in \bT$.  It is easy to verify that
$D\phi_j (e^\xi) = \ri h_{\alpha_j}$,
and then the claim follows from the equivariance property of $D\phi_j$ using also \eqref{H18} and \eqref{H19}.
\end{proof}

Similarly\footnote{The similarity could be made more manifest by replacing $M=K^\bC$
by its model  $K \times B$ relying on the map $(\Xi_R,\Lambda_R)$.}
  to the cotangent bundle case, we introduce the following connected, dense open
submanifolds of the Heisenberg double $M$:
\be
Y:= \Lambda_R^{-1}(B^\reg) \quad \hbox{and} \quad \tilde Y:= \Xi_R^{-1}(K^\reg).
\label{H22}\ee
These are invariant submanifolds for the $K$-action \eqref{H9}, since transformation of $b_R$ and $g_R$ obeys \eqref{messy}.
Then, we define the $K$-invariant maps
\be
 (H_1,\dots, H_\ell): Y \to \bR^\ell
\quad \hbox{and}
\quad
(\tilde H_1,\dots, \tilde H_\ell) : \tilde Y \to \bR^\ell,
\label{H23}\ee
now using the functions
\be
H_j := \phi_j \circ \Lambda_R
\quad\hbox{and}\quad
\tilde H_j := \chi_j  \circ \Xi_R, \qquad j=1,\dots, \ell,
\label{H24}\ee
with  $\phi_j$ in \eqref{H20} and $\chi_j$  prepared in Lemma \ref{lem:J2}.

For any $g\in K^\reg$ and $\taU\in \bR^\ell$, let $\beta(\taU, g) \in B$ be the unique solution of the
following factorization problem:
\be
\Gamma_2(g)^{-1}  \exp\bigl( \sum_{j=1}^\ell \tau_j h_{\alpha_j}\bigr) \Gamma_2(g)
= \beta(\taU,g) \gamma(\taU,g),
\qquad
\beta(\taU,g)\in B,\,\,  \gamma(\taU,g)\in K.
\label{H25}\ee
The main new result of this section is the subsequent generalization of Lemma \ref{lem:J6}.

\begin{lem}\label{lem:H2}
The map $(H_1,\dots, H_\ell)$ \eqref{H24} is the momentum map for the free Hamiltonian action of the torus $\bT$ \eqref{J30} on $Y$ \eqref{H22}
that operates according to
\be
(T(\taU), X) \mapsto X  \Gamma_1 (\ri \log (b_R b_R^\dagger))^{-1} T(\taU) \Gamma_1(\ri \log(b_R b_R^\dagger)), \quad
\forall  \taU\in \bR^\ell,\,\, X = g_L b_R^{-1} \in Y,
\label{H26}\ee
with $T(\taU)$ in \eqref{J31}.
The map $(\tilde H_1,\dots, \tilde H_\ell)$ \eqref{H24} is the momentum map for the free and proper Hamiltonian action of the Abelian group
$ \bR^\ell$ on $\tilde Y$ \eqref{H22} given by
\be
(\taU, X) \mapsto X \beta(\taU, g_R), \quad \forall \taU\in \bR^\ell, \,\, X= b_L g_R^{-1} \in \tilde Y,
\label{H27}\ee
with $ \beta(\taU, g_R)$ defined in \eqref{H25}.
These Abelian group actions commute with the $K$-action \eqref{H9} restricted on $Y$ and on $\tilde Y$, respectively.
\end{lem}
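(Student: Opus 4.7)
The plan is to follow the pattern of Lemma \ref{lem:J6}, with Iwasawa factorisations replacing the linear exponentials of the cotangent bundle case. I will treat the two momentum maps separately and check commutation with the $K$-action at the end.

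For $(H_1,\dots,H_\ell)$ on $Y$, substituting $D\phi_j(b_R)=\Gamma_1^{-1}\ri h_{\alpha_j}\Gamma_1$ from Lemma \ref{lem:H1} into the flow formula \eqref{H13} will express the flow of $H_j$ as right multiplication by $\kappa_\tau^{(j)}:=\Gamma_1^{-1}\exp(-\tau\ri h_{\alpha_j})\Gamma_1\in K$, with $\Gamma_1:=\Gamma_1(\ri\log(b_R^{(0)}(b_R^{(0)})^\dagger))$. The decisive step is to show that $\Lambda_R$ stays constant along this flow. From \eqref{H6} one sees that right multiplication by $\kappa\in K$ preserves $\Lambda_R$ iff $b_R^{-1}\kappa b_R\in K$, and using $k^\dagger=k^{-1}$ for $k\in K$ this amounts to $\kappa$ commuting with $b_R b_R^\dagger$. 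For $\kappa=\kappa_\tau^{(j)}$ this is clear because $\Gamma_1\, b_R b_R^\dagger\, \Gamma_1^{-1}\in\exp(\ri\ft)$ commutes with $\bT$. With $\Lambda_R$ (hence $\Gamma_1$) frozen, the $\ell$ flows commute and their composition is precisely \eqref{H26}; the $(2\pi\bZ)^\ell$-periodicity of $T(\taU)$ descends the $\bR^\ell$-flow to a $\bT$-action, and freeness is immediate since $X\Gamma_1^{-1}T(\taU)\Gamma_1=X$ forces $T(\taU)=e$.

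For $(\tilde H_1,\dots,\tilde H_\ell)$ on $\tilde Y$, I would combine Lemma \ref{lem:J3} with \eqref{H15} to identify the flow of $\tilde H_j$ with right multiplication by the Iwasawa factor $\beta_{\chi_j}(\tau,g_R)$ of \eqref{H25} restricted to the $j$-th coordinate. The main obstacle is that $\Xi_R$ is \emph{not} preserved along the flow: a short Iwasawa manipulation gives instead $\Xi_R(X_0\beta_{\chi_j}(\tau,g_R))=\gamma_{\chi_j}(\tau,g_R)\, g_R\, \gamma_{\chi_j}(\tau,g_R)^{-1}$. I propose to bypass this by verifying the group axiom for \eqref{H27} directly from \eqref{H25}: splitting $\exp(\sum(\tau_j+\tau_j')h_{\alpha_j})$ as a product and invoking uniqueness of Iwasawa will yield the cocycle
\be
\beta(\taU+\taU',g_R)=\beta(\taU,g_R)\cdot\Dress_{\gamma(\taU,g_R)}\bigl(\beta(\taU',g_R)\bigr),
\ee
while the $\bT$-ambiguity of $\Gamma_2$ allows the choice $\Gamma_2(\gamma g_R\gamma^{-1})=\Gamma_2(g_R)\gamma^{-1}$, from which \eqref{H25} will give $\beta(\taU',\gamma g_R\gamma^{-1})=\Dress_\gamma(\beta(\taU',g_R))$. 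Combined, these two identities confirm the group-action axiom; differentiating at $\taU=0$ recovers $V_{\tilde H_j}$. Freeness is elementary (using $K\cap B=\{e\}$ and the injectivity of $\exp$ on $\ft$), and properness will follow as in Lemma \ref{lem:J6}: a convergent subsequence of $\gamma(\taU_n,g_{R,n})\in K$ combined with continuity of a local lift of $g\mapsto\Gamma_2(g)$ on $K^\reg$ forces $\exp(\sum\tau_{j,n}h_{\alpha_j})$, and hence $\taU_n$, to converge.

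Finally, commutation of the two abelian actions with the quasi-adjoint action \eqref{H9} should follow from $K$-equivariance of the building blocks: by \eqref{messy} the $K$-action sends $(g_R,b_R)\mapsto(\tilde\eta g_R\tilde\eta^{-1},\Dress_{\tilde\eta}(b_R))$, and \eqref{H18} shows that $b_R b_R^\dagger$ is correspondingly conjugated by $\tilde\eta$; this permits transporting $\Gamma_1$ and $\Gamma_2$ by right multiplication with $\tilde\eta^{-1}$ modulo the irrelevant $\bT$-ambiguity, so that the multipliers $\Gamma_1^{-1}T(\taU)\Gamma_1$ in \eqref{H26} and $\beta(\taU,g_R)$ in \eqref{H27} transform exactly as required to commute with the $K$-action. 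The most substantive point in the whole proof will be the pair of Iwasawa identities underlying the group law in the second part.
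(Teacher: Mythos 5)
Your proposal is correct and follows the same overall strategy as the paper's proof: identify the maps \eqref{H26} and \eqref{H27} as the joint flows of the Hamiltonians \eqref{H24}, check that $Y$ and $\tilde Y$ are preserved, prove properness as in Lemma \ref{lem:J6}, and verify commutation with the $K$-action. The differences are in how two steps are discharged. First, for commutation with the quasi-adjoint action the paper simply invokes the general principle that a $K$-action generated by a Poisson--Lie momentum map commutes with the flow of any $K$-invariant Hamiltonian, whereas you carry out an explicit equivariance computation for the multipliers $\Gamma_1^{-1}T(\taU)\Gamma_1$ and $\beta(\taU,g_R)$ via \eqref{messy} and \eqref{H18}; your route is longer but self-contained, the paper's is one conceptual line. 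Second, the paper defers the key facts that $\Lambda_R$ is conserved along \eqref{H13} and that $\Xi_R$ evolves by conjugation along \eqref{H15} to \cite{Fe23}, while you rederive them: your characterization of when right multiplication by $\kappa\in K$ preserves $\Lambda_R$ (namely $\kappa$ commuting with $b_Rb_R^\dagger$) and your two Iwasawa identities, the cocycle $\beta(\taU+\taU',g_R)=\beta(\taU,g_R)\,\Dress_{\gamma(\taU,g_R)}(\beta(\taU',g_R))$ together with $\beta(\taU,\gamma g_R\gamma^{-1})=\Dress_\gamma(\beta(\taU,g_R))$, are both correct and give a clean, reference-free verification of the group-action axiom for \eqref{H27}. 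Two small points to tidy up: in the freeness argument the relevant injectivity is that of $\exp$ on $\ri\ft$ (on $\ft$ itself the exponential is periodic), and the properness argument should begin by observing that $\beta(\taU_n,g_{R,n})=X_n^{-1}\bigl(X_n\beta(\taU_n,g_{R,n})\bigr)$ converges in the closed subgroup $B$ before extracting the convergent subsequence of $\gamma(\taU_n,g_{R,n})$.
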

\begin{proof}
It is easily seen that the action maps \eqref{H26} and \eqref{H27}  are  the joint flows of the respective  Hamiltonians defined in \eqref{H24}.
The flows stay in $Y$ and in $\tilde Y$, respectively, since
 $\Lambda_R$ is constant along any integral curve \eqref{H13} and one has
 $\Xi_R(X(\taU))=\gamma(\taU,\Xi_R(X_0)) \Xi_R(X_0) \gamma(\taU,\Xi_R(X_0))^{-1}$
along the integral curves in \eqref{H15}. (For a verification, see \cite{Fe23}.)
The properness of the $\bR^\ell$-action \eqref{H27}  can be shown similarly to the analogous claim of Lemma \ref{lem:J6}.
These Hamiltonian actions commute with the $K$-action \eqref{H9}, since  any $K$-action generated by a Poisson--Lie momentum map commutes with the flow
generated by any $K$-invariant Hamiltonian.
\end{proof}

\begin{prop} \label{pr:H3}
Lemmas \ref{lem:J7}, \ref{lem:J8}, \ref{lem:J9} and Proposition \ref{pr:J10}
remain word for word valid, replacing the various notions
defined for the cotangent bundle with their Heisenberg double counterparts as defined above.
\end{prop}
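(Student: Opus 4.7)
The plan is to verify Scenario \ref{scen:G11} for both $(\fH, Y, \bT)$ and $(\tilde\fH, \tilde Y, \bR^\ell)$ in the Heisenberg double setting, whereupon Theorem \ref{thm:G12} applies exactly as in the proof of Proposition \ref{pr:J10}. The bulk of the work lies in establishing the three isotropy statements parallel to Lemmas \ref{lem:J7}--\ref{lem:J9}; the remaining items of Scenario \ref{scen:G11} follow from material already assembled in this section.

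I would first check items \ref{itScen:1}--\ref{itScen:3} of Scenario \ref{scen:G11}. The Abelianness and completeness of the flows of $\fH$ and $\tilde\fH$ are read off from \eqref{H13} and \eqref{H15}. Lemma \ref{lem:H2} provides that $V_{H_j}$ (resp.\ $V_{\tilde H_j}$) generate a free, proper and effective action of $\bT$ on $Y$ (resp.\ of $\bR^\ell$ on $\tilde Y$) commuting with $K$, and in particular $\{H_i,H_j\}=0$. Combining Lemma \ref{lem:H1} with the $K$-equivariant diffeomorphism $\cP: B^\reg \to \fP^\reg$ and the first equality in \eqref{J14} transferred along $\cP$, every $K$-invariant function on $B^\reg$ is a smooth function of $\phi_1,\dots,\phi_\ell$, and the $D\phi_j$ span the transverse directions to the $K$-orbits; pulling back by $\Lambda_R$ yields item \ref{itScen:3} for $\fH$. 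The analogous statement for $\chi_j$ on $K^\reg$ handles $\tilde\fH$.

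Next, to establish the three isotropy analogs, I would transport the cotangent-bundle proofs through the global diffeomorphism $M \simeq K\times B$ furnished by $(\Xi_R,\Lambda_R)$, reading the $K$-action in these coordinates from \eqref{messy}. Fix two maximal tori $\bT,\bT' < K$ with $\bT\cap\bT' = \cZ(K)$ and $\ft\perp\ft'$ as in Remark \ref{rem:J4}, and choose $X_0\in Y\cap\tilde Y$ corresponding to $(g_R,b_R)$ with $g_R\in \bT^\reg \cap \exp(\ri\cA)$ and $\cP(b_R)\in\exp(\ri(\ft')^\reg)$. For the analog of Lemma \ref{lem:J7}, relations \eqref{H18}--\eqref{H19} identify the dressing isotropy of $b_R$ with the adjoint isotropy of $\cP(b_R)$, namely $\bT'$. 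Since \eqref{messy} shows that $\eta$ stabilizes $X_0$ iff the corresponding $\tilde\eta$ both conjugation-fixes $g_R$ and dressing-fixes $b_R$, this forces $\tilde\eta\in\bT\cap\bT' = \cZ(K)$, and then the inclusion of $\cZ(K)$ in every maximal torus yields $\eta = \tilde\eta \in\cZ(K)$. For the analog of Lemma \ref{lem:J8}, I would evaluate the $\bT$-action \eqref{H26} at $X_0$: there $\Gamma_1(\ri\log(b_R b_R^\dagger))$ intertwines $\bT$ with a torus in $\bT'$, so the right-multiplier placed by \eqref{H26} on $X_0$ lies in $\bT'$; combined with the previous isotropy argument one extracts $T(\taU)=e$. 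For the analog of Lemma \ref{lem:J9}, choosing $g_R\in\exp(\ri\cA)$ makes $\Gamma_2(g_R)=e$, so \eqref{H25} reduces to a trivial Iwasawa factorization $\beta(\taU,g_R) = \exp\bigl(\sum_j \tau_j h_{\alpha_j}\bigr)\in B$, $\gamma = e$; an invariant-norm argument for $\log \cP(\Lambda_R(X))$---the Poisson--Lie counterpart of the $\langle J,J\rangle$ computation of Lemma \ref{lem:J9}---then forces $\taU=0$.

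With the three isotropy analogs at hand, item \ref{itScen:4} of Scenario \ref{scen:G11} is verified, and Theorem \ref{thm:G12} applies verbatim to conclude integrability, the existence of generalized action variables, and the integrability statements on the symplectic leaves, exactly as in Proposition \ref{pr:J10}. The hard step will be the analog of Lemma \ref{lem:J9}: the orthogonality trick based on $\langle \cT(\taU),\cT(\taU)\rangle>0$ used in the cotangent case must be replaced by its Poisson--Lie counterpart based on $K$-invariance of $\Lambda_R\Lambda_R^\dagger = \cP(\Lambda_R)$ together with \eqref{H25}. One must check that right multiplication by $\beta(\taU,g_R)$ shifts $\log\cP(\Lambda_R)$ into a translate involving a non-zero $\ft$-component, so that the orthogonality $\ft\perp\ft'$ can still be invoked to conclude $\cT(\taU)=0$ and $\eta\in\cZ(K)$.
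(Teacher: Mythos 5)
Your overall strategy coincides with the paper's: verify Scenario \ref{scen:G11} for $(\fH,Y,\bT)$ and $(\tilde\fH,\tilde Y,\bR^\ell)$, reduce everything to the three isotropy analogues of Lemmas \ref{lem:J7}--\ref{lem:J9}, and then invoke Theorem \ref{thm:G12}. Items \ref{itScen:1}--\ref{itScen:3} are handled as in the paper, and your analogue of Lemma \ref{lem:J7} is essentially the paper's argument, except that your justification of $\eta=\tilde\eta$ (``$\cZ(K)$ is contained in every maximal torus'') is not the right reason: what one actually needs is that $\Xi_R(\eta b_L)^{-1}=c\in\cZ(K)$ forces $\eta=c$ via the uniqueness of the Iwasawa decomposition, namely $c^{-1}\eta\, b_L=\Lambda_L(\eta b_L)\in B$ gives $c^{-1}\eta\in K\cap B=\{e\}$. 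For the analogue of Lemma \ref{lem:J8} the paper argues via the conserved pair $\bigl(\cP\circ\Lambda_R,\ (\Xi_R)^{-1}(\cP\circ\Lambda_R)\Xi_R\bigr)$, mirroring Remark \ref{rem:J5}, rather than by tracking where the right-multiplier of \eqref{H26} lands; both routes are workable, but yours is only sketched.

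The genuine gap is in the analogue of Lemma \ref{lem:J9}, which you yourself flag as the hard step and then leave unresolved. Your proposed ``Poisson--Lie counterpart of the $\langle J,J\rangle$ computation'' presumes that right multiplication by $\beta(\taU,g_R)$ acts on $\log\cP(\Lambda_R)$ as an additive translate with a nonzero $\ft$-component, so that orthogonality $\ft\perp\ft'$ can be invoked; but the map $b\mapsto\log(\beta^{-1}bb^{\dagger}\beta^{-1\dagger})$ is not an additive translation (Baker--Campbell--Hausdorff obstructions), so this argument does not go through as stated. The paper avoids the orthogonality trick altogether: choosing $X\in\tilde Y$ with $g_R\in\exp(\ri\cA)$ (so that $\Gamma_2(g_R)=e$ and the factorization \eqref{H25} is trivial, $\beta(\taU,g_R)=\exp(\sum_j\tau_j h_{\alpha_j})$, $\gamma=e$) and with $K_X=\cZ(K)$, the isotropy condition reduces via \eqref{messy} to $g_R=\tilde\eta g_R\tilde\eta^{-1}$, forcing $\tilde\eta\in\bT$, and then to the identity $b_R=\tilde\eta\,\beta(\taU,g_R)^{-1}b_R\,\tilde\eta^{-1}$ inside $B$ (up to the $\bT$-conjugation, which preserves $B$ and commutes with $\beta$); this directly forces $\beta(\taU,g_R)=e$, hence $\taU=0$, and then $\tilde\eta\in\cZ(K)$ from the choice of $X$. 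You would need either to adopt this decomposition-uniqueness argument or to supply a genuine proof that your norm computation survives the noncommutativity of $B$; as written, the key step is asserted rather than proved.
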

\begin{proof}
Since $M= K^\bC$ is diffeomorphic to $K\times B$ by the map $(\Xi_R, \Lambda_R)$, we can choose $X\in Y\cap \tilde Y$ so that the joint
isotropy group of $g_R$ and $\cP(b_R)$ under the conjugation action of $K$ is $\cZ(K)$. Note that $\cZ(K) = \cZ(K^\bC)$ and that we used Remark \ref{rem:J4}.
By \eqref{messy}, $\eta \in K_X$  with respect to the action \eqref{H9} if and only if
  $\Xi_R(\eta b_L)^{-1} = c \in \cZ(K)$. But then  $\eta = c$ follows from the equality
$ \eta b_L = \Lambda_L(\eta b_L) c = c \Lambda_L (\eta b_L)$.
 Therefore $K_X = \cZ(K)$, which implies that the statement of Lemma \ref{lem:J7} is valid in our case.

 The proof of Lemma \ref{lem:J8} can be generalized to the present case by noticing that for the map $\cP:B\to \fP$ \eqref{H17}, the $\fP$-valued functions
 \be
 \cP \circ \Lambda_R
 \quad \hbox{and}\quad
 (\Xi_R)^{-1} (\cP \circ \Lambda_R) \Xi_R
\label{H29} \ee
 are constant along the  integral curves \eqref{H13} as well as along the orbits of the $\bT$-action \eqref{H26}.

 To verify the statement of Lemma \ref{lem:J9}, let us choose $X\in \tilde Y$ so that $g_R \in \exp(\ri \cA)$ \eqref{J8}
 and the isotropy group $K_X$ is $\cZ(K)$.
 In this case, the factorization problem \eqref{H25} is trivial, and it gives
 \be
 \beta(\taU, g_R) = \exp\bigl(\sum_{j=1}^\ell \tau_j h_{\alpha_j}\bigr), \,\, \gamma(\taU, g_R) = e.
 \ee
 The requirement that $(\eta, \taU) \in \bG=K \times \bR^\ell$ is in the isotropy group $\bG_X$ is now equivalent to
 \be
 g_R = \tilde \eta g_R \tilde \eta^{-1}
 \quad \hbox{and}\quad
 b_R = \Dress_{\tilde \eta} (\beta(\taU, g_R)^{-1} b_R).
 \label{H31}\ee
 Here, $\tilde \eta^{-1} = \Xi_R(\eta \Lambda_L(X))$ as in \eqref{messy}
  and we used that  $\Xi_R$ \eqref{H7} is constant on the $\bR^\ell$-orbit of the chosen  element $X\in \tilde Y$.
 The first equality enforces that $\tilde \eta \in \bT$. Consequently,  the second equality becomes
 \be
 b_R = \tilde \eta \beta(\taU, g_R)^{-1} b_R \tilde \eta^{-1}.
 \ee
As is easily seen, this equality implies that $\beta(\taU, g_R)$ is the identity element of $B$, which
means that $\taU= 0\in \bR^\ell$. Then,  $\tilde \eta \in \cZ(K)$ results from our assumption on $X$,
and this is equivalent to $\eta \in \cZ(K)$.
Having proved that $\bG_X = \cZ(K) \times \{0\}$, the statement of Lemma \ref{lem:J9} follows.

The fact that we can adapt  Proposition \ref{pr:J10} is a straightforward consequence of Scenario \ref{scen:G11}.
\end{proof}

 \begin{rem}
 One can verify that $W(X):= b_L b_R g_L^{-1} = b_L g_R b_L^{-1}$, for $X$ written as in \eqref{H6}, is constant along the integral curves
 $X(\tau)$  \eqref{H15}.
 This, and the conserved quantities \eqref{H29} along the integral curves \eqref{H13},
 were used in the previous papers \cite{Fe23,Fe24}
 to discuss the integrability of the pertinent reduced systems,  which are spin Ruijsenaars--Schneider type
deformations of trigonometric spin Sutherland systems and their duals.
However, no action variables appeared in \cite{Fe23,Fe24}
and the proof of the integrability of the `dual system' based on the Hamiltonians
$\tilde \fH$ \eqref{H10} has not been completed previously.
For important spinless special cases,  see also \cite{FK}.
\end{rem}

\section{Examples of integrable systems on  moduli spaces of flat connections} \label{Sec:qPoiss}

Arthamonov and Reshetikhin \cite{ARe} obtained  powerful general results
concerning  degenerate integrable (superintegrable) systems living on moduli spaces of flat connections.
Their approach for describing the Hamiltonian flows builds on the `$r$-matrix method' of Fock and Rosly \cite{FoR},
which makes use of an $r$-matrix of \emph{factorizable}\footnote{A `factorizable classical $r$-matrix'
$r\in \cG \otimes \cG$ for a Lie algebra $\cG$  is a solution of the classical Yang--Baxter equation
whose symmetric part  $r+ r_{21}$ is equal to  $T_a \otimes T^a$ for dual bases $\{T_a\}$, $\{T^a\}$
of $\cG$ defined by a non-degenerate, symmetric, invariant bilinear form.  Such $r$-matrices underlie
factorizable Lie bialgebras and Poisson--Lie groups as defined in \cite{ReSTS,STS}. \label{ft:factor}}
type.
This is not directly applicable to compact Lie groups,  since they do not admit such $r$-matrices \cite{KS}.
Of course, results pertaining to compact Lie groups can be obtained by exploring the real forms of the holomorphic systems.
Still, we believe it is worthwhile to directly study the compact case by using the quasi-Poisson approach \cite{AKSM} in which no $r$-matrix, but only a non-degenerate bilinear form is needed.

Here, we  present examples for which the method
developed in Section \ref{Sec:2} is immediately applicable.
For these specific examples, our method implies integrability on arbitrary symplectic leaves
of a dense open subset of the moduli space and the existence of action-angle coordinates,  complementing previous results obtained in \cite{ARe,Fe23}.
First, we need to recall the required background material.

\subsection{Quasi-Poisson preliminaries}  \label{ss:qPprelim}

We follow the general construction of \cite{AKSM,AMM}.
In this subsection, we let $K$  be an arbitrary connected, compact, non-Abelian Lie group with a fixed non-degenerate invariant
bilinear form $\langle - , - \rangle$  on its Lie algebra $\fk$.
We choose a  pair of dual bases, $\{E_a\}$ and  $\{E^a\}$, of $\fk$, that satisfy
\be
\langle E_a, E^b \rangle =\delta_{a}^b, \quad
a,b= 1,\dots, \dim(K).
\label{W1}\ee
We denote by $E_a^L$ and $E_a^R$, and similarly by $E^a_L$ and $E^a_R$, the left-invariant and right-invariant vector fields on $K$
that reduce to $E_a$ and $E^a$, respectively, at the unit element $e\in K$.

A quasi-Poisson $K$-manifold with momentum map $\Phi$ is given by the data $(M, A, P_M, \Phi)$, where $A$ is the action map
\be
A: K \times M \to M,
\label{W2}\ee
$P_M$ is a $K$-invariant bivector and  $\Phi: M \to K$ is an equivariant map (with respect to $A$ and $\Ad$)
subject to the following conditions.
First, the Schouten bracket $[P_M,P_M]$ obeys the identity
\be
[P_M, P_M]= - \frac{1}{12} \langle E_a, [E_b,E_c]\rangle \, E^a_M \wedge E^b_M \wedge E^c_M,
\label{W3}\ee
where summation over coinciding indices is applied, and $E^a_M$ is the vector field on $M$
generating the action of the one-parameter subgroup $\exp(t E^a)$ of $K$.
Second, the momentum map and the `quasi-Poisson bracket' defined by the bivector are related according to
\be
\{f, F\circ \Phi \} = \frac{1}{2} E^a_M[f] \, ( E_a^L + E_a^R)[F] \circ \Phi,\qquad \forall f\in C^\infty(M),\,\, F \in C^\infty(K),
\label{W4}\ee
where $E_a^L[F]$ and $E^a_M[f]$ denote the respective derivatives of the functions $F$ and $f$.
We should warn the reader that our convention
for the  vector fields $E^a_M$ is the opposite of that in \cite{AKSM}. This explains the extra sign in \eqref{W3} and
the fact that the momentum map appears in the second argument in \eqref{W4}.
\begin{rem}\label{rem:51}
 If one rescales the invariant bilinear form $\langle - , - \rangle$ by a factor $c\in \mathbb{R}^\ast$,
 then the same map  $\Phi$ gives the momentum map for the quasi-Poisson bivector $c^{-1} P_M$ instead of $P_M$.
\end{rem}

The definition of quasi-Poisson manifold guarantees that $C^\infty(M)^K$ is a genuine Poisson algebra, and this is identified as the Poisson algebra
of smooth functions on $M/K$. Observe also from \eqref{W4} that the momentum map is constant along the integral curves of the `quasi-Hamiltonian vector field' $V_f$ of any
  invariant function $f\in C^\infty(M)^K$.
Thus, the invariant functions of the form $\chi \circ \Phi$ with $\chi \in C^\infty(K)^K$  are in the center of the Poisson algebra $C^\infty(M)^K$.
This entails \cite{AKSM,SL} that $\Phi^{-1}(\cO)/K$ is a (stratified) Poisson subspace of $M/K$ for any conjugacy class $\cO \subset K$
for which $\Phi^{-1}(\cO)$ is not empty.

Now, we state an elementary result, that will prove to be useful.
It uses the notation $\xi_M$  for  $\xi:M\to \fk$, defined as the vector field satisfying
 $\xi_M[f](x)=(\xi(x))_{M,x}[f]$ for any $f\in C^\infty(M)$.
 Here, $(\xi(x))_{M,x}$ is the tangent vector at $x$ of the orbit of
the one-parameter group $\exp(t \xi(x))$.
\begin{lem} \label{lem:Wmomap}
For any $\chi \in C^\infty(K)^K$ the quasi-Hamiltonian vector field $V_{\Phi^\ast\chi}$ satisfies
\be
V_{\Phi^\ast\chi}[f] = (\nabla \chi\circ\Phi)_M[f], \quad \forall f\in C^\infty(M).
\label{Vmomap}\ee
Consequently, the integral curve  of $V_{\Phi^\ast\chi}$
with arbitrary initial value $p\in M$ is given by
\be
\tau \mapsto  A\big(\exp( \tau \, \nabla \chi\circ\Phi(p)), p\big), \qquad \tau \in \bR,
\label{VmomapInt}\ee
and $\Phi$ is constant along the integral curve.
\end{lem}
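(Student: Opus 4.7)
The plan is to unfold the defining relation \eqref{W4} of the momentum map using that $\chi$ is a class function, and then to integrate the resulting vector field by the Ad-equivariance of $\Phi$.

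First I would apply \eqref{W4} with $F=\chi$ and use $V_{\Phi^\ast\chi}[f]=\{f,\chi\circ\Phi\}$ (recalling the convention $\{F,H\}=dF(V_H)$) to get
\[
V_{\Phi^\ast\chi}[f] \;=\; \tfrac12\, E^a_M[f]\,\bigl((E_a^L+E_a^R)[\chi]\bigr)\circ\Phi .
\]
The key identity I would then establish is $E_a^L[\chi]=E_a^R[\chi]$ for any class function $\chi$: since $\chi(gh)=\chi(g^{-1}(gh)g)=\chi(hg)$ by Ad-invariance, both derivatives coincide with $\tfrac{d}{dt}\big|_{0}\chi(e^{tE_a}g)$, which by the definition \eqref{J13} of $\nabla\chi$ equals $\langle E_a,\nabla\chi(g)\rangle$. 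Expanding in the basis $\{E^a\}$ via the duality $\langle E_a,E^b\rangle=\delta_a^b$ then gives $\nabla\chi(g)=\sum_a E_a^R[\chi](g)\,E^a$, so that $(\nabla\chi\circ\Phi)_M[f]=\sum_a (E_a^R[\chi]\circ\Phi)\,E^a_M[f]$. Combining these with the preceding display yields \eqref{Vmomap}.

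For the integral curve, I would first observe that Ad-invariance of $\chi$ upgrades to Ad-equivariance of $\nabla\chi$, namely $\nabla\chi(hgh^{-1})=\mathrm{Ad}_h\nabla\chi(g)$; setting $h=g$ shows that $\nabla\chi(g)$ commutes with $g$ in $K$. Applied at $g=\Phi(p)$ with $\xi:=\nabla\chi(\Phi(p))$ and combined with the equivariance of the momentum map, this gives
\[
\Phi\bigl(A(\exp(\tau\xi),p)\bigr) \;=\; \exp(\tau\xi)\,\Phi(p)\,\exp(-\tau\xi) \;=\; \Phi(p),
\]
so $\Phi$ is constant along the proposed curve. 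Consequently $\nabla\chi(\Phi(A(\exp(\tau\xi),p)))=\xi$ for every $\tau$, and by the very definition of a fundamental vector field the tangent vector of $\tau\mapsto A(\exp(\tau\xi),p)$ at parameter $\tau$ is exactly $\xi_{M}$, hence $(\nabla\chi\circ\Phi)_M$, evaluated at $A(\exp(\tau\xi),p)$; by \eqref{Vmomap} this coincides with $V_{\Phi^\ast\chi}$ at the same point. This establishes \eqref{VmomapInt}, valid for all $\tau\in\bR$, and the constancy of $\Phi$ along it.

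I do not expect any substantial obstacle: the only real subtlety is the convention bookkeeping warned about immediately after \eqref{W4}, namely that the sign of $E^a_M$ is opposite to \cite{AKSM} and that the momentum map sits in the second slot of the bracket, so one must carefully take $\{f,\chi\circ\Phi\}$ rather than the reverse. Once this is in place, everything reduces to the class-function identity $E_a^L[\chi]=E_a^R[\chi]$ and the duality of $\{E_a\},\{E^a\}$.
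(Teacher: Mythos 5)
Your proposal is correct and follows essentially the same route as the paper's proof: both reduce \eqref{Vmomap} to the class-function identity $Z^L[\chi]=\langle Z,\nabla\chi\rangle=Z^R[\chi]$ inserted into \eqref{W4}, and both obtain the integral curve and the constancy of $\Phi$ from the equivariance of $\Phi$ together with the Ad-invariance of $\chi$. Your version merely spells out details (the duality expansion of $\nabla\chi$ and the fact that $\exp(\tau\nabla\chi(\Phi(p)))$ commutes with $\Phi(p)$) that the paper leaves implicit.
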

\begin{proof}
For any $Z\in \fk$, the invariance of $\chi$ yields
$Z^L[\chi]=\langle Z,\nabla \chi\rangle =Z^R[\chi]$. Thus \eqref{W4} entails
\be
V_{\Phi^\ast\chi}[f]
= E^a_M[f] \,\langle E_a,\nabla \chi\circ \Phi\rangle\,, \quad f\in C^\infty(M).
\ee
This directly yields \eqref{Vmomap},  and  we easily deduce \eqref{VmomapInt} from \eqref{Vmomap}.
By the equivariance of the momentum map and the invariance of $\chi$, $\Phi$ is constant along the flow of $V_{\Phi^\ast\chi}$.
\end{proof}

If $(M_i, A_i, P_{M_i}, \Phi_i)$ for $i=1,2$ are quasi-Poisson $K$-manifolds, then the direct product $M= M_1 \times M_2$
becomes such a manifold by applying the fusion procedure.
The fused action is just the diagonal one, whose generating vector fields can be written as
\be
E^a_M(x_1,x_2) = E^a_{M_1}(x_1) + E^a_{M_2}(x_2)
\quad \hbox{using}\quad
T_{(x_1,x_2)} M = T_{x_1} M_1 + T_{x_2} M_2,\,\, \forall (x_1,x_2) \in M,
\label{W5}\ee
and the bivector of $M$ has the form
\be
P_M = P_{M_1} + P_{M_2} + P_\fus
\quad\hbox{with}\quad
P_\fus:= - \frac{1}{2} E^a_{M_1} \wedge E_a^{M_2}.
\label{W6}\ee
The momentum map $\Phi$ on $M$ is the product $\Phi(x_1, x_2) = \Phi_1(x_1) \Phi_2(x_2)$.
One denotes $M_1 \times M_2$ with the fused quasi-Poisson structure as  $M_1 \circledast M_2$.
The fusion procedure is associative.
It is not commutative, but $M_1\circledast M_2$ is  equivalent to $M_2\circledast M_1$ via a certain automorphism \cite[Prop.~5.7]{AKSM}.

\begin{lem} \label{Lem:Fus}
Consider the quasi-Poisson manifold $M:=M_1 \circledast M_2$ obtained by fusion, and denote the natural projection on the $i$-th factor, $i=1,2$, by $\pi_i:M \to M_i$.
For any $f_1 \in C^\infty(M_1)^K$ the quasi-Hamiltonian vector field $V_{\pi_1^\ast f_1}$ is such that
$V_{\pi_1^\ast f_1}[h] = 0$ for any $h\in \pi_2^\ast C^\infty(M_2)$.
In particular, the integral curve of $V_{\pi_1^\ast f_1}$
with arbitrary initial value $(p_1,p_2)\in M$ is given by
\be
\tau \mapsto  (\phi_1(\tau;p_1),p_2), \quad \tau \in \bR,
\label{W:LemFus}
\ee
where $\phi_1(\tau; p_1)$ is the integral curve of $V_{f_1}$ in $M_1$.
A similar statement holds for $f_2 \in C^\infty(M_2)^K$ and the quasi-Hamiltonian vector field $V_{\pi_2^\ast f_2}$ on $M$.
\end{lem}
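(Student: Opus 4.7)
My plan is to verify the two claimed properties of $V_{\pi_1^\ast f_1}$ by directly computing a pair of quasi-Poisson brackets on $M = M_1 \circledast M_2$ using the decomposition $P_M = P_{M_1} + P_{M_2} + P_\fus$ with $P_\fus = -\tfrac12 E^a_{M_1} \wedge E_a^{M_2}$ recalled in \eqref{W6}, and then to read off the integral curve.

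The first step is to evaluate $\{\pi_1^\ast f_1, \pi_2^\ast h_2\}_M$ for arbitrary $h_2 \in C^\infty(M_2)$. The summand $P_{M_1}$ contributes $0$ because every derivative of $\pi_2^\ast h_2$ in an $M_1$-direction vanishes; symmetrically $P_{M_2}$ gives $0$ since $\pi_1^\ast f_1$ is constant along the $M_2$-fibers. For the fusion piece I will expand
\be
(d(\pi_1^\ast f_1)\otimes d(\pi_2^\ast h_2))(P_\fus) = -\tfrac12\bigl(E^a_{M_1}[\pi_1^\ast f_1]\, E_a^{M_2}[\pi_2^\ast h_2] - E_a^{M_2}[\pi_1^\ast f_1]\, E^a_{M_1}[\pi_2^\ast h_2]\bigr).
\ee
The second term vanishes by the product structure alone, since $E_a^{M_2}$ annihilates pullbacks from $M_1$ and $E^a_{M_1}$ annihilates pullbacks from $M_2$; the first term is where the $K$-invariance enters, as $E^a_{M_1}[\pi_1^\ast f_1] = (E^a_{M_1}[f_1])\circ \pi_1 = 0$ because $f_1 \in C^\infty(M_1)^K$. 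Hence $V_{\pi_1^\ast f_1}[\pi_2^\ast h_2] = 0$, which proves the first assertion.

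The second step is to identify the flow with that of $V_{f_1}$ on the first factor. The same computation applied to $\{\pi_1^\ast g_1, \pi_1^\ast f_1\}_M$ for any $g_1 \in C^\infty(M_1)$ yields $\pi_1^\ast\{g_1, f_1\}_{M_1}$: the $P_{M_2}$ term is zero, the $P_{M_1}$ term gives the desired pulled-back bracket, and both products in $P_\fus$ vanish because $E_a^{M_2}$ annihilates pullbacks from $M_1$. Thus $V_{\pi_1^\ast f_1}$ and $V_{f_1}$ are $\pi_1$-related. Combined with $d\pi_2(V_{\pi_1^\ast f_1}) = 0$ from the first step, it follows that $V_{\pi_1^\ast f_1}$ is tangent to each fiber $M_1 \times \{p_2\}$ and restricts there to $V_{f_1}$ under the obvious identification, so the integral curve formula \eqref{W:LemFus} is immediate. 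The analogous statement for $f_2 \in C^\infty(M_2)^K$ and $V_{\pi_2^\ast f_2}$ then follows by repeating the argument with the factors swapped, or alternatively by invoking the equivalence $M_1 \circledast M_2 \simeq M_2 \circledast M_1$ mentioned after \eqref{W6}.

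No serious obstacle is anticipated: the entire proof is a short computation. The only point worth emphasizing is that $K$-invariance of $f_1$ is used in an essential way at exactly one spot, namely to kill the cross term $E^a_{M_1}[\pi_1^\ast f_1]\, E_a^{M_2}[\pi_2^\ast h_2]$ in the fusion piece. This is precisely the mechanism through which fusion would otherwise couple a non-invariant Hamiltonian on $M_1$ to the second factor.
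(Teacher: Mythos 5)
Your proof is correct and follows essentially the same route as the paper, which simply observes that the $K$-invariance of $f_1$ forces $P_\fus^\sharp(d\,\pi_1^\ast f_1)=0$ and then reads everything off from the decomposition \eqref{W6}; your version just spells out this cancellation by pairing against test functions from each factor. The one spot where invariance is genuinely needed is identified correctly.
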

\begin{proof}
The $K$-invariance of $f_1$ (before fusion) guarantees the vanishing of $P_\fus^\sharp(d\,\pi_1^\ast f_1)$.
Hence, the results readily follow from the form \eqref{W6} of the quasi-Poisson bivector on $M$.
\end{proof}

The simplest example of a quasi-Poisson manifold is $K$ with the conjugation action, the bivector
\be
P_K = \frac{1}{2} E_a^R \wedge E^a_L,
\label{W7}\ee
and the identity map as the momentum map.
This structure can be restricted on the conjugacy classes of $K$, which are quasi-Poisson submanifolds.
Another example that we need is the so-called internally fused double of $K$, denoted $\bD(K)$.
As a $K$-manifold, we have
\be
\bD(K) = K \times K = \{(A,B)\}.
\label{W8}\ee
It carries the diagonal $K$-action and the momentum map
\be
\Phi_{\bD(K)}(A, B) =   AB A^{-1} B^{-1}=: [A,B].
\label{W9}\ee
The bivector of this double is provided by
\be
  P_{\bD(K)}= \frac{1}{2} \bigl(E_a^{(1),R}\wedge E^a_{(1),L} - E_a^{(2),R}\wedge E^a_{(2),L} +
E_a^{(1),L}\wedge (E^a_{(2),L}+ E^a_{(2),R}) +
 E_a^{(1),R}\wedge (E^a_{(2),L} - E^a_{(2),R}) \bigr),
\label{W10}\ee
where $E_a^{(1),L}$ and $E^a_{(2),R}$ denote the
left-invariant and right-invariant vector fields associated with the first and second copy of $K$, respectively.

In \cite[Ex.~6.2]{AKSM}, a  quasi-Poisson description of the moduli spaces of flat connections   on
trivial principal $K$-bundles over Riemann surfaces has been developed.
Namely, for  a Riemann surface $\Sigma_{m,n}$  of genus $m\geq 0$ with  $n\geq 1$ boundary components,
the relevant moduli space can be realized as a reduction of the quasi-Poisson manifold
\be
M_{m,n} = \bD(K) \circledast \dots \circledast \bD(K) \circledast K \circledast \dots \circledast K,
\label{W11}\ee
 obtained from $m$ copies of $(\bD(K), P_{\bD(K)})$ and $n$ copies of $(K, P_K)$ by successive fusion.
Writing the elements $X\in M_{m,n}$ as
\be
X= (A_1, B_1, \dots, A_m, B_m, C_1, \dots, C_n),
\label{W12}\ee
the action of $\eta\in K$ reads
\be
X \mapsto (\eta A_1 \eta^{-1}, \eta B_1 \eta^{-1}, \dots, \eta A_m \eta^{-1}, \eta B_m \eta^{-1}, \eta C_1 \eta^{-1}, \dots, \eta C_n \eta^{-1}),
\label{etaonMmn}\ee
the momentum map of $M_{m,n}$ is
\be
\Phi_{m,n}(X) = [A_1,B_1] \cdots [A_m,B_m] C_1 \cdots C_n,
\label{W13}\ee
and the moduli space results as
\be
\Phi_{m,n}^{-1}(e)/K.
\label{W14}\ee
According to the general theory, $C^\infty(\Phi_{m,n}^{-1}(e))^K$ is a Poisson algebra.
Via the  identification
\be
C^\infty( \Phi_{m,n}^{-1}(e)/K) \equiv C^\infty(\Phi_{m,n}^{-1}(e))^K,
\label{W15}\ee
this yields the (stratified) Poisson structure of the moduli space.

Notice that $\Phi_{m,n}^{-1}(e)$ is an embedded submanifold of $M_{m,n}$, which is diffeomorphic
to $M_{m,n-1}$ by the $K$-equivariant map
\be
M_{m,n-1} \ni u \mapsto (u, \Phi_{m,n-1}(u)^{-1}) \in \Phi_{m,n}^{-1}(e).
\label{W16}\ee
Thus, the rings of invariants $C^\infty(\Phi_{m,n}^{-1}(e))^K$ and  $C^\infty(M_{m,n-1})^K$ are isomorphic.
In fact, as shown by Lemma \ref{lem:Z0} below, the isomorphism holds also at the level of Poisson algebras.
This is a variant of the shifting trick of symplectic reduction, which is valid
also in the quasi-Poisson setting \cite{AKSM}.
By using it, we will  replace the Poisson algebra \eqref{W15} by its  model
\be
C^\infty(M_{m,n-1})^K \equiv C^\infty(M_{m,n-1}/K).
\label{W17}\ee

\begin{lem}\label{lem:Z0}
The $K$-equivariant diffeomorphism \eqref{W16} gives rise to an
 isomorphism between the Poisson algebras $C^\infty(\Phi_{m,n}^{-1}(e))^K$ and $C^\infty(M_{m,n-1})^K$.
\end{lem}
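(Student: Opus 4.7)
The key observation is that $M_{m,n} = M_{m,n-1} \circledast K$, with the last copy of $K$ playing the role of $C_n$ in \eqref{W12}, so that by \eqref{W13} the momentum map factorizes as $\Phi_{m,n}(u,C_n) = \Phi_{m,n-1}(u)\,C_n$. Thus $\Phi_{m,n}^{-1}(e)$ is exactly the graph of $u \mapsto \Phi_{m,n-1}(u)^{-1}$, and $\iota$ in \eqref{W16} is a $K$-equivariant diffeomorphism onto that graph satisfying $\pi_1\circ\iota = \id_{M_{m,n-1}}$, where $\pi_1: M_{m,n} \to M_{m,n-1}$ denotes the projection on the first factor. Being a $K$-equivariant diffeomorphism of manifolds, $\iota^\ast$ is automatically an algebra isomorphism $C^\infty(\Phi_{m,n}^{-1}(e))^K \to C^\infty(M_{m,n-1})^K$, so it only remains to verify that it is Poisson.

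Recall that, by the quasi-Poisson reduction result of \cite{AKSM}, the Poisson bracket on $C^\infty(\Phi_{m,n}^{-1}(e))^K$ underlying \eqref{W15} is defined as follows: given $F_1,F_2$ in this ring, one picks $K$-invariant extensions $\tilde F_i \in C^\infty(M_{m,n})^K$, computes $\{\tilde F_1,\tilde F_2\}_{M_{m,n}}$ using the quasi-Poisson bivector of $M_{m,n}$, and restricts the outcome to $\Phi_{m,n}^{-1}(e)$; the result is independent of the chosen extensions.

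The verification then boils down to a short computation with a canonical choice of extension. For $f_1,f_2 \in C^\infty(M_{m,n-1})^K$ set $\tilde f_i := f_i\circ\pi_1$; these are $K$-invariant because $\pi_1$ is $K$-equivariant, and they satisfy $\iota^\ast \tilde f_i = f_i$. By the fusion formula \eqref{W6},
\be
P_{M_{m,n}} = P_{M_{m,n-1}} + P_K - \tfrac{1}{2}\, E^a_{M_{m,n-1}} \wedge E_a^K,
\ee
where $E_a^K$ denotes the vector field on $M_{m,n}$ generating the conjugation action on the last $K$-factor. Since $\tilde f_i$ depends only on the first factor, one has $E_a^K[\tilde f_i]=0$ and $d\tilde f_i$ has no component along the last factor, so the $P_K$ and fusion contributions both vanish. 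Hence
\be
\{\tilde f_1,\tilde f_2\}_{M_{m,n}} = \{f_1,f_2\}_{M_{m,n-1}} \circ \pi_1,
\ee
and pulling back by $\iota$ using $\pi_1\circ\iota=\id$ yields $\iota^\ast\{\tilde f_1,\tilde f_2\}_{M_{m,n}} = \{f_1,f_2\}_{M_{m,n-1}}$, which is exactly the required compatibility.

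\textbf{Main obstacle.} The only genuine difficulty is conceptual rather than computational: one must invoke the quasi-Poisson reduction theorem to ensure that the bracket on $C^\infty(\Phi_{m,n}^{-1}(e))^K$ is well defined via $K$-invariant extensions from $M_{m,n}$. Once this is taken for granted, the canonical extensions $\tilde f_i = f_i\circ\pi_1$ make the verification immediate, precisely because they are blind to the last $K$-factor, which is what kills the $P_K$ and fusion terms in the bivector.
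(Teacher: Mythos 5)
Your proof is correct and follows essentially the same route as the paper: both use the canonical $C_n$-independent extension (the paper writes it as $F_i(u,C_n)=f_i(u,\Phi_{m,n-1}(u)^{-1})$, which is your $f_i\circ\pi_1$ transported through the diffeomorphism \eqref{W16}) and then observe via the fusion formula \eqref{W6} that the $P_K$ and $P_\fus$ terms annihilate such functions, reducing the bracket to $P_{m,n-1}$. The only difference is the direction in which you run the identification, which is immaterial.
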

\begin{proof}
Using the tautological inclusion $\iota: \Phi_{m,n}^{-1}(e) \to M_{m,n}$,
the Poisson bracket of $f_1, f_2\in C^\infty(\Phi_{m,n}^{-1}(e))^K$ can be calculated as follows.
 One arbitrarily extends $f_1, f_2$ to  $F_1,F_2 \in C^\infty(M_{m,n})^K$ for which $f_i = F_i \circ \iota$, and then one has
\be
\{ f_1,f_2\} =  (dF_1 \otimes d F_2, P_{m,n})  \circ \iota.
\label{W18}\ee
Now, we  can use the trivial extensions $F_1, F_2$ that satisfy
\be
F_i(u,C_n): = f_i(u, \Phi_{m,n-1}(u)^{-1}), \qquad \forall (u,C_n) \in M_{m,n}.
\label{W19}\ee
Since the functions $F_i$ are independent of $C_n$,  putting them into the formula \eqref{W18} using \eqref{W6} gives
\be
(dF_1 \otimes dF_2, P_{m,n}) = (d F_1 \otimes dF_2, P_{m,n-1}).
\label{W20}\ee
This function on $M_{m,n}$ is independent of $C_n$ and provides $\{f_1, f_2\}$ by means of \eqref{W18}.
On the other hand, $F_1, F_2$ can be regarded as elements of $C^\infty(M_{m,n-1})^K$, and
then the right-hand side of \eqref{W20}  represents the Poisson bracket on $C^\infty(M_{m,n-1})^K$.
Therefore, the mapping \eqref{W16}  engenders the claimed Poisson isomorphism.
\end{proof}

\begin{rem}\label{rem:W0}
We recall \cite{AKSM} that (stratified) symplectic subspaces of $\Phi_{m,n}^{-1}(e)/K$ result by replacing $M_{m,n}$ \eqref{W11}  with
\be
\widehat M_{m,n} :=
\bD(K) \circledast \dots \circledast \bD(K) \circledast \cO_1 \circledast \dots \circledast \cO_n,
\ee
using conjugacy classes $\cO_1,\dots, \cO_n$ of $K$ for which $\Phi_{m,n}^{-1}(e) \cap \widehat M_{m,n}\neq \emptyset$.
These stratified symplectic spaces are recovered by replacing $M_{m,n-1}$ with
\be
\widehat M_{m,n-1} :=
\bD(K) \circledast \dots \circledast \bD(K) \circledast \cO_1 \circledast \dots \circledast \cO_{n-1},
\ee
and then taking the reduced phase space $\hat \Phi_{m,n-1}^{-1}( \cO_n^{-1})/K$, where
$\hat \Phi_{m,n-1}$ is the restriction of $\Phi_{m,n-1}$ on $\widehat M_{m,n-1}$ and
$\cO_n^{-1}$ is the pointwise inverse of $\cO_n$.
\end{rem}
With the above preliminaries at hand,  we are going to treat a series of examples.

\subsection{The sphere with four holes}  \label{ss:Sphere4}

From now on, like in Sections \ref{Sec:Cot} and \ref{Sec:Heis},
$K$ is taken to be a connected and simply connected compact Lie group with simple Lie algebra $\fk$.

Taking advantage of Lemma \ref{lem:Z0}, the Poisson structure on the algebra of $K$-invariant functions on $\Phi_{0,4}^{-1}(e)$ is the one on $C^\infty(M_{0,3})^K$.
Thus, we start with the manifold
$M_{0,3}  =\{ (C_1,C_2,C_3)\}$ \eqref{W11}
equipped with the bivector $P_{0,3}$ obtained as the 3-fold fusion product  of the standard quasi-Poisson structure $(K,P_K)$ \eqref{W7}.

\begin{defn} \label{def:fHsphere}
Let $\fH \subset C^\infty(M_{0,3})^K$ be the set of functions of the form
\be
H(C_1,C_2,C_3) = \chi(C_1 C_2), \qquad \forall \chi \in C^\infty(K)^K.
\label{Z16}\ee
\end{defn}
Combining Lemmas  \ref{lem:Wmomap} and  \ref{Lem:Fus}, we find that the integral curve of the vector field associated with $H$ \eqref{Z16} by means of the bivector $P_{0,3}$,
with arbitrary initial value $(c_1,c_2,c_3)$, is given by
\be
(C_1(\tau), C_2(\tau), C_3(\tau)) = (e^{\tau \nabla \chi(c_1 c_2)} c_1  e^{-\tau \nabla \chi(c_1 c_2)}, e^{\tau \nabla \chi(c_1 c_2)} c_2
e^{-\tau \nabla \chi (c_1 c_2)}, c_3).
\label{Z17}\ee
To apply our general scheme, we introduce the  dense open  connected  submanifold $Y \subset M_{0,3}$ by
\be
Y:= \{ (C_1,C_2,C_3) \in M_{0,3} \mid C_1 C_2 \in K^\reg\}.
\label{Z18}\ee

Next, we introduce $\ell$ $K$-invariant functions on $Y$ as follows.
For the simple roots \eqref{J2} of $\fk^\bC$,  consider the fundamental coweights
\be
\omega^\vee_1,\ldots,\omega^\vee_\ell \in \ri \ft, \qquad
\alpha_k(\omega^\vee_j) = \delta_{jk} \quad  \forall 1\leq j,k\leq \ell.
\label{LI1}
\ee
Then, recalling the mapping $\underline{\chi}: K^\reg \to \cA$ from \eqref{J10},
define
 \be
\Xi_j :=
\langle \omega^\vee_j, \underline{\chi} \rangle,
\qquad \forall j=1,\dots, \ell.
\label{LI5}\ee
These are real-analytic, $K$-invariant real functions on $K^\reg$.
By using them, we introduce
\be
(H_1^{\ad},\dots, H_\ell^{\ad}): Y \to \bR^\ell \quad \hbox{with}\quad H_j^{\ad}(C_1,C_2,C_3) := \Xi_j(C_1 C_2).
\label{LI10}\ee
To describe the flows of these Hamiltonians, we shall use the maximal torus $\bT^\ad = \bT/\cZ(K)$
of the adjoint group $K^\ad = K/\cZ(K)$, denoting
the projection from $\bT$ to $\bT^\ad$ simply as $T \mapsto [T]$.

 \begin{rem} \label{rem:Tad}
In terms of the generators of $\fk^\bC$ \eqref{J4}, we have for $\omega_k^\vee$ \eqref{LI1} and $\tau_k \in \bR$
\be
\exp\Bigl(\ri  \sum_{k=1}^\ell \tau_k \ad_{\omega^\vee_k}\Bigr)( e_{\alpha_j})
= \Ad_{\exp(\ri  \sum_{k=1}^\ell \tau_k \omega^\vee_k )}(e_{\alpha_j})
= \exp(\ri \tau_j )\,  e_{\alpha_j} \,.
\label{LI2}
\ee
This implies that $\exp(\ri  \sum_{k=1}^\ell \tau_k\omega^\vee_k) \in K$
acts trivially in the adjoint representation of $K$, and thus belongs to the center $\cZ(K)$ of $K$,
if and only if all $\tau_k$ are equal to $0$ modulo $2\pi$.
Hence, similarly to \eqref{J30}--\eqref{J31}, we may define an isomorphism
\be
[T_{\omega^\vee}]: \bR^\ell / (2 \pi \bZ)^\ell \to \bT^\ad \quad\hbox{with}\quad
T_{\omega^\vee}(\taU):= \exp\bigl( -\ri \sum_{j=1}^\ell \tau_j \omega^\vee_j \bigr),
\quad \forall \taU=(\tau_1,\dots, \tau_\ell) \in \bR^\ell.
\label{LI3}\ee
In particular, this means that $[T_{\omega^\vee}(\taU)]$ acts on $K$ via conjugation by $T_{\omega^\vee}(\taU)$.
The maps introduced in \eqref{J30}--\eqref{J31} and in \eqref{LI3} reflect the facts \cite{DK,OV}
that $\bT$ and $\bT^\ad$ are
the quotients of $\ft$ by $2\pi\ri $-times the coroot and coweight lattices, respectively.
\end{rem}

\begin{rem} \label{rem:Cowei}
Let us define the $\ell \times \ell$ matrix $Q=(Q_{jk})$ by the
following expansion:
\be
\omega^\vee_j = \sum_{k=1}^\ell Q_{j,k} h_{\alpha_k}\,.
\label{LI4}
\ee
  The matrix $Q$ has rational entries. Indeed, applying $\alpha_k$ to \eqref{LI4} and forming the corresponding matrix, we obtain
$\1_\ell = Q \, C$, with $C$ the transposed Cartan matrix of $\fk^\bC$.
Explicit formulas for the matrix $Q$ can be found in \cite{OV}.
The definition \eqref{LI5} and the relation \eqref{LI4} entail
\be
\Xi_j = \sum_{k=1}^\ell Q_{j,k} \chi_k,
\ee
 with $\chi_j$ \eqref{J12}.
Using Lemma \ref{lem:J3} we obtain the derivatives
 \be
\nabla \Xi_{j}(g) = -\Gamma_2(g)^{-1} \ri \omega^\vee_j  \Gamma_2(g),\,\, \forall g\in K^\reg.
 \label{LI6}\ee
 \end{rem}
Combining this with the formula \eqref{Z17} proves the statement of the next lemma.

\begin{lem}\label{lem:Z5ad}
The joint flows of the Hamiltonians $H_j^{\ad}$ \eqref{LI10} yield an action of the maximal adjoint torus $\bT^{\ad}$ on $Y$.
To present it, we use $T_{\omega^\vee}(\taU)$ \eqref{LI3} and define  $g_{\taU}( C_1 C_2)\in K$ by
\be
 g_{\taU}( C_1 C_2) := \Gamma_2(C_1 C_2)^{-1} T_{\omega^\vee}(\taU) \Gamma_2(C_1 C_2),
\label{Z20}\ee
applying the `diagonalizer' $[\Gamma_2]$  \eqref{J11}.
Then, the $\bT^{\ad}$-action on $Y$ \eqref{Z18}  is given by the mapping
 \be
\bT^{\ad} \times Y \ni ([T(\taU)], (C_1,C_2,C_3)) \mapsto
 ( \Ad_{g_{\taU}(C_1C_2)}(C_1) , \Ad_{g_{\taU}(C_1C_2)}( C_2),  C_3) \in Y.
\label{Z21}\ee
\end{lem}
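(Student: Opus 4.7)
\textbf{Proof plan for Lemma \ref{lem:Z5ad}.} The plan is to compute the joint flow directly from formula \eqref{Z17} with $\chi = \Xi_j$, verify commutativity of the individual flows, and then check that the resulting $\bR^\ell$-action descends to a $\bT^{\ad}$-action.

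First, I would observe that since $H_j^{\ad} = \Xi_j \circ m$ where $m: Y \to K^{\reg}$ is the smooth map $(C_1,C_2,C_3) \mapsto C_1 C_2$ and $\Xi_j \in C^\infty(K^{\reg})^K$, the discussion leading to \eqref{Z17} applies verbatim with $\chi$ replaced by $\Xi_j$ and yields the flow of $V_{H_j^{\ad}}$ as conjugation of $(C_1, C_2)$ by $\exp\bigl(\tau \nabla \Xi_j(C_1 C_2)\bigr)$, leaving $C_3$ fixed. Substituting the expression \eqref{LI6} for the derivative, $\nabla \Xi_j(g) = -\Gamma_2(g)^{-1}\,\ri \omega_j^\vee\, \Gamma_2(g)$, the conjugating element becomes $\Gamma_2(C_1 C_2)^{-1}\exp(-\tau \ri \omega_j^\vee)\, \Gamma_2(C_1 C_2)$.

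Next, I would note that along the flow of $V_{H_j^{\ad}}$, the product $C_1 C_2$ is unchanged, since it is conjugated by an element of the maximal torus centralizing it (the exponential of a Cartan element of the isotropy Lie algebra at $C_1C_2$, cf.\ \eqref{J14}). Consequently $\Gamma_2(C_1 C_2)$ may be taken to be the same representative along the flow, and moreover $\nabla \Xi_k(C_1C_2)$ is preserved for every $k$. Since the elements $-\ri \omega_k^\vee$ all lie in $\ft$, conjugation by any of the $\Gamma_2(C_1C_2)^{-1} \exp(-\tau_k \ri \omega_k^\vee) \Gamma_2(C_1C_2)$ commutes with conjugation by any other, so the vector fields $V_{H_j^{\ad}}$ and $V_{H_k^{\ad}}$ commute. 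The joint flow at time $\taU = (\tau_1,\dots,\tau_\ell)$ is therefore conjugation of $(C_1,C_2)$ by the product
\begin{equation*}
\Gamma_2(C_1 C_2)^{-1}\, \exp\Bigl(-\ri \sum_{j=1}^\ell \tau_j \omega_j^\vee\Bigr)\, \Gamma_2(C_1 C_2) \;=\; g_{\taU}(C_1C_2),
\end{equation*}
with $C_3$ fixed, which is exactly the formula \eqref{Z21}.

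Finally, I would check two well-definedness issues. The diagonalizer $\Gamma_2(g)$ is only defined up to left-multiplication by $\bT$; but since $T_{\omega^\vee}(\taU) \in \bT$ commutes with any $t \in \bT$, replacing $\Gamma_2(g)$ by $t\Gamma_2(g)$ leaves $g_{\taU}(g)$ unchanged. Hence the action map \eqref{Z21} is unambiguous. Moreover, when $\taU \in (2\pi \bZ)^\ell$, Remark \ref{rem:Tad} shows that $T_{\omega^\vee}(\taU) \in \cZ(K)$, so $g_{\taU}(C_1C_2) \in \cZ(K)$ and conjugation by it is trivial. Thus the $\bR^\ell$-action factors through $\bR^\ell/(2\pi\bZ)^\ell \simeq \bT^{\ad}$ via the isomorphism $[T_{\omega^\vee}]$ of \eqref{LI3}, completing the proof.

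The main obstacle, which is more bookkeeping than conceptual, is keeping the role of the ambiguity in $\Gamma_2$ straight and verifying cleanly that the joint flow is a $\bT^{\ad}$-action rather than merely an $\bR^\ell$-action; both reduce to the observation that $T_{\omega^\vee}(\taU) \in \bT$ together with the centrality statement in Remark \ref{rem:Tad}.
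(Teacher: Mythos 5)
Your proposal is correct and follows essentially the same route as the paper, which simply combines the integral curve formula \eqref{Z17} with the derivative formula \eqref{LI6} and invokes Remark \ref{rem:Tad} for the descent to $\bT^{\ad}$. Your additional checks (invariance of $C_1C_2$ along the flows, hence commutativity; independence of the choice of $\Gamma_2$; periodicity modulo $(2\pi\bZ)^\ell$) are exactly the points the paper leaves implicit, and they are all handled correctly.
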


 To avoid any possible confusion, it might be worth noting that we use $\Ad$ to denote both the adjoint representation of $K$ on $\fk$
and its respective action on $K$.

It is easy to see from \eqref{Z21} that the
principal isotropy group for the $\bT^{\ad}$-action on $Y$ is trivial, i.e., $\bT^{\ad}$ acts effectively
on $Y$.
Clearly, the principal isotropy type for the $K$-action on $Y$ is given by $\cZ(K)$.
Because the flow of any invariant function commutes with the $K$-action, the actions
of $K$ \and $\bT^{\ad}$ combine into an action of the direct product group $K \times \bT^{\ad}$ on $Y$.

\begin{lem}\label{lem:Z6}
The principal isotropy group for the combined action of $\bG:= K \times \bT^{\ad}$
on  $Y$ \eqref{Z18} is $\cZ(K) \times \{e\}$.
\end{lem}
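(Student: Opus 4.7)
The plan is to mimic Lemma \ref{lem:J8}: exhibit one point $(C_1, C_2, C_3)\in Y$ whose $\bG$-isotropy coincides with $\cZ(K)\times\{e\}$, and then conclude by a minimality argument. Indeed, $\cZ(K)$ lies in every $K$-isotropy group (as it acts trivially under conjugation \eqref{etaonMmn}) and $\{e\}$ is the least possible $\bT^\ad$-isotropy, so $\cZ(K)\times\{e\}$ is contained in every isotropy for the combined $\bG$-action; if it occurs at a point it must then be the principal isotropy.

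Concretely, I would invoke Remark \ref{rem:J4} to pick a pair of maximal tori $\bT,\bT'\subset K$ with $\bT\cap \bT' = \cZ(K)$, and then choose $t\in \bT^\reg$, as well as generic regular elements $C_1,C_3 \in {\bT'}^\reg$, and set $C_2:= C_1^{-1}t$. Then $C_1 C_2 = t \in \bT^\reg \subset K^\reg$, so $(C_1,C_2,C_3)\in Y$ by \eqref{Z18}. Since $C_1 C_2 \in \bT$, the diagonalizer $\Gamma_2(C_1 C_2)$ can be taken to be the identity (the coset $[\Gamma_2(C_1 C_2)]\in \bT\backslash K$ is all that matters), so that $g_\taU(C_1 C_2) = T_{\omega^\vee}(\taU)$ by \eqref{Z20}.

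Next, I would spell out the fixed-point equations for $(\eta,[T_{\omega^\vee}(\taU)])$ acting on $(C_1,C_2,C_3)$ via \eqref{Z21}. Taking the product of the first two components gives
\[
\eta\, T_{\omega^\vee}(\taU)\, t\, T_{\omega^\vee}(\taU)^{-1}\eta^{-1} = t,
\]
which collapses to $\eta\, t\,\eta^{-1} = t$ since $T_{\omega^\vee}(\taU),t\in \bT$ commute. The third component gives $\eta C_3 \eta^{-1} = C_3$. As $t\in \bT^\reg$ is stabilized only by $\bT$ and $C_3\in {\bT'}^\reg$ only by $\bT'$, these jointly force $\eta\in \bT\cap \bT' = \cZ(K)$. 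With $\eta$ central, the equation for the first component reduces to $T_{\omega^\vee}(\taU)\, C_1\, T_{\omega^\vee}(\taU)^{-1} = C_1$, and since $C_1\in {\bT'}^\reg$ is stabilized exactly by $\bT'$, this yields $T_{\omega^\vee}(\taU)\in \bT\cap \bT' = \cZ(K)$, i.e., $[T_{\omega^\vee}(\taU)] = e$ in $\bT^\ad$ by Remark \ref{rem:Tad}. Hence $\bG_{(C_1,C_2,C_3)} = \cZ(K)\times \{e\}$, which concludes the proof by the minimality observation above.

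I do not anticipate any serious obstacle. The only mildly delicate point is that we manage with a \emph{single} apposition torus $\bT'$ to play both the role of localizing $C_3$ (so that $\eta$ is forced into $\cZ(K)$) and the role of containing $C_1$ (so that $T_{\omega^\vee}(\taU)$ is forced into $\cZ(K)$); both conditions are governed by the same apposition relation $\bT\cap \bT' = \cZ(K)$ guaranteed by Remark \ref{rem:J4}, so no additional input is needed.
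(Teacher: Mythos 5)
Your proposal is correct and follows essentially the same route as the paper's proof: the paper likewise picks $\bT'$ with $\bT\cap\bT'=\cZ(K)$, takes $C_2,C_3\in\bT'\cap K^\reg$ and $C_1C_2\in\exp(\ri\cA)\subset\bT^\reg$, deduces $\eta\in\bT\cap\bT'=\cZ(K)$ from the product of the first two components together with the $C_3$-equation, and then forces $T_{\omega^\vee}(\taU)\in\cZ(K)$ from the remaining component. The one small imprecision is your claim that $\Gamma_2(C_1C_2)$ may be taken to be the identity for $t\in\bT^\reg$: for a general regular element of $\bT$ the diagonalizer is a representative of a Weyl group element, so you should either take $t\in\exp(\ri\cA)$ (as the paper does) or note that $\Gamma_2(t)\in N_K(\bT)$ still gives $g_{\taU}(t)\in\bT$ and $g_{\taU}(t)\in\cZ(K)\Leftrightarrow T_{\omega^\vee}(\taU)\in\cZ(K)$, after which the argument is unchanged.
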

 \begin{proof}
Since $\cZ(K) \times \{e\}$ is obviously contained in every
isotropy group, it is sufficient to exhibit a single point having this isotropy group.
For this purpose, let us take a maximal torus $\bT'$ for which $\bT' \cap \bT = \cZ(K)$,
and consider a triple $(C_1, C_2, C_3)$ satisfying the following conditions:
\be
C_2, C_3 \in \bT'\cap K^\reg, \quad
C_1 C_2 \in \exp(\ri \cA) \subset \bT^\reg,
\label{Z22}\ee
where $\cA$ is the Weyl alcove \eqref{J8}.  For such a point $g_{\taU}(C_1 C_2) = T_{\omega^\vee}(\taU)$, cf. \eqref{LI3} and \eqref{Z20}.
Furthermore,
$(\eta, [T_{\omega^\vee}(\tau)]) \in K \times \bT^{\ad}$ fixes $(C_1,C_2, C_3)$ if and only if we have the equality
\be
(\eta^{-1} C_1 \eta, \eta^{-1} C_2 \eta,\eta^{-1} C_3 \eta)
= (T_{\omega^\vee}(\taU) C_1 T_{\omega^\vee}(\taU)^{-1},  T_{\omega^\vee}(\taU) C_2 T_{\omega^\vee}(\taU)^{-1}, C_3).
\label{Z23}\ee
 Since $C_1 C_2 \in \exp(\ri \cA)$, this equality implies
\be
\eta^{-1} C_1 C_2 \eta = C_1 C_2.
\label{Z24}\ee
Combining this with $\eta^{-1} C_3 \eta = C_3$, we get $\eta \in \bT \cap \bT' = \cZ(K)$.
Then \eqref{Z23} gives $C_2 = T_{\omega^\vee}(\taU) C_2 T_{\omega^\vee}(\taU)^{-1}$.
Consequently, $T_{\omega^\vee}(\taU) \in \bT\cap \bT'$, i.e.  $T_{\omega^\vee}(\taU)\in \cZ(K)$ and its class in $\bT^{\ad}$ is the identity.
 \end{proof}

\begin{prop}\label{pr:Zsphere}
Theorem \ref{thm:G12} is applicable to  the Abelian Poisson algebra $\fH$ introduced in Definition \ref{def:fHsphere} with
 $M=M_{0,3}$,
$Y\subset M$ \eqref{Z18}, $(H_1^{\ad},\ldots,H_\ell^{\ad})$ \eqref{LI10} and
$\Gone=K$, $\Gtwo=\bT^{\ad}$ acting as described above.
Consequently,  $\fH$  engenders integrable systems of rank $\ell$ whose action variables arise from
$H_1^{\ad},\ldots,H_\ell^{\ad}$.
\end{prop}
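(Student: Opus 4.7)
The plan is to verify the four conditions of Scenario~\ref{scen:G11} so that Theorem~\ref{thm:G12} can be invoked directly. Since $K$ is compact and connected and $(M_{0,3},P_{0,3})$ is a quasi-Poisson $K$-manifold in the sense of \cite{AKSM}, the set-up fits into the third type of $(M,P_M)$ allowed by that scenario, with $G^1=K$ acting by simultaneous conjugation \eqref{etaonMmn} and $G^2=\bT^{\ad}$ the $\ell$-dimensional compact torus.

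First I would dispatch items \ref{itScen:1} and \ref{itScen:2}. For \ref{itScen:1}, the elements of $\fH$ are manifestly $K$-invariant, and the explicit integral curves \eqref{Z17} (derived from Lemmas~\ref{lem:Wmomap} and~\ref{Lem:Fus}) show that $C_1C_2$ is conserved by every flow in $\fH$, whence $\fH$ is Abelian; completeness is immediate because \eqref{Z17} amounts to conjugation by a one-parameter subgroup in the compact group $K$. For \ref{itScen:2}, the derivative formula \eqref{LI6} together with the fact that $\omega_1^\vee,\dots,\omega_\ell^\vee$ is a basis of $\ri\ft$ shows that $d\Xi_1,\dots,d\Xi_\ell$ are linearly independent at every point of $K^\reg$; pulling back along the surjective submersion $\mu:Y\to K^\reg$, $(C_1,C_2,C_3)\mapsto C_1C_2$ (for fixed $C_2,C_3$, the map $C_1\mapsto C_1C_2$ is a diffeomorphism) keeps the $H_j^{\ad}=\Xi_j\circ\mu$ independent throughout $Y$. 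Lemma~\ref{lem:Z5ad} identifies their joint flow with the $\bT^{\ad}$-action on $Y$, which is proper by compactness and effective as noted right before Lemma~\ref{lem:Z6}.

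Item \ref{itScen:3} will follow from Lemma~\ref{lem:J2}, which identifies $K^\reg/K$ with the open Weyl alcove $\cA$ via $\underline{\chi}$: since \eqref{LI5} exhibits the $\Xi_j$ as the real-linear coordinates on $\cA$ dual to the simple roots, every $\chi\in C^\infty(K)^K$ restricts on $K^\reg$ as a smooth function of $\Xi_1,\dots,\Xi_\ell$; composing with $\mu$ and applying the chain rule together with the independence just established yields both the factorization and the equality of spans of exterior derivatives. Item \ref{itScen:4} is exactly Lemma~\ref{lem:Z6}. With all four conditions of Scenario~\ref{scen:G11} in place, Theorem~\ref{thm:G12} directly gives the claim, and the only step that required a genuinely new argument compared to the cotangent-bundle and Heisenberg-double examples is the isotropy computation carried out in Lemma~\ref{lem:Z6}; everything else is a routine translation of the general machinery to the present situation.
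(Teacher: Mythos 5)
Your proposal is correct and follows essentially the same route as the paper, which simply notes that Lemmas \ref{lem:Z5ad} and \ref{lem:Z6} allow one to verify the four conditions of Scenario \ref{scen:G11} exactly as in Proposition \ref{pr:J10} and then invokes Theorem \ref{thm:G12}. You merely spell out the verification of items \ref{itScen:1}--\ref{itScen:4} in more detail (using \eqref{Z17}, \eqref{LI6}, the submersion $(C_1,C_2,C_3)\mapsto C_1C_2$, and Lemma \ref{lem:J2}), all of which is consistent with the paper's intended argument.
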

\begin{proof}
Using Lemmas \ref{lem:Z5ad} and \ref{lem:Z6}, one can prove as in Proposition \ref{pr:J10} that all assumptions of Scenario \ref{scen:G11} are satisfied.
Therefore  Theorem \ref{thm:G12} can be applied.
\end{proof}

In this simple case, we can easily exhibit the algebra of $K$-invariant constants of motion of $\fH$.
To do so, define the surjective maps
\begin{equation*}
 \begin{aligned}
\Psi_{1,2}: M_{0,3} \to K \times K, \qquad
&\Psi_{1,2}(C_1,C_2, C_3) := (C_1,C_2),
\\
\Psi_{(12),3}: M_{0,3} \to K \times K, \qquad
&\Psi_{(12),3}(C_1,C_2,C_3) := (C_1 C_2, C_3).
\end{aligned}
\end{equation*}
Clearly, these are also quasi-Poisson maps if we view the target as $K\circledast K$.
It is obvious using \eqref{Z21} that they induce the constants of motion:
\be
\Psi_{1,2}^*( C^\infty(K \times K)^K)
\quad
\hbox{and}\quad
\Psi_{(12),3}^* (C^\infty( K\times K)^K).
\ee
The intersection of these two sets of constants of motion is $\fH$.
Moreover,
\be
\ddim \left(\Psi_{1,2}^*( C^\infty(K \times K)^K)\right) = \ddim\left(\Psi_{(12),3}^*( C^\infty(K \times K)^K)\right)= \dim(K).
\ee
These constants of motion generate all constants of motion,
given by the centralizer $\fF^K$ of $\fH$ in $C^\infty(M_{0,3})^K$.
The functional dimensions add up to
\be
\ddim(\fF^K) + \ddim(\fH) = 2\dim(K) = \dim((M_{0,3})_\ast/K),
\ee
in agreement with equation \eqref{212}, which implies \eqref{prop213}.

Let us emphasize that we have obtained a \emph{degenerate} integrable system on
the Poisson manifold $(M_{0,3})_*/K$.
Indeed,
after fixing the values of the Casimir functions obtained from the momentum map on $M_{0,3}$ and from $C_1,C_2,C_3$, generically
one obtains a symplectic leaf of dimension $2\dim(K)-4\ell$ in $(M_{0,3})_*/K$.
Except for $\operatorname{SU}(2)$,
$
\ddim(\fH) = \ell < \dim(K) - 2\ell.
$
Together with Lemma \ref{lem:214}, this shows degenerate integrability on $(M_{0,3})_\ast/K$ in the sense of Definition \ref{defn:int}.

\subsection{The torus with one hole}  \label{ss:Torus1}

In the present subsection, our  `unreduced phase space' $M$ is the internally fused double $M:=\bD(K)$ described around
equations \eqref{W8}--\eqref{W10}.
Using that as a manifold $M = K \times K = \{(A,B)\}$,
we let $p_1$ and $p_2$ denote the projections from $M$ onto the first and second $K$ factors.
Then, we introduce the Poisson subalgebras  $\fH$ and $\tilde \fH$ of the Poisson algebra  $C^\infty(M)^K$ by
\be
\fH:= p_1^*(C^\infty(K)^K)
\quad
\hbox{and}
\quad
\tilde \fH := p_2^*(C^\infty(K)^K).
\label{Z5}\ee

For any $\chi\in C^\infty(K)^K$, the integral curves of the quasi-Hamiltonian vector
field $V_{p_1^*\chi}$ defined by means of the bivector \eqref{W10}
have the form
\be
 (A(\tau), B(\tau)) = (A_0, B_0 \exp( - \tau \nabla \chi(A_0))),
 \label{Z6}\ee
and the integral curves of the vector field $V_{p_2^*\chi}$ have the form
\be
 (A(\tau), B(\tau)) = (A_0 \exp( \tau \nabla \chi(B_0)), B_0).
 \label{Z7}
\ee
These formulae  show that the pair $(A, B A B^{-1})$ is constant along the integral curves  \eqref{Z6}
and the pair $(A B A^{-1}, B)$ is constant along the integral curves \eqref{Z7}.
It can be shown \cite{FeCONF} using $K$-invariant functions of these constants of motion that
the Abelian Poisson algebras $\fH$ and $\tilde \fH$ descend
to degenerate integrable systems on $\bD(K)_*/K$, where $\bD(K)_*\subset \bD(K)$ is the principal isotropy type submanifold
for the $K$-action by simultaneous conjugation.
Now, we briefly explain  how this and stronger results follow from the method developed in the present paper.

Proceeding along the path familiar from Sections \ref{Sec:Cot} and \ref{Sec:Heis}, we define
 \be
Y:= p_1^{-1}(K^\reg) \quad \hbox{and} \quad \tilde Y:= p_2^{-1}(K^\reg),
\label{Z8}\ee
which are dense open connected submanifolds,
and the $K$-invariant maps
\be
 (H_1,\dots, H_\ell): Y \to \bR^\ell
\quad \hbox{and}
\quad
(\tilde H_1,\dots, \tilde H_\ell) : \tilde Y \to \bR^\ell
\label{Z9}\ee
with
\be
H_j := \chi_j \circ p_1
\quad\hbox{and}\quad
\tilde H_j := \chi_j  \circ p_2, \qquad j=1,\dots, \ell.
\label{Z10}\ee
The $\chi_j$ are the functions introduced in  Lemma \ref{lem:J2}.

\begin{lem}\label{lem:Z1}
The joint flows of the quasi-Hamiltonian vector fields of the components of
 $(H_1,\dots, H_\ell)$ \eqref{Z10} yield a free  action of the torus $\bT$ \eqref{J30} on $Y$ \eqref{Z8}, whose
action map is given by
\be
(T(\taU), (A,B)) \mapsto (A,  B \Gamma_2(A)^{-1}   T(-\taU) \Gamma_2(A)),
\label{Z11}\ee
with $T(\taU)$ in \eqref{J31}.
In the same way, the map $(\tilde H_1,\dots, \tilde H_\ell)$ \eqref{Z10} generates the free action of
$\bT$ \eqref{J30} on $\tilde Y$ \eqref{Z8} that
operates according to
\be
(T(\taU), (A,B)) \mapsto (A \Gamma_2(B)^{-1} T(\taU) \Gamma_2(B),  B) .
\label{Z12}\ee
These torus actions commute with the $K$-action \eqref{etaonMmn} restricted on $Y$ and on $\tilde Y$, respectively.
The principal isotropy group for the $K$-action on $Y$ as well as for the $K$-action on $\tilde Y$ is $\cZ(K)$,
and the principal isotropy group for the action of $\bG = K \times \bT$ is $\cZ(K) \times \{e\}$ in both cases.
\end{lem}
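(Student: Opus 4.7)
The plan is to integrate the Hamiltonian flows explicitly to obtain \eqref{Z11}--\eqref{Z12}, then dispatch freeness, commutativity with the $K$-action, and the two isotropy statements in turn.  First, I would verify that the $H_i = \chi_i \circ p_1$ are pairwise in involution: inspecting the bivector \eqref{W10}, every fusion-type contribution contains at least one second-factor vector field and therefore annihilates functions of $A$ alone, leaving only $\frac{1}{2} E_a^{(1),R} \wedge E^a_{(1),L}$, which coincides with $P_K$ on the first factor, and $K$-class functions are Casimirs of $P_K$.  Hence the flows commute and it suffices to compose them.  Substituting $\chi = \chi_j$ together with the derivative formula \eqref{J15} of Lemma \ref{lem:J3} into \eqref{Z6} gives the $j$-th flow as $(A, B) \mapsto (A,\, B\,\Gamma_2(A)^{-1}\exp(\tau_j \ri h_{\alpha_j})\Gamma_2(A))$; composition of the $\ell$ commuting flows, invoking the definition of $T(\taU)$ in \eqref{J31}, produces exactly \eqref{Z11}.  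The expression is well-defined despite the left-$\bT$ ambiguity of $\Gamma_2(A)$ because $T(-\taU) \in \bT$ commutes with $\bT$.  Since $T: \bR^\ell/(2\pi\bZ)^\ell \to \bT$ is an isomorphism, the $\bR^\ell$-action descends to $\bT$, and freeness is immediate: any stabilizer of $(A, B)$ forces $T(-\taU) = e$.  The derivation of \eqref{Z12} from \eqref{Z7} is parallel.

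To obtain commutativity with the $K$-action I would invoke that $H_i$ and $P_{\bD(K)}$ are both $K$-invariant, so $V_{H_i}$ is $K$-equivariant; alternatively this can be checked directly using the equivariance $\Gamma_2(\eta A \eta^{-1}) = \Gamma_2(A)\eta^{-1}$ valid modulo the left-$\bT$ ambiguity of $\Gamma_2$.  For the principal $K$-isotropy on $Y$, the diagonal conjugation action gives $K_{(A,B)} = Z_K(A) \cap Z_K(B)$; choosing $A \in \bT^\reg$ together with $B \in \bT'^\reg$ where $\bT \cap \bT' = \cZ(K)$ from Remark \ref{rem:J4} yields $K_{(A,B)} = \bT \cap \bT' = \cZ(K)$, which is minimal since $\cZ(K)$ lies in every isotropy group.

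For the $\bG = K \times \bT$ isotropy, the key observation is that the $\bT$-action \eqref{Z11} preserves the pair $(A,\, BAB^{-1})$: a short calculation shows that $\Gamma_2(A) A \Gamma_2(A)^{-1} = e^{\ri\underline{\chi}(A)} \in \bT$ commutes with every $T(\pm\taU)$, forcing the image of $B$ under the $\bT$-action to verify $B_{\taU} A B_{\taU}^{-1} = BAB^{-1}$.  Thus any $(\eta, T(\taU))$ fixing $(A, B)$ must centralize both $A$ and $BAB^{-1}$, placing $\eta$ in $Z_K(A) \cap B Z_K(A) B^{-1}$.  Refining to $A \in \exp(\ri\cA) \subset \bT^\reg$, for which $\Gamma_2(A)$ may be taken as $e$, and to $B \in \bT'^\reg$ additionally satisfying $B\bT B^{-1} \cap \bT = \cZ(K)$, one obtains $\eta \in \bT \cap B\bT B^{-1} = \cZ(K)$; the remaining constraint $\eta B \eta^{-1} = BT(\taU)$ then collapses to $T(\taU) = e$.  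The case of $\tilde Y$ follows by symmetry upon interchanging the two factors.  I expect the main technical point to be producing a $B \in \bT'^\reg$ additionally meeting $B\bT B^{-1} \cap \bT = \cZ(K)$; although each condition is individually generic, the combination should be verified either by a direct dimension argument (the bad locus is a proper closed subset cut out at the Lie-algebra level by $\Ad_B(\ft) \cap \ft \neq 0$ together with finitely many Weyl-translate strata) or by leveraging Kostant's principal element construction recalled in Remark \ref{rem:J4}.
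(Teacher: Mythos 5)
Your proof is correct and takes essentially the same route as the paper, whose own proof simply defers to Lemmas \ref{lem:J6}, \ref{lem:J7} and \ref{lem:J8} with $(A,B)$ playing the roles of $(J,g)$: integrating \eqref{Z6} with the derivative formula \eqref{J15} to get \eqref{Z11}, and using the conserved pair $(A,BAB^{-1})$ (cf.\ Remark \ref{rem:J5}) to force $\eta$ into a joint centralizer, is exactly the intended argument. The only thing to flag is that the ``main technical point'' you isolate at the end is self-inflicted: the condition $B\in\bT'^{\reg}$ is needed only for the separate computation of the principal $K$-isotropy group, and since each principal-isotropy claim requires just its own single witness point, you may drop that condition when treating $\bG=K\times\bT$. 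Without it, existence is immediate — take any $B$ conjugating $\bT$ onto a torus $\bT'$ with $\bT\cap\bT'=\cZ(K)$ from Remark \ref{rem:J4}, so that $\bT\cap B\bT B^{-1}=\cZ(K)$; equivalently, as in the paper's Lemma \ref{lem:DD2}, arrange $A\in\bT^{\reg}$ and $BAB^{-1}\in\bT'^{\reg}$. Be warned that your second suggested repair is risky as stated: the natural Kostant-type candidate, a Coxeter representative $B\in\bT'^{\reg}$ normalizing $\bT$, satisfies $B\bT B^{-1}=\bT$ and therefore violates the required condition.
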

\begin{proof}
This is essentially the same as the proof of the first part of Lemma \ref{lem:J6} and the proofs of Lemmas \ref{lem:J7} and \ref{lem:J8}.
For example, when working with $(H_1,\dots, H_\ell)$ and $Y$, the elements $A$ and $B$ play roles analogous to $J$ and $g$, respectively.
\end{proof}

The proof of the next proposition follows again from our general formalism.

\begin{prop} \label{pr:Z2}
Theorem \ref{thm:G12} is applicable to  the Abelian Poisson algebra $\fH$ \eqref{Z5} with  $M=\bD(K)$,
$Y\subset M$ \eqref{Z8}, $(H_1,\ldots,H_\ell)$ \eqref{Z9} and
$\Gone=K$, $\Gtwo=\bT$ acting as described above.
Consequently,  $\fH$  engenders integrable systems of rank $\ell$ whose action variables arise from $H_1,\dots, H_\ell$.
Moreover, the analogous statements hold if we replace $(Y,\fH)$ by $(\tilde Y,\tilde \fH)$ and use the relevant Hamiltonians from \eqref{Z9}.
\end{prop}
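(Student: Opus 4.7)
The plan is to verify the four conditions of Scenario \ref{scen:G11} for the data $(M, P_M, \fH, Y, \Gone, \Gtwo) = (\bD(K), P_{\bD(K)}, \fH, Y, K, \bT)$ and then invoke Theorem \ref{thm:G12} directly; the case of $(\tilde Y, \tilde \fH, \bT)$ will follow by the same argument with the roles of $A$ and $B$ exchanged and \eqref{Z7} used instead of \eqref{Z6}. Since $\bD(K)$ is a quasi-Poisson $K$-manifold in the sense of \cite{AKSM}, it falls under the third bullet of Scenario \ref{scen:G11}, so the scenario is applicable in principle.

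First I would check condition \ref{itScen:1}. The inclusion $\fH \subset C^\infty(\bD(K))^K$ is immediate, since any $\chi \in C^\infty(K)^K$ pulls back via $p_1$ to a function invariant under simultaneous conjugation of $(A,B)$. Completeness of the flows of $\fH$ is read off directly from \eqref{Z6}, where the integral curve is defined for all $\tau \in \bR$. Poisson commutativity $\{p_1^\ast \chi, p_1^\ast \chi'\} = 0$ follows because the flow \eqref{Z6} fixes $A$ and hence preserves $p_1^\ast \chi'$ for any $\chi'$.

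Next I would handle conditions \ref{itScen:2} and \ref{itScen:3} in tandem, as both rely on the structure of the functions $\chi_j$ introduced in Lemma \ref{lem:J2}. The $H_j = \chi_j \circ p_1$ are smooth and $K$-invariant on $Y = p_1^{-1}(K^\reg)$ by construction. Their pointwise independence on $Y$ follows from Lemma \ref{lem:J3}: the derivative $\nabla \chi_j(A)$ is the conjugate of $-\ri h_{\alpha_j}$ by $\Gamma_2(A)$, and the $\ri h_{\alpha_j}$ form a basis of $\ri\ft$. Lemma \ref{lem:Z1} establishes that the quasi-Hamiltonian vector fields $V_{H_j}$ generate a free action of $\bT$ on $Y$ (hence effective), which is proper since $\bT$ is compact, and which commutes with the $K$-action; the bracket relations $\{H_i, H_j\} = 0$ are a consequence of the commuting flows. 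For condition \ref{itScen:3}, I would invoke the diffeomorphism $([\Gamma_2], \underline{\chi}): K^\reg \to \bT\backslash K \times \cA$ of Lemma \ref{lem:J2}: every $\chi \in C^\infty(K)^K$ restricts on $K^\reg$ to a smooth function of $\underline{\chi}$, whose components pair against the $\ri h_{\alpha_j}$ to yield the $\chi_j$. Thus $(p_1^\ast \chi)\vert_Y$ is a smooth function of $H_1,\dots,H_\ell$, and the coincidence of spans of exterior derivatives follows, since $dH_1,\dots,dH_\ell$ generate the span of all $d(p_1^\ast\chi)$ at each point of $Y$.

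Condition \ref{itScen:4} is precisely the content of the last sentence of Lemma \ref{lem:Z1}: the principal isotropy group of $\bG = K \times \bT$ on $Y$ equals $\cZ(K) \times \{e\}$, so on $Y_0$ it has the required product form. With all four items verified, Theorem \ref{thm:G12} delivers the integrable systems of rank $\ell$ with generalized action variables $(H_1,\dots,H_\ell)$ on $Y_0/K$ and $Y_0^1/K$, together with the corresponding statements on symplectic leaves. I do not expect a genuine obstacle in this proof: the quasi-Poisson subtlety is absorbed into Scenario \ref{scen:G11}, while Lemmas \ref{lem:J2}, \ref{lem:J3} and \ref{lem:Z1} package all the nontrivial geometric content. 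The most delicate step is the verification of condition \ref{itScen:3}, where one must use the global diffeomorphism of Lemma \ref{lem:J2} to argue that an arbitrary class function on $K^\reg$ is smoothly expressible in terms of the $\chi_j$.
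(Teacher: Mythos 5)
Your proposal is correct and follows essentially the same route as the paper: the paper's proof of Proposition \ref{pr:Z2} consists precisely of verifying the four conditions of Scenario \ref{scen:G11} via Lemmas \ref{lem:J2}, \ref{lem:J3} and \ref{lem:Z1} and then invoking Theorem \ref{thm:G12}, exactly as you do (mirroring the proof of Proposition \ref{pr:J10}). The only cosmetic slip is that the $\ri h_{\alpha_j}$ form a basis of $\ft$ rather than of $\ri\ft$; this does not affect the independence argument.
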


\begin{rem}\label{rem:Z3}
It would have been enough to consider only one of $(Y,\fH)$ and $(\tilde Y,\tilde \fH)$
since they are exchanged by the automorphism $\mathfrak{S}: (A,B) \mapsto (B^{-1}, B^{-1} A B)$ of the
internally fused double.
The map $\mathfrak{S}$  descends to the reduced phase space where it constitutes a building block of the
action of the $\mathrm{SL}(2,\bZ)$ mapping class group.
The significance of the transformation $\mathfrak{S}$ from the viewpoint of  `Ruijsenaars duality' (action-position duality)
for integrable systems on $M/K$ is explained, for example,  in  the papers \cite{Fe23,FK2}.
This also exemplifies the general fact that the action of the mapping class group of the underlying Riemann surface \cite{FM}
induces many equivalences between the integrable
systems constructed on the corresponding moduli spaces of flat connections \cite{ARe}.
\end{rem}

\begin{rem} \label{Rem:TrivPhi}
Recall that $\bD(K)$ admits the momentum map $\Phi:(A,B)\mapsto [A,B]$.
Thanks to Lemma \ref{lem:Wmomap},
for any $\chi\in C^\infty(K)^K$ the integral curves of the quasi-Hamiltonian vector
field $V_{\Phi^*\chi}$ are given as
\be
 (A(\tau), B(\tau)) = \big(\exp( \tau \nabla \chi(\Phi_0)) A_0\exp(-\tau \nabla \chi(\Phi_0)),
 \exp( \tau \nabla \chi(\Phi_0))B_0 \exp(-\tau \nabla \chi(\Phi_0)) \big),
 \label{Z-D1}
\ee
where $\Phi_0=\Phi(A_0,B_0)$.
In combination with Lemma \ref{Lem:Fus},
we shall use them in the following Subsection when constructing integrable systems on $M_{2,0}$.
There, it will be important that the open subspace $\Phi^{-1}(K^\reg)$ of $\bD(K)$ is dense and connected.
 The proof of this statement is quite technical, and is presented in Appendix \ref{App:conn}.
 \end{rem}

\subsection{The genus 2 surface with one hole}  \label{ss:gen2}

We consider the `double torus' with one boundary component.
The corresponding moduli space can be identified with $M_{2,0}/K$, where we recall
\be
M_{2,0}=\bD(K) \circledast \bD(K) .
\label{DD1}
\ee
As a manifold, $M_{2,0}=K^4=\{(A_1,B_1,A_2,B_2)\}$.

 \begin{defn}\label{defn:DD1}
 Let $\fH$ denote the linear subspace of $C^\infty(M_{2,0})^K$ spanned by arbitrary finite products and sums of
 the Hamiltonians of the form
\be
\widehat{H}^{1,\chi}(A_1,B_1,A_2,B_2) := \chi([A_1,B_1])
\quad
\hbox{and}\quad H^{2,\chi}(A_1,B_1,A_2,B_2) := \chi(A_2),
\label{DD2}
\ee
for every $\chi\in C^\infty(K)^K$.
 \end{defn}
The integral curves for the Hamiltonians in \eqref{DD2} are readily obtained from \eqref{Z-D1} and \eqref{Z6}, respectively, on each copy of $\bD(K)$ inside $M_{2,0}$.
It is clear that these flows pairwise commute, hence $\fH$ is an Abelian Poisson algebra.
We also introduce the subspace
\be
Y=\{(A_1,B_1,A_2,B_2) \in M_{2,0} \mid [A_1,B_1]\in K^\reg \text{ and } A_2\in K^\reg\},
\label{DD2bis}
\ee
where the arguments  of $\chi$ appearing in \eqref{DD2} are regular.
Note that $Y$ is a connected dense open submanifold due to Lemma \ref{lem:Z-D} and the connectedness of the subspace \eqref{Z8} in $M_{1,0}$.
Next, we introduce the $K$-invariant maps
\be
 (\widehat{H}_1^1,\dots, \widehat{H}_\ell^1): Y \to \bR^\ell
\quad \hbox{and}
\quad
(H_1^2,\dots, H_\ell^2) : Y \to \bR^\ell ,
\label{DD3}\ee
where, using  $\chi_j$ \eqref{J12} and $\Xi_j$ \eqref{LI5} for $j=1,\dots, \ell$, we set
\be
\widehat{H}_j^1(A_1,B_1,A_2,B_2) := \Xi_j([A_1,B_1])
\quad\hbox{and}\quad
H_j^2(A_1,B_1,A_2,B_2) := \chi_j(A_2).
\label{DD4}\ee
The choice of the $\widehat{H}_j^1$ reflects the fact that the flows of the Hamiltonians $\widehat{H}^{1,\chi}$ act
via conjugations (cf. \eqref{Z-D1}) as in the situation of Subsection \ref{ss:Sphere4}.

\begin{lem}\label{lem:DD1}
The joint flows of the quasi-Hamiltonian vector fields of the components of
\be
\vec{H}:= (\widehat{H}_1^1,\dots, \widehat{H}_\ell^1,H_1^2,\dots, H_\ell^2): Y \to \bR^{2 \ell}
\label{DD5}
\ee
yield an effective  action of the torus $\bT^{\ad}\times\bT$ on $Y$ \eqref{DD2bis}.
To present it, we use $T(\taU)$ \eqref{J31}, $T_{\omega^\vee}(\taU)$ \eqref{LI3},
and define  $g_{\taU}^1=g_{\taU}^1(A_1,B_1)\in K$ with the `diagonalizer' $[\Gamma_2]$  \eqref{J11} by
\be
 g_{\taU}^1 := \Gamma_2([A_1,B_1])^{-1} T_{\omega^\vee}(\taU) \Gamma_2([A_1,B_1]).
\label{DD6}\ee
Then, the action of $([T_{\omega^\vee}(\taU_1)],T(\taU_2))\in \bT^{\ad}\times \bT$ on $Y$  is given by the mapping
\be
(A_1,B_1,A_2,B_2) \mapsto
(\Ad_{g_{\taU_1}^1}(A_1),\Ad_{g_{\taU_1}^1}(B_1), A_2,  B_2 \Gamma_2(A_2)^{-1}   T(-\taU_2) \Gamma_2(A_2)).
\label{DD7}
\ee
This torus action commutes with the $K$-action by simultaneous conjugation restricted on $Y$.
\end{lem}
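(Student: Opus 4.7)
The plan is to leverage the fusion decomposition $M_{2,0}=\bD(K)\circledast \bD(K)$ to split the Hamiltonians into two independent groups and recognize each group as an instance of a flow already computed. Writing $\pi_i:M_{2,0}\to \bD(K)$ for the projection onto the $i$-th factor, I note that $\widehat{H}^{1,\chi}=\pi_1^\ast(\chi\circ \Phi_{\bD(K)})$ depends only on the first factor while $H^{2,\chi}=\pi_2^\ast(\chi\circ p_1)$ depends only on the second. By Lemma \ref{Lem:Fus}, the quasi-Hamiltonian flow of any $\widehat{H}^{1,\chi}$ leaves $(A_2,B_2)$ untouched, and symmetrically for the flows of $H^{2,\chi}$. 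In particular the two joint flows commute, and the combined action decomposes as the direct product of two separately defined actions.

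For the first factor, $\widehat{H}^{1,\chi}=\chi\circ \Phi_{\bD(K)}$ is a class function pulled back by the momentum map, so Lemma \ref{lem:Wmomap} provides the integral curves as the diagonal conjugations recorded in \eqref{Z-D1}. Because $\Phi_{\bD(K)}$ is constant along each such curve and the derivatives $\nabla\Xi_j$ take values in the abelian subspace $\Gamma_2([A_1,B_1])^{-1}\ri\ft\,\Gamma_2([A_1,B_1])$ by \eqref{LI6}, the joint time-$\taU_1$ flow of $(\widehat{H}^1_j)$ acts by simultaneous conjugation of $(A_1,B_1)$ by
\[
\exp\Bigl(-\sum_{j=1}^\ell \tau_{1,j}\,\Gamma_2([A_1,B_1])^{-1}\ri\omega^\vee_j \Gamma_2([A_1,B_1])\Bigr)=g_{\taU_1}^1.
\]
Remark \ref{rem:Tad} then asserts that $T_{\omega^\vee}(\taU_1)\in \cZ(K)$ iff $\taU_1\in(2\pi\bZ)^\ell$, and since central elements act trivially by conjugation, the resulting $\bR^\ell$-action factors through the adjoint torus $\bT^{\ad}$.

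For the second factor, $H^{2,\chi}(A_2,B_2)=\chi(A_2)$ matches verbatim the setting of the first half of Lemma \ref{lem:Z1} (with $(A,B)$ played by $(A_2,B_2)$), yielding the $\bT$-action $B_2\mapsto B_2\Gamma_2(A_2)^{-1}T(-\taU_2)\Gamma_2(A_2)$. Assembling the two pieces proves the explicit formula \eqref{DD7}, and commutativity with the diagonal $K$-conjugation is automatic because each Hamiltonian is $K$-invariant, a standard feature of the quasi-Poisson framework. For effectiveness, I would pick a point with $[A_1,B_1]\in \exp(\ri\cA)$ at which $(A_1,B_1)$ generate a subgroup of $K$ whose centralizer reduces to $\cZ(K)$: triviality of the first factor's action there forces $T_{\omega^\vee}(\taU_1)\in \cZ(K)$, i.e.\ $[T_{\omega^\vee}(\taU_1)]=e$ in $\bT^{\ad}$; similarly, for generic $A_2\in \exp(\ri\cA)$ the argument of Lemma \ref{lem:Z1} forces $T(\taU_2)=e$. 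The only subtlety I anticipate lies in the bookkeeping that distinguishes the coroot lattice from the coweight lattice, giving the quotient to the adjoint torus in the first factor, but Remark \ref{rem:Tad} packages this cleanly; beyond that, each step reduces to an application of earlier results.
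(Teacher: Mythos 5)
Your proposal is correct and follows essentially the same route as the paper: the paper's proof simply invokes ``the usual arguments'' (i.e.\ Lemmas \ref{Lem:Fus}, \ref{lem:Wmomap} and \ref{lem:Z1} together with the derivative formulas \eqref{LI6} and \eqref{J15}, exactly as you spell out) and defers effectiveness to Lemma \ref{lem:DD2}. Your only deviation is that you prove effectiveness directly by exhibiting a point with trivial isotropy for the torus action, which is a valid, self-contained alternative to the paper's forward reference.
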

\begin{proof}
The derivation of \eqref{DD6} and \eqref{DD7} follows the usual arguments. It is then easy to see that this is an effective action\footnote{It also follows from
the next lemma that the torus action is effective.}
of the torus $\bT^{\ad}\times\bT$. This action commutes with the $K$-action since it is given by flows of $K$-invariant quasi-Hamiltonian vector fields.
\end{proof}

\begin{lem} \label{lem:DD2}
The principal isotropy group for the $K$-action on $Y$ is $\cZ(K)$,
and the principal isotropy group for the combined action of $\bG = K \times (\bT^{\ad}\times\bT)$ is $\cZ(K) \times \{e\}$.
\end{lem}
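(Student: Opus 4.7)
The plan is to adapt the strategy of Lemmas \ref{lem:J8} and \ref{lem:Z6}: it will suffice to exhibit a single point $y\in Y$ at which the $K$-isotropy equals $\cZ(K)$ and the $\bG$-isotropy equals $\cZ(K)\times\{e\}$, since these subgroups are automatically contained in every isotropy group for the respective actions. The point $y=(A_1,B_1,A_2,B_2)$ will be built as follows. Fix maximal tori $\bT,\bT'$ with $\bT\cap\bT'=\cZ(K)$ and $\ft\perp\ft'$ as in Remark \ref{rem:J4}, with $\bT$ the one containing $\exp(\ri\cA)$. Take $(A_1,B_1)\in K\times K$ with joint centralizer $C_K(A_1)\cap C_K(B_1)=\cZ(K)$ and $[A_1,B_1]\in\exp(\ri\cA)$; take $A_2\in\exp(\ri\cA)\subset\bT^{\reg}$; and take $B_2\in\bT'\cap K^{\reg}$ subject to the additional genericity condition
\be
\bT\cap B_2\,\bT\,B_2^{-1}=\cZ(K), \label{DD-gen}
\ee
to be addressed at the end. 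The pair $(A_1,B_1)$ may be obtained by starting with $A_1\in\bT^{\reg}$, $B_1\in\bT'^{\reg}$ (so the joint centralizer equals $\bT\cap\bT'=\cZ(K)$), ensuring $[A_1,B_1]\in K^{\reg}$ (generic), and simultaneously conjugating to bring the commutator into the Weyl alcove, as the joint centralizer is conjugation-invariant.

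The Abelianness of $\bT$ together with $[A_1,B_1],A_2\in\exp(\ri\cA)$ permits us to take $\Gamma_2([A_1,B_1])$ and $\Gamma_2(A_2)$ in $\bT$, whence $g^1_{\taU_1}=T_{\omega^\vee}(\taU_1)$ and $\Gamma_2(A_2)^{-1}T(-\taU_2)\Gamma_2(A_2)=T(-\taU_2)$ in \eqref{DD6}--\eqref{DD7}. The $K$-isotropy claim is then immediate from $C_K(A_1)\cap C_K(B_1)\cap C_K(A_2)\cap C_K(B_2)=\cZ(K)$. For the $\bG$-isotropy, we shall write down the four stabilization equations of $(\eta,[T_{\omega^\vee}(\taU_1)],T(\taU_2))\in\bG$ at $y$ and solve them in sequence. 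The equation on $A_2\in\bT^{\reg}$ forces $\eta\in\bT$; the equations on $A_1,B_1$ then force $\eta T_{\omega^\vee}(\taU_1)\in C_K(A_1)\cap C_K(B_1)=\cZ(K)$, so $\eta=c\,T_{\omega^\vee}(-\taU_1)$ with $c\in\cZ(K)$. Substitution into the $B_2$-equation, using centrality of $c$ and commutativity of $\bT$, yields
\be
B_2^{-1}\,T_{\omega^\vee}(-\taU_1)\,B_2=T(\taU_2)\,T_{\omega^\vee}(-\taU_1)\in\bT,
\ee
so $T_{\omega^\vee}(-\taU_1)\in\bT\cap B_2\bT B_2^{-1}=\cZ(K)$ by \eqref{DD-gen}. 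This forces $[T_{\omega^\vee}(\taU_1)]=e$ in $\bT^{\ad}$, $\eta=c\in\cZ(K)$, and finally $T(\taU_2)=e$ via the simplified $B_2$-equation. Hence $\bG_y=\cZ(K)\times\{e\}$.

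The main obstacle is the existence of $B_2\in\bT'\cap K^{\reg}$ verifying \eqref{DD-gen}. This is a genericity/transversality statement. Since $\ft\perp\ft'$, both $\ft$ and $\Ad_{B_2}(\ft)$ lie in the $\Ad(\bT')$-invariant subspace $\ft'^\perp\subset\fk$, on which $\Ad_{B_2}$ acts as an orthogonal transformation without fixed vectors for regular $B_2$. A first-order analysis at $B_2=e$, writing $B_2=\exp(\epsilon Y)$ with $Y\in\ft'$, shows that the Lie algebra condition $\ft\cap\Ad_{B_2}(\ft)=0$ is an open and nonempty condition on $\bT'$, thanks to the apposition property $[\ft,\ft']\not\subset\ft$. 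The residual discrete intersection beyond $\cZ(K)$ is ruled out by observing that any non-central $s\in\bT\cap B_2\bT B_2^{-1}$ would force $B_2$ into a proper closed subset of $\bT'$ (indexed by the Weyl class of $s$). Consequently, condition \eqref{DD-gen} holds on an open dense subset of $\bT'\cap K^{\reg}$, completing the argument.
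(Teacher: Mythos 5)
Your overall mechanism is sound and genuinely different from the paper's. The paper picks $A_1\in\bT^{\reg}$ and $B_1$ a representative of a Coxeter element, so that the residual condition $T_1B_1T_1^{-1}=B_1$ forces $T_1\in\cZ(K)$ by the fixed-point property of Coxeter elements, and it extracts $\eta\in\cZ(K)$ from the pair $(A_2,\,B_2A_2B_2^{-1})$, which is invariant under the torus part of the action; the identity \eqref{F1} is what guarantees that $[A_1,B_1]$ can be placed in $\exp(\ri\cA)$. You instead impose a joint-centralizer condition on $(A_1,B_1)$ and a torus-transversality condition on $B_2$; your four stabilization equations and their sequential resolution are correct. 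Note that your condition $\bT\cap B_2\bT B_2^{-1}=\cZ(K)$ is essentially the paper's requirement that $B_2A_2B_2^{-1}$ be a regular element of a maximal torus $\bT'$ with $\bT\cap\bT'=\cZ(K)$, just phrased at the level of $B_2$ rather than of the conjugated element.

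However, as written there are two gaps in the existence of your point $y$. First, the assertion that $[A_1,B_1]\in K^{\reg}$ holds ``generically'' for $A_1\in\bT^{\reg}$, $B_1\in\bT'^{\reg}$ is not justified: the commutator map restricted to $\bT\times\bT'$ could a priori land entirely in the non-regular locus, and you give no argument that it does not. (The existence you actually need --- a pair with joint centralizer $\cZ(K)$ and regular commutator --- does follow from Lemma \ref{lem:Z-D}, since $\Phi_*^{-1}(K^{\reg})$ is dense and open in $\bD(K)_*$; alternatively one can use the paper's explicit Coxeter construction.) Second, and more seriously, your final paragraph does not prove the existence of $B_2\in\bT'\cap K^{\reg}$ with $\bT\cap B_2\bT B_2^{-1}=\cZ(K)$: the observation that $\Ad_{B_2}$ has no nonzero fixed vectors on $\ft'^{\perp}$ says nothing about $\ft\cap\Ad_{B_2}(\ft)$, the ``first-order analysis at $B_2=e$'' is carried out at a point where that intersection is all of $\ft$, and the dismissal of a residual discrete intersection is only sketched. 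Fortunately this difficulty is entirely self-imposed: your main argument never uses $B_2\in\bT'$ or $B_2\in K^{\reg}$, only the condition $\bT\cap B_2\bT B_2^{-1}=\cZ(K)$, and that is satisfied by any $B_2$ conjugating $\bT$ onto a maximal torus $\bT''$ with $\bT\cap\bT''=\cZ(K)$ (such tori exist by Remark \ref{rem:J4}). Dropping the superfluous constraints on $B_2$ repairs the proof.
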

\begin{proof}
Consider an initial value for which $[A_1,B_1]$ and $A_2$ are from $\exp(\ri \cA)\subset \bT^\reg$ \eqref{J8}.
On such initial value, an arbitrary element $(\eta, [T_1], T_2)$ of $K\times \bT^{\ad} \times \bT$ acts
according to
\be
(A_1, B_1, A_2, B_2) \mapsto \eta\cdot (T_1 A_1 T_1^{-1}, T_1 B_1 T_1^{-1}, A_2, B_2 T_2^{-1}).
\ee
We  can add the following further conditions on the initial value:
\begin{enumerate}
\item $A_1 \in \bT^\reg$;
\item $B_1$ is a representative of a Coxeter element of the Weyl group of the pair $(K,\bT)$;
\item $B_2 A_2 B_2^{-1}$ is a regular element from a maximal torus $\mathbb{T}'$ satisfying $\bT\cap \bT'= \cZ(K)$.
\end{enumerate}
The first two conditions imply that $[A_1,B_1] \in \bT$.
We note that $A_1\in \bT^\reg$ can be chosen so that $[A_1,B_1] \in \exp(\ri \cA)$
since all elements of $\bT$
can be written in the form $[A,B_1]$ with $A\in \bT$ and  $B_1$  in item (2)
(see the identity \eqref{F1}).
Now, if $(A_1,B_1, A_2,B_2)$ is fixed by $(\eta, [T_1], T_2)$, then $(A_2, B_2 A_2 B_2^{-1})$ is fixed by $\eta$,
and our choice guarantees that $\eta \in \cZ(K)$. After that, it is obvious that $T_2$ is the identity element and $T_1$
must satisfy
$T_1 B_1 T_1^{-1} = B_1$, which is equivalent to
\be
B_1 T_1 B_1^{-1} = T_1.
\ee
This implies that $T_1\in \cZ(K)$, since the fixed points for any Coxeter element acting on $\bT$
are given precisely by $\cZ(K)$ \cite{DW,Me}. In particular, the class $[T_1]\in \bT^{\ad}$ is the identity.

All in all, we have exhibited a point of $Y$ for which the $K \times (\bT^{\ad} \times \bT)$ isotropy group equals
$\cZ(K)\times \{e\}$,
which entails that this is the principal isotropy group since it is obviously contained in the isotropy group
of every element of $Y$.
\end{proof}

For the next case, introduce the basic Hamiltonians
\be  \widehat{H}^{2,\chi}(A_1,B_1,A_2, B_2) := \chi([A_2,B_2]), \quad \chi\in C^\infty(K)^K.
\label{DD10}
\ee
\begin{defn}\label{defn:DD2}
 Let $\widehat{\fH}$ denote the linear subspace of $C^\infty(M_{2,0})^K$ consisting of polynomials in the basic Hamiltonians
$\widehat{H}^{1,\chi}$ \eqref{DD2} and $\widehat{H}^{2,\chi}$ \eqref{DD10} where $\chi\in C^\infty(K)^K$.
 \end{defn}
Again, one can see that $\widehat{\fH}$ is an Abelian Poisson algebra: the flows associated with the Hamiltonians $\widehat{H}^{1,\chi}$
and  $\widehat{H}^{2,\chi}$ take the form \eqref{Z-D1} on the first and second copy of $\bD(K)$ in $M_{2,0}$, respectively.
As a consequence of Lemma \ref{lem:Z-D}, the following is a connected dense open submanifold of $M_{2,0}$:
\be
\widehat{Y}=\{(A_1,B_1,A_2,B_2) \in M_{2,0} \mid [A_1,B_1]\in K^\reg \text{ and }  [A_2,B_2]\in K^\reg\}\,.
\label{DD11}
\ee
The $K$-invariant map $(\widehat{H}_1^1,\dots, \widehat{H}_\ell^1)$ \eqref{DD3} is well-defined and smooth on $\widehat{Y}$.
Similarly, we have the smooth $K$-invariant mapping
\be
 (\widehat{H}_1^2,\dots, \widehat{H}_\ell^2): \widehat{Y} \to \bR^\ell,
\quad
\widehat{H}_j^2(A_1,B_1,A_2,B_2) := \Xi_j([A_2,B_2]),
\label{DD12}\ee
with $\Xi_j$ \eqref{LI5}. The next result is analogous to Lemma \ref{lem:DD1}.

\begin{lem}\label{lem:DD3}
The joint flows of the quasi-Hamiltonian vector fields of the components of
\be
\vec{\widehat{H}}:=
(\widehat{H}_1^1,\dots, \widehat{H}_\ell^1,\widehat{H}_1^2,\dots, \widehat{H}_\ell^2):
\widehat{Y} \to \bR^{2 \ell}
\label{DD13}
\ee
yield an effective action of the torus $\bT^{\ad}\times\bT^{\ad}$ on $\widehat{Y}$ \eqref{DD11}.
To display it, we use $T_{\omega^\vee}(\taU)$ \eqref{LI3},
and for any $\taU$  we  define $g_{\taU}^1=g_{\taU}^1(A_1,B_1)\in K$ as in \eqref{DD6} and
$g_{\taU}^2=g_{\taU}^2(A_2,B_2)\in K$ by the same formula  with $(A_2,B_2)$ in place of $(A_1,B_1)$.
Then, the action of $([T_{\omega^\vee}(\taU_1)],[T_{\omega^\vee}(\taU_2)])\in \bT^{\ad}\times \bT^{\ad}$ on
$\widehat{Y}$ is given by the mapping
\be
(A_1,B_1,A_2,B_2) \mapsto
(\Ad_{g_{\taU_1}^1}(A_1),\Ad_{g_{\taU_1}^1}(B_1), \Ad_{g_{\taU_2}^2}(A_2),\Ad_{g_{\taU_2}^2}(B_2)).
\label{DD14}
\ee
This torus action commutes with the $K$-action by simultaneous conjugation restricted on $\widehat{Y}$.
\end{lem}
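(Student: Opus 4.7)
The plan is to identify each basic Hamiltonian as a pull-back associated to one of the two $\bD(K)$ factors in the fusion $M_{2,0}=\bD(K)\circledast\bD(K)$, and then chain together Lemmas \ref{Lem:Fus} and \ref{lem:Wmomap}. Writing $\pi_i:M_{2,0}\to\bD(K)$ for the projection onto the $i$-th fusion factor and $\Phi_{\bD(K)}:(A,B)\mapsto [A,B]$ for the momentum map on each copy, we have $\widehat H^{i,\chi}=\pi_i^*(\chi\circ\Phi_{\bD(K)})$. Hence Lemma \ref{Lem:Fus} ensures that the flow of $V_{\widehat H^{1,\chi}}$ fixes the pair $(A_2,B_2)$ and that of $V_{\widehat H^{2,\chi}}$ fixes $(A_1,B_1)$, so the two blocks of flows commute automatically. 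Within a single block, Lemma \ref{lem:Wmomap} (equivalently the formula \eqref{Z-D1}) yields that $V_{\widehat H^{i,\chi}}$ acts on the $i$-th pair by simultaneous conjugation with $\exp(\tau\,\nabla\chi([A_i,B_i]))$, while keeping $[A_i,B_i]$ constant.

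Specialising $\chi=\Xi_j$ and invoking the derivative formula \eqref{LI6}, one gets $\nabla\Xi_j([A_i,B_i])=-\Gamma_2([A_i,B_i])^{-1}\ri\omega^\vee_j\Gamma_2([A_i,B_i])$. Since all the $\omega^\vee_j$ lie in the Cartan subalgebra $\ri\ft$, the resulting exponents commute; and since $[A_i,B_i]$ is constant along each such flow the conjugator $\Gamma_2([A_i,B_i])$ does not vary, so the $\widehat H^i_j$-flows for $j=1,\dots,\ell$ commute with each other and their composition with time vector $\taU_i$ is conjugation by $g^i_{\taU_i}=\Gamma_2([A_i,B_i])^{-1}T_{\omega^\vee}(\taU_i)\Gamma_2([A_i,B_i])$, which is exactly the formula \eqref{DD14}. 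By Remark \ref{rem:Tad}, $T_{\omega^\vee}(\taU_i)\in\cZ(K)$ iff $\taU_i\in(2\pi\bZ)^\ell$, and central elements act trivially by conjugation; therefore the $\bR^{2\ell}$-action descends to the advertised action of $\bT^{\ad}\times\bT^{\ad}$.

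For effectiveness, suppose $([T_{\omega^\vee}(\taU_1)],[T_{\omega^\vee}(\taU_2)])$ fixes every point of $\widehat Y$. Choosing an initial point in the spirit of the proof of Lemma \ref{lem:DD2}, where $A_i\in\bT^\reg$, $B_i$ is a representative of a Coxeter element of the Weyl group of $(K,\bT)$, and $[A_i,B_i]\in K^\reg$ (possible by the construction used there), the vanishing of the action forces $g^i_{\taU_i}$ to commute with both $A_i$ and $B_i$. Commutation with $A_i\in\bT^\reg$ places $g^i_{\taU_i}$ in $\bT$, and commutation with a Coxeter representative then confines it to $\cZ(K)$ by the standard fixed-point property of Coxeter elements (cf.\ Remark \ref{rem:J4}); consequently $T_{\omega^\vee}(\taU_i)\in\cZ(K)$ and both classes are trivial in $\bT^{\ad}$. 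The commutation of this torus action with the diagonal $K$-action is immediate from the $K$-invariance of $\widehat H^i_j$. The most delicate point in this plan is the effectiveness step, since it hinges on the non-trivial existence of initial data with the required regularity and Coxeter properties; however, this has already been established in Lemma \ref{lem:DD2} and transfers without change to the present setting.
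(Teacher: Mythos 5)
Your proposal is correct and follows essentially the same route the paper intends: the paper gives no explicit proof of this lemma, declaring it ``analogous to Lemma \ref{lem:DD1}'', and your argument is precisely the ``usual arguments'' being invoked there — Lemmas \ref{Lem:Fus} and \ref{lem:Wmomap} (equivalently \eqref{Z-D1}) for the block-diagonal flows, the derivative formula \eqref{LI6} together with Remark \ref{rem:Tad} to assemble them into the $\bT^{\ad}\times\bT^{\ad}$-action \eqref{DD14}, and $K$-invariance of the Hamiltonians for the commutation with the $K$-action. Your direct effectiveness check via a point with $A_i\in\bT^\reg$ and $B_i$ a Coxeter representative is sound and matches the configuration used in Lemma \ref{lem:DD2}/\ref{lem:DD4}; the paper's footnote alternatively derives effectiveness from the subsequent isotropy computation, but both are legitimate and equivalent here.
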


\begin{lem} \label{lem:DD4}
The principal isotropy group for the $K$-action on $\widehat{Y}$ \eqref{DD11} is $\cZ(K)$,
and the principal isotropy group for the combined action of
$\widehat \bG := K \times (\bT^{\ad}\times\bT^{\ad})$ is $\cZ(K) \times \{e\}$.
\end{lem}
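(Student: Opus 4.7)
The plan is to exhibit a \emph{single} point $y\in\widehat Y$ whose $K$-isotropy equals $\cZ(K)$ and whose $\widehat\bG$-isotropy equals $\cZ(K)\times\{e\}$; since these subgroups always lie in the corresponding isotropy groups (the first because $\cZ(K)$ acts trivially by conjugation on $M_{2,0}$, the second also because $\{e\}\subset\bT^{\ad}\times\bT^{\ad}$ acts trivially), this will establish both assertions. The overall strategy will parallel Lemma \ref{lem:DD2}, the new feature being that both $\bT^{\ad}$-factors now act by (twisted) conjugation on their respective copy of $\bD(K)$, so the two factors must be treated symmetrically using a pair of transverse maximal tori rather than only one.

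First I will choose, as in Remark \ref{rem:J4}, maximal tori $\bT,\bT'\subset K$ with $\bT\cap\bT'=\cZ(K)$, together with representatives $B_1\in N_K(\bT)$ and $B_2\in N_K(\bT')$ of Coxeter elements of $W(K,\bT)$ and $W(K,\bT')$ respectively. By identity \eqref{F1}, equivalently the invertibility of $1-w_\ast$ on each Cartan subalgebra (Remark \ref{rem:J4}), the map $A\mapsto [A,B_1]$ is surjective $\bT\to\bT$; combined with the density of $\bT^\reg$ in $\bT$, I can select $A_1\in\bT^\reg$ with $[A_1,B_1]\in\exp(\ri\cA)\subset\bT^\reg$, and symmetrically $A_2\in(\bT')^\reg$ with $[A_2,B_2]\in(\bT')^\reg$. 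The resulting quadruple $y:=(A_1,B_1,A_2,B_2)$ lies in $\widehat Y$.

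To verify the isotropy, I will suppose that $(\eta,[T_{\omega^\vee}(\taU_1)],[T_{\omega^\vee}(\taU_2)])\in\widehat\bG$ fixes $y$, set $T_i:=T_{\omega^\vee}(\taU_i)\in\bT$, and note that $[A_1,B_1]\in\exp(\ri\cA)$ lets me take $\Gamma_2([A_1,B_1])=e$, so $g^1_{\taU_1}=T_1\in\bT$. Writing $h:=\Gamma_2([A_2,B_2])$, regularity of $[A_2,B_2]$ in $\bT'$ will force $h\bT'h^{-1}=\bT$, hence $g^2_{\taU_2}=h^{-1}T_2h\in\bT'$. The fixed-point conditions $(\eta T_1)A_1(\eta T_1)^{-1}=A_1$ and $(\eta T_1)B_1(\eta T_1)^{-1}=B_1$ on the first factor, combined with $A_1\in\bT^\reg$ and the Coxeter-fixed-point statement in Remark \ref{rem:J4}, will give $\eta T_1\in\cZ(K)$; the analogous argument on the second factor (carried out inside $\bT'$) will produce $\eta g^2_{\taU_2}\in\cZ(K)$. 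Intersecting the resulting memberships $\eta\in\bT$ and $\eta\in\bT'$ yields $\eta\in\cZ(K)$, whence $T_1\in\cZ(K)$ and, since conjugation preserves $\cZ(K)$, also $T_2\in\cZ(K)$, so both $[T_1]$ and $[T_2]$ are trivial in $\bT^{\ad}$. This will give $\widehat\bG_y=\cZ(K)\times\{e\}$, and setting $T_1=T_2=e$ in the same argument yields $K_y=\cZ(K)$. The main subtlety is the identity $h\bT'h^{-1}=\bT$: it is what makes the second $\bT^{\ad}$-action at $y$ behave as conjugation by an element of $\bT'$ and thereby enables the symmetric Coxeter-fixed-point argument on the second factor.
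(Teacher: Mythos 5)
Your proposal is correct and follows essentially the same strategy as the paper: exhibit a single point built from two maximal tori $\bT,\bT'$ with $\bT\cap\bT'=\cZ(K)$, regular elements $A_1\in\bT^\reg$, $A_2\in(\bT')^\reg$, and Coxeter representatives $B_1,B_2$, then use the regularity to pin $\eta$ into $\bT\cap\bT'$ and the Coxeter fixed-point property to kill the torus parameters. The only (immaterial) differences are that the paper normalizes $[A_2,B_2]\in g\exp(\ri\cA)g^{-1}$ from the outset instead of deriving $h\bT'h^{-1}=\bT$ from regularity as you do, and that it first concludes $\eta\in\cZ(K)$ from the $A$-components before treating the $B$-components, whereas you deduce $\eta T_1,\eta g^2_{\taU_2}\in\cZ(K)$ jointly and then separate.
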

\begin{proof}
Fix two maximal tori of $K$ for which $\bT \cap \bT' = \cZ(K)$ and an element $g\in K$ that verifies $g \bT g^{-1} = \bT'$.
Then,  choose a point of $\widehat{Y}$ satisfying:
\begin{enumerate}
\item $[A_1,B_1] \in \exp(\ri \cA)\subset \bT^\reg$  with the Weyl alcove in \eqref{J8} and $[A_2,B_2] \in g \exp(\ri \cA) g^{-1}$;
\item $A_1$ in $\bT^\reg$ and $B_1$ is a representative of a Coxeter element of the pair $(K, \bT)$;
\item $A_2 \in \bT'^\reg$ and $B_2$ is a representative of a Coxeter element of the pair $(K, \bT')$.
\end{enumerate}
Taking the chosen point as initial value,
an arbitrary element  $(\eta,[T_1],[T_2])\in K\times (\bT^{\ad}\times \bT^{\ad})$  acts as
\be
(A_1,B_1,A_2,B_2) \mapsto \eta \cdot (T_1 A_1 T_1^{-1}, T_1 B_1 T_1^{-1}, T_2' A_2 T_2'^{-1}, T_2' B_2 T_2'^{-1}),
\label{DD15}
\ee
where $T_2' = g T_2 g^{-1}$ with $T_1, T_2 \in \bT$.
Due to our assumptions on the chosen point, \eqref{DD15} reduces to
 \be
(A_1,B_1,A_2,B_2) \mapsto  (\eta A_1 \eta^{-1}, \eta T_1 B_1 T_1^{-1}\eta^{-1}, \eta A_2 \eta^{-1}, \eta T_2' B_2 T_2'^{-1} \eta^{-1}).
\label{DD15+}\ee
Thus the combined action of $K \times (\bT^\ad \times \bT^\ad)$ fixes our chosen point only for $\eta \in \cZ(K)$.
Putting this back into \eqref{DD15+}, the condition for the isotropy group becomes
\be
 T_1 B_1 T_1^{-1} \ = B_1
\quad\hbox{and}\quad
  T_2' B_2 T_2'^{-1}  = B_2.
\ee
As in the end of the proof of Lemma \ref{lem:DD2}, we conclude that $T_1\in \cZ(K)$ and $T_2' \in \cZ(K)$; the latter condition is
equivalent to $T_2 \in \cZ(K)$. Hence, both classes $[T_1],[T_2]\in \bT^{\ad}=\bT/\cZ(K)$ are the identity, as desired.
\end{proof}

We are now in the position to derive the next result. Its proof can be deduced from the previous lemmas by applying our standard method.
\begin{prop} \label{pr:DDmain}
Theorem \ref{thm:G12} is applicable to  the Abelian Poisson algebra $\fH$ introduced in Definition \ref{defn:DD1} with
$M=M_{2,0}$, $Y\subset M$ \eqref{DD2bis}, $\vec{H}$ \eqref{DD5} and
$\Gone=K$, $\Gtwo=\bT^{\ad} \times \bT$ acting as described above.
Consequently,  $\fH$  engenders integrable systems of rank $2\ell$ whose action variables arise from $\vec{H}$.
Moreover, the analogous statements hold if
we replace $\fH, Y, \vec{H}$ and $\bT^{\ad} \times \bT$ by
$\widehat{\fH}$ from Definition \ref{defn:DD2}, $\widehat{Y}$ \eqref{DD11},  $\vec{\widehat{H}}$ \eqref{DD13}
and $\bT^{\ad} \times \bT^{\ad}$, respectively.
\end{prop}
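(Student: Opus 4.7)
The plan is to verify the four conditions of Scenario \ref{scen:G11} for each of the two setups, and then invoke Theorem \ref{thm:G12} directly, exactly as in the proofs of Propositions \ref{pr:J10}, \ref{pr:H3}, \ref{pr:Zsphere}, and \ref{pr:Z2}. Lemmas \ref{lem:DD1}--\ref{lem:DD4} do most of the heavy lifting, so what remains is largely assembly.

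Consider first $(\fH, Y, \vec H, K \times (\bT^{\ad} \times \bT))$. For condition \ref{itScen:1}, the generators $\widehat H^{1,\chi}$ and $H^{2,\chi}$ of $\fH$ are $K$-invariant by construction, and their quasi-Hamiltonian flows are complete: by Lemma \ref{Lem:Fus} the flow of $H^{2,\chi}$ acts trivially on the first $\bD(K)$-factor and on the second it is given by formula \eqref{Z6}, while by the same lemma combined with Lemma \ref{lem:Wmomap} the flow of $\widehat H^{1,\chi}$ acts trivially on the second factor and on the first is given by \eqref{Z-D1}. Distinct generators pairwise commute: those attached to different factors commute by Lemma \ref{Lem:Fus}; the $H^{2,\chi}$ commute among themselves by Proposition \ref{pr:Z2}; and the $\widehat H^{1,\chi}$ commute because, as noted just after \eqref{W4}, functions of the form $\chi\circ \Phi$ with $\chi \in C^\infty(K)^K$ lie in the center of $C^\infty(\bD(K))^K$. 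Hence $\fH$ is an Abelian Poisson algebra. Condition \ref{itScen:2} then follows from Lemma \ref{lem:DD1}: the compactness of $\bT^{\ad}\times\bT$ trivializes properness, effectiveness is asserted in the lemma, and functional independence of the $2\ell$ components of $\vec H$ at every point of $Y$ is a consequence of the explicit derivatives $\nabla \chi_j(g) = -\Gamma_2(g)^{-1}\ri h_{\alpha_j}\Gamma_2(g)$ (Lemma \ref{lem:J3}) and $\nabla \Xi_j(g) = -\Gamma_2(g)^{-1}\ri \omega^\vee_j\Gamma_2(g)$ (equation \eqref{LI6}), combined with the fact that $(h_{\alpha_j})_{j=1}^\ell$ and $(\omega^\vee_j)_{j=1}^\ell$ each span $\ft^\bC$ and that the two families of Hamiltonians differentiate in independent cotangent directions.

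Condition \ref{itScen:3} follows from Lemma \ref{lem:J2}: both $(\chi_1,\dots,\chi_\ell)$ and $(\Xi_1,\dots,\Xi_\ell)$ provide global smooth coordinates on $K^\reg/K \cong \cA$, so every $\chi \in C^\infty(K)^K$ restricted to $K^\reg$ is a smooth function of either family; consequently every element of $\fH$ restricted to $Y$ is a smooth function of the components of $\vec H$, and the matching of spans of exterior derivatives at each point of $Y$ is immediate from the independence statement above. Condition \ref{itScen:4} is precisely Lemma \ref{lem:DD2}, and the needed connectedness of $Y$ was recorded just after \eqref{DD2bis}. The second setup $(\widehat\fH, \widehat Y, \vec{\widehat H}, K \times (\bT^{\ad}\times\bT^{\ad}))$ is verified identically line by line, with Lemmas \ref{lem:DD3}--\ref{lem:DD4} replacing Lemmas \ref{lem:DD1}--\ref{lem:DD2} and the $(\chi_j)$-family of coordinates on the second $K^\reg$ replaced by the $(\Xi_j)$-family.

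There is no serious obstacle in this argument: the compactness of the generalized tori trivializes properness, Lemma \ref{Lem:Fus} cleanly decouples the two $\bD(K)$-factors at the level of flows, and the delicate isotropy analysis has already been packaged in Lemmas \ref{lem:DD2} and \ref{lem:DD4}; the only mildly non-trivial point is keeping track of which coordinate system ($\chi_j$ versus $\Xi_j$) is used on each factor, dictated by whether that factor's Hamiltonians act by conjugation or by right multiplication. With Scenario \ref{scen:G11} fully verified, Theorem \ref{thm:G12} delivers integrable systems of rank $2\ell$ on dense open subsets of $M_{2,0}/K$, with $\vec H$ and $\vec{\widehat H}$ serving as generalized action variables.
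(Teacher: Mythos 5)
Your overall strategy --- verify the four conditions of Scenario \ref{scen:G11} by means of Lemmas \ref{lem:DD1}--\ref{lem:DD4} and then invoke Theorem \ref{thm:G12} --- is exactly the paper's; its own proof is a one-line pointer to ``the standard method'', so your write-up is in fact more detailed than the original. Your verification of condition \ref{itScen:1}, of condition \ref{itScen:4} via Lemmas \ref{lem:DD2} and \ref{lem:DD4}, and of properness, effectiveness and commutativity is fine.

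There is, however, one step whose justification does not go through as written: the claim that the components of $\vec{H}$ are independent at \emph{every} point of $Y$, as condition \ref{itScen:2} literally demands. For the family $H_j^2=\chi_j(A_2)$ the argument via $\nabla\chi_j$ is correct, since these are differentials of invariant functions of $A_2$ itself, evaluated at a regular point. For $\widehat{H}_j^1=\Xi_j([A_1,B_1])$ the independence of the $\nabla\Xi_j$ on $K^\reg$ is not enough: one must also know that the commutator map $\Phi_1(A_1,B_1)=[A_1,B_1]$ is transverse to the level sets of $\underline{\chi}$ at every point of $\Phi_1^{-1}(K^\reg)$, and this fails wherever the common centralizer $\mathfrak{z}(A_1)\cap\mathfrak{z}(B_1)$ is positive dimensional: that subalgebra annihilates the image of $d\Phi_1$ (after right translation), and when $[A_1,B_1]$ is regular it lies in the Cartan subalgebra spanned by the $\nabla\Xi_j([A_1,B_1])$, which produces a vanishing nontrivial linear combination of the $d\widehat{H}_j^1$. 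Such points do occur in $Y$ \eqref{DD2bis}: for $K=\mathrm{SU}(3)$ take $A',B'\in\mathrm{SU}(2)$ with $[A',B']=\mathrm{diag}(\ri,-\ri)$ (possible since the commutator map of $\mathrm{SU}(2)$ is surjective) and set $A_1=\mathrm{diag}(A',1)$, $B_1=\mathrm{diag}(B',1)$; then $[A_1,B_1]=\mathrm{diag}(\ri,-\ri,1)\in K^\reg$ while $\mathrm{diag}(\ri,\ri,-2\ri)$ centralizes both $A_1$ and $B_1$. So the independence asserted in your second paragraph fails on a (closed, nowhere dense) subset of $Y$, and a fortiori the same issue affects $\widehat{Y}$, where both families are of commutator type. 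This subtlety is equally unaddressed by the paper's one-line proof; a repair would shrink $Y$ to the locus where $(A_1,B_1)$ has discrete common centralizer (e.g.\ intersect with $\bD(K)_*\times \bD(K)$, where $\Phi_1$ restricted to the first factor is a submersion by the argument of Appendix \ref{App:conn}), after which one must recheck connectedness and the principal isotropy computation on the smaller set --- none of which appears in your argument.
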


\begin{rem} \label{Rem:gen2}
There exist other non-equivalent  choices of the initial Abelian Poisson subalgebra of $C^\infty(M_{2,0})^K$ to which our
formalism is directly  applicable.
Firstly, we may start with $\fH'$ generated by all Hamiltonians of the form
\be
H^{1,\chi}(A_1,B_1,A_2,B_2) := \chi(A_1)
\quad
\hbox{and}\quad H^{2,\chi}(A_1,B_1,A_2,B_2) := \chi(A_2),\quad \forall \chi\in C^\infty(K)^K.
\label{DD2mod}
\ee
Secondly, we may begin either with $\fH \cap \fH'$ or  with $\fH\cap \widehat \fH $ using
$\fH$ and $\widehat \fH$ from Definition \ref{defn:DD1} and \ref{defn:DD2}.
Incidentally, these choices correspond to `refinements' in the sense of \cite{ARe}.
For all these  examples, and to equivalent ones obtained by permuting the two factors of $\bD(K) \circledast \bD(K)$, our Scenario \ref{scen:G11} is
straightforwardly applicable.

One may also examine the larger Abelian subalgebra  of $C^\infty(M_{2,0})^K$ generated
by  the above $\fH'$  and $\widehat \fH$
taken together.  However,  we have not yet studied the connectedness of the open subspace where $A_1$, $A_2$,
$[A_1,B_1]$ and $[A_2, B_2]$ all belong to $K^\reg$,
and leave it to the interested reader to explore this example.
\end{rem}

\subsection{A general class of integrable systems}  \label{ss:genModuli}
Building on the ideas utilized in the preceding examples, we outline
a more general case to which our method is directly applicable.
We begin with the quasi-Poisson manifold $M_{m, n}$ \eqref{W12}  with
non-negative integers $m,n$. We assume that in the $m=0$ case $n\geq 3$.
The relevant bivector, $P_{m,n}$, results from the fusion procedure.
We write the elements of $M_{m,n}$ as
$X = (A_1, B_1, \dots, A_m, B_m, C_1,\dots, C_n)$, like before.

\noindent Now we specify an Abelian Poisson subalgebra of $C^\infty(M_{m,n})^K$.
First, we select disjoint subsets
\be
I,\widehat{I} \subset \{1,2,\dots, m\}, \, I:=\{i_1, i_2,\dots, i_p\}, \, \widehat{I}:=\{j_1, j_2,\dots, j_{\widehat{p}}\},\,
I \cap \widehat{I} = \emptyset,
\label{Z26}\ee
where  $i_1 < i_2 < \dots < i_p$ and $j_1 < j_2 < \dots < j_{\widehat{p}}$. We associate with the set $I$ the Hamiltonians of the form
\be
H_{i_\alpha}^\chi(X) := \chi(A_{i_\alpha}), \quad \forall \chi\in C^\infty(K)^K,\, \forall \alpha=1,\dots, p,
\label{Z27}\ee
and with $\widehat{I}$ the Hamiltonians defined from group commutators as
\be
\widehat{H}_{j_\gamma}^\chi(X) := \chi([A_{j_\gamma},B_{j_\gamma}]), \quad \forall \chi\in C^\infty(K)^K,\, \forall \gamma=1,\dots, \widehat{p}.
\label{Z27bar}\ee
Finally, we also fix a set of non-intersecting closed intervals
\be
J:= \{ [\lambda_1, \rho_1], [\lambda_2, \rho_2], \dots, [\lambda_q, \rho_q] \},
\label{Z28} \ee
 such that the boundaries of the intervals are integers satisfying
 \be
 1\leq \lambda_1 < \rho_1 < \lambda_2 < \rho_2 <\dots < \lambda_q < \rho_q \leq n.
 \label{Z29}\ee
 We associate with $J$ the Hamiltonians
 \be
 H_{[\lambda_\beta, \rho_\beta]}^\chi(X) := \chi(C_{\lambda_\beta} C_{\lambda_\beta + 1}\cdots C_{\rho_\beta}),
 \quad \forall \chi \in C^\infty(K)^K,\,\, \forall \beta =1,\dots, q.
 \label{Z30}\ee

 Non-empty sets of the type  $I,\widehat{I}$ and $J$ exists for generic $m,n$. For specific cases, e.g. for $m=0$ or $n=0$, only one of
 the types of sets exists.
If $m=0$, we shall further assume that $\cup_{\beta=1}^q \{\lambda_\beta,\ldots, \rho_\beta\} \subsetneq \{1,\ldots,n\}$.
If $m=1$ and $n=0$, we assume that $\widehat{I}=\emptyset$
(so as to avoid Casimir functions after reduction, see Remark \ref{Rem:TrivPhi}).

 \begin{defn}\label{defn:Z7}
 Let $\fH(I,\widehat{I},J)$ denote the linear subspace of $C^\infty(M_{m,n})^K$ generated by arbitrary finite products and sums of the Hamiltonians
 \eqref{Z27}, \eqref{Z27bar} and \eqref{Z30}.
 \end{defn}

The integral curves of the quasi-Hamiltonian vector fields of the above  Hamiltonians can be calculated.

\begin{lem}\label{lem:Z8}
Let $(a_1, b_1,\dots, a_m, b_m, c_1, \dots, c_n)$ be an arbitrary initial value, and denote by
\begin{equation*}
 X(\tau) = (A_1(\tau), B_1(\tau), \dots, A_m(\tau), B_m(\tau), C_1(\tau),\dots, C_n(\tau))
\end{equation*}
the integral curve corresponding to
one
of the Hamiltonian from \eqref{Z27}, \eqref{Z27bar} or from \eqref{Z30}.
For any Hamiltonian $H_{i_\alpha}^\chi$ \eqref{Z27}, we have
\be
B_{i_\alpha}(\tau) = b_{i_\alpha} \exp(-\tau \nabla\chi(a_{i_\alpha}))
\label{Z31}\ee
and all other components of $X(\tau)$ are constants.
For any Hamiltonian $\widehat{H}_{j_\gamma}^\chi$ \eqref{Z27bar}, we have
\begin{equation}
 \begin{aligned}
   A_{j_\gamma}(\tau) &=
\exp(\tau \nabla \chi(\Phi_{j_\gamma})) a_{j_\gamma} \exp(-\tau \nabla \chi(\Phi_{j_\gamma})) , \\
 B_{j_\gamma}(\tau) &=
 \exp( \tau \nabla \chi(\Phi_{j_\gamma}))b_{j_\gamma} \exp(-\tau \nabla \chi(\Phi_{j_\gamma})),
\label{Z31bar}
 \end{aligned}
\end{equation}
for $\Phi_{j_\gamma}:=[a_{j_\gamma},b_{j_\gamma}]$, and all other components of $X(\tau)$ are constants.
For any Hamiltonian $H_{[\lambda_\beta, \rho_\beta]}^\chi$ \eqref{Z30}, we have
\be
C_k(\tau) = \exp(\tau \nabla \chi(  c_{\lambda_\beta} c_{\lambda_\beta + 1}\cdots c_{\rho_\beta})) c_k
\exp(-\tau \nabla \chi(  c_{\lambda_\beta} c_{\lambda_\beta + 1}\cdots c_{\rho_\beta})),\quad
\forall \lambda_\beta \leq k \leq \rho_\beta,
\label{Z32}\ee
and all other components of $X(\tau)$ are constants.
 \end{lem}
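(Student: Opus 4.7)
The plan is to reduce each of the three formulas to a single ``block'' of consecutive factors inside the fusion product $M_{m,n}$, and then compute the integral curve on that block using the already-established one-block formulae. Concretely, the block will be the $i_\alpha$-th copy of $\bD(K)$ for $H_{i_\alpha}^\chi$, the $j_\gamma$-th copy of $\bD(K)$ for $\widehat H_{j_\gamma}^\chi$, and the fusion sub-product $K\circledast\cdots\circledast K$ of the factors $C_{\lambda_\beta},\ldots,C_{\rho_\beta}$ for $H_{[\lambda_\beta,\rho_\beta]}^\chi$. In each case the Hamiltonian is manifestly a $K$-invariant function pulled back from the block through the natural projection.

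Using the associativity of fusion (and, when necessary, the equivalence $M_1\circledast M_2\sim M_2\circledast M_1$ from \cite[Prop.~5.7]{AKSM}), I would first regroup $M_{m,n}$ as a binary fusion $M_{\text{block}}\circledast M_{\text{rest}}$. Lemma \ref{Lem:Fus} then immediately implies that the quasi-Hamiltonian vector field preserves every factor sitting inside $M_{\text{rest}}$, so only the components belonging to the block can evolve; this disposes of the ``all other components are constants'' part of each claim, and it remains to compute the flow on the block.

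For $H_{i_\alpha}^\chi$ this reduces to the single-$\bD(K)$ case with the Hamiltonian $p_1^*\chi$ of \eqref{Z5}, whose integral curve is \eqref{Z6}, yielding \eqref{Z31}. For $\widehat H_{j_\gamma}^\chi$ the Hamiltonian on the $j_\gamma$-th $\bD(K)$ equals $\Phi^*\chi$, so Lemma \ref{lem:Wmomap} (equivalently formula \eqref{Z-D1} of Remark \ref{Rem:TrivPhi}) gives the simultaneous conjugation of $(A_{j_\gamma},B_{j_\gamma})$ displayed in \eqref{Z31bar}. For $H_{[\lambda_\beta,\rho_\beta]}^\chi$ the block is itself a fused product of $K$-factors whose momentum map, by the definition of fusion recalled in Subsection \ref{ss:qPprelim}, is precisely $C_{\lambda_\beta}\cdots C_{\rho_\beta}$; thus on the block the Hamiltonian equals $\Phi^*\chi$, and Lemma \ref{lem:Wmomap} tells me the flow is the diagonal $K$-action by $\exp(\tau\,\nabla\chi(c_{\lambda_\beta}\cdots c_{\rho_\beta}))$, which on each $K$-factor with its conjugation action reproduces \eqref{Z32}. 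The main obstacle I anticipate is not in these concluding computations but in the bookkeeping underlying the block argument: Lemma \ref{Lem:Fus} is phrased for a binary fusion, so one must carefully invoke the associativity and the fusion-commutativity automorphism of \cite{AKSM} to bring an arbitrary block into the ``$M_1$''-slot of that lemma without altering the quasi-Hamiltonian flow of any $K$-invariant function in play.
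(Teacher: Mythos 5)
Your proposal is correct and follows essentially the same route as the paper: regroup the fusion product so that the Hamiltonian depends only on one block, invoke Lemma \ref{Lem:Fus} (with associativity of $\circledast$) to freeze all other components, and then compute the flow on the block via \eqref{Z6} for $H_{i_\alpha}^\chi$ and via Lemma \ref{lem:Wmomap} for the momentum-map Hamiltonians $\widehat{H}_{j_\gamma}^\chi$ and $H_{[\lambda_\beta,\rho_\beta]}^\chi$. The only cosmetic difference is that the paper does not need the fusion-commutativity automorphism you worry about: since Lemma \ref{Lem:Fus} covers functions pulled back from either factor, associativity alone suffices to place the block appropriately.
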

 \begin{proof}
 Consider the fusion product of arbitrary quasi-Poisson $K$-manifolds $(M_j, P_{M_j})$ for $j=1,\dots, N$,
 \be
 M= M_1 \circledast \dots \circledast M_N,
 \ee
 and a function $H \in C^\infty(M)^K$ that depends only on the factor $M_i$ of $M$.
Then, the $i$-th component of the corresponding integral curve in $(M, P_M)$ is the same as
the one obtained in  $(M_i, P_{M_i})$, and the other components remain constant.
This directly follows from Lemma \ref{Lem:Fus}.
By combining this observation with the associativity of the fusion product, we easily deduce \eqref{Z31} from the integral curves \eqref{Z6} in $\bD(K)$.
Similarly, noting that $[A_{j_\gamma},B_{j_\gamma}]$ and $C_{\lambda_\beta} C_{\lambda_\beta + 1}\cdots C_{\rho_\beta}$ can be seen as momentum
 maps on the corresponding copies of $\bD(K)$ and $K^{\circledast(\rho_\beta-\lambda_\beta+1)}$ inside $M_{m,n}$,
\eqref{Z31bar} and \eqref{Z32} follow from Lemma \ref{lem:Wmomap}.
 \end{proof}

 \begin{rem}\label{rem:Z9}
One can see from Lemma \ref{lem:Z8} that $\fH(I,\widehat{I},J)$ is an Abelian Poisson algebra.
 \end{rem}

For fixed sets $I,\widehat{I}$ and $J$, let
\be
Y(I,\widehat{I},J) \subset M_{m,n}
\label{Z33}\ee
be the dense open submanifold where the arguments of all elements of $\fH(I,\widehat{I},J)$
belong to $K^\reg$.  Note that $Y(I,\widehat{I},J)$ is connected  since as a manifold
\be
Y(I,\widehat{I},J) \simeq K^{s}\times (K^\reg)^{p+q} \times (\Phi_{1,0}^{-1}(K^\reg))^{\widehat{p}}, \quad s:= 2m+n-p-2\widehat{p}-q,
\ee
where $K^\reg$ and $\Phi_{1,0}^{-1}(K^\reg)$ (cf. Lemma \ref{lem:Z-D}) are connected.
Then, define the $K$-invariant map\footnote{If $I$, $\hat I$ or $J$  are empty, then $p=0$, $\widehat{p}=0$ or $q=0$ are understood, respectively.}
\be
\vec{H} : Y(I,\widehat{I},J) \to  \bR^{p \ell+\widehat{p} \ell + q \ell}
\label{Z34}\ee
by utilizing \eqref{Z27} with $\chi_j \in C^\infty(K^\reg)^K$ \eqref{J12}, $j=1,\dots, \ell$,
and \eqref{Z27bar},\eqref{Z30} with $\Xi_j \in C^\infty(K^\reg)^K$ \eqref{LI5}, $j=1,\dots, \ell$, as follows:
The first $p\ell$ components of $\vec{H}$ are the functions
\be
 H_{i_1}^{\chi_1}, \dots,  H_{i_1}^{\chi_\ell}; H_{i_2}^{\chi_1}, \dots,  H_{i_2}^{\chi_\ell} ;\dots ,
 H_{i_p}^{\chi_1}, \dots,  H_{i_p}^{\chi_\ell};
 \label{Z35}\ee
 the next $\widehat{p}\ell$ components are
 \be
 \widehat{H}_{j_1}^{\Xi_1}, \dots,   \widehat{H}_{j_1}^{\Xi_\ell} ; \dots ;
  \widehat{H}_{j_{\widehat{p}}}^{\Xi_1}, \dots,   \widehat{H}_{j_{\widehat{p}}}^{\Xi_\ell} ;
 \label{Z35b}\ee
and the last $q\ell$  components are
\be
H_{[\lambda_1, \rho_1]}^{\Xi_1}, \dots, H_{[\lambda_1, \rho_1]}^{\Xi_\ell}; \dots ;
  H_{[\lambda_q, \rho_q]}^{\Xi_1}, \dots, H_{[\lambda_q, \rho_q]}^{\Xi_\ell}.
 \label{Z35c}\ee

Similarly to the previous examples discussed in detail, we can easily write down
 the formula for the joint flows generated by these Hamiltonians (cf. Lemmas \ref{lem:Z1}, \ref{lem:DD1} and \ref{lem:Z5ad}),
  which yield an action of the torus $\bT^{p}\times (\bT^{\ad})^{\widehat{p}} \times (\bT^{\ad})^q$, of dimension $(p+\widehat{p}+q)\ell$,  on $Y(I,\widehat{I},J)$.
Then, we can  apply our general mechanism for obtaining integrable systems based on the Abelian Poisson algebra
$\fH(I, \widehat{I}, J)$.
This is elaborated in  Appendix \ref{App:Mmn}, where the  final outcome is formulated as Proposition \ref{pr:Z11}.

\begin{rem}
On generic symplectic leaves,
 the systems obtained in Subsection \ref{ss:Torus1} for the torus with one hole represent compact variants of
  trigonometric spin Ruijsenaars--Schneider systems.
  See also \cite{FK2,FKlu} for spinless special cases on small symplectic leaves
  that arise for $K=\mathrm{SU}(n)$.
  For the sphere with four holes in Subsection \ref{ss:Sphere4},
   we expect that these systems give compactified trigonometric spin van Diejen systems, in analogy with
   their holomorphic counterpart \cite{CR}.
The cases with $(m,n)\neq(1,0),(0,3)$ yield generalizations of  the preceding models, which have not yet been related
to many-body systems.
\end{rem}

\section{Conclusion and outlook} \label{S:concl}

In this paper, we  developed a general  mechanism for producing integrable systems by reduction of a master integrable system
engendered by an Abelian Poisson  algebra $\fH$ on a higher dimensional phase space $(M,P_M)$.
 Our reduction mechanism relies on a smooth, proper action  of the direct product Lie group $ G^1 \times G^2$ on a dense open submanifold $Y\subset M$,
where $\Gone$ is a (compact) symmetry group of the unreduced system and the action  of
 $\Gtwo = \mathrm{U}(1)^{\ell_1}\times \bR^{\ell_2}$
 is generated by
`generalized action variables' $H_1,\dots, H_\ell$ defined on $Y$.
We formulated a set of convenient assumptions  in Scenario \ref{scen:G11} and then were able to prove  that $\fH$
induces integrable systems of rank $\ell$ on the Poisson manifold $M_\ast/\Gone$,
where $M_\ast\subset M$ is the principal isotropy type submanifold for the $G^1$-action on $M$,
as well as on symplectic leaves as specified  in Theorem \ref{thm:G12}.
Moreover, the functions $H_1,\dots, H_\ell$ descend to generalized action variables for the induced integrable systems.

Our main finding outlined above was  supported by an analysis of reductions of generalized Hamiltonian torus actions, summarized by Theorem \ref{thm:G7}
and its consequences formulated as Corollaries \ref{cor:G8} and \ref{cor:G9}.  Further general results that could be of interest
to the reader are given by Theorem \ref{thm:AA} and
Corollary \ref{Cor:AA} dealing with generalized action-angle coordinates.

The assumptions of Scenario \ref{scen:G11} were extracted originally from an inspection of examples.
As applications of the ensuing general results,  we here presented a  uniform  treatment of the reductions of
the different doubles of the compact  Lie groups $K$ for
which the reduced integrable systems have been identified  before as spin many-body systems of Calogero--Moser--Sutherland  or Ruijsenaars--Schneider
 type (see \cite{Fe23} and references therein).
Such systems usually come in \emph{dual pairs}, and the study of the one for which   $G^2= \bR^\ell$
was up-to-now incomplete, regarding especially the case of the Heisenberg double.
We fixed this issue as part of Section \ref{Sec:Heis}, see Proposition \ref{pr:H3}.
We also considerably strengthened  the existing results on the reductions of doubles, since previously integrability was proved either on $M_*/K$
or on  generic symplectic leaves \cite{FeCONF,Fe24,Re1,Re2}, while the application of Theorem \ref{thm:G12} gives much more.
  To our knowledge,  the identification of generalized action variables for these systems is a new result as well.

The general mechanism
was also applied in Section \ref{Sec:qPoiss} to discuss, using quasi-Poisson geometry~\cite{AKSM}, integrable systems
 on moduli spaces of flat connections (viewed as character varieties)
 for a connected and simply connected compact Lie group $K$.
We were inspired by the treatment of the complex-algebraic version due to Arthamonov and Reshetikhin \cite{ARe}.
The authors of \cite{ARe} presented a beautiful construction of  the Poisson algebras of Hamiltonians and first integrals
for  integrable systems associated with non-intersecting simple closed  curves on the corresponding
Riemann surface with boundary.  This is in principle applicable to complex as well as to compact Lie groups.
However, the description of the phase space   and the integral curves
was achieved in \cite{ARe}  by adopting the Fock--Rosly \cite{FoR} approach based on a \emph{factorizable} classical $r$-matrix
(see footnote~\ref{ft:factor}), which is not applicable directly to compact Lie groups
since compact Lie groups do not support factorizable $r$-matrices \cite{KS}.
The quasi-Poisson approach  that we applied led to a straightforward description in a number of interesting special cases, and in those
cases we also derived action variables for the relevant integrable systems.
In particular, in separate Subsections of Section~\ref{Sec:qPoiss} we characterized integrable systems  corresponding to the sphere with $4$ holes,
and the genus one and the genus two surfaces with one hole.
In the final Subsection and Appendix \ref{App:Mmn},  we dealt with  a  rich set of examples for arbitrary genus and number of holes,   
see Proposition \ref{pr:Z11}.

Let us end with an outlook on possible future research.
First of all, our method should be applicable to further develop the reduction picture behind  various spin Calogero--Moser--Sutherland
type models that were introduced in the literature, for example in the papers \cite{Fe13,FP1,KLOZ,Re3,Re4}.
The construction of some of these models uses a non-compact symmetry group \cite{Fe13,Re3}.
As was already noted at the end of the Introduction, the compactness of $G^1$ is not essential for our method: what is crucial is that the
($G^1 \times G^2$)-action must be proper.

In all the examples considered in the present paper, one has either
$\Gtwo = \mathrm{U}(1)^{\ell}$ or $\Gtwo =\bR^{\ell}$.
One may wonder how to construct examples with a genuine `toroidal cylinder'
$\Gtwo = \mathrm{U}(1)^{\ell_1}\times \bR^{\ell_2}$ with $\ell_1 \ell_2 \neq 0$.
It appears that examples of this kind may arise from reductions of geodesic systems on cotangent bundles of non-compact simple Lie groups
equipped with their natural bi-invariant pseudo-Riemanniann metric.
However, because the kinetic energy is not (positive) definite, the prospective examples do not appear to be physically relevant.

It could be interesting to extend our approach for treating the superintegrability  of (spin extensions of) Toda type models.
This would require using a direct product group for $G^1$ that includes a non-compact factor, hence it is not covered directly by our Scenario \ref{scen:G11}.
Symmetry reductions appear in the Lagrangian, variational approach to integrable systems as well, see e.g. \cite{CHSV} and references therein.
Since the systems amenable to such Lagrangian approaches include spin Calogero type models, it is natural
to inquire if there is a connection to our approach.

The reductions to (spin) many-body models could give a good testing ground for determining the set-theoretic differences in the chain
\be
Y_0/G^1 \subset Y_0^1/G^1 \subset M_*/G^1.
\ee
Since $G^2$ acts on $Y_0^1/G^1$ and this action is free on $Y_0/G^1$, the difference
$(Y_0^1/G^1) \setminus ( Y_0/G^1)$
should be a union of non-principal, typically lower dimensional,
 $G^2$-orbits.
 We know from earlier works on reductions of doubles of $\mathrm{SU}(n)$
 that this is the case for the symplectic leaves of dimension $2(n-1)$ that lie in $Y_0^1/\mathrm{SU}(n)$ and
 support the trigonometric Suthertland \cite{FeA} and
 Ruijsenaars--Schneider models \cite{FK},
and also for the so-called type I compactifications \cite{FK2,FKlu} of the latter.
The type II compactified trigonometric Ruijsenaars--Schneider models \cite{FKlu} live on symplectic
leaves of $\bD(\mathrm{SU}(n))_\ast/\mathrm{SU}(n)$ that intersect $Y_0/\mathrm{SU}(n)$ and $(Y_0^1\setminus Y_0)/\mathrm{SU}(n)$,
but contain points outside $Y/\mathrm{SU}(n)$, too, where the pertinent matrix $A$
for $(A,B) \in \bD(\mathrm{SU}(n))_\ast$ has some coinciding eigenvalues \cite{FKlu}.
The lower dimensional tori of the trigonometric Sutherland system  have been described in detail
in the recent paper \cite{LMRS}. Related results can also be extracted from  \cite{Ru}.
The structure of the reduced systems on $(M/G^1) \setminus (M_\ast/G^1)$ is wide open,
its exploration in the framework of stratified symplectic spaces \cite{SL} has just started \cite{CJRX}.

There exist several challenging open problems regarding the integrable systems on moduli spaces of flat connections.
In particular, it would be interesting to develop the link between  the constructions in Arthamonov and Reshetikhin \cite{ARe} and the quasi-Poisson approach.
We took an initial step in this direction in Subsection \ref{ss:compAR}, and hope that a complete picture will be worked out in the future.
It would also be interesting to interpret in our language the (real analogue of the) integrable systems constructed by Le Blanc~\cite{LB} for a punctured sphere.
A final problem which we hope to pursue is to further study the sphere with $4$ holes as in Subsection \ref{ss:Sphere4},
but fixing this time the conjugacy classes of $C_1,C_2,C_3$.
Using $K=\mathrm{SU}(n)$,
we  would like  to obtain  in this way a compact real form of the trigonometric van Diejen system \cite{vD}, building
on the recent derivation of the corresponding complex integrable system by Chalykh and Ryan \cite{CR}.

\bigskip

\subsubsection*{Acknowledgements}
We are grateful to S.~Arthamonov, O.~Chalykh,  R.~Fioresi, N.~Reshetikhin, M.~Semenov-Tian-Shansky and P.~Vanhaecke
for useful discussions and correspondence.
The  work of L.F. was supported in part by the NKFIH research grant K134946.
The work of M.F. was completed at IMB which is supported by the EIPHI Graduate School (contract ANR-17-EURE-0002).
We acknowledge funding from  CA21109 - COST Action CaLISTA that allowed us to
participate in the WG2 Workshop in Leeds in June 2025, where this work was presented.

\appendix

\section{The connectedness of \texorpdfstring{$\Phi^{-1}(K^\reg)\subset \bD(K)$}{Phi-1(Kreg)}}
\label{App:conn}

The proof of the following statement also gives us further useful pieces of information.

\begin{lem} \label{lem:Z-D}
 The open subspace $\Phi^{-1}(K^\reg)$ of $\bD(K)$ is dense and connected.
\end{lem}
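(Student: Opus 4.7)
I would prove the two claims separately, each building on a single differential-geometric observation about $\Phi$.

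First, I would compute the left-trivialized differential of $\Phi$ at $(A,B)$. A direct computation, using that $T_g K$ is identified with $\fk$ via left translation by $g$, gives
\be
d\Phi(X,Y) = (\Ad_{BAB^{-1}} - \Ad_{BA})(X) + (\Ad_{BA} - \Ad_B)(Y)
= \Ad_{BA}(\Ad_{B^{-1}} - \id)(X) + \Ad_B(\Ad_A - \id)(Y).
\ee
Using that $\Ad_g$ is orthogonal with respect to the invariant form, $\im(\Ad_g - \id) = (\fk^g)^\perp$, so the orthogonal complement of $\im\, d\Phi$ is $\Ad_B(\fk^{ABA^{-1}}\cap \fk^A) = \Ad_{BA}(\fk^B\cap \fk^A)$. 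Hence $\Phi$ is a submersion at $(A,B)$ exactly when $\fk^A\cap \fk^B =\{0\}$. Let $U_\sub\subset \bD(K)$ denote this open submersion locus and $V=\bD(K)\setminus U_\sub$.

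For density, the plan is to show $U_\sub$ is dense, and then use openness of submersions. Concretely, for any open $W\subset \bD(K)$, the set $W\cap U_\sub$ is open and non-empty, so $\Phi(W\cap U_\sub)$ is a non-empty open subset of $K$, which must meet the dense open $K^{\reg}$. Pulling back gives a point in $W\cap \Phi^{-1}(K^{\reg})$. Density of $U_\sub$ itself falls out of the codimension estimate for $V$ that I will need for connectedness anyway, so only that one estimate is needed.

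For connectedness, I would reduce to showing that
\be
\Phi^{-1}(K^{\sing}) = \bD(K)\setminus \Phi^{-1}(K^{\reg})
\ee
is a closed real-analytic subset of codimension $\geq 2$ in the connected manifold $\bD(K)$: the complement of such a set is automatically connected by general position (any two points can be joined by a smooth path avoiding a codimension-$2$ semi-analytic set). I would split
\be
\Phi^{-1}(K^{\sing}) \subset \big(\Phi^{-1}(K^{\sing})\cap U_\sub\big) \cup V.
\ee
On $U_\sub$, the map $\Phi$ is a submersion, so the preimage has codimension equal to the codimension of $K^{\sing}$ in $K$, which is $3$ for simply connected simple compact $K$ (a standard fact about the singular locus in the Weyl alcove stratification). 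The codimension bound on $V$ itself is the core combinatorial point.

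To bound $\dim V$, I would write $V$ as the union, over all non-trivial subtori $S<K$, of $Z_K(S)\times Z_K(S)$. Parametrizing by the conjugacy type of the centralizer $L=Z_K(S)$ (a proper Levi subgroup), the moduli of Levi subgroups of fixed type has dimension $\dim K - \dim L$, while each fibre $L\times L$ contributes $2\dim L$; thus $\dim V \leq \dim K + d_L^{\max}$, where $d_L^{\max}$ is the largest dimension of a proper Levi subgroup. Since a proper Levi always excludes at least the two root spaces $\fk_{\pm\alpha}$ associated with some deleted simple root, one has $\dim K - d_L^{\max}\geq 2$, yielding $\codim_{\bD(K)} V \geq 2$ as required. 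The main obstacle of the proof, I expect, will be carrying out this Levi-decomposition bound cleanly: a naive parametrization by $X\in \fk^A\cap \fk^B$ overcounts (singular $X$ gives much larger centralizers), and one must stratify by Levi conjugacy type in $K$ to get a uniform bound valid for every simple compact $K$.
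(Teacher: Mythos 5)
Your argument is correct, but it takes a genuinely different route from the paper's. You compute $d\Phi$ directly, identify the submersion locus as $\{(A,B):\fk^A\cap\fk^B=0\}$, obtain density from the openness of submersions, and obtain connectedness from a general-position argument after bounding the codimension of $\Phi^{-1}(K\setminus K^\reg)$ from below by $2$: the part lying over the submersion locus has codimension $3$ (the codimension of the non-regular set in $K$), while the non-submersion locus $V$ is stratified by conjugacy classes of centralizers of subtori, giving $\dim V\le \dim K+\dim L\le 2\dim K-2$. These estimates all check out — the conjugates of a fixed $L=Z_K(S)$ form a family of dimension $\dim K-\dim L$ because $N_K(L)/L$ is finite, there are finitely many conjugacy classes of such centralizers, and every proper one omits at least one pair of root spaces — and the endpoints of the path lie in $\Phi^{-1}(K^\reg)$, so the transversality argument applies with endpoints fixed. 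The paper argues constructively instead: the Coxeter-element identity $g_*\exp(\ri(w_*-\id)^{-1}h)g_*^{-1}\exp(-\ri(w_*-\id)^{-1}h)=\exp(\ri h)$ produces, for every $e^{\ri h}\in\bT^\reg$, an explicit point of $\Phi^{-1}(e^{\ri h})$ in the principal orbit-type submanifold $\bD(K)_*$; density then follows because $\Phi$ restricted to $\bD(K)_*$ is a submersion (freeness of the $K/\cZ(K)$-action), and connectedness follows by joining fibers over any two regular values with an explicit path and invoking the Alekseev--Malkin--Meinrenken theorem on connectedness of the fibers of a group-valued momentum map. Your approach is softer and self-contained modulo standard transversality and the stratifiability of the closed real-analytic set $\Phi^{-1}(K\setminus K^\reg)$ (a detail worth spelling out, though routine); the paper's approach imports a nontrivial external theorem but yields extra information that it reuses elsewhere, namely the inclusion $K^\reg\subset\Phi(\bD(K)_*)$ and the explicit Coxeter-element configurations that reappear in the isotropy computations of Section 5 and Appendix B.
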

\begin{proof}
We begin by quoting a useful identity from \cite{Go}. For this,
let  $g_*\in N_K(\bT)< K$   be a representative of a Coxeter element $w_*$ of the Weyl group $N_K(\bT)/\bT$.
We can view $w_*$ as a linear transformation on $\ri \ft$, and it is well known  that $(w_* - \id)$ is invertible.
 Then, one has the identity
\be
g_* \exp\left(\ri (w_* - \id)^{-1} h)\right) g_*^{-1}  \exp\left(-\ri (w_* - \id)^{-1} h \right) = \exp(\ri h),
\qquad \forall h\in \ri \ft.
\label{F1}\ee
This follows from
\be
g_* \exp\left(\ri (w_* - \id)^{-1} h\right) g_*^{-1}= \exp\left(\ri w_* \circ (w_* - \id)^{-1} h \right) =
\exp(\ri h) \exp\left(\ri (w_* - \id)^{-1} h \right).
\label{F2}\ee
The identity \eqref{F1} implies that the momentum map  $\Phi: \bD(K) \to K$, given by
\be
\Phi(A,B)= A B A^{-1} B^{-1} =: [A,B],
\label{F3}\ee
 is surjective \cite{Go}.

We observe that for $\exp(\ri h)\in \bT^\reg$ the element
\be
X:= (g_*, \exp\left(\ri (w_* - \id)^{-1} h\right) )
\label{F4}\ee
belongs to the principal isotropy type submanifold $\bD(K)_* \subset \bD(K)$ with respect to the $K$-action.
Indeed,   \eqref{F1} now implies that  $K_X < \bT$ and for $T\in K_X \cap \bT$ one has $T g_* T^{-1} = g_*$,
which is equivalent to
\be
g_* T g_*^{-1} = T.
\label{F5} \ee
Recalling \cite{DW,Me} that the fixed point set for any Coxeter element acting on $\bT$ is the center of $K$,
\eqref{F5} entails that $T \in \cZ(K)$.

Let $\Phi_*$ denote the restriction of $\Phi$ on the dense, open, connected submanifold $\bD(K)_* \subset \bD(K)$.
 As a consequence of the above observation,
 the image of $\Phi_*$ contains the dense open subset $K^\reg \subset K$.
 Because the action of $K/\cZ(K)$ is free on  $\bD(K)_*$,
the map $\Phi_*: \bD(K)_* \to K$ is a submersion.
Thus, we see that
 \be
  \Phi_*^{-1}(K^\reg) \subset \bD(K)_*
 \label{F6}\ee
 is a dense open subset.
 Since $\bD(K)_* \subset \bD(K)$ is dense open, we may conclude that  $\Phi_*^{-1}(K^\reg)\subset \bD(K)$ and
\be
 \Phi_*^{-1}(K^\reg) \subset \Phi^{-1}(K^\reg)
 \label{F7}\ee
are both dense open subsets.

As is easily verified, the connectedness of a dense open subspace of a topological space implies the
connectedness of the space.
Therefore,  it is enough to prove that
$\Phi_*^{-1}(K^\reg)$ is connected.
To prove this, let $\nu_0$ and $\nu_1$ be two arbitrarily chosen elements of $K^\reg$ and choose also two elements $g_0$ and $g_1$ of $K$ for which
$g_0 \nu_0 g_0^{-1}$ and $g_1 \nu_1 g_1^{-1}$ belong to $\exp(\cA)$ with our fixed Weyl alcove $\cA$ \eqref{J8}.
That is, we have
\be
g_0 \nu_0 g_0^{-1} = \exp(\ri h_0)
\quad \hbox{and}\quad
g_1 \nu_1 g_1^{-1} = \exp(\ri h_1)
\label{F8}\ee
with unique $h_0, h_1 \in \cA$.
Consider a continuous curve $h:[0,1]\to \cA$ that connects $h_0$ to $h_1$
and a  continuous curve $g:[0,1]\to K$ that connects $g_0$ to $g_1$.
 Then, define the  curves $A,B:[0,1]\to K$ by
 \be
 A(t):= g(t)^{-1} g_* g(t)
 \quad \hbox{and}\quad
 B(t) := g(t)^{-1} \exp\left(\ri (w_* - \id)^{-1} h(t)\right) g(t)
\label{F9} \ee
 for which
 \be
 [A(t), B(t)] = g(t)^{-1} \exp(\ri h(t)) g(t).
 \label{F10}\ee
 Because $X$ \eqref{F4} lies in $\Phi_*^{-1}(K^\reg)$ if $e^{\ri h} \in \bT^\reg$,
 the continuous curve
  $t\mapsto (A(t), B(t))$ lies in $\Phi_*^{-1}(K^\reg)$  and by \eqref{F7}
 \be
 [A(0), B(0)] = \nu_0,\quad [A(1), B(1)] =\nu_1.
\label{F11} \ee
 In other words,
 \be
 (A(0), B(0))
  \in \Phi_*^{-1}(\nu_0)
 \quad\hbox{and}\quad
 (A(1), B(1)) \in \Phi_*^{-1}(\nu_1).
 \label{12}\ee
 Therefore, the curve \eqref{F9} connects two points of the respective fibers $\Phi_*^{-1}(\nu_0)$ and $\Phi_*^{-1}(\nu_1)$.
 Every fiber $\Phi_*^{-1}(\nu)$ is a submanifold because $\Phi_*$ is a submersion and the fibers
 are connected by \cite[Thm.~7.2]{AMM}.
 Since for manifolds connectedness and path-connectedness are the same,  any two points of $\Phi_*^{-1}(\nu_0)$ and $\Phi_*^{-1}(\nu_1)$
 can be connected by a continuous curve.
 In conclusion, $\Phi_*^{-1}(K^\reg)$ is path-connected, and hence connected, whereby
  the proof is complete.
\end{proof}

\begin{rem}
Theorem 7.2 of \cite{AMM} states the connectedness of the fibers of the momentum map for connected
quasi-Hamiltonian spaces of connected and simply connected compact Lie groups.
This is applicable to our case since  $\bD(K)$ and its connected open submanifold $\bD(K)_*$ are
also quasi-Hamiltonian manifolds with the momentum map $\Phi$ \cite{AKSM,AMM}.
It is worth noting from the above proof that $K^\reg \subset \Phi(\bD(K)_*)$ holds.
This is probably a proper subset in general.
The examples studied in \cite{FK2,FKlu} show that this is the case for $K= \mathrm{SU}(n)$.
\end{rem}

\section{Integrability related to the general case of \texorpdfstring{$M_{m,n}$}{Mmn}}
\label{App:Mmn}

We work over $M_{m,n}$ \eqref{W11} following the treatment of Subsection \ref{ss:genModuli}.
Firstly, we will prove that one can apply the general mechanism of Theorem \ref{thm:G12} to obtain integrable systems
based on the Abelian Poisson algebra $\fH(I,\widehat{I},J)$ from Definition \ref{defn:Z7}.
Secondly, we will outline some modifications of this construction that should lead to other integrable systems.

\subsection{The systems associated with \texorpdfstring{$\fH(I,\widehat{I},J)$}{H(I,I,J)}}

Introduce the notation
\be
\vec{\bT}:=\bT^{p}\times (\bT^{\ad})^{\widehat{p}} \times (\bT^{\ad})^q
\label{vecT}
\ee
for the torus associated with the flows of the Hamiltonians $\vec{H}$ \eqref{Z34},  and define $\bG:= K \times \vec{\bT}$.
Throughout this appendix, we use the shorthand notation $Y,\fH$ for $Y(I,\widehat{I},J)$ \eqref{Z33} and $\fH(I,\widehat{I},J)$ from Definition \ref{defn:Z7}, respectively. We also recall the following.
\begin{assume}\label{assume:HIJ}
One has $p+\widehat{p}+q>0$, hence $I,\widehat{I}$ and $J$ are not all empty. Moreover,
 \begin{itemize}
 \item If $m=0$, then $n\geq 3$ and $\cup_{\beta=1}^q \{\lambda_\beta,\ldots, \rho_\beta\} \subsetneq \{1,\ldots,n\}$;
 \item If $m=1$ and $n=0$, then $\widehat{p}=0$ (i.e. $\widehat{I}=\emptyset$).
\end{itemize}
\end{assume}

\subsubsection{Principal isotropy groups}

\begin{lem}\label{lem:Preli4}
Assume that $m=0$.
Then the principal isotropy group for the $\bG$-action on  $Y$ is the direct product $\cZ(K) \times \{e\}$.
\end{lem}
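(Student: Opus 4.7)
My plan is to exhibit a single point $y\in Y$ whose combined $\bG$-isotropy group equals $\cZ(K)\times\{e\}$; since this subgroup is automatically contained in every isotropy group, exhibiting one such point proves the lemma. The strategy is a direct extension of the arguments used in Lemmas \ref{lem:Z6}, \ref{lem:DD2} and \ref{lem:DD4}, with one new ingredient: the assumption $\bigcup_{\beta=1}^q[\lambda_\beta,\rho_\beta]\subsetneq\{1,\dots,n\}$ from Assumption \ref{assume:HIJ} supplies an \emph{uncovered} index which turns out to be crucial.

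I would first invoke Remark \ref{rem:J4} to fix two maximal tori $\bT,\bT'$ of $K$ with $\bT\cap\bT'=\cZ(K)$, together with a representative $g_*\in N_K(\bT)$ of a Coxeter element of $N_K(\bT)/\bT$. Pick $k_*\in\{1,\dots,n\}\setminus\bigcup_\beta[\lambda_\beta,\rho_\beta]$. I then define the candidate point $y=(C_1,\dots,C_n)$ as follows: set $C_{k_*}$ to be a regular element of $\bT'$; for every $\beta$, set $C_{\lambda_\beta}:=g_*$ and choose the remaining factors $C_{\lambda_\beta+1},\dots,C_{\rho_\beta}$ (there is at least one, since $\rho_\beta>\lambda_\beta$) so that $C_{\lambda_\beta}\cdots C_{\rho_\beta}=u_\beta$ for a prescribed element $u_\beta\in\exp(\ri\cA)\subset\bT^\reg$; fill the remaining entries arbitrarily. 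The point lies in $Y$ because each product $u_\beta$ is regular.

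For the isotropy analysis, suppose $(\eta,[T_1],\dots,[T_q])\in\bG$, with $T_\beta=T_{\omega^\vee}(\taU_\beta)\in\bT$, fixes $y$. Since $u_\beta\in\exp(\ri\cA)$, the choice $[\Gamma_2(u_\beta)]=[e]$ in \eqref{Z20} gives $g_{\taU_\beta}^\beta=T_\beta$, so the torus acts on each $C_k$ with $k\in[\lambda_\beta,\rho_\beta]$ by $C_k\mapsto \eta T_\beta C_k T_\beta^{-1}\eta^{-1}$. I would then proceed in three steps. (i) Multiplying the $\rho_\beta-\lambda_\beta+1$ fixed-point equations over $k\in[\lambda_\beta,\rho_\beta]$ and using that $T_\beta,u_\beta\in\bT$ commute collapses the product to $\eta u_\beta\eta^{-1}=u_\beta$, forcing $\eta\in\bT$ by regularity of $u_\beta$ in $\bT$. (ii) The equation $\eta C_{k_*}\eta^{-1}=C_{k_*}$ together with $\eta\in\bT$ and $C_{k_*}\in\bT'^\reg$ promotes this to $\eta\in\bT\cap\bT'=\cZ(K)$. (iii) With $\eta$ central, the equation at $k=\lambda_\beta$ reduces to $g_*=T_\beta g_* T_\beta^{-1}$, so $T_\beta\in\bT$ is fixed by the Coxeter element represented by $g_*$; the classical fact recalled in Remark \ref{rem:J4} then implies $T_\beta\in\cZ(K)$, whence $[T_\beta]$ is trivial in $\bT^\ad$.

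I do not anticipate a genuine obstacle. The only delicate point is the indispensability of $k_*$: step (i) alone only localizes $\eta$ in $\bT$, and without an uncovered index to exploit the second maximal torus $\bT'$ in step (ii), there would be no mechanism to further reduce the $K$-component of the isotropy. This is precisely why Assumption \ref{assume:HIJ} requires the strict inclusion $\bigcup_\beta[\lambda_\beta,\rho_\beta]\subsetneq\{1,\dots,n\}$ when $m=0$. The remainder is careful bookkeeping over the intervals, entirely parallel to the three-step scheme used in the preceding examples.
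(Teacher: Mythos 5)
Your proposal is correct and follows essentially the same strategy as the paper: exhibit one point of $Y$ whose $\bG$-isotropy group is exactly $\cZ(K)\times\{e\}$, localize $\eta$ in $\bT$ via the regular products $C_{(\beta)}\in\exp(\ri\cA)$, and then push it into $\bT\cap\bT'=\cZ(K)$ using the uncovered index supplied by Assumption \ref{assume:HIJ}. The only (harmless) deviation is in killing the torus components: the paper places each $C_{\rho_\beta}$ in $\bT'\cap K^\reg$ and concludes $T_{(\beta)}\in\bT\cap\bT'=\cZ(K)$, whereas you place a Coxeter representative at $C_{\lambda_\beta}$ and invoke the fixed-point fact from Remark \ref{rem:J4}, exactly as in Lemmas \ref{lem:DD2} and \ref{lem:DD4}.
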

\begin{proof}
Recall that, by Assumption \ref{assume:HIJ}, $n\geq 3$ and the intervals in $J$ satisfy
$\cup_{\beta=1}^q \{\lambda_\beta,\ldots, \rho_\beta\} \subsetneq \{1,\ldots,n\}$.
Thus, there exists $\nu \in \{1,\ldots,n\}$ not appearing in any of the intervals $[\lambda_\beta,\rho_\beta]$.
Put
\be
C_{(\beta)}:=C_{\lambda_\beta} C_{\lambda_\beta+1} \cdots C_{\rho_\beta}, \qquad \beta=1,\ldots,q.
\label{Prel4a}
\ee
We consider a point $X\in Y$ satisfying
\be
C_\nu,C_{\rho_1},C_{\rho_2},\ldots,C_{\rho_q}  \in \bT'\cap K^\reg, \quad
C_{(1)},C_{(2)},\ldots,C_{(q)} \in \exp(\ri \cA) \subset \bT^\reg,
\label{Prel4d}
\ee
where $\bT'$ is a maximal torus such that $\bT\cap \bT'=\cZ(K)$
and $\cA$ is the open Weyl alcove \eqref{J8}.
One can obtain from \eqref{Z32} that the torus $(\bT^\ad)^q$ acts on $Y$ through
\be
([T_{(1)}],\ldots,[T_{(q)}],X)\mapsto X(\taU)
\ee
(with
$T_{(\beta)}:=T_{\omega^\vee}(\taU)$ in \eqref{LI3} corresponding to the $\beta$-th factor in $(\bT^\ad)^q$),
where all components of $X(\taU)$ are those of $X$ except
\be
C_j(\taU) = T_{(\beta)} C_j  T_{(\beta)}^{-1}, \quad
\lambda_\beta \leq j \leq \rho_\beta, \,\, 1\leq \beta\leq q,
\label{Prel4e}
\ee
cf. Lemma \ref{lem:Z5ad} with $\Gamma_2(C_{(\beta)})=e$.
Given a tuple $(\eta,[T_{(\beta)}])_{\beta}\in \bG_X$, we deduce the equalities
\be
\eta C_{\nu} \eta^{-1} = C_{\nu}, \quad
\eta C_{(\beta)} \eta^{-1} = C_{(\beta)},  \quad \beta=1,\ldots,q.
\label{Prel4f}
\ee
The conditions \eqref{Prel4d} entail $\eta\in \cZ(K)$.
Then $T_{(\beta)} C_{\rho_\beta}  T_{(\beta)}^{-1}=C_{\rho_\beta}$ gives $T_{(\beta)}\in \cZ(K)$ by using \eqref{Prel4d} once more. Hence the class $[T_{(\beta)}]$ is trivial in $\bT^{\ad}$, and therefore $(\eta,[T_{(\beta)}])_{\beta}\in \cZ(K)\times \{e\}$.
\end{proof}

\begin{lem}\label{lem:Preli5}
Assume that $m=1$ and $n\geq 0$.
Then the principal isotropy group for the $\bG$-action on  $Y$ is the direct product $\cZ(K) \times \{e\}$.
\end{lem}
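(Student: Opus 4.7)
The plan is to exhibit a point $X \in Y$ whose $\bG$-isotropy equals $\cZ(K) \times \{e\}$; since this group is automatically contained in every $\bG$-isotropy (the center acts trivially by conjugation, and $\{e\}$ is the unit of $\vec{\bT}$), only the reverse inclusion at a specific point requires work. I would fix two maximal tori $\bT, \bT' \leq K$ with $\bT \cap \bT' = \cZ(K)$ (cf.\ Remark \ref{rem:J4}), a representative $g_* \in N_K(\bT)$ of a Coxeter element of $(K,\bT)$ (whose fixed set on $\bT$ is $\cZ(K)$), and an auxiliary element $h \in K$ with $h\bT h^{-1} = \bT'$, which exists since any two maximal tori are conjugate.

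The argument splits according to which of $I$ or $\widehat I$ is nonempty (at most one, since they are disjoint). When $J \neq \emptyset$ with $\bigcup_\beta [\lambda_\beta,\rho_\beta] \subsetneq \{1,\dots,n\}$, mimicking the proof of Lemma \ref{lem:Preli4} I would choose $C_{(\beta)} \in \exp(\ri\cA) \subset \bT^\reg$, $C_{\rho_\beta} \in \bT'^\reg$ for each $\beta$, and $C_\nu \in \bT'^\reg$ for some $\nu$ outside all intervals. For the pair $(A_1,B_1)$:
\begin{itemize}
\item if $I = \{1\}$, I would take $A_1 \in \exp(\ri\cA)$ and $B_1 = h^{-1}$;
\item if $\widehat I = \{1\}$, following Lemma \ref{lem:DD2} I would take $A_1 \in \bT^\reg$ and $B_1 = g_*$, so that $[A_1,B_1] \in \bT$ and, via \eqref{F1}, may be arranged to lie in $\exp(\ri\cA)$;
\item if $I = \widehat I = \emptyset$, I would take $A_1 \in \bT^\reg$ and $B_1 \in \bT'^\reg$.
\end{itemize}
In each subcase, the fixed-point analysis proceeds as in Lemmas \ref{lem:DD2} and \ref{lem:Preli4}: regularity of $A_1$ (or of $[A_1,B_1]$ combined with $B_1 = g_* \in N_K(\bT)$) forces $\eta \in \bT$; the centralizer of $C_\nu$ (or of $B_1 \in \bT'^\reg$ when $I = \widehat I = \emptyset$) further forces $\eta \in \bT \cap \bT' = \cZ(K)$; with $\eta$ central, the equation $T_{(\beta)} C_{\rho_\beta} T_{(\beta)}^{-1} = C_{\rho_\beta}$ forces $T_{(\beta)} \in \bT \cap \bT' = \cZ(K)$ (hence trivial in $\bT^\ad$), the equation $B_1 T^{-1} = B_1$ gives $T = e$ in the $\bT$-factor associated with $I$, and the Coxeter argument (fixed set of $w_*$ on $\bT$ equals $\cZ(K)$) handles the $\bT^\ad$-factor associated with $\widehat I$.

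The main technical point concerns the residual subcase in which the $C_j$'s alone do not suffice to force $\eta \in \cZ(K)$, namely when $J = \emptyset$ or $\bigcup_\beta [\lambda_\beta,\rho_\beta] = \{1,\dots,n\}$. When $\widehat I = \{1\}$, the Coxeter trick of Lemma \ref{lem:DD2} already delivers $\eta \in \cZ(K)$. When $I = \{1\}$, the choice $B_1 = h^{-1}$ becomes essential: the fixed-point equation coming from the $\bT$-factor of $\vec\bT$, namely $\eta B_1 T^{-1} \eta^{-1} = B_1$ with $\eta, T \in \bT$, rewrites as $T = \eta^{-1}(B_1^{-1} \eta B_1)$; since $B_1^{-1} \eta B_1 = h\eta h^{-1}$ lies in $h\bT h^{-1} = \bT'$, the requirement $T \in \bT$ forces $h\eta h^{-1} \in \bT \cap \bT' = \cZ(K)$, hence $\eta \in \cZ(K)$ and then $T = e$. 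When $I = \widehat I = \emptyset$ (so $q \geq 1$), the choice $A_1 \in \bT^\reg$ and $B_1 \in \bT'^\reg$ supplies the centralizer data formerly furnished by $C_\nu$, and the proof of Lemma \ref{lem:Preli4} carries through verbatim.
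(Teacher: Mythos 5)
Your overall strategy (exhibit a witness point and run a case analysis on which of $I$, $\widehat I$ is nonempty) is the same as the paper's, and most subcases are handled correctly; in particular your treatment of the $I=\{1\}$ residual subcase via $B_1=h^{-1}$, turning the fixed-point equation for $B_1$ into $T=\eta^{-1}(h\eta h^{-1})$ and invoking $\bT\cap\bT'=\cZ(K)$, is a clean alternative to the paper's use of the conserved quantity $BAB^{-1}\in\bT'\cap K^\reg$. However, there is a genuine gap in the residual subcase with $\widehat I=\{1\}$. With your choice $A_1\in\bT^\reg$, $B_1=g_*\in N_K(\bT)$ and $[A_1,B_1]\in\exp(\ri\cA)$, the $\bT^{\ad}$-factor acts on the pair $(A_1,B_1)$ by conjugation by $T=T_{\omega^\vee}(\taU)\in\bT$, so the fixed-point conditions read $(\eta T)A_1(\eta T)^{-1}=A_1$ and $(\eta T)g_*(\eta T)^{-1}=g_*$. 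After $\eta\in\bT$ is secured, the first is automatic and the Coxeter property of $g_*$ only yields $\eta T\in\cZ(K)$ — \emph{not} $\eta\in\cZ(K)$. Indeed, every pair $(\eta,[\eta^{-1}])$ with $\eta\in\bT$ fixes $(A_1,B_1)$, and since in your residual subcase all the remaining data ($C_{(\beta)}\in\exp(\ri\cA)$, or no constrained $C_j$ at all) is anchored to the \emph{same} torus $\bT$, this $\ell$-dimensional family survives into the full isotropy group of your point (for $J$ covering $\{1,\dots,n\}$, take $T_{(\beta)}=T$ as well). So your point is not of principal isotropy type, and the assertion that ``the Coxeter trick of Lemma \ref{lem:DD2} already delivers $\eta\in\cZ(K)$'' misreads that lemma: there, $\eta\in\cZ(K)$ is extracted from the \emph{other} pair $(A_2,B_2)$ first, and only then does the Coxeter argument kill $[T_1]$.

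The repair is exactly the paper's choice: anchor the $\bD(K)$-data to the second torus, i.e.\ take $A_1\in\bT'\cap K^\reg$, $B_1$ a representative of a Coxeter element of $(K,\bT')$, and $[A_1,B_1]\in g\exp(\ri\cA)g^{-1}$ where $g\bT g^{-1}=\bT'$. Then the $\bT^{\ad}$-factor conjugates $(A_1,B_1)$ by elements of $\bT'$, the condition $\eta A_1\eta^{-1}=A_1$ forces $\eta\in\bT'$, the condition on $C_{(\beta)}\in\exp(\ri\cA)\subset\bT^\reg$ (or on a freely chosen $C_1\in\bT^\reg$ when $q=0$) forces $\eta\in\bT$, hence $\eta\in\cZ(K)$; only afterwards does the Coxeter argument in $\bT'$ give $[T]=e$, and $C_{\rho_\beta}\in\bT'\cap K^\reg$ gives $T_{(\beta)}\in\cZ(K)$.
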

\begin{proof}
If $n=0$, Assumption \ref{assume:HIJ} imposes that $\widehat{I}=\emptyset$, and therefore this is just Lemma \ref{lem:Z1}.
So we can assume that $n\geq 1$ and write $M_{1,n}=\{(A,B,C_1,\ldots,C_n)\}$.

\underline{Case 1) $\widehat{I}=\emptyset$.}
If $n=1$, we must have $J=\emptyset$ and thus $I=\{1\}$. So the result is simply obtained as in the proof of Lemma \ref{lem:Z1}. If $n\geq 2$ and $J=\emptyset$, the result can be deduced in the same way.

Hence, we can assume that $J$ is not empty. Let us also assume that $I=\{1\}$, as the proof that we present now also works for $I=\widehat{I}=\emptyset$ by forgetting about the factor $\bT$ of the action.
Inspired by Lemmas \ref{lem:Z1} and \ref{lem:Preli4} and adopting the notation \eqref{Prel4a},
we consider a point $X\in Y$ such that $A\in \exp(\ri \cA)$ and
\be
BAB^{-1},C_{\rho_1},C_{\rho_2},\ldots,C_{\rho_q}  \in \bT'\cap K^\reg, \quad
A,C_{(1)},C_{(2)},\ldots,C_{(q)} \in \exp(\ri \cA) \subset \bT^\reg,
\label{Prel50}
\ee
where $\bT'$ is a maximal torus verifying $\bT\cap \bT'=\cZ(K)$.
One gets that $\bT \times (\bT^\ad)^q$ acts on $Y$ through
\be
(T,[T_{(1)}],\ldots,[T_{(q)}],X)\mapsto X(\taU)
\ee
where all components of $X(\taU)$ are those of $X$ except for \eqref{Prel4e} and $B(\taU) = B T^{-1}$.
Given $(\eta,T,[T_{(\beta)}])_{\beta}\in \bG_X$ and acting on $X$ yields the equalities
\be
\eta BAB^{-1}\eta^{-1}=BAB^{-1}, \quad \eta A \eta^{-1} = A,
\ee
which  imply $\eta\in \bT\cap \bT'=\cZ(K)$.  Then, looking at the action on $B$ entails $T=e$.
We conclude that the classes $[T_{(1)}],\ldots,[T_{(q)}]\in \bT^{\ad}$ are the identity element by repeating the end of the proof of Lemma \ref{lem:Preli4}.

\underline{Case 2) $\widehat{I}=\{1\}$.}
In this case, $I=\emptyset$, and our general assumption guarantees that $n\geq 1$.
The method is based on Lemmas \ref{lem:DD4} and \ref{lem:Preli4}.
We consider a point $X\in Y$ such that
\be
C_{\rho_1},C_{\rho_2},\ldots,C_{\rho_q}  \in \bT'\cap K^\reg, \quad C_{(1)},C_{(2)},\ldots,C_{(q)} \in \exp(\ri \cA) \subset \bT^\reg,
\label{Prel5a}
\ee
with the notation \eqref{Prel4a}. If $q=0$, one simply takes $C_1\in \bT^\reg$. We further assume that,
for $g\in K$ that verifies $g \bT g^{-1} = \bT'$,
$A\in \bT'\cap K^\reg$ and $B$ is a representative of a Coxeter element of the pair $(K, \bT')$ such that
$[A,B] \in g \exp(\ri \cA) g^{-1}$ (cf. the proof of Lemma \ref{lem:DD4}).
Taking the chosen point as initial value, the torus $(\bT^\ad)^{q+1}$ acts on $Y$ through
\be
([T],[T_{(1)}],\ldots,[T_{(q)}],X)\mapsto X(\taU)
\ee
where all components of $X(\taU)$ are those of $X$ except
\be
B(\taU) = T' B T'^{-1}, \quad
C_j(\taU) = T_{(\beta)} C_j  T_{(\beta)}^{-1}, \,\,
\lambda_\beta \leq j \leq \rho_\beta, \,\, 1\leq \beta\leq q,
\label{Prel5b}
\ee
for $T' = g T g^{-1}\in \bT'$ corresponding to $T \in \bT$.
Due to our assumptions on the chosen point, an element $(\eta,[T],[T_{(\beta)}])\in \bG_X$ of the isotropy group satisfies
$\eta A \eta^{-1}=A$ and $\eta C_{(\beta)} \eta^{-1} = C_{(\beta)}$
(for all $\beta$, or $\eta C_{1} \eta^{-1}=C_1$ for $q=0$), which entails $\eta\in \cZ(K)$.
Following the usual argument, one then deduces from $\bT\cap \bT'=\cZ(K)$ and \eqref{Prel5b} that all $T_{(\beta)}$ are central.
Also $T\in \cZ(K)$ because $T'\in \cZ(K)$ as it is fixed by a Coxeter element using \eqref{Prel5b}.
Therefore the respective classes $[T],[T_{(1)}],\ldots,,[T_{(q)}]$ in $\bT^{\ad}$ are the identity.
\end{proof}

\begin{lem}\label{lem:Prem2}
Assume that $m=2$.
Then the principal isotropy group for the $\bG$-action on  $Y$ is the direct product $\cZ(K) \times \{e\}$.
\end{lem}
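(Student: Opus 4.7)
The plan is to adapt the constructions of Lemmas \ref{lem:DD2}, \ref{lem:DD4}, \ref{lem:Preli4}, and \ref{lem:Preli5} to the $m=2$ setting. Since $\cZ(K) \times \{e\}$ is contained in every $\bG$-isotropy subgroup, it suffices to produce a single point $X \in Y$ whose $\bG$-isotropy group equals $\cZ(K) \times \{e\}$. To set up, I would fix two maximal tori $\bT, \bT'$ of $K$ with $\bT \cap \bT' = \cZ(K)$ (Remark \ref{rem:J4}), an element $g \in K$ satisfying $g\bT g^{-1} = \bT'$, and adopt the shorthand $C_{(\beta)} := C_{\lambda_\beta} \cdots C_{\rho_\beta}$.

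Then I would construct the base point $X = (A_1, B_1, A_2, B_2, C_1, \ldots, C_n) \in Y$ as follows. For each $j \in \widehat I$, take $A_j \in \bT^\reg$ and $B_j$ a representative of a Coxeter element of $(K, \bT)$, arranging via \eqref{F1} that $[A_j, B_j] \in \exp(\ri \cA) \subset \bT^\reg$. For each $i \in I$, take $A_i \in \bT'^\reg$ (with $B_i$ arbitrary). For each interval $[\lambda_\beta, \rho_\beta] \in J$, arrange $C_{(\beta)} \in \exp(\ri \cA)$ and $C_{\rho_\beta} \in \bT' \cap K^\reg$, as in the proof of Lemma \ref{lem:Preli4}. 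The remaining free components---the $A_k, B_k$ for $k \notin I \cup \widehat I$ and the $C_k$ outside the intervals of $J$---are chosen as auxiliary regular elements in $\bT$ or $\bT'$ so as to force $\eta$ into $\cZ(K)$; for instance, if $I \cup \widehat I = \{1\}$ and $J = \emptyset$, one tunes $A_2$ to be regular in $\bT$ when $1 \in I$ (so $A_1 \in \bT'^\reg$), and symmetrically in the other subcase.

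Next I would analyze $\bG_X$. At this base point, because $[A_j, B_j] \in \bT^\reg$ and $C_{(\beta)} \in \bT^\reg$, the diagonalizers $\Gamma_2([A_j, B_j])$ and $\Gamma_2(C_{(\beta)})$ both equal $e$, so the Type II and Type III torus actions reduce to direct conjugations by elements of $\bT$, exactly as in Lemmas \ref{lem:DD3} and \ref{lem:Z5ad}. Given $(\eta, \vec T) \in \bG_X$, the invariance of $A_j$ (for $j \in \widehat I$), $A_i$ (for $i \in I$), $C_{\rho_\beta}$, and the auxiliary regular elements under simultaneous conjugation by $\eta$ forces $\eta$ to commute with regular elements from both $\bT$ and $\bT'$, hence $\eta \in \bT \cap \bT' = \cZ(K)$. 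With $\eta$ central, the constraints on $\vec T$ split coordinatewise: each $\bT$-factor from $i \in I$ is absorbed by the right-multiplication of $B_i$ in \eqref{Z11}, forcing $\Gamma_2(A_i)^{-1} T_i^{-1} \Gamma_2(A_i) = e$ and hence $T_i = e$; each $\bT^\ad$-factor from $j \in \widehat I$ must satisfy $T_j B_j T_j^{-1} = B_j$ with $T_j \in \bT$, and the fact that the fixed-point set of a Coxeter element acting on $\bT$ is $\cZ(K)$ \cite{DW,Me} then yields $T_j \in \cZ(K)$; each $\bT^\ad$-factor from $[\lambda_\beta, \rho_\beta] \in J$ must satisfy $T_{(\beta)} C_{\rho_\beta} T_{(\beta)}^{-1} = C_{\rho_\beta}$, hence $T_{(\beta)} \in \bT \cap \bT' = \cZ(K)$. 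Every class in $\vec{\bT}$ is therefore trivial.

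The main obstacle is the bookkeeping required to verify uniformly over all admissible $(I, \widehat I, J)$ configurations (with $p + \widehat p + q > 0$ by Assumption \ref{assume:HIJ}) that the base point can be chosen in $Y$ and that the resulting constraints really force $\eta \in \cZ(K)$. The delicate cases are those where $J = \emptyset$ and $|I \cup \widehat I| = 1$, which require actively tuning the free component $(A_k, B_k)$ with $k \notin I \cup \widehat I$ to furnish the missing constraint on $\eta$. In every case, the complementary tori $\bT, \bT'$, the available Coxeter representatives, and the auxiliary free data provide enough regular elements to pin $\eta$ down to $\cZ(K)$. The overall argument is a direct---if somewhat tedious---combination of the techniques established in Lemmas \ref{lem:DD2}, \ref{lem:DD4}, \ref{lem:Preli4}, and \ref{lem:Preli5}.
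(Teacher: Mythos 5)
Your overall strategy (exhibit one point of $Y$ whose $\bG$-isotropy group is exactly $\cZ(K)\times\{e\}$) is the right one, but the concrete base point you construct does not always have this property. The decisive failure is the configuration $\widehat I=\{1,2\}$, $I=J=\emptyset$, $n=0$, which is admissible under Assumption \ref{assume:HIJ} and is exactly the content of Lemma \ref{lem:DD4}. You place $A_1,A_2\in\bT^\reg$ and take \emph{both} $B_1,B_2$ to be Coxeter representatives of the \emph{same} pair $(K,\bT)$, with $[A_j,B_j]\in\exp(\ri\cA)\subset\bT$. At such a point the diagonalizers are trivial, so $(\eta,[T_1],[T_2])$ lies in the isotropy group iff $\eta T_jA_jT_j^{-1}\eta^{-1}=A_j$ and $\eta T_jB_jT_j^{-1}\eta^{-1}=B_j$ for $j=1,2$; since $A_j,T_j\in\bT$ the first condition only gives $\eta\in\bT$, and then for \emph{any} $\eta\in\bT$ the choice $T_1=T_2=\eta^{-1}$ satisfies both conditions. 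Hence the isotropy group of your point contains the positive-dimensional anti-diagonal $\{(\eta,[\eta^{-1}],[\eta^{-1}])\mid\eta\in\bT\}$ and your point is not of principal type. This is precisely why Lemma \ref{lem:DD4} uses two \emph{complementary} tori, taking $A_2\in\bT'^\reg$, $B_2$ a Coxeter representative of $(K,\bT')$ and $[A_2,B_2]\in g\exp(\ri\cA)g^{-1}$, so that $\eta$ is forced into $\bT\cap\bT'=\cZ(K)$ before the torus factors are analyzed. A second, related gap: for $i\in I$ you take $A_i\in\bT'^\reg$ ``with $B_i$ arbitrary''; if, say, $I=\{1,2\}$, $n=0$ and the $B_i$ happen to lie in $\bT'$, the whole point sits in $\bT'$ and its isotropy group is $\bT'\times\{e\}$. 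The $B_i$ must be tuned so that some conjugate such as $B_iA_iB_i^{-1}$ is regular in the complementary torus (cf. condition (3) in the proof of Lemma \ref{lem:DD2} and Lemma \ref{lem:J8}). Your identification of the ``delicate cases'' as only those with $J=\emptyset$ and $|I\sqcup\widehat I|=1$ therefore misses the genuinely problematic configurations. (A smaller imprecision: $C_{\rho_\beta}$ is not fixed by $\eta$ alone at your point, since the $J$-type torus conjugates it nontrivially; the constraint $\eta\in\bT$ comes from the products $C_{(\beta)}$ and the constraint $\eta\in\bT'$ from an index $\nu$ outside all intervals, and $C_{\rho_\beta}$ is only used to kill $T_{(\beta)}$ after $\eta$ is known to be central.)

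For comparison, the paper avoids this bookkeeping altogether by a reduction argument rather than a single global construction: it writes $M_{2,n}\simeq M_{1,n}\times\bD(K)$, invokes Lemma \ref{lem:Preli5} to obtain a point $X_-$ in the $M_{1,n}$ factor whose isotropy for $K\times\vec\bT_-$ is already $\cZ(K)\times\{e\}$, and then only adjusts the last pair $(A_2,B_2)$ (taking $A_2\in\exp(\ri\cA)$ if $2\in I$, or $A_2\in\bT'^\reg$ with $B_2$ a Coxeter representative of $(K,\bT')$ if $2\in\widehat I$); a separate easy case handles $J=\emptyset$, $I\sqcup\widehat I=\{2\}$ by swapping the two $\bD(K)$ factors. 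If you want to keep your direct construction, you must at minimum alternate between $\bT$ and $\bT'$ across the $\widehat I$-factors and specify the $B_i$ for $i\in I$, and then re-verify case by case that $\eta$ is pinned to $\cZ(K)$.
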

\begin{proof}
If $n=0$, then Lemma \ref{lem:DD2} is the case $I=\{2\}$ with $\widehat{I}=\{1\}$
(whose proof is direct to adapt for $I=\{1\}$,  $\widehat{I}=\{2\}$),
and Lemma \ref{lem:DD3} is the case $I=\emptyset$ with $\widehat{I}=\{1,2\}$.
As explained in Remark \ref{Rem:gen2}, one can also easily handle the case $I=\{1,2\}$ with $\widehat{I}=\emptyset$,
or the cases where $p+\widehat{p}=1$. These are all the possibilities for $n=0$.
Let us therefore assume that $n\geq 1$.

\underline{Case 1.} \emph{$J$ is not empty or $1\in I\sqcup \widehat{I}$.} We realize the manifold $M_{2,n}$ as
 \be
 M_{2,n} \simeq M_{1,n} \times \bD(K) = \{(X_-,A_2,B_2)\}, \quad X_-=(A_1,B_1,C_1,\ldots,C_n)\,.
 \label{Prem2a}
 \ee
 \underline{Subcase 1.a.} \emph{$2\in I$.}
 The open subspace $Y\subset M_{2,n}$ \eqref{Z33} can be factored as
 $Y_- \times K^{\reg} \times K$, where $Y_-\subset M_{1,n}$ is defined as in \eqref{Z33} after omitting the index $2$ from $I$.
Furthermore, we can realize $\vec{\bT}\simeq \vec{\bT}_- \times \bT$, with $\bT$ corresponding to
 $(H_2^{\chi_1},\ldots,H_2^{\chi_\ell})$, such that the action of $\vec{\bT}_-$ restricts to $Y_-$.
By Lemma \ref{lem:Preli5}, there exists $X_-\in Y_-$ whose isotropy group with respect to the action of
$\bG':=K\times \vec{\bT}_- \times \{e\}\subset \bG$ is $\cZ(K)\times \{e\}$.
Thus, if one takes $A_2\in \exp(\ri \cA)$ and works with $X:=(X_-,A_2,B_2)\in Y$, an element
$(\eta,T_-,T)\in \bG_X$ of its isotropy group must be such that $(\eta,T_-)\in \bG'_{X_-}$,
which entails $\eta\in \cZ(K)$, $T_-=e$, and then $BT^{-1}=B$.
This last condition implies $T=e$ and we are done.

 \underline{Subcase 1.b.} \emph{$2\in \widehat{I}$.}
 The open subspace $Y\subset M_{2,n}$ \eqref{Z33} can be factored as
 $Y_- \times \Phi^{-1}(K^{\reg})$, where $Y_-\subset M_{1,n}$ is defined as in \eqref{Z33} after omitting the index $2$ from $\widehat{I}$ and $\Phi$ is the momentum map for the $\bD(K)$ factor in \eqref{Prem2a}.
 As for the previous case, we can realize $\vec{\bT}\simeq \vec{\bT}_- \times \bT^{\ad}$ such that
 there exists $X_-\in Y_-$ whose isotropy group with respect to the action of
 $K\times \vec{\bT}_- \times \{e\}\subset \bG$ is $\cZ(K)\times \{e\}$, by Lemma \ref{lem:Preli5}.
 Let us work with $X:=(X_-,A_2,B_2)\in Y$ where, for $g\in K$ verifying $g \bT g^{-1} = \bT'$,
 $A_2\in \bT'\cap K^\reg$ and $B_2$ is a representative of a Coxeter element of the pair $(K, \bT')$ such that
 $[A_2,B_2] \in g \exp(\ri \cA) g^{-1}$ (cf. the proof of Lemma \ref{lem:DD4}).
 Then $(\eta,T_-,[T])\in \bG_X$ must be such that $\eta\in \cZ(K)$, $T_-=e$, and
 $T' B_2 T'^{-1}=B_2$ for $T'=gTg^{-1}$.
We have already remarked previously that the last condition implies $T'\in \cZ(K)$, and therefore $[T]=e$ from which we can conclude.

\underline{Subcase 1.c.} \emph{$2\notin I\sqcup\widehat{I}$.}
One can just run the argument of one of the previous 2 cases since there is no regularity condition and no torus action on the pair $(A_2,B_2)$.

\underline{Case 2.} \emph{$J$ is empty and $1\notin I\sqcup \widehat{I}$.}
Since $p+\widehat{p}+q>0$, we must have $I\sqcup \widehat{I} = \{2\}$.
So it suffices to swap the pairs $(A_1,B_1)$ and $(A_2,B_2)$ in case 1.a) if $2\in I$, or in case 1.b) if $2\in \widehat{I}$.
\end{proof}

\begin{lem}\label{Lem:App1}
Working under Assumption \ref{assume:HIJ},
with $\bG:= K \times \vec{\bT}$, the principal isotropy group for the $\bG$-action on  $Y$ \eqref{Z33}
is the direct product $\cZ(K) \times \{e\}$.
In particular, the principal isotropy subgroups of the $K$ and $\vec{\bT}$ actions are $\cZ(K)$ and $\{e\}$, respectively.
\end{lem}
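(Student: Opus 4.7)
The plan is to argue by induction on $m$, using the fusion decomposition
\be
M_{m,n} \;=\; M_{m-1,n} \circledast \bD(K)
\ee
in which the rightmost factor $\bD(K)$ is the one containing $(A_m,B_m)$. The base cases $m\in\{0,1,2\}$ are the content of Lemmas \ref{lem:Preli4}, \ref{lem:Preli5} and \ref{lem:Prem2}. By Lemma \ref{Lem:Fus}, every Hamiltonian in $\fH(I,\widehat{I},J)$ whose label differs from $m$ has integral curves fixing $(A_m,B_m)$, whereas the Hamiltonians whose label equals $m$ (if any) have integral curves fixing all other components. Consequently, the $\vec{\bT}$-action splits compatibly with the decomposition into an action of $\vec{\bT}_-$ on the $M_{m-1,n}$-factor coming from $(I',\widehat{I}',J'):=(I\setminus\{m\},\widehat{I}\setminus\{m\},J)$ and at most one extra torus factor acting only on $(A_m,B_m)$.

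For $m\geq 3$, I would then distinguish three cases. If $m\notin I\sqcup\widehat{I}$, then $\vec{\bT}=\vec{\bT}_-$ and the constraints defining $Y$ do not involve $(A_m,B_m)$; by the induction hypothesis, one picks a seed $X_-\in Y_-$ with $(K\times\vec{\bT}_-)$-isotropy $\cZ(K)\times\{e\}$ and extends to $X=(X_-,A_m,B_m)$ with arbitrary $(A_m,B_m)$, noting that $\cZ(K)$ acts trivially by conjugation. If $m\in I$, one follows the recipe of Lemma \ref{lem:Z1}: take $A_m\in\exp(\ri\cA)\subset\bT^\reg$, so that the $m$-th torus factor acts on $B_m$ as $B_m\mapsto B_m T^{-1}$; combined with $\eta\in\cZ(K)$ and $\vec{T}_-=e$ coming from the induction hypothesis, the fixed-point equation $B_m T^{-1}=B_m$ forces $T=e$. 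If $m\in\widehat{I}$, one follows the recipe of Lemma \ref{lem:DD4}: pick a maximal torus $\bT'$ with $\bT\cap\bT'=\cZ(K)$, $A_m\in\bT'^\reg$, and $B_m$ a representative of a Coxeter element of $(K,\bT')$ such that $[A_m,B_m]$ sits in a regular conjugacy class; the conditions that the $m$-th $\bT^{\ad}$-factor fix $A_m$ and then $B_m$ (after noting that $\eta\in\cZ(K)$) successively force $g_{\taU}^m\in\bT'$ and then $g_{\taU}^m\in\cZ(K)$, trivialising the corresponding class in $\bT^{\ad}$.

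The main subtlety in this plan is that removing the label $m$ from $(I,\widehat{I},J)$ may leave $(I',\widehat{I}',J')$ with $p'+\widehat{p}'+q'=0$, so Assumption \ref{assume:HIJ} fails for $M_{m-1,n}$ and the induction hypothesis does not literally apply. To handle this edge case I would enlarge the inductive statement so as to also cover the bare $K$-action on $M_{m',n}$ with $m'\geq 1$ and $I'=\widehat{I}'=J'=\emptyset$; for such a configuration, taking $A_1\in\bT^\reg$ and $B_1$ a representative of a Coxeter element of $(K,\bT)$ yields $Z_K(A_1)\cap Z_K(B_1)=\bT\cap Z_K(B_1)=\cZ(K)$, since no non-central element of $\bT$ is fixed by a Coxeter transformation, and the remaining components may be chosen arbitrarily. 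With this augmented hypothesis in place, the three cases above close the induction, and the statements about the principal isotropy of the separate $K$- and $\vec{\bT}$-actions follow as immediate corollaries (the latter because $\vec{\bT}$ is a direct factor of $\bG$ acting freely at the constructed point). The hardest part is keeping the combinatorial bookkeeping of the three cases synchronised with the fusion-compatibility property of Lemma \ref{Lem:Fus}; beyond this, the argument is a direct generalisation of the constructions already carried out in Lemmas \ref{lem:Prem2} and \ref{lem:DD4}.
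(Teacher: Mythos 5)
Your proposal is correct and follows essentially the same route as the paper: induction on $m$ with the base cases $m\in\{0,1,2\}$ delegated to Lemmas \ref{lem:Preli4}, \ref{lem:Preli5}, \ref{lem:Prem2}, the splitting $M_{m,n}\simeq M_{m-1,n}\times\bD(K)$ justified by Lemma \ref{Lem:Fus}, and the same three subcases ($m\in I$, $m\in\widehat I$, $m\notin I\sqcup\widehat I$) treated by the alcove/Coxeter-element choices of Lemmas \ref{lem:Z1} and \ref{lem:DD4}. The only divergence is the degenerate case $J=\emptyset$, $I\sqcup\widehat I=\{m\}$: the paper splits off the \emph{first} $\bD(K)$ factor so that the residual data still satisfy Assumption \ref{assume:HIJ}, whereas you enlarge the induction hypothesis to cover the bare $K$-action via the pair $(A_1,B_1)$ with $A_1\in\bT^\reg$ and $B_1$ a Coxeter representative — both fixes are valid and equivalent in substance.
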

\begin{proof}
If $m=0,1,2$, the result is given by Lemmas \ref{lem:Preli4}, \ref{lem:Preli5} and \ref{lem:Prem2}.
So we can assume that $m\geq 3$ and obtain the result by induction on $m$ by adapting the proof of Lemma \ref{lem:Prem2}.

\underline{Case 1.} \emph{$J$ is not empty or there exists $i\in I\sqcup \widehat{I}$ with $i\neq m$.}
We realize the manifold $M_{m,n}$ as
\be
M_{m,n} \simeq M_{m-1,n} \times \bD(K) = \{(X_-,A_m,B_m)\},
\label{App1a}
\ee
with $X_-=(A_1,B_1,\ldots,A_{m-1},B_{m-1},C_1,\ldots,C_n)$.

\underline{Subcase 1.a.} \emph{$m\in I$.}
The open subspace $Y\subset M_{m,n}$ \eqref{Z33} can be factored as
$Y_- \times K^{\reg} \times K$, where $Y_-\subset M_{m-1,n}$ is defined as in \eqref{Z33} after omitting the index $m$ from $I$.
Furthermore, we can realize $\vec{\bT}\simeq \vec{\bT}_- \times \bT$ in such a way that the action of $\vec{\bT}_-$ restricts to $Y_-$.
Note that $\vec{\bT}_-$ is of dimension $(p+\widehat{p}+q-1)\ell >0$ since either $J$ or $(I\sqcup \widehat{I})\setminus \{m\}$ is not empty in this case.

By induction, there exists $X_-\in Y_-$ whose isotropy group for the action of
$\bG':=K\times \vec{\bT}_- \times \{e\}\subset \bG$ is $\cZ(K)\times \{e\}$.
Thus, if one takes $A_m\in \exp(\ri \cA)$ and works with $X:=(X_-,A_m,B_m)\in Y$,
we can argue as in the corresponding case of Lemma \ref{lem:Prem2} that $\bG_X=\cZ(K)\times \{e\}$.

\underline{Subcase 1.b.} \emph{$m\in \widehat{I}$.}
The open subspace $Y\subset M_{m,n}$ \eqref{Z33} can be factored as
$Y_- \times \Phi^{-1}(K^{\reg})$, where $Y_-\subset M_{m-1,n}$ is defined as in \eqref{Z33}
after omitting the index $m$ from $\widehat{I}$.
We can again realize $\vec{\bT}\simeq \vec{\bT}_- \times \bT^{\ad}$ in such a way that $\vec{\bT}_-$ is of positive dimension and its action restricts to $Y_-$.
Thus, by induction, there exists $X_-\in Y_-$ whose isotropy group with respect to the action of
$K\times \vec{\bT}_- \times \{e\}\subset \bG$ is $\cZ(K)\times \{e\}$.
As in the corresponding case of Lemma \ref{lem:Prem2},
we choose to work with $X:=(X_-,A_m,B_m)\in Y$ where
$A_m\in \bT'\cap K^\reg$, $B_m$ is a representative of a Coxeter element of $(K, \bT')$, and
$[A_m,B_m] \in g \exp(\ri \cA) g^{-1}$. One then deduces that $\bG_X=\cZ(K)\times \{e\}$.

\underline{Subcase 1.c.} \emph{$m\notin I\sqcup\widehat{I}$.}
We can repeat one of the previous 2 arguments with no regularity condition or torus action on $(A_m,B_m)$.

\underline{Case 2.} \emph{$J$ is empty and $I\sqcup \widehat{I} = \{m\}$.}
We realize the manifold $M_{m,n}$ as
\be
M_{m,n} \simeq M_{m-1,n} \times \bD(K) = \{(X_+,A_1,B_1)\},
\label{App1b}
\ee
with $X_+=(A_2,B_2,\ldots,A_{m},B_{m},C_1,\ldots,C_n)$,
and then we proceed similarly to Case 1.
\end{proof}

\subsubsection{Applicability of the main theorem}

We work under the conditions of Assumption \ref{assume:HIJ} for $m,n$ and $I,\widehat{I},J$.
\begin{prop}\label{pr:Z11}
Theorem \ref{thm:G12} is applicable to  the Abelian Poisson algebra $\fH(I,\widehat{I},J)$ from Definition \ref{defn:Z7}
with  $M=M_{m,n}$, $Y(I,\widehat{I},J)\subset M$ \eqref{Z33}, $\vec{H}$ \eqref{Z34},
$\Gone=K$ and $\Gtwo=\vec{\bT}$ \eqref{vecT} acting as described in the above.
Consequently,  $\fH(I,\widehat{I},J)$  engenders integrable systems of rank $(p+\widehat{p}+q)\ell$ whose action variables arise from $\vec{H}$.
\end{prop}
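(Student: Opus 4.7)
The plan is to deduce Proposition \ref{pr:Z11} by verifying that $(M_{m,n}, P_{m,n})$ together with $\fH(I,\widehat{I},J)$, $Y(I,\widehat{I},J)$, $\vec{H}$, $\Gone=K$ and $\Gtwo=\vec{\bT}$ satisfies all four conditions of Scenario \ref{scen:G11}. Property \ref{itG5:1} of Assumption \ref{assume:G5} is built into the quasi-Poisson setup, since $(M_{m,n}, P_{m,n})$ is a quasi-Poisson $K$-manifold and so $C^\infty(M')^K$ is a genuine Poisson algebra for any $K$-invariant open submanifold $M'$, as recalled in Subsection \ref{ss:qPprelim}.

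For condition \ref{itScen:1}, the commutativity of the flows in Lemma \ref{lem:Z8} (whose components for different Hamiltonians either act on disjoint factors of $M_{m,n}$ or act by conjugations that preserve the arguments of the other Hamiltonians) shows that $\fH(I,\widehat{I},J)$ is an Abelian Poisson algebra, as already pointed out in Remark \ref{rem:Z9}; the explicit formulas \eqref{Z31}, \eqref{Z31bar} and \eqref{Z32} also immediately furnish complete flows on all of $M_{m,n}$. For condition \ref{itScen:2}, the connectedness of $Y(I,\widehat{I},J)$ was already observed just before \eqref{Z34} using Lemma \ref{lem:Z-D}. The Poisson involutivity of the components of $\vec{H}$ is a special case of the commutativity just noted. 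The action of $\vec{\bT}=\bT^{p}\times (\bT^{\ad})^{\widehat{p}}\times (\bT^{\ad})^{q}$ on $Y(I,\widehat{I},J)$ is a straightforward combination of the actions already described in Lemmas \ref{lem:Z1}, \ref{lem:Z5ad} and \ref{lem:DD1}: on each factor $A_{i_\alpha}$ (for $i_\alpha \in I$) one acts as in \eqref{Z11}; on each pair $(A_{j_\gamma},B_{j_\gamma})$ (for $j_\gamma\in \widehat{I}$) one acts by simultaneous conjugation as in \eqref{Z-D1}; on each block $(C_{\lambda_\beta},\dots,C_{\rho_\beta})$ one acts by simultaneous conjugation as in \eqref{Z21}. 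Properness follows because $\vec{\bT}$ is compact. Effectivity is the $\{e\}$-factor statement contained in Lemma \ref{Lem:App1}. Independence of the components of $\vec{H}$ at every point of $Y$ reduces, block by block, to the analogous independence statements established for $\chi_j$ in Lemma \ref{lem:J3} and for $\Xi_j$ in Remark \ref{rem:Cowei} via \eqref{LI6}, all applied at regular arguments in $K^\reg$.

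For condition \ref{itScen:3}, note that any $\chi\in C^\infty(K)^K$ can be expressed on $K^\reg$ in terms of $\chi_1,\dots,\chi_\ell$ (by Lemma \ref{lem:J2}) and equivalently in terms of $\Xi_1,\dots,\Xi_\ell$ (by Remark \ref{rem:Cowei}). Hence the restriction of any generator $H_{i_\alpha}^{\chi}$, $\widehat H_{j_\gamma}^{\chi}$ or $H_{[\lambda_\beta,\rho_\beta]}^{\chi}$ to $Y(I,\widehat{I},J)$ is a smooth function of the corresponding block of components of $\vec{H}$; the equality of the spans of the exterior derivatives at each point of $Y$ then follows from the independence statement just discussed together with the chain rule. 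Condition \ref{itScen:4} is Lemma \ref{Lem:App1}. Once Scenario \ref{scen:G11} has been checked, Theorem \ref{thm:G12} applies directly and produces an integrable system of rank $\ell_{\mathrm{tot}}=(p+\widehat p+q)\ell$ on $(M_{m,n})_\ast^\red$ with the components of $\vec{H}$ as generalized action variables.

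The only part that is not purely routine is the verification of condition \ref{itScen:4}, but this has already been handled in Lemma \ref{Lem:App1} by a case analysis on $m$ and on whether the surface has a distinguished factor that can carry a Coxeter representative; once that lemma is in place the remaining verifications are essentially direct transcriptions of the corresponding steps in Propositions \ref{pr:Zsphere}, \ref{pr:Z2} and \ref{pr:DDmain}.
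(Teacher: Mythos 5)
Your proof is correct and follows essentially the same route as the paper: both verify the four conditions of Scenario \ref{scen:G11} one by one (with Lemma \ref{Lem:App1} supplying condition \ref{itScen:4} and the effectivity/triviality of the principal $\vec{\bT}$-isotropy) and then invoke Theorem \ref{thm:G12}. Your write-up is somewhat more detailed than the paper's, which simply cites compactness of $M_{m,n}$ for completeness of flows and leaves the block-by-block independence argument implicit, but the substance is identical.
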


\begin{proof}
We need to check that all assumptions of Scenario \ref{scen:G11} are satisfied, so that Theorem \ref{thm:G12} can be applied.
We use the shorthand notation $\vec{\bT}$, $\bG:= K \times \vec{\bT}$, $Y$ and $\fH$ introduced at the beginning of the appendix.

It is clear that $\fH$ is Abelian and the integral curves its elements define are complete since $M_{m,n}$ is compact (see also Lemma \ref{lem:Z8} and Remark \ref{rem:Z9}). Thus condition \ref{itScen:1} in Scenario \ref{scen:G11} is satisfied.
The components of $\vec{H}$ \eqref{Z34} are smooth functions on $Y$ in terms of which we can express any element of $\fH$. They generate an action of the torus $\vec{\bT}\simeq \operatorname{U}(1)^{(p+\widehat{p}+q)\ell}$, automatically proper,
whose principal isotropy group is trivial.
Hence, conditions \ref{itScen:2} and \ref{itScen:3}  in Scenario \ref{scen:G11} are satisfied.
Finally, the isotropy group of $X\in Y_0$  is
\be
 \bG_X = \cZ(K) \times \{e\} = K_X \times \vec{\bT}_X\,,
\ee
by Lemma \ref{Lem:App1}. Therefore condition \ref{itScen:4} also holds.
\end{proof}

\subsection{Possible integrable systems obtained by extending \texorpdfstring{$\fH(I,\widehat{I},J)$}{H(I,I,J)}}

We briefly outline further potential applications of our method for obtaining integrable systems on $(M_{m,n})_*/K$
and symplectic leaves therein.
We leave a detailed study of these cases to the interested reader.

\subsubsection{Extra invariant functions from momentum maps}
\label{sss:Extra}

As a modification of the general construction considered so far, we can add Hamiltonians of the form
\be
\widetilde H_{[k_1,k_2]}^\chi(X) := \chi([A_{k_1},B_{k_1}]\, [A_{k_1+1},B_{k_1+1}] \cdots [A_{k_2},B_{k_2}]), \quad
\forall \chi\in C^\infty(K)^K,\,
\label{Z40}
\ee
with $1\leq k_1<k_2\leq m$
whenever $\{k_1,k_1+1,\ldots,k_2\} \cap (I\sqcup \widehat{I}) = \emptyset$, or
\be
\widetilde H_{k;\kappa}^\chi(X) := \chi([A_{k},B_{k}] \cdots [A_{m},B_{m}]\, C_1 \cdots C_\kappa),
\quad \forall \chi\in C^\infty(K)^K,
\label{Z41}
\ee
if $k>\max(i_p,j_{\widehat{p}})$ and $\kappa<\lambda_1$. The Hamiltonians in \eqref{Z40} and \eqref{Z41} are $K$-invariant functions
of  `partial momentum maps' associated with consecutive copies of quasi-Poisson factors appearing in $M_{m,n}$ \eqref{W11}.
Thus, they represent generalizations  of the Hamiltonians \eqref{Z27bar} and \eqref{Z30}.
In view of Lemmas \ref{lem:Wmomap} and \ref{Lem:Fus}, their  flows commute with all flows generated by $\fH(I,\widehat{I},J)$.
As is always the case when the Hamiltonians are obtained from momentum maps,
 one has to apply the functions $\Xi_j$ \eqref{LI5} to define the pertinent action variables.

There are two technical issues to consider in this case.
The first one is to check the connectedness of the analogue of $Y(I,\widehat{I},J)$ \eqref{Z33}, where we further require that
 the arguments appearing in the right-hand sides of \eqref{Z40} and \eqref{Z41} are all regular.
This is direct for the latter (we can replace $C_\kappa$ by the argument of \eqref{Z41} in the parametrization of $M_{m,n}$), but
 connectedness
after requiring the regularity of a product of commutators as in \eqref{Z40} needs further
justification.
The second question is to see if the principal isotropy group for the relevant $\bG$-action is still given
 by $\cZ(K)\times \{e\}$ as in the previous cases.

\subsubsection{Nested Hamiltonians}

Yet another way to add Hamiltonians to $\fH(I,\widehat{I},J)$ consists in allowing nested non-intersecting intervals of $\{1,\ldots,n\}$.
For clarity, relabel $J$ and the different indices appearing in \eqref{Z28} as $J^{(0)}$ and $\lambda_j^{(0)},\rho_j^{(0)}$ for $1\leq j \leq q$.
Next, we fix a set of non-intersecting closed intervals
\be
J^{(1)}:= \{ [\lambda_1^{(1)}, \rho_1^{(1)}], [\lambda_2^{(1)}, \rho_2^{(1)}], \dots, [\lambda_{q_1}^{(1)}, \rho_{q_1}^{(1)}] \},
\label{Z71} \ee
such that the boundaries of the intervals are integers satisfying the obvious analogue of \eqref{Z29}.
Furthermore, for all $1\leq \beta_0\leq q$, $1\leq \beta_1\leq q_1$, we assume that we are in one of the following two situations:
\begin{itemize}
 \item $[\lambda_{\beta_0}^{(0)}, \rho_{\beta_0}^{(0)}] \cap [\lambda_{\beta_1}^{(1)}, \rho_{\beta_1}^{(1)}] = \emptyset$, or
 \item $[\lambda_{\beta_0}^{(0)}, \rho_{\beta_0}^{(0)}] \cap [\lambda_{\beta_1}^{(1)}, \rho_{\beta_1}^{(1)}] =
 [\lambda_{\beta_0}^{(0)}, \rho_{\beta_0}^{(0)}] \subsetneq [\lambda_{\beta_1}^{(1)}, \rho_{\beta_1}^{(1)}]$.
\end{itemize}
In other words, an interval from $J^{(0)}$ is either properly contained in an interval from $J^{(1)}$, or it does not intersect it.
We associate with $J^{(1)}$ the Hamiltonians
 \be
 H_{[\lambda_\beta, \rho_\beta]}^{(1),\chi}(X) := \chi(C_{\lambda_\beta^{(1)}} C_{\lambda_\beta^{(1)} + 1}\cdots C_{\rho_\beta^{(1)}}),
 \quad \forall \chi \in C^\infty(K)^K,\,\, \forall \beta =1,\dots, q_1.
 \label{Z72}\ee
One can then verify that the corresponding integral curves take the form \eqref{Z32}, and that the Hamiltonians \eqref{Z72} commute with those in $\fH(I,\widehat{I},J)$.

We may iterate the procedure and select  a set of non-intersecting closed intervals $J^{(2)}$ subject to \eqref{Z29}, so that each new interval either properly contains any
given interval of $J^{(0)}$ or $J^{(1)}$, or it does not intersect it at all. Then we can define  Hamiltonians corresponding to $J^{(2)}$ as in \eqref{Z72}.
The final family of Poisson commuting Hamiltonians is made out of \eqref{Z27}, \eqref{Z27bar}, \eqref{Z30}, \eqref{Z72} and those that may arise from  further compatible
 nesting intervals $J^{(2)},J^{(3)}$ and so on.
(We call these `nested Hamiltonians' since  their arguments represent holonomies of flat connections along families of
nested curves around the $n$ boundary components; see also Subsection \ref{ss:compAR}.)
The procedure can also be modified so as to `shrink the intervals'.
This means that one can add such sets of intervals $J^{(-1)},J^{(-2)},\ldots$ for which  each interval from $J^{(k-1)}$ is either disjoint from or is  properly contained in any given
interval from $J^{(k+j)}$, $j\geq 0$.
However, as the construction rapidly becomes tedious to handle, we here do not discuss the applicability
of  Scenario \ref{scen:G11} to the ensuing ingredients.

\subsection{New systems from permutations}

So far we constructed integrable systems on $M_{m,n}$ \eqref{W11}, which is obtained by successive fusion
of the building blocks of the product manifold $\bD(K) \times \cdots \times \bD(K) \times K\times \cdots \times K$.
However, one may start from another quasi-Poisson manifold $\check{M}_{m,n}$, obtained similarly to $M_{m,n}$ by fusion of
$m$ copies of $\bD(K)$ and $n$ copies of $K$, but taken in an arbitrary order, different from that in $M_{m,n}$.
After building an integrable system on $\check{M}_{m,n}$, we can transfer it to our `canonical model'   $M_{m,n}$, where
it may give a system not covered by the previous constructions.
To this end, it suffices to read the family of commuting Hamiltonians of the integrable system on $\check{M}_{m,n}$ through a quasi-Poisson diffeomorphism between $\check{M}_{m,n}$ and  $M_{m,n}$,
and such a quasi-Poisson diffeomorphism can be derived by successive applications of \cite[Prop.~5.7]{AKSM} relating  $M_1\circledast M_2$ to $M_2\circledast M_1$.
Indeed, considering the fusion product of $N$ quasi-Poisson manifolds $M_j$, $1\leq j \leq N$,  the permutation of two consecutive factors induces the
quasi-Poisson diffeomorphism
\begin{equation}
 \begin{aligned} \label{Z50}
  \psi_j : M_1 \circledast \cdots \circledast M_j \circledast M_{j+1} \circledast \cdots \circledast M_N
  \to M_1 \circledast \cdots \circledast M_{j+1} \circledast M_{j} \circledast \cdots \circledast M_N \\
  \psi_j(m_1 ,\ldots, m_j,m_{j+1},\ldots,m_N) = (m_1 ,\ldots,  \Phi_j(m_j) \cdot m_{j+1}, m_j,\ldots,m_N),
 \end{aligned}
\end{equation}
 where $\Phi_j$ is the momentum map on $M_j$ and $\Phi_j(m_j) \cdot m_{j+1}$ uses the $K$-action on $M_{j+1}$.

We now exhibit an example of the construction just sketched.
To do this, we start with
\be
\check{M}_{2,2} =  \bD(K) \circledast K \circledast \bD(K) \circledast K,
\label{Z51}\ee
and write the points $\check{X}\in \check{M}_{2,2}$ as $\check{X}= (U_1,V_1,W_1 , U_2,V_2,W_2)$.
Then, we introduce the Abelian Poisson subalgebra $\check\fH$ of $C^\infty(\check{M}_{2,2})^K$ whose elements are polynomials in
\be
\check{H}_k^\chi(\check{X}) := \chi([U_k,V_k]\, W_k),
\quad \forall \chi\in C^\infty(K)^K, \,\, k=1,2.
\label{Z52}
\ee
For $k=1,2$, put $D_k:=[U_k,V_k]\, W_k$. Notice that
\be
\check{\Phi}_k:\check{M}_{2,2}\to K, \qquad \check{\Phi}(\check{X})=D_k,
\label{Z53}
\ee
is the momentum map of the quasi-Poisson manifold $\bD(K) \circledast K$ appearing as the $k$-th couple of factors in \eqref{Z51}.
Arguing like in the proof of Lemma \ref{lem:Z8}, we can find the integral curves corresponding to any Hamiltonian $\check{H}_k^\chi$.
Starting at $(u_1,v_1,w_1, u_2,v_2,w_2)\in \check{M}_{2,2}$, the  integral curve generated by  $\check{H}_k^\chi$,
\be
 \check{X}(\tau) = (U_1(\tau),V_1(\tau),W_1(\tau)   , U_2(\tau),V_2(\tau),W_2(\tau)),
\label{Z54}
\ee
 satisfies (for  $\ell=1,2$)
\begin{equation}
 \begin{aligned}
   U_\ell(\tau) &=
\exp(\delta_{k\ell}\,\tau \nabla \chi(d_k)) \,u_\ell\, \exp(-\delta_{k\ell}\, \tau \nabla \chi(d_k)) , \\
   V_\ell(\tau) &=
\exp(\delta_{k\ell}\,\tau \nabla \chi(d_k)) \, v_\ell\, \exp(-\delta_{k\ell}\, \tau \nabla \chi(d_k)) , \\
   W_\ell(\tau) &=
\exp(\delta_{k\ell}\,\tau \nabla \chi(d_k)) \, w_\ell\, \exp(-\delta_{k\ell}\, \tau \nabla \chi(d_k)) ,
\label{Z55}
 \end{aligned}
\end{equation}
with $d_k:=[u_k,v_k]w_k$. In particular, \eqref{Z55} gives constant elements for $\ell \neq k$.

A connected, dense open submanifold of $\check{M}_{2,2}$ is provided by
\be
\check{Y}:=\check{\Phi}_1^{-1}(K^\reg) \cap \check{\Phi}_2^{-1}(K^\reg)\,.
\label{Z55bis}
\ee
Indeed, $\check{Y}$ is diffeomorphic to $\bD(K)^2 \times (K^{\reg})^2$
since we can replace $W_k$ by $D_k$ in the parametrization of $\check{M}_{2,2}$.
Then, as is now standard, we can prove the following result.
\begin{lem}\label{lem:BZ1}
The joint flows of the quasi-Hamiltonian vector fields of the components of
\be
\vec{\check{H}}:=(\check{H}_1^{\Xi_1}, \dots,   \check{H}_1^{\Xi_\ell},
\check{H}_2^{\Xi_1}, \dots,   \check{H}_2^{\Xi_\ell}): \check{Y} \to \bR^{2\ell},
\label{Z56}
\ee
yield an effective action of the torus $(\bT^{\ad})^2$ on $\check{Y}$.
Using $T_{\omega^\vee}(\taU)$ \eqref{LI3}
and defining, for $k=1,2$, $\check{g}_{k|\taU}=\check{g}_{k|\taU}(U_k,V_k,W_k)\in K$
with the `diagonalizer' $[\Gamma_2]$  \eqref{J11} by
\be
 \check{g}_{k|\taU} := \Gamma_2([U_k,V_k]W_k)^{-1} T_{\omega^\vee}(\taU) \Gamma_2([U_k,V_k]W_k),
\label{Z57}
\ee
the action of $([T_{\omega^\vee}(\taU_1)],[T_{\omega^\vee}(\taU_2)])\in (\bT^{\ad})^2$ on $\check{Y}$  is given by the mapping
\be
(U_\ell,V_\ell,W_\ell)_{\ell=1,2} \mapsto
\Big(\Ad_{\check{g}_{\ell|\taU_\ell}}(U_\ell),\Ad_{\check{g}_{\ell|\taU_\ell}}(V_\ell) ,
\Ad_{\check{g}_{\ell|\taU_\ell}}(W_\ell)\Big)_{\ell=1,2}\,.
\label{Z58}
\ee
\end{lem}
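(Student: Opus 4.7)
The plan is to proceed along the same lines as in the proofs of Lemmas \ref{lem:Z5ad}, \ref{lem:DD1}, and \ref{lem:DD3}, exploiting that $\check{H}_k^\chi = \chi \circ \check{\Phi}_k$ is the pull-back of an invariant function by the momentum map of the $k$-th factor $\bD(K)\circledast K$ of $\check{M}_{2,2}$. First, I would invoke Lemma \ref{Lem:Fus} to confirm that the flow of $\check{H}_k^\chi$ fixes the triple with index different from $k$, and Lemma \ref{lem:Wmomap} applied to the $k$-th copy of $\bD(K)\circledast K$ to rewrite the flow on the $k$-th triple as conjugation by $\exp(\tau\,\nabla\chi(D_k))$ with $D_k=[U_k,V_k]W_k$; this is precisely the formula \eqref{Z55}. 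A key observation is that $\check{\Phi}_k$ is preserved along its own flow, so on $\check{Y}$ the conjugating element $\Gamma_2(D_k)$ remains constant throughout the flow, which will make the joint flow well-defined as a group action.

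Next, I would specialise to $\chi=\Xi_j$ and use the derivative formula $\nabla\Xi_j(g)=-\Gamma_2(g)^{-1}\ri\omega_j^\vee\Gamma_2(g)$ from \eqref{LI6}. Since for a fixed $k$ all conjugating elements are obtained from $\Gamma_2(D_k)^{-1}(\cdot)\Gamma_2(D_k)$, the individual flows for $j=1,\dots,\ell$ commute and compose to give conjugation by
\[
\Gamma_2(D_k)^{-1}\exp\!\Bigl(-\ri\sum_{j=1}^\ell \tau_j\,\omega_j^\vee\Bigr)\Gamma_2(D_k)
= \Gamma_2(D_k)^{-1}\,T_{\omega^\vee}(\taU_k)\,\Gamma_2(D_k) = \check g_{k|\taU_k}\,,
\]
which is exactly \eqref{Z57}. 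Combined with Lemma \ref{Lem:Fus}, the flows for the two different values of $k$ commute, producing an $\bR^{2\ell}$-action of the form \eqref{Z58}.

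Then I would argue that this $\bR^{2\ell}$-action descends to a $(\bT^{\ad})^2$-action. By Remark \ref{rem:Tad}, the exponential map $\taU\mapsto T_{\omega^\vee}(\taU)$ descends to a group isomorphism $\bR^\ell/(2\pi\bZ)^\ell\stackrel{\sim}{\to}\bT^{\ad}$: the kernel consists of tuples for which $T_{\omega^\vee}(\taU)\in \cZ(K)$, and conjugation by central elements is trivial, so the formula \eqref{Z58} factors through $(\bT^{\ad})^2$.

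Finally, for effectiveness, I would exhibit a point at which the action is free modulo the kernel identified above. Pick $\check X\in\check Y$ with $D_k\in \exp(\ri\cA)\subset\bT^{\reg}$, with $U_1,U_2\in\bT^{\reg}$, and with $V_1,V_2$ representatives of a Coxeter element of $(K,\bT)$; this is consistent since $D_k$ can be freely prescribed in $\bT^{\reg}$ by adjusting $W_k$, and the identity \eqref{F1} from Appendix \ref{App:conn} guarantees the joint choice is realisable. If $([T_{\omega^\vee}(\taU_1)],[T_{\omega^\vee}(\taU_2)])$ fixes $\check X$, then at this point $\Gamma_2(D_k)=e$, so $\check g_{k|\taU_k}=T_{\omega^\vee}(\taU_k)\in\bT$, and the invariance $\Ad_{T_{\omega^\vee}(\taU_k)}(V_k)=V_k$ together with the classical fact that the fixed-point set of any Coxeter element on $\bT$ is $\cZ(K)$ forces $T_{\omega^\vee}(\taU_k)\in\cZ(K)$, i.e.\ $[T_{\omega^\vee}(\taU_k)]=e$ in $\bT^{\ad}$ for $k=1,2$. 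The only non-routine step is this last bookkeeping for effectiveness, but all the needed ingredients (the Coxeter-element argument and the regularity of $D_k$) have already been used repeatedly in Section \ref{Sec:qPoiss}.
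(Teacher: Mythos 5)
Your proposal is correct and follows exactly the route the paper intends: the paper omits the proof with the phrase ``as is now standard,'' and your argument is that standard one, combining Lemmas \ref{lem:Wmomap} and \ref{Lem:Fus} for the integral curves, the derivative formula \eqref{LI6}, Remark \ref{rem:Tad} for the descent to $(\bT^{\ad})^2$, and the Coxeter-element fixed-point fact for effectiveness (the paper defers the latter to Lemma \ref{lem:BZ2}, but your direct check at a well-chosen point is the same mechanism). The only cosmetic remark is that invoking \eqref{F1} is unnecessary, since $W_k$ is an unconstrained factor and $D_k$ can be prescribed arbitrarily by choosing $W_k=[U_k,V_k]^{-1}d_k$.
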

The action \eqref{Z58} commutes with the $K$-action restricted to $\check{Y}$.
We then set $\bG:= K\times (\bT^{\ad})^2$, which also acts on $\check{Y}$.

\begin{lem}\label{lem:BZ2}
The principal isotropy group for the $\bG$-action on $\check{Y}$ is $\cZ(K) \times \{e\}$.
\end{lem}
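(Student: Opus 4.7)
The plan is to adapt the arguments of Lemmas~\ref{lem:DD4} and~\ref{lem:J9}: since $\cZ(K)\times\{e\}$ fixes every point of $\check Y$ (the centre acts trivially by simultaneous conjugation, and the identity in $(\bT^{\ad})^2$ acts trivially), it is contained in every isotropy group of the $\bG$-action, so the principal isotropy cannot be strictly smaller. Hence it suffices to exhibit a single point $\check X\in \check Y$ whose $\bG$-isotropy equals $\cZ(K)\times\{e\}$.

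First, I would fix two maximal tori $\bT,\bT'<K$ with $\bT\cap\bT' = \cZ(K)$ (see Remark~\ref{rem:J4}), together with $g_0\in K$ satisfying $g_0\bT g_0^{-1}=\bT'$, and pick representatives $V_1\in N_K(\bT)$ and $V_2\in N_K(\bT')$ of Coxeter elements of the respective Weyl groups. Then I would take $U_1\in \bT^{\reg}$ and $U_2\in \bT'^{\reg}$, and choose $W_1,W_2\in K$ so that $D_1 := [U_1,V_1]\,W_1$ lands in $\exp(\ri\cA)\subset\bT^{\reg}$ while $D_2 := [U_2,V_2]\,W_2$ lands in $g_0\exp(\ri\cA)g_0^{-1}\subset\bT'^{\reg}$; this is a mere choice of the $W_k$ once the other data are fixed. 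Regularity of both $D_k$ ensures that $\check X:=(U_1,V_1,W_1,U_2,V_2,W_2)\in \check Y$. With this choice, the definition \eqref{J10} of $\Gamma_2$ yields $\Gamma_2(D_1)\in\bT$ and $\Gamma_2(D_2)\in\bT g_0^{-1}$, so \eqref{Z57} gives $\check g_{1|\taU_1}=T_{\omega^\vee}(\taU_1)\in \bT$ and $\check g_{2|\taU_2}=g_0\,T_{\omega^\vee}(\taU_2)\,g_0^{-1}\in \bT'$.

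Next, I would analyse $\bG_{\check X}$ in two stages. Under the combined $(\eta,[T_{\omega^\vee}(\taU_1)],[T_{\omega^\vee}(\taU_2)])$-action, the product $D_k=[U_k,V_k]W_k$ transforms as $D_k\mapsto \eta\,\check g_{k|\taU_k}\,D_k\,\check g_{k|\taU_k}^{-1}\,\eta^{-1}$; since $\check g_{k|\taU_k}$ lies in the same maximal torus as $D_k$, this simplifies to $D_k\mapsto \eta D_k\eta^{-1}$, and the regularity conditions $D_1\in\bT^{\reg}$, $D_2\in\bT'^{\reg}$ force $\eta\in\bT\cap\bT'=\cZ(K)$. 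With $\eta$ central, the remaining condition on each $V_k$ reduces to $V_k^{-1}\check g_{k|\taU_k}V_k=\check g_{k|\taU_k}$; since $V_k$ represents a Coxeter element of the pair $(K,\bT)$ or $(K,\bT')$ whose fixed-point set on the corresponding maximal torus is $\cZ(K)$, this forces $\check g_{k|\taU_k}\in \cZ(K)$, hence $[T_{\omega^\vee}(\taU_k)]=e$ in $\bT^{\ad}$ for $k=1,2$.

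I do not expect any serious obstacle. The argument is essentially a parallel application of the `Coxeter trick' of Lemma~\ref{lem:DD4} to the two $\bD(K)$-factors of $\check M_{2,2}$, grafted onto the $\bT/\bT'$ separation used in Lemma~\ref{lem:J9}. The only piece of bookkeeping worth double-checking is the explicit form $\check g_{2|\taU_2}=g_0\,T_{\omega^\vee}(\taU_2)\,g_0^{-1}$, where one must verify that the $\bT$-ambiguity in $\Gamma_2(D_2)$ is invisible in \eqref{Z57} because $T_{\omega^\vee}(\taU_2)\in\bT$ commutes with $\bT$.
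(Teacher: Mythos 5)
Your proposal is correct. The overall strategy coincides with the paper's: both proofs exhibit a single test point of $\check Y$, use two maximal tori with $\bT\cap\bT'=\cZ(K)$ and arrange $D_1\in\bT^{\reg}$, $D_2\in\bT'^{\reg}$ so that the fixed-point condition on the two partial momentum maps forces $\eta\in\cZ(K)$. The only divergence is in how the torus part is killed afterwards: you import the Coxeter trick of Lemma \ref{lem:DD4}, taking $V_1,V_2$ to be Coxeter representatives for $(K,\bT)$ and $(K,\bT')$ and invoking the fixed-point theorem of \cite{DW,Me}, whereas the paper's proof chooses the more economical test point $U_k\in\bT'\cap K^{\reg}$, $V_k\in\bT^{\reg}$ and concludes $T_1,T_2'\in\cZ(K)$ from the elementary fact that the centralizer of a regular element of a maximal torus is that torus (reading off the $U_1$ and $V_2$ components of \eqref{Z59}). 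Both mechanisms are valid; the paper's avoids the Coxeter input entirely, while yours runs closer to the template already established for the genus-two case. Your bookkeeping for $\check g_{2|\taU_2}=g_0\,T_{\omega^\vee}(\taU_2)\,g_0^{-1}$ is also correct: the left $\bT$-ambiguity in $\Gamma_2(D_2)\in\bT g_0^{-1}$ indeed cancels in \eqref{Z57} because $T_{\omega^\vee}(\taU_2)$ commutes with $\bT$.
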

\begin{proof}
We can adapt the proof of Lemma \ref{lem:DD4}.
Fix a second maximal torus $\bT'\subset K$ for which $\bT \cap \bT' = \cZ(K)$, and an element $g\in K$ that verifies $g \bT g^{-1} = \bT'$.
Consider a point $\check{X}\in \check{Y}$ such that for $k=1,2$,
$U_k\in \bT'\cap K^\reg$, $V_k\in \bT^\reg$,
and $W_1,W_2$ are chosen so that $[U_1,V_1]W_1\in \exp(\ri \cA)$ and $[U_2,V_2]W_2 \in g\exp(\ri \cA)g^{-1}$.

Now, we deduce from \eqref{Z58} that the action of $(\eta,[T_1],[T_2])\in \bG$ is furnished by
\be
\check{X} \mapsto \eta \cdot (T_1 U_1 T_1^{-1},T_1 V_1 T_1^{-1},T_1 W_1 T_1^{-1},
T_2' U_2 T_2'^{-1}, T_2' V_2 T_2'^{-1}, T_2' W_2 T_2'^{-1}) \,,
\label{Z59}
\ee
where $T_2'=gT_2g^{-1}$.
If $(\eta,[T_1],[T_2])\in \bG_{\check{X}}$, we have in particular
\be
\eta [U_1,V_1]W_1 \eta^{-1} = [U_1,V_1]W_1, \quad
\eta [U_2,V_2]W_2 \eta^{-1} = [U_2,V_2]W_2,
\label{Z60}
\ee
that guarantee $\eta\in \bT \cap \bT' =\cZ(K)$ by our assumptions on $\check{X}$.
Looking at the $U_1$ and $V_2$ components of \eqref{Z59} then yields $T_1,T_2'\in \cZ(K)$.
Hence, $[T_1]$ and $[T_2]$ both are equal to the identity in $\bT^{\ad}$.
\end{proof}

Similarly to Proposition \ref{pr:Z11}, we can now state that $\check{\fH}$ engenders an integrable system on $(\check{M}_{2,2})_\ast/K$
and this system  admits generalized action variables induced by $\check{\vec{H}}$ \eqref{Z56} on
$\check{Y}_0/K$ and $\check{Y}_0^1/K$ (and on relevant symplectic leaves).
To proceed,  we introduce the quasi-Poisson diffeomorphism  $\psi:M_{2,2}\to \check{M}_{2,2}$ by the formula
\be
\psi(A_1,B_1,A_2,B_2,C_1,C_2)=(A_1,B_1,\Ad_{[A_2,B_2]}(C_1),A_2,B_2,C_2),
\label{Z61}
\ee
which is a special case of \eqref{Z50}.
Combining \eqref{Z52} and \eqref{Z61}, we see that the pull-back $\psi^\ast \check{\fH}$  of
$\check{\fH}$ consists of polynomials in the Hamiltonians
\be
\psi^\ast\check{H}_1^\chi(X) = \chi([A_1,B_1] \Ad_{[A_2,B_2]}(C_1)), \,\,
\psi^\ast\check{H}_2^\chi(X)  = \chi([A_2,B_2]C_2), \,\, \forall \chi\in C^\infty(K)^K.
\label{Z62}
\ee
The reduction of the Abelian Poisson algebra $\psi^\ast \check{\fH}$  gives an integrable system on $({M}_{2,2})_\ast/K$ that
admits action variables on $\psi^{-1}(\check{Y}_0)/K$ and $\psi^{-1}(\check{Y}_0^1)/K$ (and on relevant symplectic leaves).
The resulting system can  be described by using that $\psi$ descends to a Poisson diffeomorphism between the
quotient manifolds $({M}_{2,2})_\ast/K$ and $(\check {M}_{2,2})_\ast/K$.
Based on the  form \eqref{Z62} of the Hamiltonians, we conclude that
these integrable systems do \emph{not} fall within the framework of Subsection \ref{ss:genModuli}.
We  believe that they are not equivalent to any of the previously considered systems.
The message is that
the permutation method  just illustrated  may lead to an abundance of further integrable systems
on moduli spaces of flat connections via applications of Theorem \ref{thm:G12}.

\subsection{Towards comparison with the work of  Arthamonov and Reshetikhin} \label{ss:compAR}

The approach of Arthamonov and Reshetikhin \cite{ARe} for constructing integrable systems on moduli spaces of flat connections relies
on  Abelian Poisson algebras associated to collections of  pairwise non-homotopic and non-intersecting simple closed curves on the corresponding Riemann surface.
There is a one-to-one map \cite{DFN} from free homotopy classes of closed curves to conjugacy classes of the fundamental group, which in principle underlies the connection to our approach.
However, already finding all conjugacy classes of the fundamental group that correspond to simple closed curves is, in general, a difficult problem \cite{BS}.
 Below, we outline a way to understand the Abelian Poisson algebra $\fH(I,\widehat{I},J)$ constructed in Subsection \ref{ss:genModuli} in terms of curves on  the Riemann surface $\Sigma_{m,n+1}$ of genus $m$ with $n+1$ boundary components.
This material is intended to give the gist of the comparison between our approach with the one presented in \cite{ARe}.

The fundamental group of $\pi(\Sigma_{m,n+1})$ can be presented as
\be
\pi(\Sigma_{m,n+1}) =  \langle \alpha_1,\beta_1,\ldots,\alpha_m,\beta_m,\gamma_1,\ldots,\gamma_{n+1} \mid
[\alpha_1,\beta_1]\cdots [\alpha_m,\beta_m] \, \gamma_1 \cdots \gamma_{n+1} =1\,\rangle .
\ee
The homotopy class of any closed curve on $\Sigma_{m,n+1}$ corresponds to a conjugacy class in $\pi(\Sigma_{m,n+1})$.
In this way, it is standard that $\alpha_j$ and $\beta_j$ correspond to the $\mathrm{A}$- and $\mathrm{B}$-cycle around the $j$-th hole, respectively, while $\gamma_k$ corresponds to the oriented loop going around the $k$-th boundary component.
The word $\mathtt{w}_j:=[\alpha_1,\beta_1]\cdots [\alpha_j,\beta_j]$, $1\leq j \leq m$, corresponds to the curve $\mathtt{c}_j$
separating the surface into $\Sigma_{j,1}\sqcup \Sigma_{m-j,n+2}$ after the $j$-th hole.
Hence $[\alpha_j,\beta_j]=\mathtt{w}_{j-1}^{-1}\mathtt{w}_j $ corresponds to
the composition of curves $\mathtt{c}_{j-1}^{-1}\mathtt{c}_j$.
For an example, see Figure \ref{fig:Pi} where $[\alpha_2,\beta_2]$ is depicted on $\Sigma_{2,4}$.

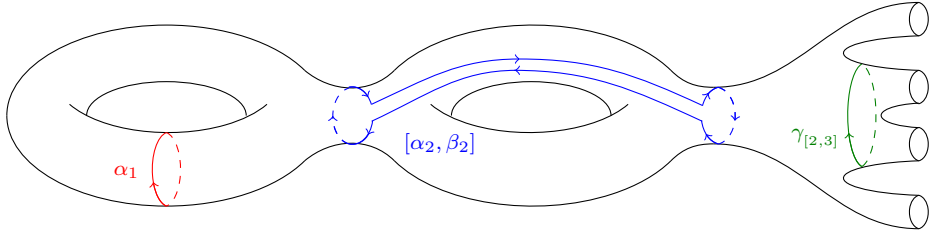
\begin{figure}[ht]
   \begin{tikzpicture}[scale=1.5]
     \draw (-3,0) arc [start angle = 180, end angle = 30,  x radius = 14mm, y radius = 8mm]; 
     \draw (-3,0) arc [start angle = 180, end angle = 330,  x radius = 14mm, y radius = 8mm];
     \draw (-2.3,0) arc [start angle = 180, end angle = 0,  x radius = 7mm, y radius = 3mm];
     \draw (-2.45,0.1) arc [start angle = 210, end angle = 330,  x radius = 10mm, y radius = 5mm];
     \draw (0.4,0.4) arc [start angle = 150, end angle = 30,  x radius = 14mm, y radius = 8mm]; 
     \draw (0.4,-0.4) arc [start angle = 210, end angle = 330,  x radius = 14mm, y radius = 8mm];
     \draw (2.3,0) arc [start angle = 0, end angle = 180,  x radius = 7mm, y radius = 3mm];
     \draw (2.45,0.1) arc [start angle = 330, end angle = 210,  x radius = 10mm, y radius = 5mm];
\draw[black] (-0.388,0.4) to[out=-40,in=225] (0.4,0.4);
\draw[black] (-0.388,-0.4) to[out=40,in=-225] (0.4,-0.4);
\draw[black] (2.825,0.4) to[out=-45,in=180] (3.2,0.25); \draw[black] (3.2,0.25) to[out=0,in=180] (5,1); 
\draw[black] (2.825,-0.4) to[out=45,in=180] (3.2,-0.25);  \draw[black] (3.2,-0.25) to[out=0,in=180] (5,-1); 
\draw (5,1) arc [start angle = 100, end angle = 260,  x radius = 1mm, y radius = 1.5mm];
\draw (5,1) arc [start angle = 80, end angle = -80,  x radius = 1mm, y radius = 1.5mm]; 
\draw (5,0.4) arc [start angle = 100, end angle = 260,  x radius = 1mm, y radius = 1.5mm];
\draw (5,0.4) arc [start angle = 80, end angle = -80,  x radius = 1mm, y radius = 1.5mm]; 
\draw (5,-0.4) arc [start angle = -100, end angle = -260,  x radius = 1mm, y radius = 1.5mm];
\draw (5,-0.4) arc [start angle = -80, end angle = 80,  x radius = 1mm, y radius = 1.5mm]; 
\draw (5,-1) arc [start angle = -100, end angle = -260,  x radius = 1mm, y radius = 1.5mm];
\draw (5,-1) arc [start angle = -80, end angle = 80,  x radius = 1mm, y radius = 1.5mm]; 
\draw (5,0.705) arc [start angle = 100, end angle = 260,  x radius = 8mm, y radius = 1.55mm]; 
\draw (5,0.11) arc [start angle = 100, end angle = 260,  x radius = 4mm, y radius = 1.1mm]; 
\draw (5,-0.705) arc [start angle = -100, end angle = -260,  x radius = 8mm, y radius = 1.55mm]; 
     \draw[red] (-1.6,-0.15) arc [start angle = 100, end angle = 260,  x radius = 1.5mm, y radius = 3.3mm]; 
     \draw[red,->] (-1.6,-0.8) arc [start angle = 260, end angle = 200,  x radius = 1.5mm, y radius = 3.3mm]; 
     \draw[red,dashed] (-1.6,-0.15) arc [start angle = 80, end angle = -80,  x radius = 1.5mm, y radius = 3.3mm];
     \node[font=\scriptsize,red] (b) at (-1.95,-0.5) {$\alpha_1$};
\draw[blue] (3.1,0.09) to[out=155,in=360]   (1.5,0.5);
\draw[<-,blue] (1.5,0.5) to[out=180,in=25]  (0.2,0.1); 
\draw[>-,blue] (1.5,0.4) to[out=180,in=25]  (0.2,-0.05);
\draw[blue] (3.1,-0.05) to[out=155,in=0]  (1.45,0.4); 
\draw[blue,dashed] (0,-0.25) arc [start angle = -100, end angle = -260,  x radius = 1.7mm, y radius = 2.5mm]; 
\draw[->,blue,dashed] (0,-0.25) arc [start angle = -100, end angle = -180,  x radius = 1.7mm, y radius = 2.5mm]; 
\draw[blue] (0,-0.25) arc [start angle = -100, end angle = -10,  x radius = 1.8mm, y radius = 2.5mm]; 
\draw[-<,blue] (0,-0.25) arc [start angle = -100, end angle = -30,  x radius = 1.8mm, y radius = 2.5mm]; 
\draw[blue] (0,0.25) arc [start angle = -260, end angle = -338,  x radius = 1.8mm, y radius = 2.5mm]; 
\draw[->,blue] (0,0.25) arc [start angle = -260, end angle = -320,  x radius = 1.8mm, y radius = 2.5mm]; 
\draw[blue,dashed] (3.24,0.25) arc [start angle = 80, end angle = -80,  x radius = 1.8mm, y radius = 2.55mm]; 
\draw[->,blue,dashed] (3.24,0.25) arc [start angle = 80, end angle = -10,  x radius = 1.8mm, y radius = 2.55mm]; 
\draw[-<,blue] (3.24,0.25) arc [start angle = 100, end angle = 140,  x radius = 1.8mm, y radius = 2.55mm]; 
\draw[blue] (3.24,0.25) arc [start angle = 100, end angle = 160,  x radius = 1.8mm, y radius = 2.55mm]; 
\draw[->,blue] (3.24,-0.25) arc [start angle = -100, end angle = -140,  x radius = 1.8mm, y radius = 2.55mm]; 
\draw[blue] (3.24,-0.25) arc [start angle = -100, end angle = -170,  x radius = 1.8mm, y radius = 2.55mm]; 
     \node[font=\scriptsize,blue] (phi) at (0.8,-0.25) {$[\alpha_2,\beta_2]$};
 \draw[darkgreen,dashed] (4.5,-0.45) arc [start angle = -80, end angle = 80,  x radius = 1.5mm, y radius = 4.6mm];
 \draw[darkgreen] (4.5,-0.45) arc [start angle = 260, end angle = 100,  x radius = 1.5mm, y radius = 4.6mm];
 \draw[darkgreen,->] (4.5,-0.45) arc [start angle = 260, end angle = 200,  x radius = 1.5mm, y radius = 4.6mm];
 \node[font=\scriptsize,darkgreen] (gamm) at (4.1,-0.2) {$\gamma_{_{[2,3]}}$};
  \end{tikzpicture}
  \caption{A system of curves on $\Sigma_{2,4}$ where the red, blue and green curves ($\alpha_1$, $[\alpha_2,\beta_2]$ and $\gamma_{_{[2,3]}}$) induce,  respectively, the 3 types of
  functions \eqref{Z27}, \eqref{Z27bar} and \eqref{Z30} for $I=\{1\}$, $\widehat{I}=\{2\}$ and $J=\{[2,3]\}$ on $M_{2,3}$.}
  \label{fig:Pi}
\end{figure}

For a fixed Lie group $K$, there is a natural bijection between $\operatorname{Rep}(\pi(\Sigma_{m,n+1}),K)$ and
$\Phi^{-1}_{m,n+1}(e)$.
This yields a well-defined map
\be
\rho_K:\pi(\Sigma_{m,n+1})\longrightarrow C^\infty(\Phi^{-1}_{m,n+1}(e),K), \quad
(\alpha_j,\beta_j,\gamma_k)\mapsto (A_j,B_j,C_k)\,.
\ee
Here, we assign to any $\Gamma\in \pi(\Sigma_{m,n+1})$ the evaluation function $\rho_K(\Gamma)$ defined as
$\rho_K(\Gamma)(x)=x(\Gamma)$ for a $K$-valued representation $x$ of the fundamental group.
In particular, given any closed curve $\mathtt{c}$ on $\Sigma_{m,n+1}$,
we can consider a representative $\Gamma_{\mathtt{c}}\in \pi(\Sigma_{m,n+1})$ of the corresponding conjugacy class in the fundamental group,
and then look at all the $K$-invariant elements it generates in $C^\infty(M_{m,n})\simeq C^\infty(\Phi^{-1}_{m,n+1}(e))$, cf. equation \eqref{W16}
and Lemma \ref{lem:Z0}.
These are given by
\be
\chi\circ \rho_K(\Gamma_{\mathtt{c}}), \quad \forall \chi \in C^\infty(K)^K.
\ee
From this description, one sees that the $\mathrm{A}$-cycles give rise to the functions of the form \eqref{Z27},
while the curves $[\alpha_j,\beta_j]$ and $\gamma_\lambda \gamma_{\lambda+1}\cdots \gamma_{\lambda+k}$
 correspond to the `partial momentum maps' given by $[A_j,B_j]$ and $C_\lambda C_{\lambda+1}\cdots C_{\lambda+k}$ on
 $\Phi_{m,n+1}^{-1}(e)\simeq M_{m,n}$ that  yield the functions \eqref{Z27bar} and \eqref{Z30}, respectively.

\end{document}